\definecolor{bestavg}{RGB}{220, 240, 255}     % Light black
\definecolor{bestworst}{RGB}{255, 230, 230}   % Light red
 \def\rew{\textsc rew}
 \def\pleft{p_{\text{left}}}
  \def\pright{p_{\text{right}}}
 \def\pbal{p_b}
\def\wll{\widetilde{\ll}}
\def\pc{p^{\star}}
\def\Rstar{R^{\star}}
\def\nex{x_{-1}}
\DeclareMathOperator*{\argmin}{arg\,min}
\newcommand{\widebar}[1]{\overline{#1}}
\newcommand{\Halmos}{\hfill$\square$}
\newtheorem{proposition}{Proposition}
\newtheorem{example}{Example}
\newenvironment{APPENDICES}{\appendix}{}
\newtheorem{theorem}{Theorem}
\newtheorem{lemma}{Lemma}
\newtheorem{definition}{Definition}
\def\cons{\textsc{consis}}
\def\comp{\textsc{robust}}
\def\region{\mathcal R}
\def\cstar{C^{\star}(\region)}
\def\opt{\textsc{opt}}
\def\CP{\text{CP}}
\def\UCP{\underline{\text{CP}}}
\def\xmin{\bar x_u}
\def\V{\mathcal{V}}
\def\RL{R_{\text{left}}}
\def\RR{R_{\text{right}}}
\def\u{u}
\def\ll{l}
\def\pl{\text{PL}}
\definecolor{purple}{rgb}{0.4,0.2,1}
\definecolor{violet}{RGB}{204,0,204}
\definecolor{grey}{RGB}{128, 128, 128}
\begin{document}

\author{
Negin Golrezaei\thanks{Sloan School of Management, Massachusetts Institute of Technology. Email: \texttt{golrezaei@mit.edu}}
\and
Patrick Jaillet\thanks{Department of Electrical Engineering and Computer Science, Massachusetts Institute of Technology. Email: \texttt{jaillet@mit.edu}}
\and
Zijie Zhou\thanks{Department of Industrial Engineering and Decision Analytics, Hong Kong University of Science and Technology. Email: \texttt{jerryzhou@ust.hk}}
}
%\RUNAUTHOR{Golrezaei, Jaillet, and Zhou}
%\RUNTITLE{Online Resource Allocation with Convex-set Machine-Learned Advice}

\title{Online Resource Allocation with Convex-set Machine-Learned Advice}
\maketitle
\begin{abstract}
    Decision-makers often have access to a machine-learned prediction about demand, referred to as advice, which can potentially be utilized in online decision-making processes for resource allocation. However, exploiting such advice poses challenges due to its potential inaccuracy. To address this issue, we propose a framework that enhances online resource allocation decisions with potentially unreliable machine-learned (ML) advice. We assume here that this advice is represented by a general convex uncertainty set for the demand vector.

We introduce a parameterized class of Pareto optimal online resource allocation algorithms that strike a balance between consistent and robust ratios. The consistent ratio measures the algorithm's performance (compared to the optimal hindsight solution) when the ML advice is accurate, while the robust ratio captures performance under an adversarial demand process when the advice is inaccurate. Specifically, in a C-Pareto optimal setting, we maximize the robust ratio while ensuring that the consistent ratio is at least C. Our proposed C-Pareto optimal algorithm is an adaptive protection level algorithm, which extends the classical fixed protection level algorithm introduced in \cite{littlewood2005special} and \cite{ball2009toward}. Solving a complex non-convex continuous optimization problem characterizes the adaptive protection level algorithm. To complement our algorithms, we present a simple method for computing the maximum achievable consistent ratio, which serves as an estimate for the maximum value of the ML advice. 

Additionally, we present numerical studies to evaluate the performance of our algorithm in comparison to benchmark algorithms. The results demonstrate that by  adjusting the parameter C, our algorithms effectively strike a balance between worst-case and average performance. In fact, they surpass the benchmark algorithms, including  those that solely depend on a single point estimate advice, rather than leveraging the advantages of an uncertainty set advice.
\end{abstract}

\noindent\textbf{Keywords:} Online Resource Allocation, Machine-learned Advice, Convex Uncertainty Set, Pareto Optimal Algorithms,  Robust Ratio, Consistent Ratio

\maketitle

\section{Introduction}

The problem of allocating a limited inventory of a single resource to sequentially arriving requests can be examined within the framework of revenue management, a significant discipline in operations research. Originally developed within the airline industry, revenue management has gained widespread recognition and applicability across various sectors, including retail, hospitality, online advertising, and more.

In the context of resource allocation, revenue management aims to optimize a firm's revenue by implementing effective policies to control quantities. This concept finds application in various applications, such as flight seat allocation, online event ticketing, car rental inventory management, and hotel room reservation.
To illustrate this concept, let's consider an example involving an airline. Airlines often offer different fare classes to cater to various customer types, including price-sensitive leisure travelers and business travelers. Each fare class comes with distinct prices and additional perks, such as seat selection and flexibility in cancellation. In this scenario, the airline must strategically determine the optimal number of seats to allocate to customers from different fare classes in order to maximize their overall revenue.

However, firms face the challenge of making real-time decisions to allocate their limited resources to incoming demand while lacking precise knowledge of future demand. This challenge arises due to the inherent trade-off between allocating resources, such as seats, to low-reward demand, such as leisure travelers, and reserving resources for potential high-reward demand, such as business travelers.

To address this trade-off, researchers have extensively studied two main models. The first model is the adversarial arrival model (\cite{ball2009toward}), which assumes no forecast about demand is available. However, in practice, the demand process is typically not the worst-case scenario and may exhibit some level of predictability. Consequently, the resulting algorithms under this regime tend to be overly conservative, as demonstrated in our numerical studies in Section \ref{sec:simulations}. The second model is the stochastic model, which assumes perfect knowledge of the demand process, with the assumption that low-reward demand arrives before high-reward demand \citep{littlewood2005special}. However, demand prediction is often challenging, especially in new and non-stationary settings that arise due to factors like seasonality or natural crises.

In this work, we aim to bridge the gap between the two models by augmenting the adversarial model with a demand forecast in the form of a convex uncertainty set. This uncertainty set that we refer to as \emph{machine-learned advice} is obtained through data-driven robust optimization algorithms. Despite the success and ubiquity of these techniques across many domains, leveraging them in online decision-making, such as the aforementioned resource allocation problem, presents significant challenges. The key challenge lies in effectively managing errors and biases in the forecast that inevitably exist.

To account for this challenge, we go beyond having a point estimate for the demand vector (i.e., the number of customers of different types) as done in \cite{balseiro2022single}.  (For a detailed comparison between our work and \cite{balseiro2022single}, refer to Sections \ref{sec:otherrelated} and \ref{sec:simulations}.) Instead, we adopt the approach of utilizing an uncertainty set for the demand vector. This modeling choice  that also covers a point estimate as its special case offers several advantages:  

\begin{itemize}[leftmargin=*]\item First,
it allows us to harness biases that exist in a single point estimate, resulting in a robust algorithm that does not overfit to a single point estimate. Please refer to our numerical studies in Section \ref{sec:simulations} for comparison  between resource allocation algorithms with a single point estimate and those we propose  with an uncertainty set. 
\item Second, the framework of uncertainty sets allows for the consideration of inaccuracies within the set itself. In other words, the realized demand may fall outside the bounds defined by the uncertainty set. This level of flexibility greatly enhances the resilience of the allocation process, empowering it to effectively handle unforeseen variations in demand that may deviate from initial expectations.
\item Third, the introduction of the uncertainty set allows us to capture the variance in demand process and potential positive or negative correlations between different types of demand. For example, if the uncertainty set establishes bounds on the total number of demands, we can  account for the expected correlation between the number of high-reward and low-reward customers. In this case, a large influx of low-reward customers would suggest a corresponding decrease in the number of high-reward customers.
\end{itemize}

 \subsection{Our Contributions and Results}

\textbf{A New Resource Allocation Model with a Convex Uncertainty Set.} We introduce a novel online resource allocation model with machine-learned (ML) advice (Section \ref{sec:model}) that addresses the challenge of allocating $m$ identical units of a resource to arriving requests, categorized as either low-reward or high-reward. {\color{black}See extension of our algorithms to settings with more than two types in Section \ref{sec:multitype}. }

At the onset of the allocation period, the decision-maker receives ML advice in the form of a convex uncertainty set $\region$. This uncertainty set characterizes the total number of high-reward or low-reward requests expected to arrive during the allocation period. For instance, the advice may indicate that the total number of requests falls within a specific interval while ensuring that the total number of high-reward requests remains below a certain threshold. Importantly, the ML advice does not provide any information regarding the order of arrivals.

Our choice of utilizing a convex uncertainty set is motivated by related works in the robust optimization literature, such as \cite{bertsimas2009constructing} and  \cite{bertsimas2018data}. These studies leverage offline historical data to construct convex uncertainty sets. Additionally, publications like \cite{cheramin2021data, bertsimas2016reformulation, jalilvand2016robust} have explored the application of convex uncertainty sets in offline robust optimization problems. 

In this work, we do not make any assumptions about the accuracy of the ML advice. Instead, we present a class of algorithms that demonstrate robust performance regardless of the accuracy of the advice.

 \textbf{Pareto optimal Algorithms (Section \ref{sec:maxrobust}).} As previously mentioned, our objective is to incorporate ML advice in a robust manner, accounting for potential inaccuracies. To achieve this, we introduce two performance measures: consistent ratio and robust ratio, which are analogous to the traditional competitive ratio used in the analysis of online algorithms.

The consistent ratio of an algorithm represents the worst-case ratio of its expected reward to the optimal hindsight solution under any arrival sequence consistent with the ML advice. On the other hand, the robust ratio is the worst-case ratio of its expected reward to the optimal hindsight solution under any arrival sequence that is not consistent with the advice. The formal definitions of these ratios can be found in Section \ref{sec:model}. As one of our main contributions, we present a parameterized class of Pareto optimal algorithms that strike a balance between robust and consistent ratios.

Let $\cstar$ denote the maximum consistent ratio achievable by any algorithm under the ML advice $\region$ (see the formal definition in Section \ref{sec:mlconsistent}). For any $C\leq \cstar$, a $C$-Pareto optimal algorithm maximizes the robust ratio among all deterministic or randomized online algorithms, while ensuring that its consistent ratio is at least $C$. By adjusting the parameter $C$, we can emphasize achieving a higher consistent ratio when we have greater confidence in the accuracy of the advice. Conversely, decreasing $C$ reflects concern about potential inaccuracies, leading to a focus on obtaining a higher robust ratio.

Our main result fully characterizes $C$-Pareto optimal algorithms, demonstrating that they belong to a class of (adaptive) protection level algorithms (PLAs). An adaptive PLA extends the classical fixed protection level algorithms studied in seminal works of \cite{littlewood2005special, ball2009toward}. In fixed protection level algorithms, a certain amount of resources is reserved or protected for potential high-reward requests that may arrive later, with the protection level remaining fixed throughout the allocation period. In the adaptive version introduced formally in Section \ref{sec:adaptive}, the protection level is represented by a function that maps the total received low-reward demand so far to a protection level. In adaptive PLAs, the protection level can vary or decrease over time, as long as the reduction is not too steep (see the formal definition in Definition \ref{def:1}). Our designed $C$-Pareto optimal algorithms fall within this class, simplifying their implementation.

%\begin{theorem}[Informal Result: $C$-Pareto optimal Algorithms]\label{thm:inform1}
%For any convex ML advice $\region$ and $C\leq \cstar$, there exists a $C$-Pareto optimal algorithm belonging to the class of (adaptive) PLAs (as defined in Definition \ref{def:1}). Furthermore, the protection level function characterizing the $C$-Pareto optimal algorithm can be computed in polynomial time.
%\end{theorem}

To fully characterize the $C$-Pareto optimal algorithm, we present an optimization problem (referred to as Problem \ref{prob:rbmax}) that optimizes over the protection level function, assuming the algorithm is a PLA. Importantly, our results demonstrate that this assumption is not restrictive, as the designed algorithm is optimal among all deterministic and randomized algorithms, not just the PLAs. However, solving the optimization problem is challenging due to its non-convex and continuous nature.

\begin{figure}[ht]
\centering
\resizebox{\textwidth}{!}{
\begin{tikzpicture}[node distance=2cm, font=\footnotesize]
    \node (problem) [draw, align=center] {Problem \eqref{prob:rbmax}};
    \node (right) [draw, below left of=problem, align=center, xshift=-2cm, yshift=-0.5cm] { \footnotesize{Left Problem} \\ (Problem \eqref{prob:rbmax} for $x\in [0, \bar x]$)};
    \node (left) [draw, below right of=problem, align=center, xshift=2cm, yshift=-0.5cm] {Right Problem \\ (Problem \eqref{prob:rbmax} for $x\ge \bar x$)};
    \node (algoRight) [draw, below of=right, align=center] {Algorithm \ref{alg:left}};
    \node (algoLeft) [draw, below of=left, align=center] {Algorithm \ref{alg:right}};
    \node (algoTrans) [draw, below right of=algoRight, align=center, xshift=2cm] {Algorithm \ref{alg:trans}};
    
    \draw[->] (problem) -- (right);
    \draw[->] (problem) -- (left);
    \draw[->] (right) -- (algoRight);
    \draw[->] (left) -- (algoLeft);
    \draw[->] (algoRight) -- (algoTrans);
    \draw[->] (algoLeft) -- (algoTrans);
\end{tikzpicture}
}
\caption{Decomposition of Problem \eqref{prob:rbmax} into the right and left problems. Here, $x$ is the total low-reward demand and $\bar x$ is the maximum total low-reward demand under the ML advice.\label{fig:diagram}}
\end{figure}

\textbf{Technical Contributions Regarding  $C$-Pareto optimal Algorithms.} 
One of our main contributions is the development of a polynomial time scheme to solve Problem \eqref{prob:rbmax} for any convex ML advice $\region$. We achieve this by transforming the consistency constraints into bounds on the protection level (PL) function and applying similar transformations for the robustness constraints. This allows us to decompose the problem into ``right" and ``left" subproblems. The optimal solution to the right problem coincides with the original problem's solution when the total low-reward demand exceeds a certain threshold, while the left problem considers cases where the demand is below the threshold. The optimal solutions to both subproblems can be fully characterized, resulting in an efficient algorithm for Problem \eqref{prob:rbmax} with guaranteed performance guarantees. See Figure \ref{fig:diagram} for an outline of our approach.

\textbf{Characterizing the Maximum Consistent Ratio $\cstar$ (Section \ref{sec:mlconsistent}).}
In Section \ref{sec:mlconsistent}, we introduce polynomial-time methods to compute the maximum consistent ratio $\cstar$, which represents the highest value achievable with the ML advice. %We accomplish this by characterizing $\cstar$ through an optimization problem (Problem \eqref{prob:cpmax1}) with transformed consistency constraints. The transformed constraints allow us to employ efficient algorithms for computing $\cstar$.
For general convex ML regions, we propose a bisection method (Algorithm \ref{alg:bisection}) that provides an $\epsilon$-accurate estimate of $\cstar$ in polynomial time, where $\epsilon$ can be chosen within the range $(0, 1/2]$. When the ML region is a polyhedron, we present a faster approach (Algorithm \ref{alg:mlconsis}) based on enumerating the polyhedron vertices. This method computes $\cstar$ exactly by identifying the worst vertices (bottlenecks) of Problem \eqref{prob:cpmax1} that determine $\cstar$. 

%\begin{theorem}[Informal Result: Maximum Consistent Ratio]\label{thm:inform2}
%Consider any convex ML advice $\region$. For any $\epsilon\in (0, 1/2]$, there exists a bisection method that returns an $\epsilon$-accurate estimate of $\cstar$ in polynomial time. When the ML advice is a polyhedron, there exists a polynomial time enumeration method that computes      $\cstar$ exactly.   
%\end{theorem}

%\textbf{Technical Contributions Regarding Maximum Consistent Ratio $\cstar$.} For any polyhedron $\region$, we propose an enumeration approach to identify the maximum consistent ratio $\cstar$. We assert that the worst over- and under-protected points for any protection level function correspond to the vertices of $\region$. Subsequently, we enumerate each pair of vertices to determine if: (i) they represent the worst over- and under-protected points, and (ii) a feasible protection level function exists if they indeed represent these extreme points. Furthermore, based on the relative positions of each vertex pair, we construct a distinct protection level function. %The reasoning behind formulating diverse protection level functions is rooted in our  understanding of where the protection level function should remain constant and where it should exhibit a decreasing trend.

\textbf{Numerical Studies (Section \ref{sec:simulations}).}
We perform numerical studies to assess the performance of our proposed algorithms when the ML advice is derived from a limited number of samples. Our findings demonstrate that, in the presence of ML advice, our algorithms enhance both the average and worst-case performance, surpassing other benchmark algorithms. One such benchmark pertains to resource allocation algorithms based on a single point estimate, which exhibit average and worst-case compatible ratios that are up to $14\%$ and $40\%$ lower than those achieved by our resource allocation algorithms considering convex uncertainty sets.

\subsection{Other Related Works} \label{sec:otherrelated}

{\textbf{Online Decision-making with ML Advice.}} Our class of algorithms 
contributes to  a recent literature on using ML advice in the online algorithm design. Examples include \cite{lykouris2018competitive} and \cite{rohatgi2020near}  for online caching problems, \cite{antoniadis2020secretary}  for online secretary problems, \cite{jin2022online} for online matching problems, \cite{lattanzi2020online} for online scheduling with job weight advice, and \cite{balseiro2022single} for an online resource allocation problem.\footnote{See  \citep{mahdian2012online, esfandiari2015online, hwang2021online, golrezaei2014real, golrezaei2022online} for other works 
that  explore the partially known demand models.} 

{\color{black}In the context of single-leg revenue management, the work most related to ours is \cite{balseiro2022single}. Specifically, \cite{balseiro2022single} focuses on single-point demand prediction (i.e., point ML advice) under any arbitrary number of types, presenting a class of Pareto-optimal algorithms. The algorithm achieves the highest level of consistency given a required level of competitiveness in the presence of point advice. To this end, hard instances are identified, and an LP is used to design an optimal algorithm that switches between two protection level algorithms with fixed protection levels. They show that this switching, although sacrificing certain practical properties like monotonicity of the allocation rule, is deemed necessary for optimality. They further focus on designing an optimal policy with a fixed protection level. They show that as the distance between realized fare sequences and advice increases, the performance of protection level policies degrades linearly as the distance increases.}

{\color{black} In our work, while we adopt the general
framework of a Pareto-optimal tradeoff between consistent ratio and robustness ratio proposed in \cite{balseiro2022single}, 
there are important distinctions.
Firstly, we shift our focus to ML advice in the form of an uncertainty set, rather than relying solely on single point estimates. This strategic shift aims to mitigate potential biases inherent in point estimates, such as outliers, asymmetric errors, or incomplete data. Additionally, uncertainty sets furnish valuable insights into demand characteristics, including variance and correlations between different demand types. Moreover, there are instances where obtaining a point estimate for the demand vector is unfeasible, such as when dealing with datasets that only provide the total number of arrivals without specifying the number of arrivals for each type. We delve deeper into some of these aspects in Section \ref{sec:simulations}, using numerical analyses.}

{\color{black}\textbf{Online Resource Allocation.} The allocation of scarce resources in an online setting has been a subject of extensive research. A large body of work has studied this problem under stochastic arrival assumptions (e.g., \cite{devanur2009adwords,feldman2010online,agrawal2014dynamic}). One central paradigm in this line of work  is to repeatedly re-solve a fluid LP as time and inventory evolve. Such re-solving heuristics have been extensively analyzed in the revenue management literature (see, e.g., \cite{jasin2013analysis,jasin2015performance, bumpensanti2020re, wang2022nonasymptotic,  chen2024improved}), with results showing bounded revenue loss and asymptotic optimality under stochastic models. Another prominent approach, particularly in the online algorithms community, is the primal--dual technique, which has been widely applied to design algorithms  in both stochastic and adversarial settings (e.g., \cite{mehta2007adwords,buchbinder2007online,golrezaei2014real}). 

In contrast to both the re-solving and primal--dual paradigms, our work adopts a robust optimization perspective. Rather than assuming a correct stochastic specification, we allow for potentially unreliable machine-learned advice, which we encode as a convex uncertainty set. We then design protection-level algorithms that guarantee Pareto tradeoffs between consistency (when the advice is accurate) and robustness (under adversarially generated arrivals outside the set). This builds on the protection-level framework, a well-established approach in single-leg revenue management, but extends it to advice-augmented robust online allocation.
}

\textbf{Single-leg Revenue Management.} In this work, we study the single-leg revenue management problem in the presence of ML advice. 
Single-leg revenue management is a well-established model in the field of revenue management. \cite{littlewood2005special} proposed an optimal policy for the single-leg revenue management problem involving two types of customers under stochastic arrival processes. \cite{brumelle1993airline} extended this problem to include multiple types of customers and designed an optimal policy using dynamic programming. \cite{ball2009toward} was the first to address the single-leg revenue management problem under adversarial arrival sequences. They proposed an optimal protection-level policy for the two types of customers case and then introduced the concept of ``nesting" to generalize to the multiple types case. See also \cite{jasin2015performance, hwang2021online,  ma2021policies, golrezaei2021upfront} for more recent works on single-leg revenue management. Our work contributes to this literature by  presenting a new model for single-leg revenue management that bridges the gap between the adversarial and stochastic models.

\section{Model}\label{sec:model}
Consider a scenario where a firm has been endowed with $m$ identical units of a divisible resource
 to allocate over $T$ rounds, but the number of rounds $T$ is unknown to the firm. In each round $t$, a request with size $s_t>0$ and type $z_t\in \{\ell, h\}$ arrives, where the size of the request $s$ demands at most $s$ units of the resource. \footnote{\color{black}Please refer to Section~\ref{sec:multitype} for an extension of our setting and algorithms to cases with more than two types.
} Requests can be categorized into two types based on the normalized reward or revenue they generate upon receiving one unit of the resource, namely low-reward and high-reward requests. (See our discussion in Section \ref{sec:extension} about handling more than two types of requests.)

Let the reward for type $z\in \{\ell, h\}$ upon receiving one unit of the resource be denoted as $r_{z}$. Without loss of generality, it is assumed that $0<r_{\ell} < r_{h}$.\footnote{Without loss of generality, one can normalize $r_h$ to one.} Upon the arrival of a request $(s_t, z_t)$, the firm observes the size and type of the request and must make an irrevocable decision to allocate $a_t \in [0, s_t]$ units of the resource to request $(s_t, z_t)$, and collect a total reward of $a_{t}\cdot r_{z_t}$. At the time of the decision, the firm has no knowledge of the type and size of future requests.

The goal is to design online allocation algorithms that maximize the cumulative reward of the firm over the course of $T$ rounds. The performance of an algorithm is evaluated by comparing it to the optimal clairvoyant solution, which has complete knowledge of the arrival sequence of the requests $(s_t, z_t)_{t\in [T]}$ in advance. Further details on the evaluation process will be provided later.

\subsection{ML Advice} Let $I= (s_t, z_t)_{t\in [T]}$ be the arrival sequence of requests, where the type and size of requests, the order of the requests, and the number of requests $T$ are chosen by an adversary. For an input sequence $I= (s_t, z_t)_{t\in [T]}$, we define the total low-reward and high-reward demand in the input sequence $I$ as $\ell(I) = \sum_{t\in [T], z_t =\ell} s_t$ and $h(I) = \sum_{t\in [T], z_t =h} s_t$, respectively. We assume that at the beginning of round $1$, the firm has access to partial knowledge about the arrival sequence $I$ and, more specifically, about $\ell(I)$ and $h(I)$. This partial knowledge, which we refer to as \emph{ML advice}, is represented by a convex region $\region \in \mathbb{R}^2$. The ML advice enforces the demand vector, i.e., $(\ell(I),h(I))$, to fall into region $\region$. Throughout the manuscript, we refer to region $\region$ as the ML region/advice. Then, the set of arrival sequences that is consistent with the ML advice (denoted by $\mathcal{S}(\region)$) is given by:
\begin{align}\mathcal{S}(\region)=\{I : (\ell(I), h(I))\in \region\}\,.\label{eq:feasible}\end{align}

We refer to the set $\mathcal{S}(\region)$ as the \emph{ML-consistent set}. For example, when $\region=\{(x, y): x\in [a_1, b_1], y\in [a_2, b_2]\}$ for some $a_1, a_2, b_1, b_2\ge 0$, the ML advice provides lower and upper bounds on the low-reward and high-reward demands. As another example, when $\region =\{(x, y); x+y\in [a, b]\}$ for some $a, b\ge 0$, the ML advice gives lower and upper bounds on the total demand. 
The ML advice does not provide any information on the order of requests. We refer the reader to Section \ref{sec:pre} for further notation related to ML advice.

%\begin{remark}
The ML advice we examine is general and encompasses any convex uncertainty region. As a specific instance, the ML advice can be represented by a single point, as studied in \cite{balseiro2022single}. However, our numerical studies reveal that relying solely on a single point estimate as ML advice can result in suboptimal performance due to inherent biases in a single point estimate.

\subsection{Performance Measures} In this work, we consider two metrics to measure the performance of any online resource allocation algorithm that has access to the ML advice $\region$. These two metrics, which are called the \emph{consistent ratio} and the \emph{robust ratio}, are inspired by the fact that the ML advice may not be completely accurate.

The consistent ratio of algorithm $\mathcal A$, denoted by $\cons(\mathcal A)$, measures how well algorithm $\mathcal A$ performs when the ML advice is completely accurate. That is, it measures the performance of algorithm $\mathcal A$ on all arrival sequences in the set $\mathcal{S}(\region)$ that are consistent with the ML advice $\region$. Similarly, the robust ratio of algorithm $\mathcal A$, denoted by $\comp(\mathcal A)$, measures the performance of algorithm $\mathcal A$ on all arrival sequences. The robust ratio can evaluate the robustness of the algorithm when the ML advice is misleading.

Mathematically speaking, for any arrival sequence $I$ and an online algorithm $\mathcal A$, let $\rew(\mathcal A, I)$ be the expected cumulative reward of the algorithm under the arrival sequence $I$. Further, let $\opt(I)$ be the optimal clairvoyant solution under arrival sequence $I$. Note that the optimal clairvoyant solution, which has knowledge of the arrival sequence in advance, starts by allocating resources to the type $h$ that has the highest reward. If there is any resource remaining after that, it allocates resources to type $\ell$ requests. Then, we define:\footnote{\color{black} In our work, we assume that resources are divisible, allowing for partial acceptance of requests. While this assumption simplifies the analysis, it preserves key structural properties of the problem. The algorithms developed here also apply to the non-divisible model and remain accurate as the number of resources~$m$ grows large. Specifically, the hindsight optimal scales linearly with~$m$, while the loss in the numerator of the competitive ratio due to items being non-divisible remains constant, leading to a reduction in the competitive ratio of order~$\Theta(1/m)$.
} 
\begin{equation} \label{def:consistent}\cons(\mathcal A) = \inf_{I:I \in \mathcal{S}(\region)} \frac{\rew(\mathcal A, I)}{\opt(I)} \quad %\end{equation}
\text{and}\quad 
\comp(\mathcal A) = \inf_{I: I \in \mathcal{S}(\region) \cup \mathcal{S}(\region)^{\mathsf{C}}} \frac{\rew(\mathcal A, I)}{\opt(I)}\,.\end{equation}
It is worth noting that when the set $\mathcal S(\region)$ contains all possible arrival sequences, the consistent ratio is equivalent to the traditional worst-case competitive ratio notion for online algorithm design in the absence of ML advice. In this case, the algorithm's performance is evaluated based on its worst-case performance over all possible arrival sequences. Similarly, when the ML-consistent set $\mathcal S(\region)$ is empty, the robust ratio is also equivalent to the traditional competitive ratio notion, where the algorithm's performance is measured against the optimal clairvoyant solution that has full knowledge of the arrival sequence.

In the absence of ML advice, it has been shown in \cite{ball2009toward} that the optimal competitive ratio for online resource allocation algorithms is $1/(2-r_{\ell}/r_h)$, where $r_\ell$ and $r_h$ are the rewards associated with the low- and high-reward types, respectively. 

\subsection{Objectives} When the ML advice is completely accurate, maximizing the consistent ratio would lead to an optimal algorithm with the highest worst-case competitive ratio on set $\mathcal{S}(\region)$. See the details in Section \ref{sec:mlconsistent}. However, such an algorithm may perform poorly when the ML advice is inaccurate (i.e., when the ML-consistent set $\mathcal S(\region)$ does not occur). To balance this trade-off, our main goal is to present a class of parameterized \emph{Pareto optimal} algorithms. 

In a $C$-Pareto optimal algorithm, we design an algorithm that achieves the highest possible robust ratio while obtaining a consistent ratio of at least $C$ for any $C\leq \cstar$. Here, $\cstar$ is the highest possible consistent ratio for a convex ML region $\region$, in the absence of any constraint on the robust ratio. We formally characterize $\cstar$ in Section \ref{sec:mlconsistent} and suggest an $O(|\mathcal V|^3)$ 
complexity algorithm to find the value of $\cstar$ when $\region$ is a polyhedron, where $\mathcal V$ is the set containing all the vertices of $\region$. 

Mathematically, let $\Pi$ be the set of all online deterministic and randomized algorithms. Then, the $C$-Pareto optimal algorithm $\mathcal A$ solves the following optimization problem:
   \begin{align} \label{prob:original}
     \max_{\mathcal{A} \in \Pi}  \ \comp(\mathcal{A})  \quad  
s.t. \quad   \ \cons(\mathcal{A}) \geq C\,.    
\end{align}  
   Observe that by setting $C$ to $\cstar$, we can design an optimal ML-consistent algorithm under which the ML advice is fully trusted. 
When $\region =\{(x, y): x, y\ge 0\}$---i.e., the ML arrival set $\mathcal{S}(\region)$ contains all possible arrival sequences---  the protection-level algorithm  of \cite{ball2009toward} is  ML-consistent optimal. In this algorithm, type $h$  requests are always accepted while at most $\frac{m}{2-r_h/r_{\ell}}$ type $\ell$ requests are  accepted. In other words, we \emph{protect} $m- \frac{m}{2-r_{\ell}/r_{h}}$ of the resources for  high-reward requests, and by doing so, we obtain $\cstar$ of $\rho:= \frac{1}{2-r_{\ell}/r_{h}}$. Overall, the design of Pareto optimal algorithms lead to a Pareto curve (e.g., Figure \ref{fig:optpareto}) that helps us balance the trade-off between the consistent and robust ratios.

\subsection{Notation} \label{sec:pre}

In this section, we present a few definitions regarding the ML region $\region$ (refer to the figure below for illustration).
\begin{center}
    \begin{tikzpicture}[
      scale=2,
      ]
       \draw[->, black] (0, 0) -- (3, 0);
        \draw[->, black] (0, 0) -- (0, 3);
        \node[black] (x) at (2.9,-0.2) {$x$};
        \node[black] (y) at (-0.2, 2.9) {$y$};
        \node[black] (mx) at (2,-0.2) {$m$};
        \node[black] (my) at (-0.2, 2) {$m$};
         \draw[densely dotted] (0, 2) -- (2, 2);

      \fill[draw=black, fill=black!10] (0.1, 1) -- (0.5, 2.5) -- (2.3, 1) -- (2.1, 0.1) -- (1, 0.1) -- cycle;
        \node[black, above] at (1.7,1.7) {$\region$};

        \draw[orange, thick] (0.1, 1) -- (0.5, 2.5) -- (2.3, 1) -- (2.1, 0.1);
      \draw[->,orange] (2.6, 2) -- (1.6, 1.6);
      \node[orange, right] at (2.6,2) {$\bar h(.)$};
      
      \draw[brown, thick] (2.1, 0.1) -- (1, 0.1) -- (0.1, 1);
      \draw[->,brown] (-0.2, 0.5) -- (0.6, 0.5);
      \node[brown, left] at (-0.2,0.5) {$\underline h(.)$};

      \draw[black,densely dotted] (0.1, 1) -- (0.1, 0);
      \node[black] at (0.1,-0.2) {$\underline{x}$};
      \draw[black,densely dotted] (2.3, 1) -- (2.3, 0);
      \node[black] at (2.3,-0.2) {$\bar{x}$};
      \draw[black,densely dotted] (0, 0.1) -- (1, 0.1);
      \node[black] at (-0.2, 0.1) {$\underline{y}$};
      \draw[black,densely dotted] (0, 2.5) -- (0.5, 2.5);
      \node[black] at (-0.2,2.5) {$\bar{y}$};
      
      \draw[violet, thick] (0.37, 2) -- (1.1, 2);
      \node[violet, above] at (0.7,2) {$\bar{\region}$};
      
      \draw[violet, thick] (1, 0.1) -- (2.1, 0.1);
      \node[violet, above] at (1.5,0.1) {$\underline{\region}$};
      
      \draw[black] (2, 0) -- (0, 2);
      
      \node[red, inner sep=1, fill, circle] (H) at (1.1, 2) {};
       \node[red,above] at (1.15,2.05) {$H$};
       \node[red, inner sep=1, fill, circle] (L) at (1, 0.1) {};
       \node[red,above] at (1.05,0.1) {$L$};
       \node[red, inner sep=1, fill, circle] (Ro) at (1.9, 0.1) {};
       \draw[->,purple] (2.6, 0.4) -- (1.9, 0.1);
        \node[purple, right] at (2.6,0.4) {$\region_0$};
         \draw[densely dotted] (2, 0) -- (2, 2); 
    \end{tikzpicture}
  \end{center}
  Let 
\begin{align} \label{eq:barx}\underline{x}= \inf_{(x,y) \in \region}x\,, \qquad \qquad \bar{x} = \sup_{(x,y) \in \region}x\,,\qquad \qquad\underline{y}= \inf_{(x,y) \in \region}y\,, \qquad \text{and} \qquad \bar{y} = \sup_{(x,y) \in \region}y\,.\end{align}
Define 
\[
\widebar{\region} = \{(x,y) \in \region : y = \sup_{(x',y') \in \region } \min\{y',m\}  \} 
\]
as a subset of region $\region$ under which the total high-reward demand (more precisely $\min\{y', m\}$ for  any point $(x', y')\in \region$) is maximized. Similarly, we define \[
\underline{\region} = \{(x,y) \in \region : y = \inf_{(x',y') \in \region } \min\{y',m\}  \}
\] 
to be a subset of region $\region$ under which the total high-reward demand (more precisely $\min\{y', m\}$ for any point $(x', y')\in \region$) is minimized. We then define point $L=(x_{L}, y_{L})\in \underline{\region}$  as the point in set $\underline{\region}$ that has the lowest total low-reward demand.   We further define $H=(x_{H}, y_{H})\in \widebar{\region} $ as the point in set $\widebar{\region}$ that has the highest total low-reward demand. Mathematically speaking, we let
\[
L=\inf_{x}\{(x,y): (x,y) \in \underline{\region}\} \quad \text{and}\quad H=\sup_{x}\{(x,y): (x,y) \in \overline{\region} \}.
\]
Observe that point $L$ has the lowest reward among all the points in set $\underline{\region}$ and point $H$ has the highest reward among all the points in set $\widebar{\region}$. Further note that while by definition we have $y_{L}\le y_{H}$, we can have $x_{L}$ less than or greater than $x_H$.    

Then, we define the upper and lower envelop of $\region$ as $\bar h(\cdot)$ and $\underline h(\cdot)$, where 
\begin{align} \label{eq:lower_upper}\bar h(x) = \sup \{y: (x,y) \in \region \}\quad \text{and}\quad \underline h(x) = \inf\{y: (x,y) \in \region \}\end{align}for $x \in [\underline x, \bar x]$. As $\region$ is a convex set, we have $\underline h(x)$ is convex and $\bar h(x)$ is concave. Finally, let 
\[\mathcal R_{0} = \{(x, \underline h(x)): x\in [\underline x, \bar x]\} \cap \{(x,y): x+y =m\}\]
be the intersection of the lower envelop $\underline h(x)$ of the region $\region$ and line $ x+y =m$. Note than for any point  above (below respectively) this line, the total demand is greater (less respectively) than the number of resources $m$. As $\underline h(\cdot)$ is a convex function, we have $\vert \mathcal R_0 \vert \in \{0,1,2\}$ because a convex function and a line can have at most two intersection points.  

%\subsection{Roadmap}
%{\color{black} Given the complexity and challenges of solving the optimization problem \eqref{prob:original}, we provide a roadmap to summarize the contributions of the following sections.  
%\begin{itemize}
%\item In Section \ref{sec:adaptive}, we introduce a family of algorithms called adaptive protection level algorithms and establish some of their key shared properties.  
%\item In Section \ref{sec:maxrobust}, we reformulate the original optimization problem \eqref{prob:original} into a new optimization framework that aligns with this predefined family of adaptive protection level algorithms.  
%\item In Section \ref{sec:opt}, as illustrated in Figure \ref{fig:diagram}, we first solve the left subproblem, then the right subproblem, and finally integrate the solutions to obtain the optimal algorithm. Furthermore, we prove that this optimal adaptive protection level algorithm is best possible among all online algorithms.
%\end{itemize}}

{\color{black}\subsection{Roadmap} Having presented the model,  in Section \ref{sec:adaptive}, we introduce the family of adaptive protection level algorithms  and explain their key structural properties.  
In Section \ref{sec:maxrobust}, we formulate the optimization problem that defines the \(C\)-Pareto optimal algorithm and transform the consistency and robustness constraints into more tractable forms.  
In Section \ref{sec:opt}, as illustrated in Figure \ref{fig:diagram}, we solve the problem by decomposing it into two subproblems:  
(i) the left subproblem (Figure \ref{fig:diagram}, left side), which handles sequences consistent with the ML advice; and  
(ii) the right subproblem (Figure \ref{fig:diagram}, right side), which handles sequences deviating from the ML advice.  
Finally, we combine the solutions to construct the \(C\)-Pareto optimal algorithm, achieving the highest possible robust ratio while guaranteeing a consistent ratio of at least \(C\).  
}

\section{Adaptive Protection Level Algorithms} \label{sec:adaptive}
{\color{black}In this section, we introduce the class of adaptive protection level algorithms and establish their key structural properties. These algorithms form the foundation for the optimization problem introduced in Section \ref{sec:maxrobust} and the solution framework outlined in Figure \ref{fig:diagram}.}

Adaptive protection level algorithms (PLAs) have a simple and intuitive structure, making them easy to implement. They extend the classical protection level algorithms introduced by \cite{littlewood2005special, ball2009toward} by allowing dynamic adjustment of the protection level based on observed low-reward demand. The formal definition of adaptive PLAs is provided below.

\begin{definition} [Adaptive Protection Level Algorithms] \label{def:1} A  PLA is defined by a continuous non-increasing Protection Level  (\pl{}) function  $p:\mathbb{R}_+\rightarrow  [0,m]$,  with $p'(x)\ge -1$ and $p(x)=p(\max\{m,\bar{x}\})$ for any $x \geq \max\{m,\bar{x}\}$. \footnote{For a discussion on the necessity of the validity conditions for the PL functions listed in Definition \ref{def:1}, please refer to Appendix \ref{sec:valid}.} 
Under   a PLA with a PL of $p(\cdot)$,
\begin{itemize}[leftmargin=*]
    \item high-reward requests are fully fulfilled unless we do not have enough resources left. That is, for any request $(s, z =h)$, we set the allocation $a$ to the minimum of $s$ and the remaining resources at the time of the decision.
    \item For low-reward requests, let 
    $\bar s$ be the sum of the sizes of the low-reward requests received  so far (excluding the current one) and define $\bar a$ as the total amount  of resources we have allocated to low-reward requests so far. Further, let $s$ be the size of the current low-reward request.  
     Under   a PLA with a PL of $p(\cdot)$, we allocate   $a\in [0, s]$   amount of the resource to the current low-reward request, where 
    \[a = \min\left(\widehat m,\,  \text{Proj}_{[0,s]}\big(m- p(\bar s+s) - \bar a\big)\right).\] 
    Here, $\text{Proj}_{[0,y]}(x) = \min(\max(x, 0), y)$ and $\widehat m$ is the remaining number of resources at the time of the decision. 
    Then,  it is guaranteed that $\bar a+ a\le m- p(\bar s+s)$, meaning we protect $ p(\bar s+s)$ resources for the  high-reward requests. 
\end{itemize}

\end{definition}
\medskip
To gain a better understanding of the definition of PLAs, let us consider a scenario where the size of all requests is $1$, and $m-p(x)$ is an integer. In this case, a PLA will always accept high-reward agents as long as we have the necessary resources. Moreover, the PLA will reject the $x$-th low-reward agent if either the number of low-reward agents already accepted is equal to $m-p(x)$, or there are no resources left.

It is worth noting that in a PLA, the number of units that we protect for high-reward requests is solely dependent on the total low-reward demand we have observed so far, which is represented by $\bar s+s$ in Definition \ref{def:1}. This property makes PLA a practical and easy-to-implement approach. Another important observation is that the PLA introduced in \cite{ball2009toward} can be represented by a fixed protection level function: $p(x) = m- \frac{m}{2-r_{\ell}/r_h}$ for any $x\ge 0$.

\subsection{ A Property of PLAs}
%We start by presenting a  property of PLAs. 
The following lemma shows that under PLAs, the consistent and robust ratios get minimized under ordered arrival sequences under which all low-reward requests arrive before high-reward requests.  See Appendix \ref{sec:prooforder} for the proof.

\begin{lemma} [Ordered Sequences]\label{lem:order1}
{\color{black}Suppose that we use a valid PLA $\mathcal A$ with a PL function $p(\cdot)$ (per Definition \ref{def:1}). 
Let ${\cal I}(x,y)$ denote the set of all arrival instances that contain a total of $x$ low-reward requests and $y$ high-reward requests (in any order). 
Let $\tilde{I}(x,y)$ be the ordered sequence in which all $x$ low-reward requests arrive first, followed by all $y$ high-reward requests. 
Then, for any $I \in {\cal I}(x,y)$,
$
\frac{\rew(\mathcal{A},I)}{\textsc{opt}(I)} \;\;\geq\;\; 
\frac{\rew(\mathcal{A},\tilde{I}(x,y))}{\textsc{opt}(\tilde{I}(x,y))}.
$
}
%Suppose that we use a valid PLA  $\mathcal A$ with a PL function $p(\cdot)$ (per Definition \ref{def:1}). Let $\cal I(x,y)$ be the set of all arrival instances that contain $x$ low-reward requests and $y$ high-reward requests in any order, and we let $\tilde{I}(x,y)$ be the ordered sequence such that all $x$ low-reward requests arrive first and are followed with all $y$ high-reward requests. For any $I \in {\cal I}(x,y)$, we then have
%$
%\frac{\rew(\mathcal{A},I)}{\textsc{opt}(I)} \geq \frac{\rew(\mathcal{A},\tilde{I}(x,y))}{\textsc{opt}(\tilde{I}(x,y))}.
%$
\end{lemma}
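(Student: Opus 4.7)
First, because the clairvoyant optimum simply fills the $m$ units by prioritizing all high-reward requests and then all low-reward requests, its value depends only on the totals $(x,y)=(\ell(I), h(I))$ and not on the arrival order, so $\opt(I)=\opt(\tilde I(x,y))$. The lemma therefore reduces to showing $\rew(\mathcal A, I)\ge \rew(\mathcal A,\tilde I(x,y))$ for the given PLA $\mathcal A$.

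The plan is a bubble-sort swap argument: any $I\in\mathcal I(x,y)$ can be transformed into $\tilde I(x,y)$ by a finite sequence of adjacent swaps that replace a \textsc{(high, low)} pair at positions $t,t+1$ with a \textsc{(low, high)} pair. Each such swap moves one low-reward request one position earlier, so iterating them sorts every low-reward request to the front; it therefore suffices to show each swap weakly decreases $\rew(\mathcal A,\cdot)$. Fix $I$ and $I'$ agreeing everywhere except that $I$ has \textsc{(high, low)} and $I'$ has \textsc{(low, high)} at positions $t,t+1$, with sizes $s_h$ and $s_\ell$. Because the prefixes coincide, the state $(\hat m_t,\bar s_t,\bar a_t)$ entering position $t$ is identical in both runs. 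Let $q:=\text{Proj}_{[0,s_\ell]}\big(m-p(\bar s_t+s_\ell)-\bar a_t\big)$. Plugging into Definition~\ref{def:1} and doing a short case analysis on the relative values of $\hat m_t,s_h,q$, I would verify two facts: (a) the combined two-step consumption coincides, $a_h^I+a_\ell^I=a_\ell^{I'}+a_h^{I'}=\min(\hat m_t,\,s_h+q)$; and (b) $a_\ell^{I'}\ge a_\ell^I$, since in $I'$ the low-reward request sees the full capacity $\hat m_t$, whereas in $I$ it sees only $(\hat m_t-s_h)^+$.

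The crucial final observation is a resource-exhaustion claim: whenever the inequality in (b) is strict, $\hat m_t<s_h+q$ must hold, so the common two-step consumption equals $\hat m_t$ and the remaining capacity after position $t+1$ is exactly $0$ in both executions. Hence the two runs agree trivially from position $t+2$ onward (each can accept nothing further), and using (a) and (b) the reward change of the swap collapses to
$$
\rew(\mathcal A,I)-\rew(\mathcal A,I')\;=\;r_h(a_h^I-a_h^{I'})+r_\ell(a_\ell^I-a_\ell^{I'})\;=\;(r_h-r_\ell)\,(a_\ell^{I'}-a_\ell^I)\;\ge\;0.
$$
In the complementary case $a_\ell^{I'}=a_\ell^I$, the post-$(t+1)$ states (remaining capacity, cumulative low-reward size, and cumulative low-reward acceptance) coincide exactly and both runs behave identically thereafter. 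Iterating along the bubble sort then yields $\rew(\mathcal A,I)\ge\rew(\mathcal A,\tilde I(x,y))$, completing the proof.

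The main obstacle I expect is establishing the resource-exhaustion claim: without it, one would need to carry a coupling across all subsequent rounds and show that a perturbation in $\bar a$ between the two runs can improve the future reward by at most $(r_h-r_\ell)$ per unit, which is true (via a monotone invariant on $\hat m+\bar a$ together with the slope condition $p'(x)\ge -1$) but considerably more intricate. The exhaustion observation sidesteps this difficulty by noting that a strict swap-gain can arise only when the pair at positions $t,t+1$ already saturates the remaining capacity, so no future rounds are ever in play.
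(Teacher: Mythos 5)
Your proof is correct, but it takes a genuinely different route from the paper's. The paper fixes the ordered sequence as the reference point, lets $\tilde x(\mathcal A)$ be the number of low-reward units it accepts, and then argues globally in two cases ($y < m-\tilde x(\mathcal A)$ versus $y \ge m-\tilde x(\mathcal A)$) that any other ordering either accepts at least as much of each type or, when it accepts fewer low-reward units, must have exhausted capacity with high-reward units and thus earns strictly more. Your argument is instead a local exchange (bubble-sort) argument: you reduce to a single adjacent $(\text{high},\text{low})\to(\text{low},\text{high})$ swap, verify via the projection formula in Definition~\ref{def:1} that the combined two-step consumption is $\min(\hat m_t, s_h+q)$ in both runs and that the low-reward allocation can only increase after the swap, and then use the resource-exhaustion observation (a strict increase forces $\hat m_t < s_h+q$, so both runs terminate with zero remaining capacity) to localize the reward comparison to the swapped pair, giving a difference of $(r_h-r_\ell)(a_\ell^{I'}-a_\ell^I)\ge 0$ per swap. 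I checked the case analysis behind claims (a) and (b) and the exhaustion claim; they all go through, and the identical-prefix/identical-suffix coupling is sound in both the strict and non-strict cases. What each approach buys: the paper's argument is shorter once one accepts the two global claims, but those claims require separately reasoning about whether condition (1) or (2) of the rejection rule binds across the whole horizon; your swap argument is more mechanical and self-contained, handles arbitrary request sizes transparently, and the exhaustion observation neatly avoids the harder coupling of divergent states across future rounds that a naive exchange argument would require. Either proof is acceptable.
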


{\color{black}Lemma~\ref{lem:order1} justifies that, for performance analysis, it suffices to consider ordered instances. 
In particular, when we write $p(x)$ from this point onward, the argument $x$ should be understood not as the cumulative low-reward demand observed dynamically (as in Section~\ref{sec:adaptive}), but as the \emph{total} low-reward demand in an ordered instance paired with $y$ high-reward demand. 
This shift is methodological: the online nature is preserved because ordered instances capture the worst case for any online policy, and the monotonicity and slope constraints on $p(x)$ ensure that it remains a valid adaptive protection rule that could be implemented in an online fashion; see Appendix \ref{sec:valid}.}

In light of Lemma \ref{lem:order1}, we finish this section with a few definitions. 
 For any point $A= (x,y) $, let  $\CP(p;A= (x,y))$ be the \emph{compatible ratio} of  point $A= (x, y)$ under a \pl{} of $p$ when  a total of $x$ low-reward requests arrive first, followed by $y$ high-reward requests. That is, $\CP(p;(x,y))$ is the ratio of the obtained reward under \emph{ordered arrivals} associated with point $A= (x, y)$ and a fixed \pl{} of $p$ to the optimal clairvoyant solution.  When $p \geq \min\{m,y\}$, we  over-protect high-reward requests and hence 
\begin{align}
    \CP_o(p;A= (x,y)) = \frac{\min\{y,m\}r_h+\min\{x,(m-p)^{+}\}r_{\ell}}{\min\{y,m\}r_h+\min\{x, (m-y)^{+}\}r_{\ell}}, \qquad  \text{ if $p \geq \min\{m,y\}$  }\,. \label{eq:CP_over}
\end{align}
Here, the subscript ``o" in $\CP_o$ stands for over-protection. Note that in the definition of $\CP_o$,   $\min\{y,m\}r_h+\min\{x,m-p\}r_{\ell}$ is the reward at the ordered point $A=(x, y)$ with protection level $p\ge \min\{m, y\}$. Observe that we accept $\min\{y,m\}\le p$ high-reward request and $\min\{x,m-p\}$ low-reward requests.

On the other hand,  when $p <  \min\{m,y\}$, we  under-protect high-reward requests, and hence 
\begin{align}  \label{eq:CP_under}
\CP_u(p;A= (x,y)) =  
\; \frac{\max\{p, \min\{y,(m-x)^{+}\} \} r_h+\min\{x,m-p\}r_{\ell}}{\min\{y,m\}r_h+\min\{x, (m-y)^{+}\}r_{\ell}}, \qquad  \text{ if $p < \min\{m,y\}$ }\,. 
\end{align}
Here, the subscript ``u" in $\CP_u$ stands for under-protection. We allocate $\min\{x,m-p\}$ to low-reward requests. Then, if the total request is less than $m$, that is, $y<(m-x)^{+}$, we will accept all $y$ high-reward type requests. Otherwise, we will accept $\max\{p,(m-x)^{+}\}$ high-reward type requests. In summary, the total reward is $\max\{p, \min\{y,(m-x)^{+}\} \} r_h+\min\{x,m-p\}r_{\ell}$.

 When it is not clear whether we are in under- or over-protecting case, we simply use  $\CP(p;A= (x,y)) $ to denote compatible ratio of  point $A= (x, y)$.
Note that in the definition of $\CP(p;A)$, we considered ordered arrivals  where low-reward agents arrive first. This is because  of Lemma  \ref{lem:order1} where we show for any point $A$, the compatible ratio is minimized by considering its ordered sequence. 
 Then, by definition, the consistent ratio  and robust ratio of a PLA $\mathcal{A}$ with PL function $p(\cdot)$ are given by 
\begin{equation} \label{eq:defconsis}
    \cons(\mathcal{A}) = \inf_{(x,y) \in \region} \CP(p(x); (x, y)), \qquad 
%\label{eq:defrobust}
    \comp(\mathcal{A}) = \inf_{(x,y) \ge 0} \CP(p(x); (x, y)).
\end{equation}
{\color{black}As noted earlier, following Lemma~\ref{lem:order1}, in Equation~\eqref{eq:defconsis} the argument $x$ of $p(x)$
is interpreted as the {total} low-reward demand in an ordered instance (rather than the
cumulative demand observed dynamically), so that the analysis captures the worst case for any
online algorithm.}

\section{Optimization Problems to Characterize Pareto Optimal Algorithms} \label{sec:maxrobust}

{\color{black}Next, we formulate the optimization problem that defines the \(C\)-Pareto optimal algorithm. As shown in Figure \ref{fig:diagram}, we will later simplify the solution process by decomposing the problem into two subproblems. To make the problem more tractable, we first transform the consistency and robustness constraints. The solution to this modified problem is presented in Section \ref{sec:opt}.}

 A $C$-Pareto optimal algorithm
   achieves the highest possible robust ratio while obtaining  a consistent ratio of at least $C$ for any $C\le \cstar$, where $\cstar$ is  the highest possible consistent ratio for a convex ML region $\region$ (in the absence of any constraint on the robust ratio); see Section \ref{sec:mlconsistent} for more details about $\cstar$. Suppose that the $C$-Pareto optimal algorithm is a PLA. Then,  to design a $C$-Pareto optimal algorithm,  by Equation \eqref{eq:defconsis}, we consider
the following optimization problem: 
\begin{align} \notag
     \max_{R\in[0,1], p(x):x \in [0,\max\{m,\bar{x}\}]} & \ R   \\  \label{eq:cpc1}
s.t.   \ & \ \CP(p(x); (x, y)) \geq C, \qquad \ (x,y) \in \region,  \\  \label{eq:cpc2}
\ & \ \CP(p(x); (x, y)) \geq R, \qquad \ x,y \geq 0,  \\  \label{eq:cpc3}
\ & \  \text{$p(x)\ge 0$ is continuous} \qquad \ x \in [0,{\max \{m, \bar x\}}],  \\ \label{eq:cpc4}
\ & \ -1 \le  p'(x) \le 0 \quad \text{ a.e.}, \qquad  \ x \in [0,{\max \{m, \bar x\}}]
\\ \tag{\text C-Pareto} \label{prob:rbmax}
\end{align}
  Recall that $\bar{x}$ is defined in Equation \eqref{eq:barx}. The first set of constraints ensures that the compatible ratio of any point $(x,y)\in \region$ at $p(x)$ is at least $C$. The second set of constraints ensures that the compatible ratio of any point $(x,y)$ in the first quadrant at $p(x)$ is at least $R$. The third and fourth sets of constraints, which we refer to as validity constraints, ensure that the protection level is valid per Definition \ref{def:1}. Note that the optimal solution to Problem \eqref{prob:rbmax} only determines the PL function $p(x)$ for $x\in[0,\max{m,\bar{x}}]$. As per the definition of PLAs in Definition \ref{def:1}, for any $x>\max\{m,\bar{x}\}$, we set $p(x)=p(\max\{m,\bar{x}\})$.

As we will show later in Theorem \ref{thm:optimal_trans}, the optimal solution to Problem \eqref{prob:rbmax} leads to a PLA that is an optimal $C$-Pareto algorithm among any deterministic and randomized non-anticipating algorithms (not only the PLAs). 

To solve Problem \eqref{prob:rbmax}, we first  transform its first and second sets of constraints (i.e., the consistency and robustness constraints). This transformation, which is done  in Sections \ref{subsec: consistrans} and \ref{subsec: robusttrans}, plays a key role in our design. We then solve Problem \eqref{prob:rbmax} using  the  properties of the transformed constraints.

\subsection{Transforming the Consistency  Constraints}  \label{subsec: consistrans}
Here, we  transform the consistency constraints in Problem \eqref{prob:rbmax} that require $\CP(p(x); (x, y)) \geq C$  for any $(x,y) \in \region$. To do so, let us fix $x\in [\underline x, \bar x]$  and its protection level $p(x)$. Then,  the worst points with the minimum consistent ratio  occur at the lower and upper  boundaries of the ML region. That is, as shown in Lemma \ref{lem:ymonotone} stated below,  for any $p\in [0,m]$,  $\min_{y\in [\underline h(x), \bar h(x)]} \CP(p; (x, y)) $ is   either  $ \CP(p; (x, \underline h(x)))$  or  $\CP(p; (x, \bar h(x)))$, where $\underline h(x)$ and $\bar h(x)$ are defined in Equation \eqref{eq:lower_upper}.  %The lemma  
%further shows that in the case of under-protection (i.e., the protection level $p$ less than high-reward requests $y$), the
%compatible ratio decreases as $y$ increases  while 
% in the case of over-protection (i.e., $p>y$),  the
%compatible ratio decreases as $y$ decreases.  

\begin{lemma}[Monotonicity of the Compatible Ratio $\CP(p; (x, y))$ w.r.t. $y$] \label{lem:ymonotone} 
    Consider  any $x \in [\underline{x},\bar{x}]$ and $ \min\{m,y_1\}\le  \min\{m,y_2\}$. For  any  protection level $p$ with   $p \leq \min\{m,y_1\}$, we have
    $
    \CP_u(p;(x,y_1)) \geq \CP_u(p;(x,y_2))$, 
and for any  protection level $p$ with   $p \geq \min\{m,y_2\}$, we have
    $
    \CP_o(p;(x,y_2)) \geq \CP_o(p;(x,y_1))$. 
    This further implies that for any $x\in[\underline x, \bar x]$ and $p\ge 0$,  we have
    \[\min_{y\in [\underline h(x), \bar h(x)]} \{\CP(p; (x, y))\} = \min\left\{ \CP(p; (x, \underline h(x))),  \CP(p; (x, \bar h(x)))\right\}\,.\]
\end{lemma}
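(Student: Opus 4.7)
The plan is to prove the two directional monotonicity claims in turn by direct analysis of the closed forms \eqref{eq:CP_over} and \eqref{eq:CP_under}, and then patch them together via continuity at the switching value $\min\{m,y\}=p$ to deduce that $\CP(p;(x,\cdot))$ is unimodal in $y$; the minimum of a unimodal function over $[\underline h(x),\bar h(x)]$ is attained at an endpoint.

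\emph{Step 1 (over-protection).} For $\min\{m,y_1\}\le \min\{m,y_2\}\le p$, I would write the numerator and denominator of \eqref{eq:CP_over} as $N(y)=g(y)+a$ and $D(y)=g(y)+h(y)$, where $g(y):=\min\{y,m\}r_h$ is non-decreasing in $y$, $h(y):=\min\{x,(m-y)^+\}r_\ell$ is non-increasing in $y$, and $a:=\min\{x,(m-p)^+\}r_\ell$ is independent of $y$. The constraint $p\ge \min\{m,y\}$ gives $(m-p)^+\le (m-y)^+$, hence $a\le h(y_1)\le h(y_2)$ (so $\CP_o\le 1$ automatically). A direct cross-multiplication yields the identity
\[
\frac{N(y_2)}{D(y_2)}-\frac{N(y_1)}{D(y_1)}=\frac{(g(y_2)-g(y_1))\,(h(y_1)-a)+(h(y_1)-h(y_2))\,(g(y_1)+a)}{D(y_1)\,D(y_2)},
\]
and every parenthesized factor on the right is non-negative, which proves $\CP_o(p;(x,y_2))\ge \CP_o(p;(x,y_1))$.

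\emph{Step 2 (under-protection).} For $p\le \min\{m,y_1\}\le \min\{m,y_2\}$, I write the numerator of \eqref{eq:CP_under} as $N_u(y):=\max\{p,\min\{y,(m-x)^+\}\}\,r_h+\min\{x,m-p\}\,r_\ell$. Observe that $N_u$ is non-decreasing in $y$ and becomes constant once $y\ge (m-x)^+$, while the denominator $D(y)=\opt$ is non-decreasing in $y$ and constant once $y\ge m$. I would split $[y_1,y_2]$ at the breakpoints $(m-x)^+$ and $m$ (and track whether $x\le m-p$ in the low-reward term), and verify the monotonicity within each open piece. On $y\in[p,(m-x)^+]$ a direct check gives $N_u(y)=D(y)$, so $\CP_u\equiv 1$ there; on $y\ge (m-x)^+$ the numerator is frozen while the denominator is non-decreasing, so $\CP_u$ is non-increasing. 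Continuity of $\CP_u$ in $y$ at each breakpoint stitches the pieces into a globally non-increasing function, which yields $\CP_u(p;(x,y_1))\ge \CP_u(p;(x,y_2))$.

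\emph{Step 3 (combine) and main obstacle.} At the transition $\min\{m,y\}=p$ both \eqref{eq:CP_over} and \eqref{eq:CP_under} reduce to $1$, so $\CP(p;(x,\cdot))$ is continuous across the two regimes. Steps 1 and 2 then say that $\CP(p;(x,y))$ is non-decreasing in $y$ on $\{\min\{m,y\}\le p\}$ and non-increasing on $\{\min\{m,y\}\ge p\}$; i.e., it is unimodal in $y$, and its minimum over $[\underline h(x),\bar h(x)]$ must be attained at one of the two endpoints, giving the final displayed identity. The hard part is the bookkeeping in Step 2: $\CP_u$ is piecewise with breakpoints at $y=p$, $y=(m-x)^+$, $y=m$, together with a further split according to whether $x\le m-p$ in the low-reward term. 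The cleanest route is to verify monotonicity inside each of the finitely many open pieces and then invoke continuity at the breakpoints, rather than trying to prove the inequality uniformly from one compact expression.
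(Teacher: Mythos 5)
Your proof is correct and follows essentially the same route as the paper's: establish monotonicity of $\CP_o$ and $\CP_u$ in $y$ on the over- and under-protected regimes respectively, then observe that the two pieces glue continuously (both equal $1$ where $\min\{m,y\}=p$), so the minimum over $[\underline h(x), \bar h(x)]$ is attained at an endpoint. The one genuine improvement is your Step 1: the single cross-multiplication identity handles all sign cases of $x+y-m$ uniformly, whereas the paper splits into the cases $x+y_1<m$, $x+y_2\ge m$, and the mixed case and computes $\partial \CP_o/\partial y$ explicitly (only note the minor typo that your chain should read $a\le h(y_2)\le h(y_1)$, since $h$ is non-increasing; the factors $h(y_1)-a$ and $h(y_1)-h(y_2)$ you actually use are nonnegative exactly as claimed).
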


In light of Lemma \ref{lem:ymonotone}, we define the following two functions $\u(x; C)$ and $\ll(x;C)$. Roughly speaking,  under $\u(x; C)$, (if possible) the compatible ratio at (the worst over-protected  point) $(x, \underline h(x))$ is equal to $C$ while under $\ll(x; C)$, (if possible) the compatible ratio at (the worst under-protected  point) $(x, \bar h(x))$ is equal to $C$. 

\begin{definition}[Upper Bound Function]
   For any $C \in [0,1]$,  we define 
 \begin{align}\u(x; C)= \sup\big \{p\in [0,m]: \CP_o(p;(x,\underline{h}(x)))=C\big\}\qquad  x \in [\underline x_u,\bar x_u]\, \label{eq:u}\end{align} 
 while we set $\u(x; C) =m$ for any $x \in [0,\underline{x}_{u}] $ and $\u(x; C) =\u(\bar x_u; C)$ for any $x \in [\bar x_u,\bar x] $.
 Here, 
\begin{align} \label{eq:xmin} \bar x_u = \left\{ \begin{array}{ll}
         x_L & \quad \mbox{if $x_L+y_L \geq m$};\\
        \sup\{x \in [x_L,\bar{x}]: (1-C)\frac{r_h}{r_{\ell}}\underline{\mathcal{H}}'(x^{-})-C < 0 \} & \quad  \mbox{Otherwise}\,,\end{array} \right. \end{align}   and  
$\underline x_u=\sup\{\underline{x}<x<\bar x_u: \CP_o(m;(x,\underline{h}(x))) \geq C \}
$, where $\underline{\mathcal{H}}(x)=\min\{\underline h(x),m\}$. 
 We set $\underline x_u = \underline x$  when its defining set is empty. \footnote{We note that   $\bar x_u$ is well defined because  $L$ is the lowest point, and hence $\underline{\mathcal{H}}'(x_L^{-}) \leq 0$, and $(1-C)\frac{r_h}{r_{\ell}}\underline{\mathcal{H}}'(x_L^{-})-C < 0$.}
 \end{definition}

 \begin{definition}[Lower Bound Function] 
For any $C \in [0,1]$,  let 
\begin{align} \ll (x; C) = \inf\big \{p\in [0,m]: \CP_u(p;(x,\overline{h}(x)))=C \big\}  \qquad x \in [x_H,\bar{x}_{l}]\,,\label{eq:l}\end{align}
while we set $\ll(x; C) =\ll(x_H; C)$ for any $x \in [0,x_H] $ 
 and $\ll(x; C) =0$ for $x\in  [\bar x_l,\bar{x}]$. Here, 
 $\bar{x}_{l}=\inf\{x_H<x<\bar{x}: \CP_u(0;(x,\overline{h}(x))) \geq C \}$.
 We set $\bar x_l$  to   $\bar x$  when its defining set is empty.
\end{definition}\smallskip

\begin{figure}[H]
    \centering
    \includegraphics[width=.5\textwidth]{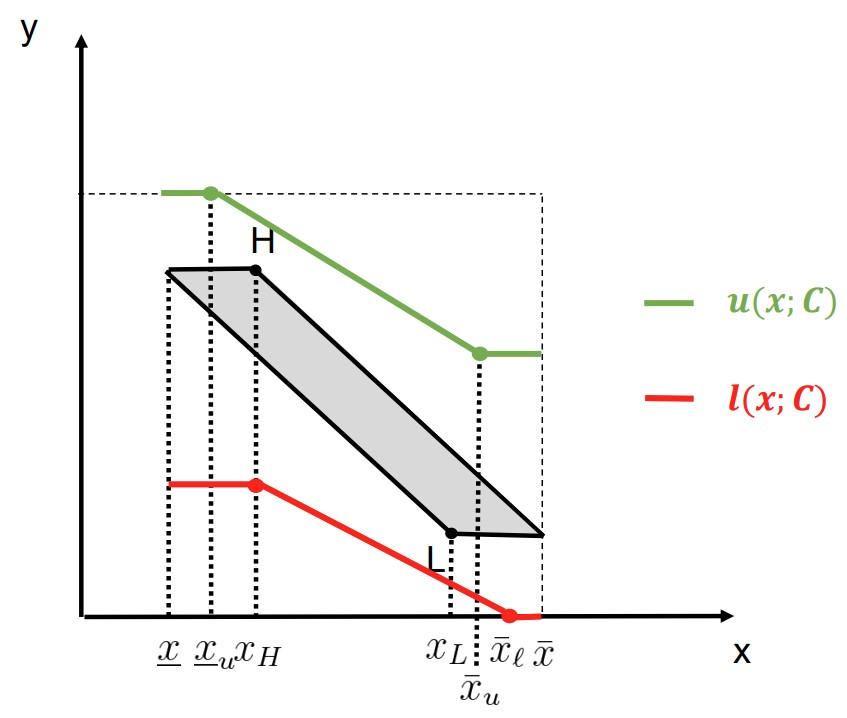}
   \caption{{ The \textbf{green line} represents the upper bound function \( u(x; C) \), and the \textbf{red line} represents the lower bound function \( l(x; C) \).
The point \( \underline{x}_u \) marks the \textbf{starting point} where the upper bound function \( u(x; C) \) begins to decrease. Intuitively, this is the largest value of \( x \) where the system can still afford to maintain the maximum protection level (i.e., \( m \)) while satisfying the compatible ratio requirement at the worst over-protected point \( (x, \underline{h}(x)) \). The point \( \bar{x}_u \) represents where the upper bound function \textbf{stops decreasing} and flattens out. This indicates that decreasing protection beyond this point is no longer necessary due to the large number of low-reward requests.
On the other hand, \( \underline{x}_l \) represents the smallest value of \( x \) where the system can afford to set protection to zero while still satisfying the compatible ratio at the worst under-protected point \( (x, \overline{h}(x)) \).\label{fig:Example_points}
}}
\end{figure}

{\color{black}We now discuss the upper and lower bounds, as illustrated in Figure \ref{fig:Example_points}.  

For the upper bound, at \( x \in [\underline x, \underline x_u) \), even if we set the protection level to \( m \)—which means rejecting all low-reward type requests—the compatible ratio at \( (x, \underline h(x)) \) exceeds \( C \). This corresponds to the first {flat part} of the green line in the figure.  

For \( x \in [\underline x_u, \bar x_u] \), setting the protection level to \( u(x; C) \) makes the compatible ratio at the worst over-protected point \( (x, \underline h(x)) \) exactly equal to \( C \). This is reflected in the {decreasing section} of the green line. As shown in Lemma \ref{lem:property_u_l}, for any $C \in [0,1]$,  $\u(x;C)$ is non-increasing  in $x\in [0, \bar x]$ and is convex for $x \in [\underline x_u, \bar x]$..  

For the lower bound, at \( x \in (\bar{x}_{l}, \bar x] \), even if the protection level is zero, the compatible ratio at \( (x, \bar h(x)) \) exceeds \( C \). For any \( x \in [\underline x_l, \bar x_l] \), setting the protection level to \( l(x; C) \) makes the compatible ratio at the worst under-protected point \( (x, \bar h(x)) \) exactly equal to \( C \). This appears in the {decreasing section} of the red line in the figure. As shown in Lemma \ref{lem:property_u_l}, for any $C \in [0,1]$, $\ll(x;C)$ is non-increasing in $x\in [0, \bar x]$ and is concave for $x \in [\underline x, \bar x_{\ell}]$. Further, \( l(x;C) \leq u(x; C) \) for any \( x \in [0, \bar x] \).}

%Lemma \ref{lem:property_u_l} in the appendix presents some key properties of the two functions $\u(\cdot; C)$ and $\ll(\cdot; C)$.  It shows that while $\ll(x; C)$ is concave in $x$, $\u(x; C)$ is convex in $x$. 
%The lemma also presents the derivative of these functions w.r.t. $x$ and shows that for any $C\le \cstar$,  $\ll(x; C)$ is less than or equal to $u(x; C)$ for any $x\in [\underline x, \bar x]$. In addition, it shows that both $\ll(x;C)$ and $\u(x;C)$ are also continuous monotone function in $C$ for any $x 
%\in [\underline{x},\bar{x}]$.

We are now ready to present our transformation result. 

\begin{lemma}[Transforming the Consistency Constraints-I] \label{lem:feasibleregionrobust}
 For any $C\le \cstar$,  Problem \eqref{prob:rbmax} is equivalent to the following optimization problem: 
 \begin{equation} \label{prob:rbmaxtrans1}
     \max_{R\in[0,1], p(x):x \in [0,\max\{m,\bar{x}\}]}  R \qquad \text{s.t. }\quad  \ll(x;C) \leq p(x) \leq \u (x;C), ~~ x \in [0,\bar{x}], \qquad 
     \text{ and \eqref{eq:cpc2}, \eqref{eq:cpc3}, \eqref{eq:cpc4}}\,,
\end{equation}
where $\ll(x;C)$ and $\u(x;C)$ are defined in Equations \eqref{eq:l} and \eqref{eq:u}, respectively. 
\end{lemma}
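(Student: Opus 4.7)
The goal is to show that the family of consistency constraints \eqref{eq:cpc1}, indexed by $(x,y)\in\region$, can be replaced by the single pointwise bound $\ll(x;C)\le p(x)\le\u(x;C)$ on $x\in[0,\bar x]$; the robustness and validity constraints \eqref{eq:cpc2}--\eqref{eq:cpc4} are shared by both problems, so I focus on \eqref{eq:cpc1}.

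\emph{Reducing to the $y$-extremes.} Because $\region$ is convex, its $x$-fiber is the vertical segment $\{(x,y):y\in[\underline h(x),\bar h(x)]\}$. Lemma \ref{lem:ymonotone} then yields
\[
\min_{y\in[\underline h(x),\bar h(x)]}\CP(p(x);(x,y))\;=\;\min\!\bigl\{\CP(p(x);(x,\underline h(x))),\;\CP(p(x);(x,\bar h(x)))\bigr\}
\]
for every $x\in[\underline x,\bar x]$, so \eqref{eq:cpc1} is equivalent to the pair $\CP(p(x);(x,\underline h(x)))\ge C$ and $\CP(p(x);(x,\bar h(x)))\ge C$ on $[\underline x,\bar x]$.

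\emph{Translating each condition into a pointwise bound.} Inspecting \eqref{eq:CP_under}--\eqref{eq:CP_over}, I would verify (by a short case analysis on whether $x+y$ exceeds $m$) that $p\mapsto\CP(p;(x,y))$ is weakly unimodal: $\CP_u(\cdot;(x,y))$ is non-decreasing on $[0,\min\{m,y\}]$ and $\CP_o(\cdot;(x,y))$ is non-increasing on $[\min\{m,y\},m]$, both attaining the value $1$ at $p=\min\{m,y\}$. Consequently, $\{p\in[0,m]:\CP(p;(x,y))\ge C\}$ is a single interval $[\alpha(x,y),\beta(x,y)]$ containing $\min\{m,y\}$. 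Setting $y=\underline h(x)$ identifies its right endpoint as $\u(x;C)$ per \eqref{eq:u}, with the degenerate case $\u(x;C)=m$ on $[0,\underline x_u]$ corresponding exactly to the regime where $\CP_o(m;(x,\underline h(x)))\ge C$ so no upper bound bites. Symmetrically, $y=\bar h(x)$ identifies the left endpoint of the corresponding interval as $\ll(x;C)$ per \eqref{eq:l}. Finally, the \emph{other} two endpoints are redundant: by Lemma \ref{lem:ymonotone}, $\CP_u(p;(x,\cdot))$ is non-increasing in $y$, so any $p\ge\ll(x;C)$ already satisfies $\CP(p;(x,\underline h(x)))\ge C$, and the mirror argument using $\CP_o$'s monotonicity in $y$ handles the upper side.

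\emph{Extending the bounds to $[0,\bar x]$ and closing the proof.} For $x\in[0,\underline x)$ no consistency constraint acts, and the extensions $\u(x;C)=m$ and $\ll(x;C)=\ll(x_H;C)$ introduce nothing new: the upper bound is trivial, and the lower bound is automatic because $p$ is non-increasing (Definition \ref{def:1}), giving $p(x)\ge p(x_H)\ge\ll(x_H;C)$ as soon as consistency holds at $x_H$. The constant extensions on $[\bar x_u,\bar x]$ and $[\bar x_l,\bar x]$ are handled analogously, using monotonicity of $p$ together with the properties of the bound functions recorded in Lemma \ref{lem:property_u_l}. Reversing the three stages shows that any $p$ feasible for \eqref{prob:rbmaxtrans1} is feasible for \eqref{prob:rbmax}, yielding the equivalence. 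The principal obstacle is the unimodality step, which requires careful case analysis across the piecewise-linear regimes of \eqref{eq:CP_under}--\eqref{eq:CP_over}---kinks occur at $p=y$, $p=m-x$, and wherever $x$ or $y$ crosses $m$---and then aligning $\CP_o$ at $(x,\underline h(x))$ (resp.\ $\CP_u$ at $(x,\bar h(x))$) with the supremum/infimum definitions of $\u$ and $\ll$; once this is secured, the rest is bookkeeping driven by Lemma \ref{lem:ymonotone}.
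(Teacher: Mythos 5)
Your proposal is correct and follows essentially the same route as the paper: reduce each $x$-fiber to the two extreme points $(x,\underline h(x))$ and $(x,\bar h(x))$ via Lemma \ref{lem:ymonotone}, use the unimodality of $p\mapsto\CP(p;(x,y))$ around $p=\min\{m,y\}$ (the content of Lemma \ref{lem:pmonotone}) to convert each consistency requirement into an interval constraint whose binding endpoints are $\u(x;C)$ and $\ll(x;C)$, and dispose of the constant extensions of the bounds outside $[\underline x_u,\bar x_u]$ and $[x_H,\bar x_l]$ using the monotonicity of $p$. Your observation that the cross endpoints are redundant by the $y$-monotonicity in Lemma \ref{lem:ymonotone} is exactly the role played by the paper's auxiliary Lemmas \ref{lem:leftthresholdun} and \ref{lem:worstrightpart}, so no gap remains beyond the case bookkeeping you already flag.
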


The proof of all lemmas in this section can be found in Appendix \ref{sec:appendixC}. Lemma \ref{lem:feasibleregionrobust} demonstrates that enforcing lower and upper bounds on the PL function $p(\cdot)$ satisfies the consistency constraint of Problem \eqref{prob:rbmax}. These bounds, denoted by $l(\cdot;C)$ and $u(\cdot;C)$, respectively, can be easily computed and depend on the ML region. However, these bounds may not be tight since their slope can be less than $-1$. Recall that the optimal solution to Problem \eqref{prob:rbmax} should be a valid PL function, i.e., a non-increasing function with a slope greater than or equal to $-1$. With this in mind, the following lemma presents an alternative (valid) lower bound denoted by $\wll(\cdot;C)$, which is a non-increasing function with a slope greater than or equal to $-1$. This lower bound is defined as follows:
\begin{align} \label{eq:ll} \wll(x; C) = \left\{ \begin{array}{ll}
        \ll(x; C) & \quad  \mbox{$x\in [0, \nex]$ }\\
        (-(x-\nex)+\ll(\nex; C))^{+} & \quad  \mbox{$x\in [\nex, \bar{x}]$}
        .\end{array} \right. \end{align} 
Note that $x^{+}=\max\{x,0\}$ and $\nex=\sup\{x \in [x_H,\bar{x}]: \frac{\partial \ll(x^{-};C)}{\partial x} \le -1 \}$. When $\frac{\partial \ll(x^{-};C)}{\partial x} \geq -1$ for any $x\in [x_H, \bar x]$, we set $\nex$ to $\bar x$, and in this case, $\ll(x; C) = \wll(x; C) $ for any $x\in [\underline x, \bar x]$. We recall that $\frac{\partial \ll(x^{-};C)}{\partial x}= C \overline{\mathcal{H}}'(x)$ by Lemma \ref{lem:property_u_l}, and that $\overline{\mathcal{H}}(\cdot)=\min\{\bar h(\cdot),m\}$.
We observe that $\wll(x; C)$ is a non-increasing continuous function whose slope is greater than or equal to $-1$, due to the concavity of $\ll(x; C)$ in $x$ as shown in Lemma \ref{lem:property_u_l}. Moreover, we have $\ll(x; C)\le \wll(x; C)$ for any $x\in [\underline x, \bar x]$. Finally, the following lemma shows that $\wll(\cdot; C)$ is a tighter lower bound for the PL function $p(\cdot)$. 

\begin{lemma}[Transforming the Consistency Constraints-II] \label{lem:feasibleregionrobust_2}
 For any $C\le \cstar$,  Problem \eqref{prob:rbmax} is equivalent to the following optimization problem:  
 \begin{equation}  \label{prob:rbmaxtrans2}
     \max_{R\in[0,1], p(x):x \in [0,\max\{m,\bar{x}\}]}  R \qquad \text{s.t. } \quad \wll(x;C) \leq p(x) \leq \u (x;C), ~~x \in [0,\bar{x}],\qquad  \text{ and \eqref{eq:cpc2}, \eqref{eq:cpc3}, \eqref{eq:cpc4}} \,,
\end{equation}
where $\wll(x;C)$ and $\u(x;C)$ are defined in Equations \eqref{eq:ll} and \eqref{eq:u}, respectively. 

\end{lemma}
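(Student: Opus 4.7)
The plan is to argue equivalence in both directions between Problem \eqref{prob:rbmaxtrans1} (established in Lemma \ref{lem:feasibleregionrobust}) and the new Problem \eqref{prob:rbmaxtrans2}, with the same decision variables and the same constraints \eqref{eq:cpc2}, \eqref{eq:cpc3}, \eqref{eq:cpc4}. Only the lower envelope on $p(\cdot)$ changes from $\ll(\cdot; C)$ to $\wll(\cdot; C)$, so it suffices to show that on the feasible region cut out by \eqref{eq:cpc3}--\eqref{eq:cpc4}, the pointwise inequality $p(x)\ge \ll(x;C)$ is equivalent to $p(x)\ge \wll(x;C)$.

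The easy direction is immediate: since $\wll(x;C)\ge \ll(x;C)$ for every $x\in[\underline x,\bar x]$ (the definition in \eqref{eq:ll} only raises the bound past $\nex$), any $p(\cdot)$ feasible for \eqref{prob:rbmaxtrans2} is automatically feasible for \eqref{prob:rbmaxtrans1}, and the two objectives coincide. So the real work is the reverse direction: I must show that any $p(\cdot)$ feasible for \eqref{prob:rbmaxtrans1} (in particular satisfying $p(x)\ge \ll(x;C)$, $p(x)\ge 0$, and $p'(x)\ge -1$ a.e.) also satisfies $p(x)\ge \wll(x;C)$ everywhere on $[0,\bar x]$.

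For this, I would split the interval into the two pieces that appear in the definition \eqref{eq:ll} of $\wll$. On $[0,\nex]$ the two lower bounds coincide, so nothing is required. On $[\nex,\bar x]$ I argue as follows: by feasibility and continuity, $p(\nex)\ge \ll(\nex;C)$; by the slope condition $p'(x)\ge -1$ a.e.\ and absolute continuity (which follows from the Lipschitz-type constraint \eqref{eq:cpc4} together with continuity \eqref{eq:cpc3}), for any $x\in[\nex,\bar x]$ we have
\begin{equation*}
p(x)=p(\nex)+\int_{\nex}^{x} p'(t)\,dt\;\ge\; p(\nex)-(x-\nex)\;\ge\;\ll(\nex;C)-(x-\nex).
\end{equation*}
Combining this with the nonnegativity $p(x)\ge 0$ from \eqref{eq:cpc3} yields $p(x)\ge\big(\ll(\nex;C)-(x-\nex)\big)^{+}=\wll(x;C)$, exactly the new lower bound.

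The main obstacle, and the only genuinely delicate part, is justifying that $\nex$ is well defined and that the two pieces of $\wll$ glue continuously at $\nex$ so that the above splitting is legitimate. For this I would invoke Lemma \ref{lem:property_u_l}: it gives that $\ll(\cdot;C)$ is concave and decreasing on $[x_H,\bar x_l]$ with derivative $C\,\overline{\mathcal{H}}'(x)$, so $\partial \ll(x^-;C)/\partial x$ is non-increasing in $x$. Hence the set $\{x\in[x_H,\bar x]:\partial\ll(x^-;C)/\partial x\le -1\}$ is either empty (the edge case where I set $\nex=\bar x$ and $\wll\equiv \ll$) or an interval whose supremum $\nex$ is the unique point where the slope of $\ll$ first fails the validity condition $p'\ge -1$. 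By concavity and continuity of $\ll$, $\wll$ agrees with $\ll$ up to and including $\nex$, and $\wll$ has slope exactly $-1$ just past $\nex$ until it hits $0$. I would also verify the edge case where $\ll(\nex;C)-(x-\nex)$ becomes negative before $\bar x$: the $(\cdot)^+$ truncation keeps $\wll$ continuous, non-increasing, and Lipschitz with slope in $[-1,0]$, so the argument above still applies unchanged (once $\wll=0$, the bound $p(x)\ge 0$ from \eqref{eq:cpc3} suffices). This completes the equivalence.
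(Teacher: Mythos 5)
Your proposal is correct and follows essentially the same route as the paper: the easy direction via $\wll(\cdot;C)\ge\ll(\cdot;C)$, and the substantive direction by using $p(\nex)\ge\ll(\nex;C)=\wll(\nex;C)$ together with the validity constraint $p'\ge-1$ to force $p$ to stay above the slope-$(-1)$ line past $\nex$ (the paper phrases this as a contradiction via the difference quotient $\frac{p(x_1)-p(\nex)}{x_1-\nex}<-1$, you phrase it directly via an integral bound, but it is the same argument). The only caveat, shared equally by the paper's own proof, is the implicit regularity assumption that the a.e.\ bound $p'\ge-1$ controls difference quotients, so it does not count against you.
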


%We finish this section by presenting an example. 

\subsection{Transforming the Robustness  Constraints} \label{subsec: robusttrans}
In this section, we aim to transform the robustness constraints of Problem \eqref{prob:rbmax} which mandate that $\CP(p(x); (x, y)) \geq R$ for any $(x,y)  
\ge 0$. Similar to the previous section, we replace this constraint with lower and upper bound constraints on the (PL) function $p(\cdot)$. This enables us to effectively address the robustness constraints while simplifying the optimization problem.

 \begin{lemma} [Transforming the Robustness Constraints] \label{lem:trapezoid}
    For any $ R\in (0,\rho]$ with $\rho=\frac{1}{2-r_{\ell}/r_h}$, and for $x \in [0,m]$, define  
    \begin{equation} \label{eq:defgR}
    \underline{g}(x;R) = \frac{m(R-r_{\ell}/r_h)}{1-r_{\ell}/r_h},\qquad 
    \bar{g}(x;R) = -Rx+m.
    \end{equation}
    For $x>m$, we have $\underline{g}(x;R)=\underline{g}(m;R)$ and $\bar{g}(x;R)=\bar{g}(m;R)$.
    For any $C\le \cstar$,  Problem \eqref{prob:rbmax} is equivalent to the following optimization problem: 
\begin{align}\notag 
     \max_{R\in[0,\rho], p(x):x \in [0,\max\{m,\bar{x}\}]} & \ R   \\  \notag 
s.t.\quad    \ & \wll(x;C) \leq p(x) \leq \u (x;C),\qquad x \in [0,\bar{x}]\,,  \\  \notag 
\ & \underline{g}(x;R) \leq p(x) \leq \bar{g}(x;R), \qquad x \in [0,\max\{m,\bar{x}\}]\,,  \\ 
&\text{Validity Constraints \eqref{eq:cpc3}, \eqref{eq:cpc4}}\tag{C-Pareto-Trans}\,. \label{prob:trans}
\end{align}
\end{lemma}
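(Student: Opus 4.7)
The plan is to mirror the consistency transformation of Lemma \ref{lem:feasibleregionrobust_2}: I would convert the inequality $\CP(p(x);(x,y)) \geq R$ quantified over all $y \geq 0$ into a pair of envelope bounds on the single variable $p(x)$, so that the remaining problem looks identical to Lemma \ref{lem:feasibleregionrobust_2} except with an additional pair of constraints. The crucial structural observation is that, for fixed $x$ and $p$, the minimum of $y \mapsto \CP(p;(x,y))$ over $y \geq 0$ is attained at one of two extreme regimes: the maximally over-protected case $y = 0$ and the maximally under-protected case $y \geq m$. This follows directly from Lemma \ref{lem:ymonotone}: $\CP_o(p;(x,y))$ is non-decreasing in $y$ on the over-protection range $y \in [0,\min\{m,p\}]$, so its minimum is at $y = 0$; and $\CP_u(p;(x,y))$ is non-increasing in $y$ on the under-protection range $y > p$, so its infimum is attained as $y \to \infty$, which by Equation \eqref{eq:CP_under} is already achieved at every $y \geq m$. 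A short continuity check at the boundary $y = p$, where both $\CP_o$ and $\CP_u$ evaluate to $1$, confirms that no interior value of $y$ can beat these two extremes.

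Next, I would compute the two extreme-$y$ compatible ratios in closed form and read off the corresponding bounds on $p$. At $y = 0$, one has $\CP_o(p;(x,0)) = \min\{x,m-p\}/\min\{x,m\}$; a brief case split on $x \lessgtr m$ and $x \lessgtr m-p$ shows $\CP_o(p;(x,0)) \geq R$ is equivalent to $p \leq m - R\min\{x,m\} = \bar{g}(x;R)$ (using $R \leq 1$ to merge the trivially-satisfied subcase). At $y \geq m$, the ratio simplifies to $\CP_u(p;(x,y)) = \bigl(\max\{p,(m-x)^+\}r_h + \min\{x,m-p\}r_\ell\bigr)/(mr_h)$. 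The subcase $x \geq m-p$ (in particular whenever $x \geq m$) reduces to the $x$-independent quantity $(p r_h + (m-p)r_\ell)/(mr_h)$, and $\geq R$ rearranges exactly to $p \geq m(R - r_\ell/r_h)/(1 - r_\ell/r_h) = \underline{g}(x;R)$. The remaining subcase $x < m-p$ yields the ratio $1 - x(1-r_\ell/r_h)/m$, which is $\geq R$ iff $x \leq m(1-R)/(1-r_\ell/r_h)$.

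With these building blocks, the two directions of equivalence follow. For the sufficiency direction (bounds imply robustness), I combine the two extreme computations: any $p(x) \in [\underline{g}(x;R),\bar{g}(x;R)]$ satisfies $\CP_o(p(x);(x,0)) \geq R$ by the upper bound, and $\CP_u(p(x);(x,y)) \geq R$ for $y \geq m$ by splitting on $x \gtrless m - p(x)$ as above. For the necessity direction, I instantiate $y = 0$ at each $x$ to recover $p(x) \leq \bar{g}(x;R)$, and instantiate $y \geq m$ at $x = \max\{m,\bar x\}$ (where $x \geq m - p(x)$ holds automatically) to recover $p(\max\{m,\bar x\}) \geq \underline{g}$. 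The validity constraint $p'(x) \leq 0$ then propagates this lower bound to every smaller $x$ via $p(x) \geq p(\max\{m,\bar x\})$. Appending the already-transformed consistency bounds from Lemma \ref{lem:feasibleregionrobust_2} and the unchanged validity constraints yields precisely Problem \eqref{prob:trans}.

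The main obstacle I anticipate is the under-protection subcase $x < m - p$: there the $y \geq m$ ratio depends only on $x$, not on $p$, so one might worry that $p \geq \underline{g}$ is insufficient to ensure $\CP \geq R$. The saving observation is that $p \geq \underline{g}$ together with $x < m - p$ forces $x < m - \underline{g} = m(1-R)/(1-r_\ell/r_h)$, exactly the threshold that makes this subcase's ratio $\geq R$. A secondary subtlety is the restriction $R \in [0,\rho]$: evaluating $\underline{g}(x;R) \leq \bar{g}(x;R)$ at $x = m$ rearranges to $R(2 - r_\ell/r_h) \leq 1$, i.e., $R \leq \rho = 1/(2 - r_\ell/r_h)$, so beyond $\rho$ the box of admissible PL functions collapses---this recovers the Ball-Queyranne \citep{ball2009toward} competitive-ratio ceiling in the no-advice regime and justifies restricting the outer maximization to $R \in [0,\rho]$.
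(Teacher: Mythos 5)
Your proposal is correct and follows essentially the same route as the paper's proof: reduce to the extreme points $(x,0)$ and $(x,m)$ via Lemmas \ref{lem:ymonotone} and \ref{lem:mout}, verify the closed-form equivalences $\CP_o(p;(x,0))\ge R \Leftrightarrow p\le \bar g(x;R)$ and $\CP_u(p;(x,m))\ge R \Leftrightarrow p\ge \underline g(x;R)$ by the same case splits, and use monotonicity of $p(\cdot)$ to propagate the lower bound from $x=\max\{m,\bar x\}$ in the necessity direction. Your handling of the subcase $x<m-p$ via the threshold $x< m-\underline g(x;R)$ is just a repackaging of the paper's monotonicity-in-$x$ argument, and your closing remark deriving $R\le\rho$ from $\underline g(m;R)\le\bar g(m;R)$ is a correct (and welcome) justification of the range restriction that the paper leaves implicit.
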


\section{Pareto Optimal Algorithms} \label{sec:opt}
{\color{black}We now solve the optimization problem from Section \ref{sec:maxrobust}. As shown in Figure \ref{fig:diagram}, we handle this by first solving the right subproblem and then the left subproblem, integrating the two solutions to derive the \( C \)-Pareto optimal algorithm.

Let an optimal solution to the transformed problem \eqref{prob:trans} be denoted by \(\pc(\cdot)\), which achieves the optimal robust ratio of \(\Rstar\). The problem decomposes into two subproblems:

\begin{minipage}[t]{0.48\textwidth}
%\raggedleft
\textbf{Left Problem:} Solve for \(x \in [0, \bar{x}]\)
\begin{align} 
\begin{aligned}
\RL = & \max_{R\in[0,\rho], p(x):x \in [0,\bar{x}]}  \ R   \\  
\text{s.t.} & \ \wll(x;C) \leq p(x) \leq \u (x;C),\quad x \in [0,\bar{x}] \\  
& \underline{g}(x;R) \leq p(x) \leq \bar{g}(x;R), \quad x \in [0,\bar{x}] \\ 
& \text{Validity Constraints \eqref{eq:cpc3}, \eqref{eq:cpc4}} \\ 
& p(\bar x) =\pright(\bar x)
\end{aligned}
\label{prob:left} \tag{C-Pareto-left}
\end{align}
\end{minipage}
\hfill
\begin{minipage}[t]{0.48\textwidth}
\raggedright
\textbf{Right Problem:} Solve for \(x \geq \bar{x}\)
\begin{align} 
 \begin{aligned}
\RR = & \max_{R\in[0,\rho], p(x):x \in [\bar{x},\max\{m,\bar{x}\}]}  \ R   \\  
\text{s.t.} \ & \ \wll(\bar{x};C)  \leq p(\bar{x}) \leq \u (\bar{x};C),  \\ 
& \underline{g}(x;R)\leq p(x) \leq \bar{g}(x;R), \qquad  x \in [\bar{x},\max\{m,\bar{x}\}]\,  \\ 
& \text{Validity Constraints \eqref{eq:cpc3}, \eqref{eq:cpc4}} 
\end{aligned}
\label{prob:right} \tag{C-Pareto-right}
\end{align}
\end{minipage}
\medskip

\noindent In the right problem, it is sufficient to satisfy the consistency lower and upper bound constraints only at \( \bar{x} \). Let \( \pright(x) \) be the optimal solution to the right problem, and define \( \RR \) as the optimal objective value. Theorem \ref{thm:optimal_trans} proves that \( \pc(x) = \pright(x) \) for any \( x \geq \bar{x} \).  

After solving the right problem, we solve the left problem by enforcing that \( p(\bar{x}) = \pright(\bar{x}) \). Let \( \pleft(x) \) be the optimal solution to the left problem and define \( \RL \) as the optimal objective value. Theorem \ref{thm:optimalrobCconsistent} shows that \( \pc(x) = \pleft(x) \) for any \( x \in [0, \bar{x}] \). Finally, the optimal objective value of Problem \eqref{prob:trans} is \( \Rstar = \min \{\RR, \RL\} \). }

\begin{theorem}[Optimal Solution to   \eqref{prob:trans} and $C$-Pareto Optimal Algorithm] \label{thm:optimal_trans}
Consider any $0 \le C\le \cstar$. 
\begin{enumerate}
\item The optimal objective value of Problem \eqref{prob:trans}, is $\Rstar=\min\{\RR , \RL\}$, where $\RR$ and $\RL$ are the optimal objective value of Problem \eqref{prob:right} and \eqref{prob:left}, respectively.  
\item Algorithm \ref{alg:trans} presents an optimal solution to Problem \eqref{prob:trans}. That is, at the optimal solution to Problem \eqref{prob:trans}, denoted by $\pc(\cdot)$, for any $x\in [0, m]$, we set
  \begin{align}   \label{eq:opt_pc}\pc(x) = \left\{ \begin{array}{ll}
         \pleft(x)  &\quad  x\in [0, \bar x]\\
        \pright(x) &\quad  x\in [ \bar x,\max\{m,\bar x\}]\,, \\
        \end{array} \right. \end{align}
  where   $\pright(\cdot)$ and $\pleft(\cdot)$ are the optimal solutions to the right and left problems, respectively.

\item   A PLA with  the PL function of $\pc(\cdot)$ is an optimal solution to Problem \eqref{prob:original}. That is, among any online algorithms $\Pi$, the aforementioned algorithm maximizes the robust ratio while ensuring its consistent ratio is at least $C$. 
\end{enumerate}
\end{theorem}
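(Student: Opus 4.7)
The plan is to establish the three claims of Theorem \ref{thm:optimal_trans} in sequence. For Parts~1 and 2, the key observation is that the constraints of \eqref{prob:trans} decouple across $[0,\bar x]$ and $[\bar x,\max\{m,\bar x\}]$ apart from the shared value $p(\bar x)$: consistency is active only on $[0,\bar x]$ and collapses at the endpoint to $\wll(\bar x;C)\le p(\bar x)\le \u(\bar x;C)$, while robustness and validity are local. I would first observe that any feasible $p$ of \eqref{prob:trans} restricts to a feasible solution of \eqref{prob:right} on $[\bar x,\max\{m,\bar x\}]$, yielding $R^{\star}\le R_R$. For the companion bound $R^{\star}\le R_L$, I would exploit the specific construction of $p_{\text{right}}$: because $p_{\text{right}}(\bar x)$ is chosen to maximize the right-side value over the consistency envelope at $\bar x$, any feasible $p$ of \eqref{prob:trans} can be modified at the junction to match $p_{\text{right}}(\bar x)$ without sacrificing the left-side robust value. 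The matching lower bound then follows by verifying that the concatenation $\pc$ in \eqref{eq:opt_pc} is feasible for \eqref{prob:trans} with value $\min\{R_R,R_L\}$: continuity at $\bar x$ is forced by the boundary matching $p_{\text{left}}(\bar x)=p_{\text{right}}(\bar x)$ built into \eqref{prob:left}, monotonicity and the slope condition $(\pc)'\ge -1$ transfer from each piece, consistency holds on $[0,\bar x]$ by $p_{\text{left}}$'s feasibility, and the robustness envelope constraints hold on the full domain by feasibility of each piece for its own $R$-value. Taking $R=\min\{R_R,R_L\}$ yields the claim.

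For Part~3, the goal is to upgrade optimality from PLAs to all (possibly randomized) online algorithms. Given any $\mathcal A\in\Pi$ with $\cons(\mathcal A)\ge C$, I would associate an induced protection level
\[
p_{\mathcal A}(x) \;:=\; m \;-\; \mathbb{E}[\text{total resource $\mathcal A$ allocates to the first $x$ low-reward units on the ordered instance}],
\]
where the expectation is over $\mathcal A$'s internal randomness. Standard bookkeeping shows $p_{\mathcal A}$ is continuous, non-increasing, and satisfies $p_{\mathcal A}'(x)\ge -1$ almost everywhere (no more than $dx$ additional resource can be allocated when $dx$ more low-reward units arrive), making it a valid PL per Definition \ref{def:1}. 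By Lemma \ref{lem:order1}, both $\cons(\mathcal A)$ and $\comp(\mathcal A)$ are attained on ordered sequences, so it suffices to bound $\mathcal A$'s expected reward on any $\tilde I(x,y)$. A coupling argument would then yield $\rew(\mathcal A,\tilde I(x,y))/\opt(\tilde I(x,y))\le \CP(p_{\mathcal A}(x);(x,y))$, showing that $p_{\mathcal A}$ is itself feasible for \eqref{prob:trans} with consistency at least $C$ and robustness at least $\comp(\mathcal A)$; hence $\comp(\mathcal A)\le \Rstar$.

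The hardest step is the coupling bound for arbitrary randomized algorithms in Part~3, since the compatible-ratio formulas \eqref{eq:CP_over}--\eqref{eq:CP_under} are piecewise, switching regimes depending on whether $p_{\mathcal A}(x)$ exceeds or falls below $\min\{m,y\}$. Bounding an expectation by the piecewise expression requires a Jensen-type argument applied to the reward map viewed as a function of the random remaining capacity after the low-reward phase: in the over-protection regime the reward is linear in remaining capacity and equality is clean, whereas in the under-protection regime the high-reward portion of the reward is concave in remaining capacity, so Jensen's inequality applied in the correct direction delivers the desired bound. A secondary subtlety in Part~1 is verifying that the substitution $p(\bar x)\mapsto p_{\text{right}}(\bar x)$ never decreases the left-side objective; this is handled by observing that $p_{\text{right}}(\bar x)$ lies within the consistency envelope and that validity can be restored by a locally slope-bounded adjustment. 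Once both are in place, the decomposition from Parts~1--2 combines with the coupling from Part~3 to complete the proof.
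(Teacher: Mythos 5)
Your Part~3 argument takes a genuinely different route from the paper. The paper proves optimality over all of $\Pi$ by constructing, for each of the three candidate bottleneck terms of $\Rstar$ (namely $\CP_u(\pright(\max\{m,\bar x\});(\max\{m,\bar x\},m))$, $\CP_o(\pc(\bar x);(\bar x,0))$, and $\inf_{x\in[0,\bar x]}\CP_o(\pleft(x);(x,0))$), a pair of ordered instances $I_1\in\mathcal S(\region)$ and $I_2$ sharing a low-reward prefix; since no algorithm can distinguish them before the prefix ends, the expected number of low-reward acceptances needed for $C$-consistency on $I_1$ (read off from $\u(\cdot;C)$ or $\ll(\cdot;C)$) directly caps the robust ratio on $I_2$. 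Your reduction instead maps an arbitrary randomized $\mathcal A$ to an induced PL $p_{\mathcal A}$ and dominates its reward via Jensen. This is sound in outline and arguably more modular: with a nested family of unit-size low-reward prefixes, $\mathbb E[A_x]$ is non-decreasing and $1$-Lipschitz, so $p_{\mathcal A}$ is automatically continuous, non-increasing, with slope at least $-1$; and since the post-prefix high-reward allocation is at most $\min\{y,m-A\}$, which is concave in $A$, Jensen gives the bound in the right direction in both regimes. You should be explicit that $p_{\mathcal A}$ is only well defined after fixing a nested partition of the low-reward demand into requests (the expected allocation can depend on how the $x$ units are split), and that the resulting inequality chain is $\cons(\mathrm{PLA}(p_{\mathcal A}))\ge\cons(\mathcal A)\ge C$ and $\comp(\mathcal A)\le\comp(\mathrm{PLA}(p_{\mathcal A}))\le\Rstar$; with those details your version subsumes the paper's three cases at once.

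For Part~1 there is a genuine gap. The step ``any feasible $p$ of \eqref{prob:trans} can be modified at the junction to match $\pright(\bar x)$ without sacrificing the left-side robust value'' is asserted, not proved, and it is exactly the crux: the left problem carries the extra constraint $p(\bar x)=\pright(\bar x)$, and one must rule out that a different admissible value of $p(\bar x)$ yields a strictly better overall robust ratio. A local slope-bounded adjustment near $\bar x$ does not obviously preserve the constraints, and lowering $p$ near $\bar x$ can violate $\wll(\cdot;C)$. The paper closes this by a case split: if $\RR\le\widetilde\RL$ (with $\widetilde\RL=\inf_{x\in[0,\bar x]}\CP_o(\pleft(x);(x,0))$), Theorem \ref{thm:optimalextension} already caps every feasible PL at $\RR$ on $[\bar x,\max\{m,\bar x\}]$; if $\widetilde\RL<\RR$, one shows the left bottleneck is attained at some $x_1\le\widehat x$ where $\pleft(x_1)=\wll(x_1;C)$ (with $\widehat x$ defined by $\wll(\widehat x;C)=\pright(\bar x)$, whose existence itself requires an argument), and since the lower bound $\wll(\cdot;C)$ binds every feasible PL independently of its value at $\bar x$, no PL can beat $\widetilde\RL$ there. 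Your proposal needs this (or an equivalent) argument to be complete; the rest of Parts~1--2 (feasibility and continuity of the concatenation, and $\Rstar\le\RR$ by restriction) matches the paper.
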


\begin{algorithm}\footnotesize
  \caption{Optimal Solution to Problem \eqref{prob:trans} and $C$-Pareto Optimal Algorithm. \label{alg:trans}
  }
  \begin{itemize}
\item[]\textbf{Input:} Convex set $\region$, resource capacity $m$, and parameter $C\in [0, \cstar]$.
\item[]\textbf{Output:} Optimal solution to Problem \eqref{prob:trans}, $\pc(\cdot): [0,\max\{m,\bar x\}]\mapsto [0,m]$, and its optimal objective $\Rstar$.
\end{itemize}
\begin{itemize}
\item[]For any $x\in [0, \max\{m,\bar x\}]$, set
\end{itemize}
  \begin{align}   \label{eq:pc}\pc(x) = \left\{ \begin{array}{ll}
         \pleft(x)  &\quad  x\in [0, \bar x]\\
        \pright(x) &\quad  x\in [ \bar x,\max\{m,\bar x\}]\,, \\
        \end{array} \right. \end{align}
\begin{itemize}
\item[]
where   $\pright(\cdot)$ and $\pleft(\cdot)$ are defined in Algorithms  \ref{alg:right} and \ref{alg:left}, respectively.
\item[]Let $\RR=\min\{\CP_o(\pright(\bar{x});(\bar{x},0)),\CP_u(\pright(m);(\max\{m,\bar x\},m))$, and set
\end{itemize}
\[\Rstar=\min\{ \RR, \inf_{x \in [0,\bar{x}]}\CP_o(\pleft(x);(x,0))\}.\] 
\end{algorithm}

\subsection{Proof Sketch  of Theorem \ref{thm:optimal_trans}}
%\proof{Proof Sketch  of Theorem \ref{thm:optimal_trans}}
    
The proof of  Theorem \ref{thm:optimal_trans} is stated in Appendix \ref{sec:proofmain1}. 
To show the first statement, as the main step, we need to argue that $\pc(\bar x) = \pright(\bar x)$.  To do so, we consider two cases, where in the first case $\RR \ge  \RL$, and in the second case, $\RR< \RL$. In the first case, the optimality of $\pc(\bar x)$ can be argued using 
Theorem \ref{thm:optimalextension}, where we present an optimal solution to the right problem. 
Otherwise, for the case where $\RL<\RR$, we show the result  by contradiction while using   properties  of the lower bound $\wll(\cdot;C)$. The proof of the second statement follows from  Theorems \ref{thm:optimalextension} and \ref{thm:optimalrobCconsistent} in which 
we will present an optimal solution to the left and right problems.

 To show the third statement, we first note that the optimal robust ratio $\Rstar$ is the minimum of three terms: $\CP_o(\pright(\bar{x});(\bar{x},0))$, $\CP_u(\pright(m);(\max\{m,\bar x\},m))$, and $\inf_{x \in [0,\bar{x}]}\CP_o(\pleft(x);(x,0))\}$.  Depending on which term attains the minimum, we construct worst-case arrival sequences to show that no algorithm can perform better than $\Rstar$. For the purpose of this discussion, let us assume that $\Rstar$ is equal to the compatible ratio of point $(m,m)$ (i.e., $\CP_u(\pright(m);(\max\{m,\bar x\},m))$), which is one of the three aforementioned terms.

For this case, we define two (ordered) input sequences. In the first input sequence, $I_1$, low-reward requests with $\bar{x}_u\leq \bar{x}$ arrive first, followed by high-reward requests with $\underline{h}(\bar{x}_u)$.  One can think of $I_1$ as an arrival sequence consistent with the ML advice. In the second input sequence, $I_2$, $m$ low-reward requests arrive first, followed by $m$ high-reward requests. Here, one can think of $I_2$ as an arrival sequence outside with the ML advice. Before receiving $\bar{x}_u$ low-reward requests, any deterministic or randomized algorithm cannot differentiate between the two input sequences and must decide how many low-reward requests to accept in expectation. We then show that, on these input sequences, to achieve a consistent ratio of $C$ on $I_1$, no algorithm can obtain a robust ratio greater than $\Rstar$ on $I_2$. This is demonstrated by arguing that, on the arrival sequence $I_1$, to achieve a consistent ratio of at least $C$, the algorithm must accept at least $m-u(\bar{x}_u;C)$ low-reward agents. This, in turn, prevents any algorithm from performing better than $\Rstar$ on $I_2$. For other cases, we also construct two input sequences, but the contradiction point is different in each case. See Section \ref{sec:proofmain1} for details.

%\Halmos\endproof

\subsection{Examples}

%We finish this section by  presenting an  example, highlighting  the tradeoff between $\Rstar$ and $C$. 

%{\color{red} FYI, the other example is removed.}
\begin{example} \label{example}
Consider the following ML regions:
\[\region_1 = \{4 \leq x \leq 16 \} \cap \{4 \leq y \leq 16\} \cap \{20 \leq x+y\leq 25 \}\]  and \[\region_2 =\{4 \leq x \leq 16 \} \cap \{4 \leq y \leq 16\} \cap \{0 \leq y-x\leq 5 \}\,.\] 
The first region, $\mathcal{R}_1$, represents a scenario where both the low-reward and high-reward requests are between $4$ and $16$, and the total number of requests is between $20$ and $25$. The second region, $\mathcal{R}_2$, corresponds to a scenario where both the low-reward and high-reward requests are between $4$ and $16$, and the difference between the high-reward and low-reward requests is between $0$ and $5$. The regions $\mathcal{R}_1$ and $\mathcal{R}_2$ are illustrated as the shaded grey areas in the two leftmost plots of Figure~\ref{fig:optpareto}.

Here, we set $m=20$, $r_h=1$, and $r_{\ell}=1/3$. In the two leftmost plots, we consider $C \in \{0.8, 0.89\}$. These two plots in Figure~\ref{fig:optpareto} display $\pc(\cdot)$ for $C=0.8$ and $C=0.89$ under the two regions, respectively. We observe that increasing $C$ leads to an increase in $\pc(\cdot)$, which results in protecting more resources for high-reward requests.

The rightmost plot of Figure~\ref{fig:optpareto} shows the optimal robust ratio (i.e., the optimal value of Problem~\eqref{prob:rbmax} or equivalently, the original Problem~\eqref{prob:original}) as a function of $C$ for the two ML regions, $\mathcal{R}_1$ and $\mathcal{R}_2$. For each region $\mathcal{R}_i$, $i \in [2]$, we restrict to $C \leq C^{\star}(\mathcal{R}_i)$, since Problem~\eqref{prob:rbmax} becomes infeasible for any $C > C^{\star}(\mathcal{R}_i)$. As expected, for both ML regions, the optimal robust ratio decreases as $C$ increases.

\end{example}

\begin{figure} [H]
     \centering         
    \includegraphics[width=0.3\textwidth]{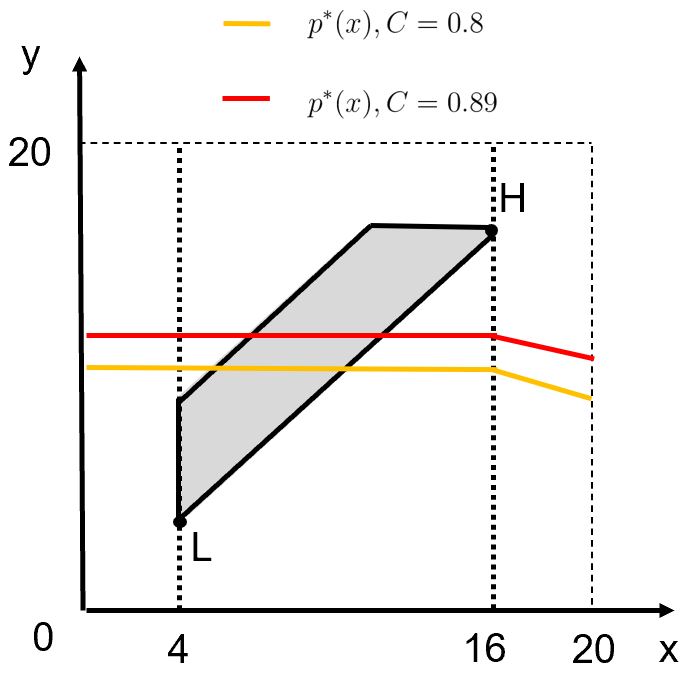}
    \includegraphics[width=0.3\textwidth]{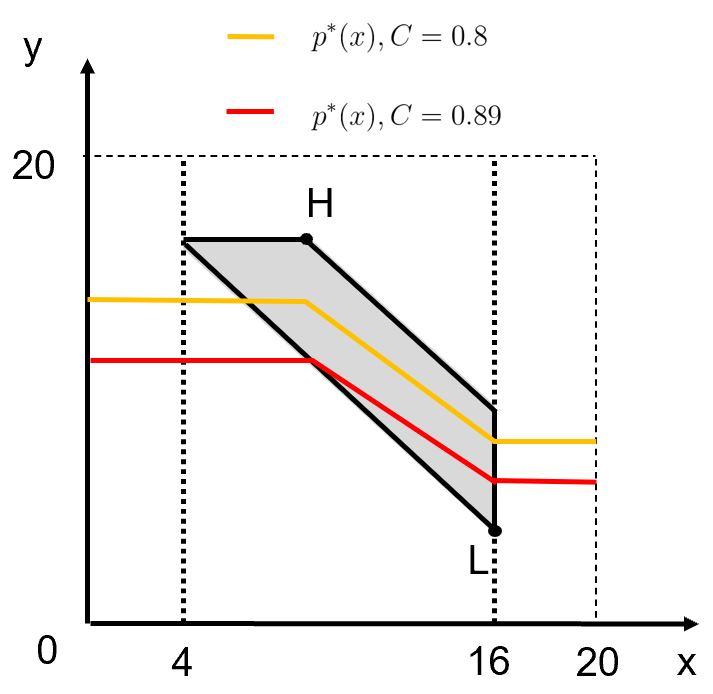}
         \includegraphics[width=0.3\textwidth]{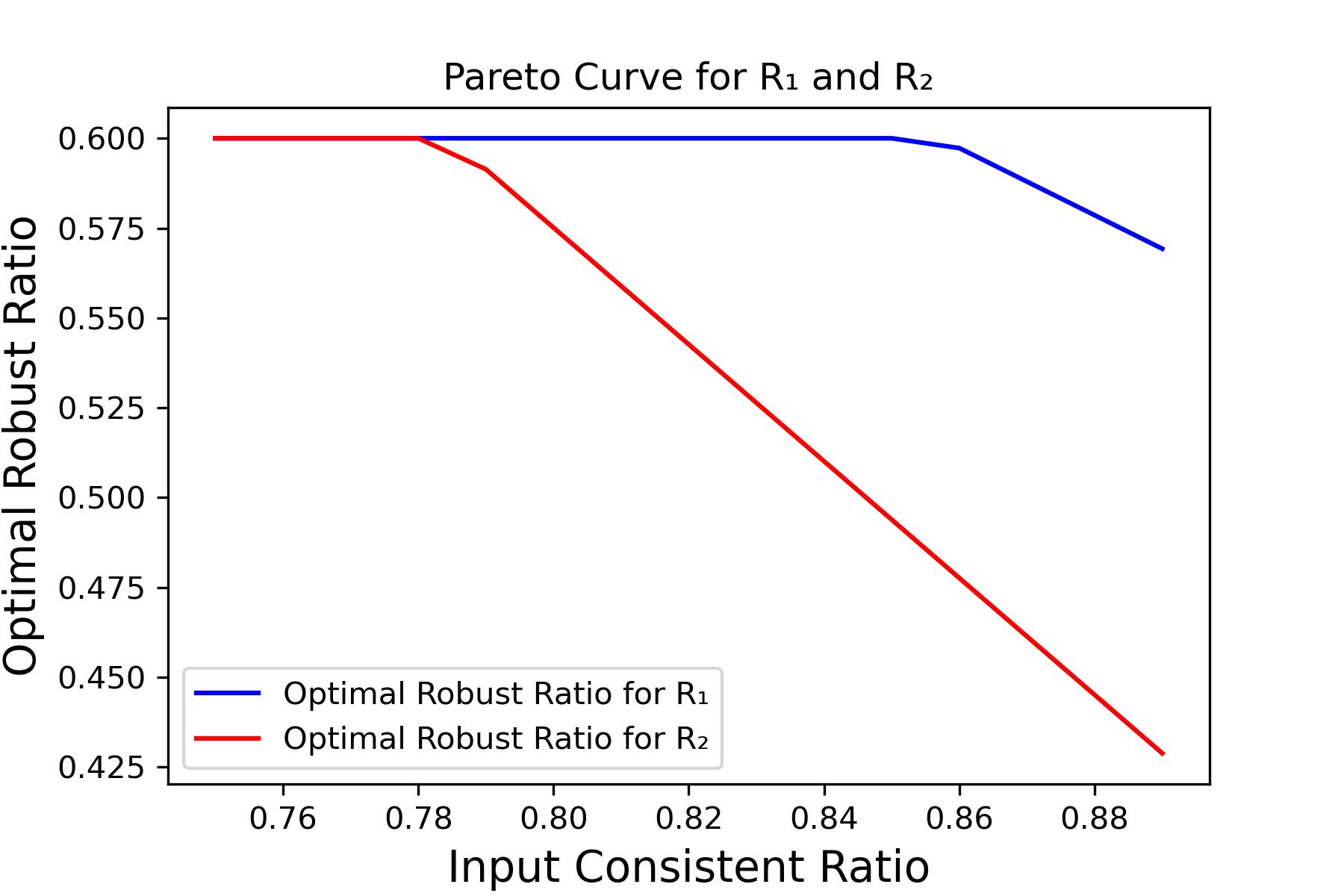}
    \caption{The figure shows the optimal robust ratio (i.e., the optimal value to Problem \eqref{prob:rbmax} or the optimal value to the original Problem \eqref{prob:original}) versus $C$ for the two ML region $\region_1$ and $\region_2$.
 Here,  $\region_1= \{4 \leq x \leq 16 \} \cap \{4 \leq y \leq 16\} \cap \{20 \leq x+y \leq 25 \}$ and $\region_2 =\{4 \leq x \leq 16 \} \cap \{4 \leq y \leq 16\} \cap \{0 \leq y-x \leq 5 \}$.}
        \label{fig:optpareto}
\end{figure}

In the following sections, we begin by presenting $\pright (\cdot)$, followed by a characterization of $\pleft(\cdot)$.

\subsection{Optimal Solution to the Right Problem \eqref{prob:right}} \label{sec:optimalright}
Algorithm \eqref{alg:right} presents the optimal solution to the right problem \eqref{prob:right}, $\pright(\cdot)$, and the optimal objective value of this problem $\RR$. First, the algorithm  presents the optimal solution at $\bar x$, i.e., $\pright(\bar x)$, which then determines $\RR$. Second, the algorithm presents the optimal solution at any  $x\in (\bar x, \max\{m,\bar{x}\}]$, using  $\pright(\bar x)$ and $\RR$. Notice that if $\bar{x} \geq m$, $(\bar x, \max\{m,\bar{x}\}]=(\bar x, \bar x]$,  is not well-defined. So, we skip the second step, and we only need to solve for $\pright(\bar x)$.

\begin{algorithm} \footnotesize
  \caption{Optimal Solution to Problem \eqref{prob:right}. \label{alg:right}}
  \label{alg:DR-SM}
  \begin{itemize}
 \item[] \textbf{Input:} Convex set $\region$, resource capacity $m$, and parameter $C\in [0, \cstar]$.
 \item[] \textbf{Output:} Optimal solution to  the right problem \eqref{prob:right}, $\pright: [\bar x, \max\{m,\bar x\}]\mapsto [0,m]$, and optimal objective value of the right problem \eqref{prob:right}, denoted by $\RR$.
 \end{itemize}
  \begin{itemize}
  \item \textbf{Optimal solution at $\bar x$} Let   \[\pright(\bar{x})= \argmin_{p \in [\wll(\bar{x};C),\u(\bar{x};C)]} \vert p - \underline{g}(\bar{x}) \vert\,,\] 
  where  
    $\underline{g}(x) = \underline{g}(x;R=\frac{1}{2-r_{\ell}/r_h}) = \frac{1-r_{\ell}/r_h}{2-r_{\ell}/r_h}m
    $, and 
  $\wll(\cdot; C)$ and  $\u(\cdot; C)$ are respectively defined in Equations  \eqref{eq:ll} and \eqref{eq:u}. Further,  define 
  \[\RR=\min\left\{\CP_o(\pright(\bar{x});(\bar{x},0)),\CP_u(\pright(\bar{x});(\max\{m,\bar x\},m))\right\}\,.\]
  \item   \textbf{Optimal Solution at $x\in (\bar x, \max\{m,\bar x\}]$.} For any $x\in (\bar x, \max\{m,\bar x\}]$, define 
  \begin{align} \label{eq:rightext}  \pright(x) = \left\{ \begin{array}{ll}
         \max\{-x+\bar{x}+\pright(\bar{x};C), \underline g(x)\}  &\quad  \pright(\bar{x}) \in [\underline g(x), \bar g(x)]\\
        \underline{g}(x; \RR) &\quad  \pright(\bar{x})<\underline{g}(\bar{x}), \\
         \bar{g}(x; \RR) &\quad  \pright(\bar{x})>\bar{g}(\bar{x}), \\
        \end{array} \right. \end{align}
        where
        $\underline g(\cdot; R)$, $\bar g(\cdot; R)$ are respectively defined in Equation \eqref{eq:defgR}.
  \end{itemize}
  \begin{itemize}
\item[]\textbf{Return.} $\pright(x)$, $x\in [\bar x, \max\{m,\bar x\}]$, and $\RR$.
  \end{itemize}
\end{algorithm}

 Recall that $\rho = \frac{1}{2-r_{\ell}/r_h}$ is the optimal consistent ratio (obtained by \cite{ball2009toward}) when the ML region is $\{(x,y): x,y\ge 0\}$. 
 With a little abuse of notation, for $x \in [\bar{x},\max\{m,\bar x\}]$, we let 
    \[
    \bar{g}(x) = \bar{g}(x;R=\rho), \quad  \text{and} \quad 
    \underline{g}(x) = \underline{g}(x;R=\rho) \,,
    \]
where $\bar{g}(x;R)$ and $\underline{g}(x;R)$ are defined in Equation \eqref{eq:defgR}.  Then, in Algorithm \ref{alg:right}, we have 
\begin{align} \label{eq:barx_opt}
\pright(\bar{x})= \argmin_{p \in [\wll(\bar{x};C),\u(\bar{x};C)]} \vert p - \underline{g}(\bar{x}) \vert\,.
\end{align}
Clearly, $\pright(\bar{x})$ satisfies the first constraint in the right problem; that is, $\pright(\bar{x})\in [\wll(\bar x;C), \u(\bar x; C)]$, as desired.  To set $\pright(\bar{x})$, the algorithm compares the feasible interval $[\wll(\bar x;C), \u(\bar x; C)]$ with $ \underline{g}(x) = \underline{g}(x;R=\rho)$. Here, $\underline{g}(x;\rho)= \frac{1-r_{\ell}/r_h}{2-r_{\ell}/r_h}m$ is the optimal PL in the setting studied in \cite{ball2009toward} when the ML region is $\{(x,y): x,y\ge 0\}$. 
 At a high level, if we can set $\pright(\bar x)$  to $\underline g(\bar x)$, the optimal (right) robust ratio $\RR$ will be equal to $\rho$ (which is the maximum robust ratio for any ML region). However, setting  $\pright(\bar x)$ to $\underline g(\bar x)$ is not always possible.  In such a case, we either set 
 $\pright(\bar{x})$ to $\u(\bar{x};C)$ or $\wll(\bar{x};C)$. The value $\pright(\bar{x})$ then determines $\RR$: 
 \begin{align}\label{eq:RR}
        \RR=\min\left\{\CP_o(\pright(\bar{x});(\bar{x},0)),\CP_u(\pright(\bar{x});(\max\{m,\bar x\},m))\right\}\,.
\end{align}
This shows that in the right problem,  points $(\bar x, 0)$ or $(\max\{m,\bar x\}, m)$ play a crucial role, determining the optimal objective value. \iffalse $(m, m)$ is the worst under-protected point because by Lemma \ref{lem:worstunderrob}, $\CP_u(p(x);(x,m))$ is a decreasing function for any valid $p(x)$ for $x \in [0,m]$. Therefore,
\[\inf_{x \in [0,m]}\CP_u(p(x);(x,m))=\CP_u(p(m);(m,m)).\] With the detail in the proof of Theorem \ref{thm:optimalextension}, we show that that when $\RR=\CP_u(\pright(m);(m,m))$, $\pright(\cdot)$ is always a constant function, i.e. $\pright(m)=\pright(\bar{x})$, and therefore, the value of $\pright(\bar{x})$ determines $\RR$.

 {\color{red} why $\pright(\bar x)$ can be applied to point $(m,m)$? The $x$ in $(m,m)$ is not $\bar x$, and hence I found this confusing.}{\color{black} Explained}\fi

Given the optimal objective value of $\RR$, if $\bar{x} < m$, for any $x\in (\bar x, m]$, we only need to make sure that $\underline{g}(x;\RR) \leq p(x) \leq \bar{g}(x;\RR)$. This is achieved by setting 
  \begin{align}   \label{eq:opt_x_beyond}\pright(x) = \left\{ \begin{array}{ll}
         \max\{-x+\bar{x}+\pright(\bar{x};C), \underline g(x)\}  &\quad  \pright(\bar{x}) \in [\underline g(x), \bar g(x)]\\
        \underline{g}(x; \RR) &\quad  \pright(\bar{x})<\underline{g}(\bar{x}), \\
         \bar{g}(x; \RR) &\quad  \pright(\bar{x})>\bar{g}(\bar{x}). \\
        \end{array} \right. \end{align}
        As it becomes more clear in the proof of Theorem \ref{thm:optimalextension}, 
 when $\pright(\bar{x}) \in [\underline{g}(\bar{x}),\bar{g}(\bar{x})]$, the optimal objective value of the right problem $\RR$ is indeed $\rho =1/(2-r_{\ell}/r_h)$. For the other cases where either $\pright(\bar{x})<\underline {g}(\bar{x})$ or  $\pright(\bar{x})>\bar{g}(\bar{x})$, we have $\RR< \rho$. There, we set $\pright(x)$ such that the compatible ratio at any points $(x, 0)$ and $(x, m)$ (with $x\in (\bar x, m]$) is greater than or equal to $\RR$, defined in Equation \eqref{eq:RR}. 

\begin{theorem}[Optimal Solution to the Right Problem] \label{thm:optimalextension} 
Algorithm \ref{alg:right} presents an optimal solution to Problem \eqref{prob:right}. That is, at the optimal solution to Problem \eqref{prob:right}, denoted by $\pright(\cdot)$, we set $\pright(x)$ based on Equations \eqref{eq:barx_opt} and \eqref{eq:opt_x_beyond}. Furthermore, the optimal objective value of  Problem \eqref{prob:right}, $\RR$, is given in Equation \eqref{eq:RR}.
\end{theorem}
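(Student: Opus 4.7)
The plan is to exploit the transformation of Lemma \ref{lem:trapezoid}, which reduces Problem \eqref{prob:right} to placing a valid PL function inside the intersection of the consistency interval $[\wll(\bar{x};C),\u(\bar{x};C)]$ at $x=\bar{x}$ and the robustness trapezoid $[\underline{g}(x;R),\bar{g}(x;R)]$ over $x\in[\bar{x},\max\{m,\bar{x}\}]$, while maximizing $R$. The key geometric fact is that this trapezoid at $R=\rho=1/(2-r_\ell/r_h)$ collapses to the single point $\underline{g}(\rho)=\bar{g}(m;\rho)=m(1-\rho)$, the classical Ball--Queyranne protection level; and that $\underline{g}(\cdot;R)$ is constant while $\bar{g}(\cdot;R)$ has slope $-R$, so increasing $R$ shrinks the trapezoid toward this point. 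The degenerate case $\bar{x}\geq m$ is handled separately: the feasible domain collapses to a single point, and since $\CP_o(\cdot;(\bar{x},0))$ is decreasing while $\CP_u(\cdot;(\bar{x},m))$ is increasing in $p$ with unconstrained balance at $p=m(1-\rho)=\underline{g}(\bar{x})$, projecting this balance point onto $[\wll(\bar{x};C),\u(\bar{x};C)]$ yields exactly \eqref{eq:barx_opt}.

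For $\bar{x}<m$, I would partition into three cases matching the branches of \eqref{eq:opt_x_beyond}, depending on the position of the balance point $\underline{g}(\bar{x})$ relative to $[\wll(\bar{x};C),\u(\bar{x};C)]$. In the interior case $\pright(\bar{x})\in[\underline{g}(\bar{x}),\bar{g}(\bar{x})]$, I would show that the piecewise-linear construction $\pright(x)=\max\{-x+\bar{x}+\pright(\bar{x}),\underline{g}(x)\}$ achieves $\RR=\rho$: validity follows from its slope-$(-1)$-then-constant structure, the lower trapezoid bound holds by the $\max$, and the upper bound $p(x)\leq\bar{g}(x;\rho)=m-\rho x$ uses that slope $-1$ is steeper than $-\rho$, so starting at or below $\bar{g}(\bar{x};\rho)$ the function remains below on all of $[\bar{x},m]$. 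In the case $\pright(\bar{x})>\bar{g}(\bar{x})$, continuity forces $\pright(x)=\bar{g}(x;\RR)$ with $\RR=(m-\pright(\bar{x}))/\bar{x}=\CP_o(\pright(\bar{x});(\bar{x},0))$, where slope $-\RR\in[-1,0]$ gives validity and $\RR\leq\rho$ gives the lower trapezoid bound. In the case $\pright(\bar{x})<\underline{g}(\bar{x})$, I would set $\pright(x)$ constant equal to $\pright(\bar{x})$; continuity with $\underline{g}(\cdot;\RR)$ gives $\RR=\CP_u(\pright(\bar{x});(\max\{m,\bar{x}\},m))$, and the upper trapezoid bound at $x=m$ reduces to $\underline{g}(\RR)\leq\bar{g}(m;\RR)$, equivalent to $\RR\leq\rho$.

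Optimality follows by matching upper bounds on $R$ in each case. In the interior case, $R\leq\rho$ is the classical Ball--Queyranne bound from $(\bar{x},0)$ and $(m,m)$. When $\pright(\bar{x})>\bar{g}(\bar{x})$, any feasible $p$ must satisfy $p(\bar{x})\geq\wll(\bar{x};C)>\bar{g}(\bar{x};\rho)$ together with the over-protection constraint $p(\bar{x})\leq m-R\bar{x}$, forcing $R\leq(m-\wll(\bar{x};C))/\bar{x}$. When $\pright(\bar{x})<\underline{g}(\bar{x})$, monotonicity of any valid PL function gives $p(m)\leq p(\bar{x})\leq\u(\bar{x};C)$, which combined with $p(m)\geq\underline{g}(R)$ forces $R\leq\CP_u(\u(\bar{x};C);(m,m))$. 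In each case the derived bound matches the corresponding term of the $\min$ in \eqref{eq:RR}; the other term of the min exceeds $\rho$ there (because $\pright(\bar{x})$ is on the ``wrong side'' of $m(1-\rho)$), so it is slack in the min.

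The main obstacle I expect is the interior case: verifying simultaneously that the slope-$(-1)$ segment lies inside the trapezoid on all of $[\bar{x},m]$ and that the resulting function is a valid PL function. This requires checking that the transition point $x=\bar{x}+\pright(\bar{x})-m(1-\rho)$ lies in $[\bar{x},m]$ (which reduces to the interior-case hypothesis $\pright(\bar{x})\in[\underline{g}(\bar{x}),\bar{g}(\bar{x})]$ after algebra) and that $\bar{g}(\cdot;\rho)$ is never exceeded. A secondary subtlety is continuity of the three branches at their boundaries: at $\pright(\bar{x})=\underline{g}(\bar{x})$ the interior branch collapses to the constant $\underline{g}(\rho)$, matching Case~C at its boundary, and at $\pright(\bar{x})=\bar{g}(\bar{x})$ the slope-$(-1)$ line coincides with $\bar{g}(x;\rho)$, matching Case~B at its boundary.
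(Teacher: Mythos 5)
Your proposal is correct and follows essentially the same route as the paper: the same three-way case split on the position of $\pright(\bar{x})$ relative to $[\underline{g}(\bar{x}),\bar{g}(\bar{x})]$ (equivalently, on how the consistency interval $[\wll(\bar{x};C),\u(\bar{x};C)]$ intersects the trapezoid), the same feasibility verifications via the slope-$(-1)$/constant structure against $\bar{g}(\cdot;\rho)$, and the same optimality arguments using monotonicity of $\CP_o$ and $\CP_u$ in $p$ together with the non-increasing property of valid PL functions to force a contradiction at $\bar{x}$ or at $m$. The subtleties you flag (the degenerate case $\bar{x}\geq m$, the transition point of the interior branch, and continuity at the case boundaries) are exactly the points the paper's proof also handles.
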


\iffalse
We finish this section by revisiting the examples  in Figure \ref{fig:optright} (Example \ref{example}). 
The figure displays the optimal $\pright(\cdot): [16, 20]\mapsto [0,20]$ for $\region_1$ and $\region_2$ respectively. Both regions have $\underline{g}(\bar{x})=8$ and $\bar{g}(\bar{x})=10.4$. In $\region_1$, since $\pright(\bar x) \geq \wll(\bar{x};C)=10.8>\bar{g}(\bar{x})$, we cannot achieve $\rho=1/(2-r_{\ell}/r_h)=0.6$ if the adversary chooses $(\bar x,0)$. Thus, we choose $\pright(\bar{x})=\wll(\bar{x};C)=10.8$ by Equation \eqref{eq:barx_opt}, and for $x \in (\bar{x},m]$, we set $\pright(x)$ such that $\CP_o(\pright(x);(x,0)) \geq \CP_o(\pright(\bar{x});(\bar{x},0)) = \RR$. In $\region_2$, since $\underline{g}(\bar{x})= 8 \in [\wll(\bar{x};C),\u(\bar{x};C)]$, we can achieve $\rho$, and we set $\pright(x)=\underline{g}(x)$ for $x \in [\bar{x},m]$. 
\fi

\subsection{Optimal Solution to Problem \eqref{prob:left}}\label{sec:optimalleft}
In this section, we present an optimal solution to the left problem \eqref{prob:left}, denoted by $\pleft(\cdot): [0, \bar x]\mapsto [0, m]$. See Algorithm \ref{alg:left}. The algorithm shows that  at the optimal solution, we set 
\begin{align}\label{eq:pleft}
 \pleft(x)=\max\{\wll(x;C),\pright(\bar{x})\}\,, \quad x\in [0, \bar x]\,,
\end{align}
where $\pright(\bar{x})$ is the optimal solution to Problem \eqref{prob:right} at $\bar x$, and $\wll(x; C)$ is defined in Equation \eqref{eq:ll}.

\begin{algorithm}\footnotesize
  \caption{Optimal Solution to Problem \eqref{prob:left}. \label{alg:left}}
  \begin{itemize}
 \item[]\textbf{Input:} Convex set $\region$, resource capacity $m$, and parameter $C\in [0, \cstar]$.
 \item[]\textbf{Output:} Optimal solution to  the right problem \eqref{prob:left}, $\pleft: [0,\bar x]\mapsto [0,m]$.
 \end{itemize}
 \begin{itemize}
\item[]For any $x\in [0, \bar x]$, define 
\end{itemize}
\begin{align}
 \pleft(x)=\max\{\wll(x;C),\pright(\bar{x})\}\,, \quad x\in [0, \bar x]\,,
\end{align}
\begin{itemize}
\item[]where $\pright(\bar{x})$ is the optimal solution to Problem \eqref{prob:right} at $\bar x$, and $\wll(x; C)$ is  defined in Equation \eqref{eq:ll}.
\end{itemize}
\begin{itemize}
\item[]\textbf{Return.} $\pleft(x)$, $x\in [0,\bar x]$.
\end{itemize}
\end{algorithm}
The optimal solution to the left problem is obtained by one  observation. We  show in the proof of Theorem \ref{thm:optimalrobCconsistent} that in this problem, one can ignore the lower bound constraint $p(x)\ge \underline g(x; R)$, $x\in [0, \bar x]$. Given this simplification, to present an optimal solution to Problem \eqref{prob:left}, we need to find the largest value of $R$ that satisfies the conditions $\max\{\wll(x;C),\pright (\bar x)\}\le p(x) \le \bar g(x;R)$, where $\bar g(x;R)$ is a decreasing function of $R$ and $p(x)$ is a non-increasing function.  
Then, considering the fact that $\bar g(x; R)$ is decreasing in $R$, to maximize $R$ while ensuring
$\max\{\wll(x;C),\pright (\bar x)\}\le p(x) \le \bar g(x; R)$, we set $\pleft(x) = \max\{\wll(x;C),\pright (\bar x)\}$. %See Figure \ref{fig:optright} for $\pleft(\cdot)$ in our running example.

\begin{theorem}[Optimal Solution to the Left Problem] \label{thm:optimalrobCconsistent} 
Algorithm \ref{alg:left} presents an optimal solution to Problem \eqref{prob:left}. That is, at the optimal solution to Problem \eqref{prob:left}, denoted by $\pleft(\cdot)$, we set $\pleft(x)$ based on Equation \eqref{eq:pleft}. The optimal objective value of Problem \eqref{prob:left} is:
\[\RL = \min \{\CP_u(\pleft(\bar{x});(\bar{x},m))),\inf_{x \in [0,\bar{x}]}\CP_o(\pleft(x);(x,0)) \}\,.\]
\end{theorem}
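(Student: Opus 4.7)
The plan is to prove the theorem in two steps: first establish feasibility of the proposed $\pleft(\cdot)$ with objective value $\RL$, then establish its optimality by showing no feasible solution can exceed $\RL$.

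\textbf{Feasibility.} I would verify each constraint of Problem \eqref{prob:left} with $p = \pleft$ in turn. The validity constraints \eqref{eq:cpc3}--\eqref{eq:cpc4} follow because $\wll(\cdot;C)$ is continuous, non-increasing, and has slope $\ge -1$ by its construction in \eqref{eq:ll}, and taking the max with the constant $\pright(\bar{x})$ preserves all three properties. For the consistency bounds, $\wll(x;C) \le \pleft(x)$ is immediate, while $\pleft(x) \le \u(x;C)$ follows from Lemma \ref{lem:property_u_l} (which gives $\wll(x;C) \le \u(x;C)$) combined with $\pright(\bar{x}) \le \u(\bar{x};C) \le \u(x;C)$ for $x \le \bar{x}$, using that $\u(\cdot;C)$ is non-increasing. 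The boundary condition $\pleft(\bar{x}) = \pright(\bar{x})$ holds because Algorithm \ref{alg:right} constructs $\pright(\bar{x}) \in [\wll(\bar{x};C), \u(\bar{x};C)]$, so the max at $\bar{x}$ evaluates to $\pright(\bar{x})$.

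\textbf{Optimality.} The crux is to show that any feasible $p$ for Problem \eqref{prob:left} satisfies $p(x) \ge \pleft(x)$ pointwise on $[0, \bar{x}]$. This follows because the consistency constraint gives $p(x) \ge \wll(x;C)$, while monotonicity of $p$ together with the boundary condition $p(\bar{x}) = \pright(\bar{x})$ gives $p(x) \ge \pright(\bar{x})$, so $p(x) \ge \pleft(x)$ by taking the max. Given this pointwise bound, I would translate the robustness slab constraints (from Lemma \ref{lem:trapezoid}) into compatible ratio bounds. The upper slab $p(x) \le \bar{g}(x;R) = -Rx + m$ combined with $p(x) \ge \pleft(x)$ yields $R \le (m - \pleft(x))/x$, which by direct computation equals $\CP_o(\pleft(x);(x,0))$ in the relevant over-protection regime. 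Since $\underline{g}(\cdot;R)$ is constant in $x$, the lower slab $p(x) \ge \underline{g}(x;R)$ is binding where $p$ attains its minimum, namely at $x = \bar{x}$; the resulting bound on $R$ matches $\CP_u(\pleft(\bar{x});(\bar{x},m))$. Taking the minimum of the two sets of bounds recovers the claimed $\RL$.

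\textbf{Main obstacle.} The most delicate part will be the algebraic translation between the slab bounds $\bar{g}$, $\underline{g}$ and the compatible ratio formulas $\CP_o$, $\CP_u$, particularly in edge cases such as $\bar{x} < m$ or where $\pleft(x) < m - x$ and the form of under- or over-protection changes. These identifications rely on the case analysis underlying Lemma \ref{lem:trapezoid}. A secondary subtlety, implicit in the remark that ``one can ignore the lower bound constraint,'' is to justify that $\underline{g}$ does not reshape $\pleft$ beyond pinning $\pleft(\bar{x}) = \pright(\bar{x})$: because $\underline{g}(\cdot;R)$ is constant in $x$ and $p$ is non-increasing, the sole binding point of this constraint lies at $x = \bar{x}$, which is already fixed by the boundary condition.
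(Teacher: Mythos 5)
Your proposal is correct and follows essentially the same route as the paper: feasibility is verified constraint by constraint, and optimality rests on the observation that any feasible $p$ must satisfy $p(x)\ge \max\{\wll(x;C),\pright(\bar x)\}=\pleft(x)$ (the paper phrases this as a contradiction at the worst over-protected point via Lemma \ref{lem:pmonotone}, which is logically the same as your pointwise-domination argument), with the under-protected term pinned at $\bar x$ by the boundary condition and Lemma \ref{lem:worstunderrob}. The edge cases you flag in translating the slab bounds into $\CP_o$/$\CP_u$ are exactly the ones handled in Lemma \ref{lem:trapezoid}, so no gap remains.
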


\section{Optimal Consistent Ratio } \label{sec:mlconsistent}
In this section, using an optimization problem, we first characterize the optimal/maximum consistent ratio ($\cstar$)  that any algorithm can achieve under ML region $\region$; see Section \ref{sec:Crstart}.  For any convex ML region $\region$, in Section \ref{subsec:bi}, we then present a simple bisection method that allows us to obtain a good approximation for $\cstar$.  Section \ref{subsec:fast}  presents a simpler algorithm to obtain the exact value of $\cstar$ when the ML region $\region$ is a polyhedron.  

\subsection{Characterizing the Optimal Consistent Ratio }\label{sec:Crstart}
We begin by the following theorem:

\begin{theorem}[Characterizing Optimal Consistent Ratio] \label{thm:cmax}
Consider any convex ML region $\region$. Then, the consistent ratio of any online algorithm under  ML region $\region$ is at most $\cstar$ where 
\begin{align*} 
\begin{aligned} 
     \cstar= &\max_{C\ge 0, p(x):x \in [0,\bar{x}]}  \ C   \\  \notag
s.t.  \ & \ \wll(x; C) \leq p(x) \leq \u(x;C), \qquad \ x \in [\underline{x},\bar{x}]  \\ &\text{Validity Constraints \eqref{eq:cpc3}, \eqref{eq:cpc4}} \quad  x \in [0,\bar{x}]\,. 
\end{aligned}
\tag{\textsc{C-max}} \label{prob:cpmax1}
\end{align*} 
Here, $\wll(x;C)$ and $\u(x;C)$ are defined in Equations \eqref{eq:ll} and \eqref{eq:u}, respectively. 
Furthermore, $\cstar \ge \rho$, where $\rho =1/(2-
r_{\ell}/r_h)$.
\end{theorem}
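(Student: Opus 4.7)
The plan is to prove the two claims separately: an upper bound $\cons(\mathcal{A}) \le \cstar$ for every (possibly randomized) online algorithm $\mathcal{A}$, and the concrete lower bound $\cstar \ge \rho$. The upper bound will be obtained by associating to any algorithm a summary protection-level function that is feasible for Problem \eqref{prob:cpmax1} at parameter $C=\cons(\mathcal{A})$. The second claim will follow by exhibiting the fixed protection level $p\equiv \tfrac{1-r_\ell/r_h}{2-r_\ell/r_h}m$ of \cite{ball2009toward} as a feasible solution with objective value $\rho$.

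For the first claim, fix an online algorithm $\mathcal{A}$ and let $C_{\mathcal{A}}:=\cons(\mathcal{A})$. For each $x\in[0,\bar x]$, consider the prefix consisting of $x$ units of low-reward demand; let $N_{\mathcal{A}}(x)\in[0,m]$ be the expected total allocation that $\mathcal{A}$ makes over this prefix, and set $p_{\mathcal{A}}(x):=m-N_{\mathcal{A}}(x)$. The function $N_{\mathcal{A}}$ is non-decreasing with slope at most $1$ (a unit of demand consumes at most a unit of resource), so $p_{\mathcal{A}}$ is continuous and satisfies $-1\le p_{\mathcal{A}}'(x)\le 0$ almost everywhere, giving the validity constraints \eqref{eq:cpc3}--\eqref{eq:cpc4}. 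I would then test $\mathcal{A}$ on the ordered adversarial sequences $\tilde I(x,\underline h(x))$ and $\tilde I(x,\bar h(x))$ in $\mathcal{S}(\region)$. On any ordered sequence $\tilde I(x,y)$ with $(x,y)\in\region$, the expected reward of $\mathcal{A}$ is at most $r_\ell N_{\mathcal{A}}(x)+r_h\,E[\min\{y,\,m-N\}]$, where $N$ is the (random) path-wise low-reward acceptance; by concavity of $n\mapsto \min\{y,m-n\}$, Jensen's inequality upper bounds this by $r_\ell N_{\mathcal{A}}(x)+r_h\min\{y,\,m-N_{\mathcal{A}}(x)\}$, i.e., the reward of a PLA with protection level $p_{\mathcal{A}}(x)$ on the same ordered sequence, consistent with the formulas \eqref{eq:CP_over}--\eqref{eq:CP_under}. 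Dividing by $\opt$ and requiring the ratio to be at least $C_{\mathcal{A}}$ at both $\tilde I(x,\underline h(x))$ and $\tilde I(x,\bar h(x))$, Lemma \ref{lem:ymonotone} together with the defining Equations \eqref{eq:u} and \eqref{eq:l} of $\u$ and $\ll$ yields $\ll(x;C_{\mathcal{A}})\le p_{\mathcal{A}}(x)\le \u(x;C_{\mathcal{A}})$. The slope bound $p_{\mathcal{A}}'(x)\ge -1$ then promotes this to the tightened $\wll(x;C_{\mathcal{A}})\le p_{\mathcal{A}}(x)\le \u(x;C_{\mathcal{A}})$ from Equation \eqref{eq:ll}. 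Thus $p_{\mathcal{A}}$ is feasible for Problem \eqref{prob:cpmax1} at $C=C_{\mathcal{A}}$, forcing $C_{\mathcal{A}}\le \cstar$.

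For the second claim, I would take the constant $p(x)\equiv \tfrac{1-r_\ell/r_h}{2-r_\ell/r_h}m$, which trivially satisfies the validity constraints. Since this fixed PLA attains competitive ratio $\rho$ against every arrival sequence, its compatible ratio $\CP(p(x);(x,y))$ is at least $\rho$ for every $(x,y)\ge 0$, hence in particular for every $(x,y)\in\region$. By the transformation Lemmas \ref{lem:feasibleregionrobust} and \ref{lem:feasibleregionrobust_2}, this is equivalent to $\wll(x;\rho)\le p(x)\le \u(x;\rho)$ on $[\underline x,\bar x]$, so the constant $p$ is feasible for Problem \eqref{prob:cpmax1} with objective $\rho$, giving $\cstar\ge\rho$.

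The main obstacle is justifying the reduction from an arbitrary (adaptive, randomized) algorithm to the single scalar summary $p_{\mathcal{A}}$. The delicate step is the Jensen-type bound: along each sample path, the capacity left after the low-reward prefix is deterministically $m-N$ where $N$ is the path-wise low-reward acceptance, so the algorithm's high-reward reward on $\tilde I(x,y)$ is at most $r_h\min\{y,m-N\}$ pathwise; concavity of $\min\{y,m-\cdot\}$ then converts expectation over $N$ into $r_h\min\{y,m-E[N]\}$. This is why the scalar $N_{\mathcal{A}}(x)$ suffices to dominate a potentially complex randomized policy on the ordered adversarial tests. A minor additional care is needed at boundary values of $x$ and at points where $\underline h(x)+x$ or $\bar h(x)+x$ crosses $m$ (so that $\opt$ changes regime), but these are handled by direct inspection using Lemma \ref{lem:ymonotone} and the piecewise definitions of $\u$ and $\ll$.
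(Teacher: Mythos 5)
Your proposal is correct, but it takes a genuinely different route from the paper. The paper proves the upper bound (Appendix, proof of Theorem \ref{thm:cmax}) by reusing the three-case analysis from Theorem \ref{thm:MLconsis}: depending on where the balancing point $\widehat{x}$ with $\wll(\widehat{x};\cstar)=\u(\widehat{x};\cstar)$ sits relative to $x_H$ and $\nex$, it constructs two ordered instances $I_1,I_2\in\mathcal{S}(\region)$ sharing a low-reward prefix and shows that beating $\cstar$ on both forces the algorithm to accept simultaneously at least and strictly fewer than a common threshold of low-reward requests in expectation --- a contradiction. You instead reduce an \emph{arbitrary} randomized algorithm to a PLA once and for all: the induced function $p_{\mathcal A}(x)=m-N_{\mathcal A}(x)$ is automatically valid, the pathwise bound plus Jensen shows the PLA with PL $p_{\mathcal A}$ dominates $\mathcal A$ on every ordered test $\tilde I(x,\underline h(x))$, $\tilde I(x,\bar h(x))$, and the transformation lemmas then place $p_{\mathcal A}$ in the feasible region of \eqref{prob:cpmax1} at $C=\cons(\mathcal A)$. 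Your version avoids the case analysis entirely and cleanly separates ``algorithms reduce to PLAs'' from ``the best PLA solves \eqref{prob:cpmax1}''; the paper's two-instance indistinguishability argument is less modular but is the template it reuses to prove the matching lower bounds for the Pareto-optimal algorithms (Part 3 of Theorem \ref{thm:optimal_trans}), which is why it is organized that way. Two small points of care in your write-up: (i) $N_{\mathcal A}(\cdot)$ is only well defined as a single function if the low-reward prefixes are nested across $x$, so you should fix a canonical stream (e.g.\ unit or infinitesimal requests) --- this is harmless since you are free to choose the adversarial instances; (ii) invoking Lemma \ref{lem:feasibleregionrobust_2} at $C=\rho$ to conclude $\cstar\ge\rho$ is formally circular because that lemma is stated under the hypothesis $C\le\cstar$; you should instead appeal directly to the pointwise equivalence $\CP(p(x);(x,y))\ge C$ for all $(x,y)\in\region$ iff $\ll(x;C)\le p(x)\le\u(x;C)$ (the two ``directions'' in its proof), which does not use that hypothesis, together with the slope argument that upgrades $\ll$ to $\wll$ for valid $p$.
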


Theorem \ref{thm:cmax} characterizes $\cstar$ using an optimization problem \eqref{prob:cpmax1} that bears resemblance with Problem \eqref{prob:trans}. In Problem \eqref{prob:cpmax1}, we aim to characterize a valid PL function under which the consistent ratio is maximized. Note that 
by the transformation Lemma \ref{lem:feasibleregionrobust_2},  
the first  constraint (i.e., $\wll(x;C)\le p(x)\le \u(x; C)$) is equivalent  to $\CP(p(x); (x,y))\ge C$ for any $(x, y)\in \region$. The resulting PL function then leads an optimal  PLA that obtains the consistent ratio of $\cstar$, which is the maximum  consistent ratio that any online algorithms can achieve. 
\subsection{A Bisection Method to Compute $\cstar$ for a General Convex Region} \label{subsec:bi}
Here we present a simple bisection method that allows us to compute an $\epsilon$-accurate estimate of $\cstar$. The bisection method (Algorithm \ref{alg:bisection}) crucially uses the first set of constraints in Problem \eqref{prob:cpmax1}: 

\textbf{Necessary and Sufficient Conditions for $C \leq \cstar$: } By Lemma \ref{lem:feasibleregionrobust_2}, we have for any $C \in [\rho,1]$, $\cstar \ge C$ if and only if for any $x \in [\underline{x},\bar{x}]$, we have 
    \[
    \wll(x;C) \leq \u(x;C)\,,
    \] 
where we     
recall that $\wll(x;C)$ and $\u(x;C)$ are defined in Equations \eqref{eq:ll} and \eqref{eq:u}, respectively.

\iffalse
\begin{proposition} [Necessary and Sufficient Conditions]\label{lem:feasibleregionrobust_2}
    For any $C \in [\rho,1]$, $\cstar \ge C$ if and only if for any $x \in [\underline{x},\bar{x}]$, we have 
    \[
    \wll(x;C) \leq \u(x;C)\,,
    \] 
where we     
recall that $\wll(x;C)$ and $\u(x;C)$ are defined in Equations \eqref{eq:ll} and \eqref{eq:u}, respectively. 
\end{proposition}
\fi

\begin{algorithm}[H]\footnotesize
  \caption{A Bisection Method to Compute $\cstar$. \label{alg:bisection}} 
\begin{itemize}
\item[]\textbf{Input:} Convex set $\region$, resource capacity $m$,  and accuracy parameter $\epsilon \in [0, 1/2]$.
\item[]\textbf{Output:} An $\epsilon$-accurate estimate of $\cstar$.
\end{itemize}
\begin{itemize}
\item[]Initialize $C_0 = \rho$ and $C_1 = 1$, where $\rho = 1/(2-r_{\ell}/r_h)$. 
\item[]While $|C_1-C_0|\ge \epsilon: $
\end{itemize}
\begin{itemize}
\item Compute the mid point $C_m = (C_0+C_1)/2$.
\item If $\wll(x;C_m)\le \u(x;C_m)$ for any $x\in [\underline x, \bar x]$, set $C_0$ to $C_m$.
\item  If $\wll(x;C_m)> \u(x;C_m)$ for some $x\in [\underline x, \bar x]$, set $C_1$ to $C_m$.
\end{itemize}
\begin{itemize}
\item[]\textbf{Return:} $C_0$.
\end{itemize}
\end{algorithm}

 In Algorithm \ref{alg:bisection},  we use a bisection procedure that  repeatedly checks if $\wll(x;C) \le \u(x;C)$,  $x\in [\underline x, \bar x]$, for some given $C$. Note that this condition can be easily checked considering the fact that we have a closed form solution for $\wll(x; C)$ and $\u(x;C)$. Furthermore, checking this condition
 is equivalent to check if $\min_{x \in [\underline{x},\bar{x}]}\u(x;C)-\wll(x;C)\ge 0$, where we highlight 
 this optimization can be easily solved. This is because 
 by Lemma \ref{lem:property_u_l}, we know that $\u(x;C)$ is convex in $x$, and  $\wll(x;C)$ is concave in $x$. This implies that     $\u(x;C)-\wll(x;C)$ is convex, and the aforementioned problem is a convex optimization problem. 
 The following proposition sheds light on the performance of  Algorithm \ref{alg:bisection}.

 \begin{proposition}[Bisection Method to Compute $\cstar$]\label{prop:bisection}
     Consider Algorithm \ref{alg:bisection} with an accuracy parameter $\epsilon \in [0, 1/2]$. Given a general convex set $\region$, Algorithm \ref{alg:bisection} returns a $C_0 \in [\cstar-\epsilon,\cstar+\epsilon]$, where $\cstar$ is the optimal solution to Problem \eqref{prob:cpmax1}. In addition, the computational complexity of Algorithm \ref{alg:bisection} is $O(\log(1/\epsilon))$.     
 \end{proposition}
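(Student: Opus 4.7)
The plan is to verify a bisection invariant and then analyze the standard halving of the search interval. I would show that throughout Algorithm \ref{alg:bisection} the invariant $C_0 \le \cstar \le C_1$ is preserved; once the loop exits, $|C_1-C_0| < \epsilon$ combined with $C_0 \le \cstar$ yields $\cstar - \epsilon < C_0 \le \cstar$, which implies the claimed accuracy $C_0 \in [\cstar - \epsilon, \cstar + \epsilon]$.

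For the invariant, I would first check initialization: $C_0 = \rho \le \cstar$ by the last sentence of Theorem \ref{thm:cmax}, and $C_1 = 1 \ge \cstar$ since any consistent ratio cannot exceed $1$. Next I would argue that the feasibility check inside the loop is exactly equivalent to the predicate ``$\cstar \ge C_m$''. This follows from the necessary and sufficient condition stated just above Algorithm \ref{alg:bisection} (itself a consequence of Lemma \ref{lem:feasibleregionrobust_2} and the definition of $\cstar$ in Problem \eqref{prob:cpmax1}): $\cstar \ge C$ iff $\wll(x;C) \le \u(x;C)$ for every $x \in [\underline{x},\bar{x}]$. To turn this into a genuine dichotomy at $C_m$, I would invoke the monotonicity of $\u(\cdot;C)$ and $\wll(\cdot;C)$ in $C$ from Lemma \ref{lem:property_u_l}: $\u(x;C)$ is nonincreasing in $C$ and $\wll(x;C)$ is nondecreasing in $C$, so the set of feasible $C \in [\rho,1]$ is a downward-closed interval with supremum exactly $\cstar$. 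Hence, whenever the check succeeds at $C_m$, setting $C_0 \leftarrow C_m$ preserves $C_0 \le \cstar$; whenever it fails, $\cstar < C_m$ and setting $C_1 \leftarrow C_m$ preserves $\cstar \le C_1$.

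The main obstacle I anticipate is rigorously justifying that the per-iteration feasibility test can be carried out efficiently over the continuum $x \in [\underline{x},\bar{x}]$. As the text indicates, this amounts to checking whether $\min_{x\in[\underline{x},\bar{x}]} \bigl(\u(x;C_m) - \wll(x;C_m)\bigr) \ge 0$; since $\u(\cdot;C_m)$ is convex and $\wll(\cdot;C_m)$ is concave by Lemma \ref{lem:property_u_l}, the objective is convex in $x$ and the minimum admits a tractable characterization in time independent of $\epsilon$. Finally, for the complexity bound, the initial bracket $[\rho,1]$ has length at most $1$ and halves each iteration, so the loop terminates after at most $\lceil \log_2((1-\rho)/\epsilon) \rceil = O(\log(1/\epsilon))$ iterations, giving the stated overall complexity.
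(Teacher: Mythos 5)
Your proposal is correct and follows essentially the same route as the paper's proof: the feasibility test at each $C_m$ reduces to checking $\min_{x}\bigl(\u(x;C_m)-\wll(x;C_m)\bigr)\ge 0$, which is a convex problem by Lemma \ref{lem:property_u_l}, and the $O(\log(1/\epsilon))$ bound follows from the standard halving of the bracket $[\rho,1]$. Your argument is in fact somewhat more explicit than the paper's, which leaves the invariant $C_0\le\cstar\le C_1$ and the downward-closedness of the feasible set of $C$ (via the monotonicity of $\u$ and $\wll$ in $C$) implicit in its analogy to classical root-finding.
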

 
 \subsection{A Faster Method to Compute $\cstar$ for Polyhedron Convex Regions } \label{subsec:fast}
 In the previous section, we presented a bisection method to estimate $\cstar$ for a general convex ML region. Here, we present a faster method to compute $\cstar$ when the ML region is a convex polyhedron. This method relies on the following theorem.

\begin{theorem} [Properties of $\cstar$ under Polyhedron ML Regions] \label{thm:property_cstar}
Let $\V$ be the set containing the $x$ value of all vertices of $\region$ and of the set $\mathcal R_{0}$, where $\mathcal R_{0} = \{(x, \underline h(x)): x\in [\underline x, \bar x]\} \cap \{(x,y): x+y =m\}$. Then, $\cstar =C$ for some $C\in [\rho, 1]$ if and only if the following two conditions hold.
    \begin{enumerate}
        \item for any $x \in \V$, $\wll(x;C) \leq \u(x;C)$.
        \item there exists $\widehat{x}\in \V$, such that $\wll(\widehat{x};C) = \u(\widehat{x};C).$
    \end{enumerate}
    Here, we     
recall that $\wll(x;C)$ and $\u(x;C)$ are defined in Equations \eqref{eq:ll} and \eqref{eq:u}, respectively. 
    \end{theorem}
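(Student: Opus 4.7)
The plan is to reduce verification of the continuum condition $\wll(x;C) \leq \u(x;C)$ for all $x \in [\underline{x}, \bar{x}]$ (which, by the necessary and sufficient condition stated in Section \ref{subsec:bi}, is equivalent to $\cstar \geq C$) to its restriction to the finite set $\V$. The main structural fact I would exploit is that when $\region$ is a polyhedron, both $\u(\cdot;C)$ and $\wll(\cdot;C)$ are continuous piecewise-linear functions of $x$ with all of their break points contained in $\V$.

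The first step is to establish this piecewise linearity. Since $\region$ is a polyhedron, the lower and upper envelopes $\underline{h}(\cdot)$ and $\bar{h}(\cdot)$ are continuous piecewise linear, and their break points are the $x$-coordinates of the vertices of $\region$. Inspecting the formulas for $\CP_o$ and $\CP_u$ in Equations \eqref{eq:CP_over} and \eqref{eq:CP_under} together with the defining equations \eqref{eq:u} and \eqref{eq:l} for $\u$ and $\ll$, one sees that the $\min\{\cdot,\cdot\}$ terms contribute additional potential break points precisely where $\underline{h}(x) = m$, $\bar{h}(x) = m$, $x + \underline{h}(x) = m$, or $x + \bar{h}(x) = m$. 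By the definition of $\mathcal{R}_0$, its vertices absorb the crossings with $x + y = m$, while crossings with $y = m$ coincide with vertices of $\region \cap \{y \leq m\}$. For $\wll$, I would argue that its clipping point $\nex$ (at which the slope of $\ll$ first fails to exceed $-1$) must itself coincide with a break point of $\ll$, because $\ll$ is concave piecewise linear and hence has a non-increasing piecewise-constant derivative; therefore $\nex \in \V$. Putting these observations together, both $\u(\cdot;C)$ and $\wll(\cdot;C)$ are affine on each sub-interval delimited by consecutive elements of $\V$.

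Given piecewise linearity, sufficiency follows easily. If conditions 1 and 2 hold at $C$, then on each sub-interval between consecutive elements of $\V$ the affine function $\u(x;C) - \wll(x;C)$ is nonnegative at both endpoints and hence on the whole sub-interval. This yields $\wll(x;C) \leq \u(x;C)$ for all $x \in [\underline{x},\bar{x}]$, so $\cstar \geq C$. To rule out $\cstar > C$, I would invoke Lemma \ref{lem:property_u_l}: $\u(x;\cdot)$ is monotonically decreasing in $C$ while $\ll(x;\cdot)$ (and hence $\wll(x;\cdot)$) is monotonically increasing in $C$. Combined with condition 2, which gives equality at some $\widehat{x}\in\V$, any strict increase of $C$ would force $\wll(\widehat{x};C') > \u(\widehat{x};C')$, violating the pointwise inequality at $\widehat{x}$ and precluding $\cstar \geq C'$. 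For the necessity direction, assume $\cstar = C$. Condition 1 is immediate from the characterization recalled above. For condition 2, suppose toward contradiction that $\wll(x;C) < \u(x;C)$ strictly at every $x \in \V$. By continuity of $\u$ and $\wll$ in $C$ (Lemma \ref{lem:property_u_l}) and the finiteness of $\V$, a small perturbation $C' = C + \delta$ preserves strict inequality at every $x \in \V$; the sufficiency step just established then gives $\cstar \geq C' > C$, a contradiction.

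The main obstacle is the piecewise-linearity claim in the first step, and in particular verifying that every source of non-smoothness of $\u(\cdot;C)$ and $\wll(\cdot;C)$—vertices of $\region$, crossings with $y = m$, crossings with $x + y = m$ captured by $\mathcal{R}_0$, and the slope-clipping point $\nex$—indeed lies in $\V$. Once this structural fact is in hand, the rest is a routine combination of affine interpolation on each sub-interval with the monotonicity of $\u$ and $\wll$ in $C$ established in Lemma \ref{lem:property_u_l}.
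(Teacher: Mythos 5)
Your overall architecture matches the paper's: reduce the continuum feasibility check to the finite set $\V$ via piecewise linearity of $\u(\cdot;C)$ and $\wll(\cdot;C)$ (the paper's Lemma \ref{lem:piecewise} and Part 1 of its proof), then use monotonicity and continuity in $C$ for optimality, and a perturbation argument for the necessity of condition 2. The sufficiency of condition 1 and the necessity direction are essentially sound as you describe them.

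However, there is a genuine gap in your argument that conditions 1 and 2 imply $C$ is \emph{optimal}. You write that, given $\wll(\widehat{x};C) = \u(\widehat{x};C)$, ``any strict increase of $C$ would force $\wll(\widehat{x};C') > \u(\widehat{x};C')$.'' Lemma \ref{lem:property_u_l} only gives weak monotonicity: $\wll(\widehat{x};C')\geq \wll(\widehat{x};C)=\u(\widehat{x};C)\geq \u(\widehat{x};C')$, i.e., $\wll(\widehat{x};C')\geq \u(\widehat{x};C')$, which does not contradict feasibility of $C'$. Strict monotonicity in $C$ genuinely fails in the clipped regimes: $\u(x;C)\equiv m$ for $x\leq \underline{x}_u$, and $\wll(x;C)$ can be clipped to $0$ (or to the slope-$(-1)$ line through $(\nex,\ll(\nex;C))$ with $\nex$ itself depending on $C$). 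This is precisely why the paper's Part 2 splits into cases according to whether $\u(\widehat{x};C)=m$, disposes of that case by showing it forces $C=\widehat{C}=1$, and invokes a dedicated lemma (Lemma \ref{lem:munique}) to establish that $\u(\widehat{x};\widehat{C})=\u(\widehat{x};C)$ for $\widehat{C}\neq C$ can only occur when both equal $m$, so that the inequality is strict in the remaining case. To close your proof you would need to supply this case analysis (or an equivalent strictness argument), since the one-line monotonicity claim is false as stated.
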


    Theorem \ref{thm:property_cstar} shows that when the ML region $\region$ is a polyhedron, for any $x\in \V$, we have $\wll(x;\cstar) \leq \u(x;\cstar)$ as long as 
    there exists $\widehat x\in \V$ under which the lower bound $\wll(\widehat x; \cstar)$ is equal to the upper bound $\u(\widehat x;\cstar)$. Here, $\V$ is the set containing the $x$ value of all vertices of $\region$, and the $x$ value of $\mathcal R_{0}$. We refer to $\V$ as the set of x-vertices. 
  This theorem shows that when the ML region is a polyhedron,  we can simplify the feasibility check in Algorithm \ref{alg:bisection} by checking the condition  $\wll(x;C) \leq \u(x;C)$ only for any x-vertices $x\in \V$. While this is an improvement, we   present a faster algorithm that returns the exact value of $\cstar$ by taking advantage of properties of the lower and upper bounds $\wll(\cdot;C)$
and $\u(\cdot;C)$, presented in Lemma \ref{lem:property_u_l}.

\begin{algorithm}[htbp]\footnotesize
\caption{An Algorithm to Compute $\cstar$ for Polyhedron ML region $\region$ }\label{alg:mlconsis}
\begin{itemize}
\item[]\textbf{Input:} Polyhedron ML region  $\mathcal{R}$, resource capacity $m$.
\item[]\textbf{Output:} Optimal consistent ratio $\cstar$.
\item[]\textbf{Initialization:} Set $\mathcal S =\emptyset$.
\end{itemize}
\begin{itemize}
\item For any pair of x-vertices $x_1 , x_2 \in \V $ with $\bar h(x_1) \ge \underline h(x_2)$ and $x_2\le x_1$, find the following balancing PL $p\in [\underline h(x_2), \bar h(x_1))]$ such that 
\begin{align}\label{eq:balance1}\CP_u(p; (x_1,\bar{h}(x_1)))=\CP_o(p; (x_2,\underline{h}(x_2))\,.\end{align}
(Note that $x_1$ can be equal to $x_2$.) Add $\CP_u(p; (x_1,\bar{h}(x_1)))$ to $\mathcal S$.
\item For any pair of x-vertices $x_1 , x_2 \in \V $ with $\bar{h}(x_1)-\underline{h}(x_2) \geq x_2-x_1$ and $x_2> x_1$, find the following balancing PL $p\in [\underline{h}(x_2)+(x_2-x_1),\bar{h}(x_1)]$  such that 
\begin{align}\CP_u(p; (x_1,\bar{h}(x_1)))=\CP_o(p-(x_2-x_1); (x_2,\underline{h}(x_2)))\,. \label{eq:balance2}\end{align}
Add $\CP_u(p; (x_1,\bar{h}(x_1)))$ to $\mathcal S$. That is, update $\mathcal S$ to $\mathcal S\cup \{\CP_u(p; (x_1,\bar{h}(x_1)))\}$.
\end{itemize}
\begin{itemize}
\item[]\textbf{Return:} Return the largest $C\in \mathcal S$ under which $\wll(x; C)\le \u(x;C)$ for any $x\in \V$ as $\cstar$:
\[\cstar = \max\{C\in \mathcal S: \wll(x; C)\le \u(x;C) \quad \text{for any $x\in \V$} \}\,.\]
\end{itemize}
\end{algorithm}

In the faster algorithm, at a high level, we aim to find the $x$-vertex $\widehat x$ under which $\wll(\widehat x;\cstar) = \u(\widehat x;\cstar)$. To do so, we follow an  enumeration technique that uses the property of $\widehat x$ along with the first condition in Theorem \ref{thm:property_cstar} that allows us to only focus on $x$-vertices to determine $\cstar$.   To explain the idea behind the algorithm, 
let us recall that in defining the upper bound $\u(x; C)$, when possible, we choose the protection level $\u(x; C)$ such that  the compatible ratio at point $(x, \underline h(x))$ is equal to $C$. That is, for any $x\in [\underline x_u, \bar x_u]$, we have $\u(x; C)= \sup\big \{p\in [0,m]: \CP_o(p;(x,\underline{h}(x)))=C\big\}$.  Similarly, in defining the upper bound $\ll(x; C)$ (which we later use to define $\wll$), when possible, we choose the protection level $\ll(x; C)$ such that  the compatible ratio at point $(x, \bar h(x))$ is equal to $C$. That is, for any $x \in [x_H,\bar{x}_{l}]$, we have $\ll (x; C) = \inf\big \{p\in [0,m]: \CP_u(p;(x,\overline{h}(x)))=C \big\}$. 

Now suppose that at $\widehat x$, we have $\ll(\widehat x;\cstar) = \wll (\widehat x;\cstar) $. Then, if 
$\widehat x \in [\underline x_u, \bar x_u] \cap [x_H, \bar x_l]$, the condition $\wll(\widehat x;\cstar) = \u(\widehat x;\cstar)$ leads to \emph{balancing} two compatible ratios. That is, we need to find a protection level $p\in [\underline h(\widehat x), \bar h(\widehat x))]$ such that
\[\CP_o(p; (\widehat x, \underline h(\widehat x))) =  \CP_u(p; (\widehat x, \bar h(\widehat x)))\,.\]
This balancing idea explains Equation \eqref{eq:balance1} in Algorithm \ref{alg:mlconsis}. Now, suppose that at $\widehat x$, we have $\ll(\widehat x;\cstar) \neq \wll (\widehat x;\cstar) $, which only happens when $\widehat x \ge \nex$, where $\nex=\sup\{x \in [x_H,\bar{x}]: \frac{\partial \ll(x^{-};\cstar)}{\partial x} \le -1 \}$. By definition of $\wll$ in Equation \eqref{eq:ll}, we then know that $\wll (\widehat x;\cstar) = \ll(\nex; \cstar) - (\widehat x- \nex)$. Then,  the condition  $\wll(\widehat x;\cstar) = \u(\widehat x;\cstar)$ leads to a slightly different {balancing} procedure in which we need to find a protection level $p\in [\underline h(\widehat x), \bar h(\widehat x))]$ (i.e., $\ll(\widehat x; \cstar)$) such that 
\[\CP_u(p; (\nex, \bar h(\nex))) =  \CP_o(p-(\widehat x- \nex); (\widehat x, \underline h(\widehat x)))\,.\]
This justifies Equation  \eqref{eq:balance2} in Algorithm \ref{alg:mlconsis}. (Note that as we show in the proof of Theorem \ref{thm:MLconsis}, for any $C$, $\nex$ is a x-vertex.)

\begin{theorem}[Optimal Consistent Ratio for Convex Polyhedron ML Regions] \label{thm:MLconsis} Suppose that the ML region $\region$ is a convex  polyhedron. Algorithm \ref{alg:mlconsis} returns the optimal consistent ratio $\cstar$ for any given polyhedron $\region$ in run time $O(|\V|^3)$. 
\end{theorem}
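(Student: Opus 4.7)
The plan is to reduce correctness of Algorithm \ref{alg:mlconsis} to the characterization in Theorem \ref{thm:property_cstar}. That theorem says $C=\cstar$ iff (i) $\wll(x;C)\le \u(x;C)$ for every x-vertex $x\in \V$, and (ii) there is some $\widehat x\in \V$ at which $\wll(\widehat x;C)=\u(\widehat x;C)$. So it suffices to show (a) the candidate set $\mathcal S$ assembled by the algorithm contains $\cstar$, and (b) every $C\in \mathcal S$ satisfies (ii) at some x-vertex, so that the largest $C\in \mathcal S$ passing the explicit feasibility check for (i) must equal $\cstar$.

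For (a), I would fix the $\widehat x\in \V$ guaranteed by Theorem \ref{thm:property_cstar} and split on whether $\ll(\widehat x;\cstar)$ coincides with $\wll(\widehat x;\cstar)$. If it does (equivalently $\widehat x\le \nex$ in the definition \eqref{eq:ll}), set $p^\star:=\u(\widehat x;\cstar)=\ll(\widehat x;\cstar)$; by the defining equations \eqref{eq:u} and \eqref{eq:l}, $p^\star$ simultaneously satisfies $\CP_o(p^\star;(\widehat x,\underline h(\widehat x)))=\cstar$ and $\CP_u(p^\star;(\widehat x,\bar h(\widehat x)))=\cstar$, which is precisely the balancing Equation \eqref{eq:balance1} for the pair $x_1=x_2=\widehat x$; hence $\cstar\in \mathcal S$. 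If instead $\widehat x>\nex$, the truncation in \eqref{eq:ll} gives $\wll(\widehat x;\cstar)=\ll(\nex;\cstar)-(\widehat x-\nex)$, and equality with $\u(\widehat x;\cstar)$ rearranges to $\CP_u(\ll(\nex;\cstar);(\nex,\bar h(\nex)))=\CP_o(\u(\widehat x;\cstar);(\widehat x,\underline h(\widehat x)))=\cstar$, i.e.\ the balancing Equation \eqref{eq:balance2} with $x_1=\nex$ and $x_2=\widehat x$. I would also argue $\nex\in \V$: since $\region$ is a polyhedron, $\bar h$ and therefore $\overline{\mathcal H}$ are piecewise linear, so $\partial \ll(x^-;C)/\partial x = C\,\overline{\mathcal H}'(x)$ (from Lemma \ref{lem:property_u_l}) is piecewise constant with breakpoints only at vertices of $\region$; the supremum defining $\nex$ is therefore attained at an x-vertex.

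For (b), each balancing equation \eqref{eq:balance1} or \eqref{eq:balance2} becomes linear in $p$ once the explicit formulas \eqref{eq:CP_over} and \eqref{eq:CP_under} are substituted, so it admits a unique $p$ in the indicated interval and yields a single candidate $C$ which, by construction, realizes equality in condition (ii) at the associated vertex. The interval restrictions $\bar h(x_1)\ge \underline h(x_2)$ (for the first loop) and $\bar h(x_1)-\underline h(x_2)\ge x_2-x_1$ (for the second loop) are precisely the nonemptiness conditions for the range of admissible $p$, and I would verify them directly from the definitions of $\CP_o$ and $\CP_u$. Given (a) and (b), the returned value $C^{\text{out}}$ satisfies both (i) and (ii); by the ``only if'' direction of Theorem \ref{thm:property_cstar}, $C^{\text{out}}=\cstar$.

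Finally, the complexity bound is immediate: the two enumeration loops consider $O(|\V|^2)$ vertex pairs, each contributing $O(1)$ work to solve the linear balancing equation and record the resulting $C$; the final selection examines each of the $O(|\V|^2)$ candidates and performs an $O(|\V|)$ sweep over x-vertices to verify condition (i), giving $O(|\V|^3)$ overall. The main obstacle will be the careful case-by-case verification that the two balancing templates exhaust all ways condition (ii) can arise at an x-vertex and that no relevant pair is missed by the selection rules; the argument that $\nex$ itself is an x-vertex is the key enabling observation that makes the second loop range over only vertex pairs.
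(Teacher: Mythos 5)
Your overall strategy---reduce correctness to the characterization in Theorem \ref{thm:property_cstar}, show $\cstar\in\mathcal S$, and use the feasibility check to rule out larger candidates---is the same as the paper's. The complexity argument is also essentially identical. However, there is a genuine gap in your argument for part (a), i.e.\ that $\cstar\in\mathcal S$.

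Your case split is on whether $\widehat x\le \nex$ or $\widehat x>\nex$, and in the first case you claim that the defining equations \eqref{eq:u} and \eqref{eq:l} give $\CP_o(p^\star;(\widehat x,\underline h(\widehat x)))=\cstar$ and $\CP_u(p^\star;(\widehat x,\bar h(\widehat x)))=\cstar$ simultaneously, so that the \emph{diagonal} pair $x_1=x_2=\widehat x$ in Equation \eqref{eq:balance1} produces $\cstar$. This is false when $\widehat x< x_H$. By its definition, $\ll(x;C)$ is only characterized by the balancing condition $\CP_u(\ll(x;C);(x,\bar h(x)))=C$ on $[x_H,\bar x_l]$; for $x\in[0,x_H]$ it is the constant extension $\ll(x_H;C)$, and $\CP_u(\ll(x_H;\cstar);(\widehat x,\bar h(\widehat x)))$ need not equal $\cstar$ there. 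In that regime the correct identity is $\CP_u(p^\star;(x_H,\bar h(x_H)))=\CP_o(p^\star;(\widehat x,\underline h(\widehat x)))=\cstar$, i.e.\ the \emph{off-diagonal} pair $(x_1,x_2)=(x_H,\widehat x)$ with $x_2<x_1$ --- which is exactly why the first enumeration loop of Algorithm \ref{alg:mlconsis} ranges over pairs of distinct vertices at all; under your argument only diagonal pairs would ever be needed. Closing this case also requires showing that $\widehat x$ coincides with $\bar x_u$ (the minimizer of $\u(\cdot;\cstar)$, since $\wll$ is constant to the left of $x_H$) and that $\bar x_u$ is itself an x-vertex; the paper does this via Lemmas \ref{lem:barxusmall} and \ref{lem:xminvertex}. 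The paper's proof therefore runs three cases ($\widehat x\le x_H$, $x_H\le\widehat x\le\nex$, $\widehat x>\nex$), of which you have only the latter two. A secondary, more minor point: your claim that each balancing equation ``becomes linear in $p$'' after substituting \eqref{eq:CP_over} and \eqref{eq:CP_under} glosses over the $\min$/$\max$ branches; the paper disposes of these via Lemma \ref{lem:specialpoints}, and your argument should identify which branches are active at the relevant points ($\bar x_u$, $x_H$, $\nex$, $\widehat x$) before asserting uniqueness of the balancing $p$.
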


\section{Numerical Studies}\label{sec:simulations}
In this section, we present the results of our numerical studies, which highlight the efficiency of resource allocation achieved through the integration of our algorithms with ML advice. 
We will show that thanks to our algorithm, ML advice can significantly improve the average and worst case performance, outperforming other benchmarks.

\subsection {Setup} 

\textbf{Demand/Arrival models.} 
We conduct an analysis of two demand models, one with a uniform distribution,  and the other with a normal distribution. In both models, we introduce random noise to the demand process using a uniform distribution. Specifically, denoting $x$ and $y$ as the number of high reward arrivals and low reward arrivals, respectively, in the first demand model, we have:
\begin{align} \label{distri1}  x,y \sim \left\{ \begin{array}{ll}
         \text{Uniform}(10,20) &\quad  \text{with probability 0.9}, \\
         \text{Uniform}(0,30) &\quad   \text{with probability 0.1}. \\
        \end{array} \right. \end{align}
Similarly, in the second model, we have:
\begin{align} \label{distri2}  x,y \sim \left\{ \begin{array}{ll}
         \text{Normal}(15,3) &\quad  \text{with probability 0.9}, \\
         \text{Uniform}(0,30) &\quad   \text{with probability 0.1}\,. \\
        \end{array} \right. \end{align}
We set $m=20$, $r_h=1$, and $r_{\ell}=1/3$. We analyze both worst-case and uniform arriving orders in each demand model.

\textbf{Construction of ML advice.}  In the first approach, we fix the shape of the ML advice and, given the shape, determine its parameters. 
For the uniform demand distribution described in Equation \eqref{distri1} and illustrated in Figure \ref{fig:advice3}, a box-shaped advice is naturally suited. 
Conversely, for the normal demand distribution in Equation \eqref{distri2}, an ellipsoidal shape is more appropriate, as demonstrated in Figure \ref{fig:advice3}. 
We use $n=\{10,100\}$ as the number of samples used to construct the box and ellipsoid advice. 
To construct the box (ellipsoid) advice, we identify the smallest rectangle (ellipsoid) that encompasses at least $z\%$ of the sample points, where $z\% \in \{80\%,90\%\}$. 
For the ellipsoid advice, we apply the method in \cite{gartner1997smallest} to construct such an ellipsoid.

In the second approach, we adopt a data-driven method to construct ML advice (uncertainty sets), as delineated in the work of \cite{bertsimas2018data}, particularly in Theorem 5. 
This method utilizes statistical hypothesis tests, specifically the Kolmogorov-Smirnov (KS) test, to construct uncertainty sets. 
There are two parameters, $\epsilon$, and $\theta$. 
$\epsilon$ sets the allowable sum of relative entropies across all marginals in the uncertainty set, ensuring that the total divergence from the worst-case distributions is bounded by $\log(1/\epsilon)$. Further, 
$\theta$ controls the number of vertices of the polyhedron and thus the computational complexity of the optimization. 
Figure \ref{fig:advicedb} showcases examples of such polyhedral advice. 

{\color{black} Finally, in the third approach, we leverage neural networks (NN) to construct  ML advice. 
While NNs typically provide point forecasts, recent ensemble techniques enable probabilistic predictions in the form of prediction intervals. 
We adopt the method of \cite{pearce2018high}, training separate models for the high-reward and low-reward demand types using $n$ independent samples from the demand models in Equations \eqref{distri1} and \eqref{distri2}. 
Each model produces a prediction interval that we treat as a box-shaped advice set for Algorithm~\ref{alg:trans}. 
%Figure~\ref{fig:adviceneural1000} illustrates the outcome for $n=1000$, showing loss, Prediction Interval Coverage Probability (PICP), the test data histogram, and the prediction interval (dashed lines). 
%As we show in the supplementary simulations (Appendix \ref{append:simulations}, Figure \ref{fig:adviceneural}), these intervals are unstable for small sample sizes but converge for larger $n$ (e.g., $n \geq 5000$), where for the high-reward type the interval stabilizes around $[8.35, 22.55]$. 

}

%To construct the ML advice, we sample $n \in \{10, 100\}$ data points $K = 1000$ times from the demand models described in Equations \eqref{distri1} and \eqref{distri2}. For each set of samples, with size $n$, we then construct ML advice using either the data-driven approach outlined in \cite{bertsimas2018data}, the NN to make the box advice, or  in the form of a box or an ellipsoid, depending on the demand model. Let $S_k$, where $k \in [K]$, be the sample dataset and $\region_k$ be the ML advice associated with it.

To construct the ML advice, we sample $n \in \{10,100\}$ data points $K = 1000$ times from the demand models described in Equations~\eqref{distri1} and~\eqref{distri2}. 
For each sample set of size $n$, we then generate ML advice using one of the following methods:  
(i) the data-driven polyhedral approach of \cite{bertsimas2018data},  
{\color{black}(ii) NNs to produce box-shaped prediction intervals following \cite{pearce2018high},} or  
(iii) classical geometric constructions in the form of a box or an ellipsoid, depending on the underlying demand model.  
Let $S_k$, where $k \in [K]$, denote the $k$-th sample dataset and $\region_k$ the corresponding ML advice.

\begin{figure} [H]
     \centering         
     \includegraphics[width=0.47\textwidth]{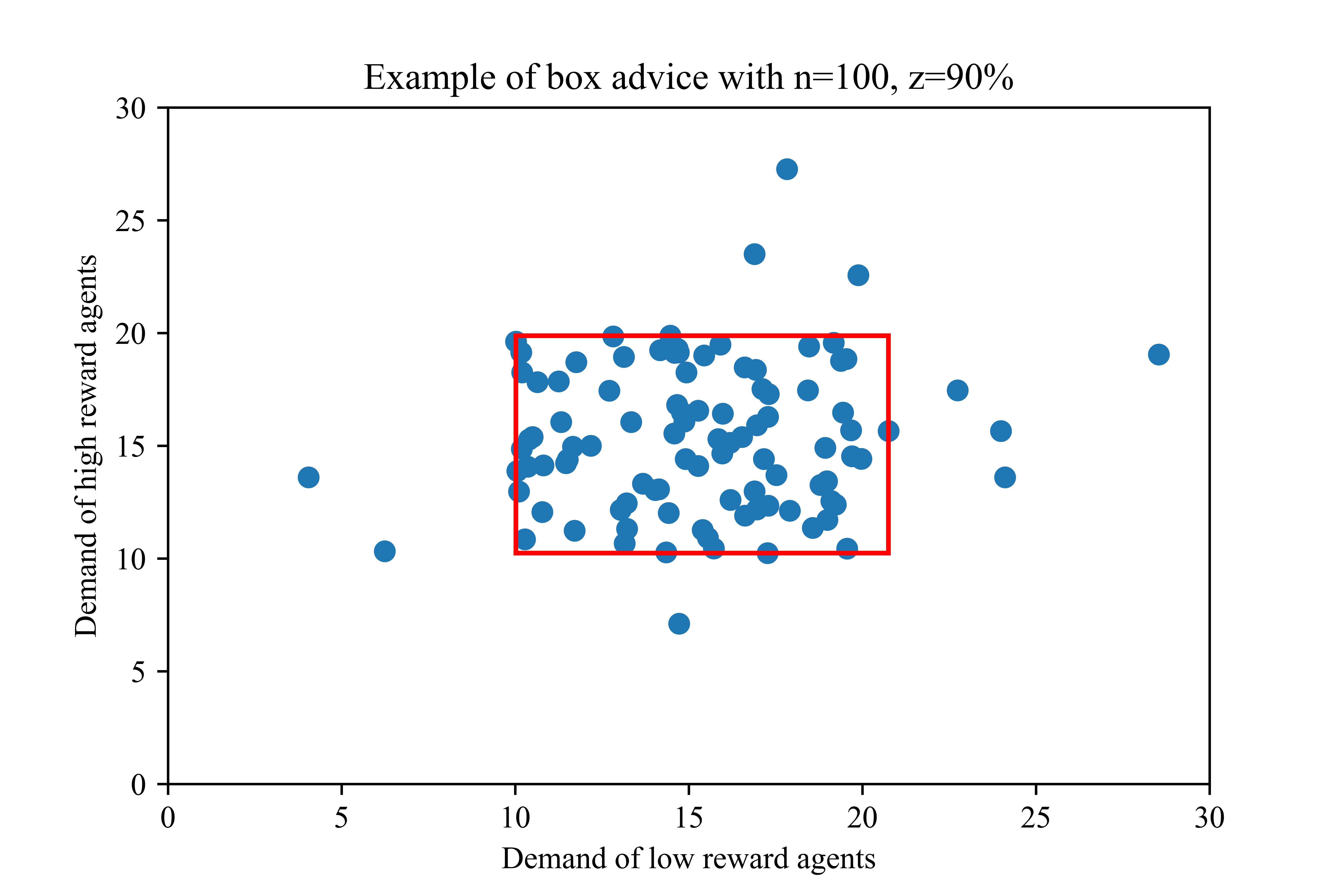}
      \includegraphics[width=0.47\textwidth]{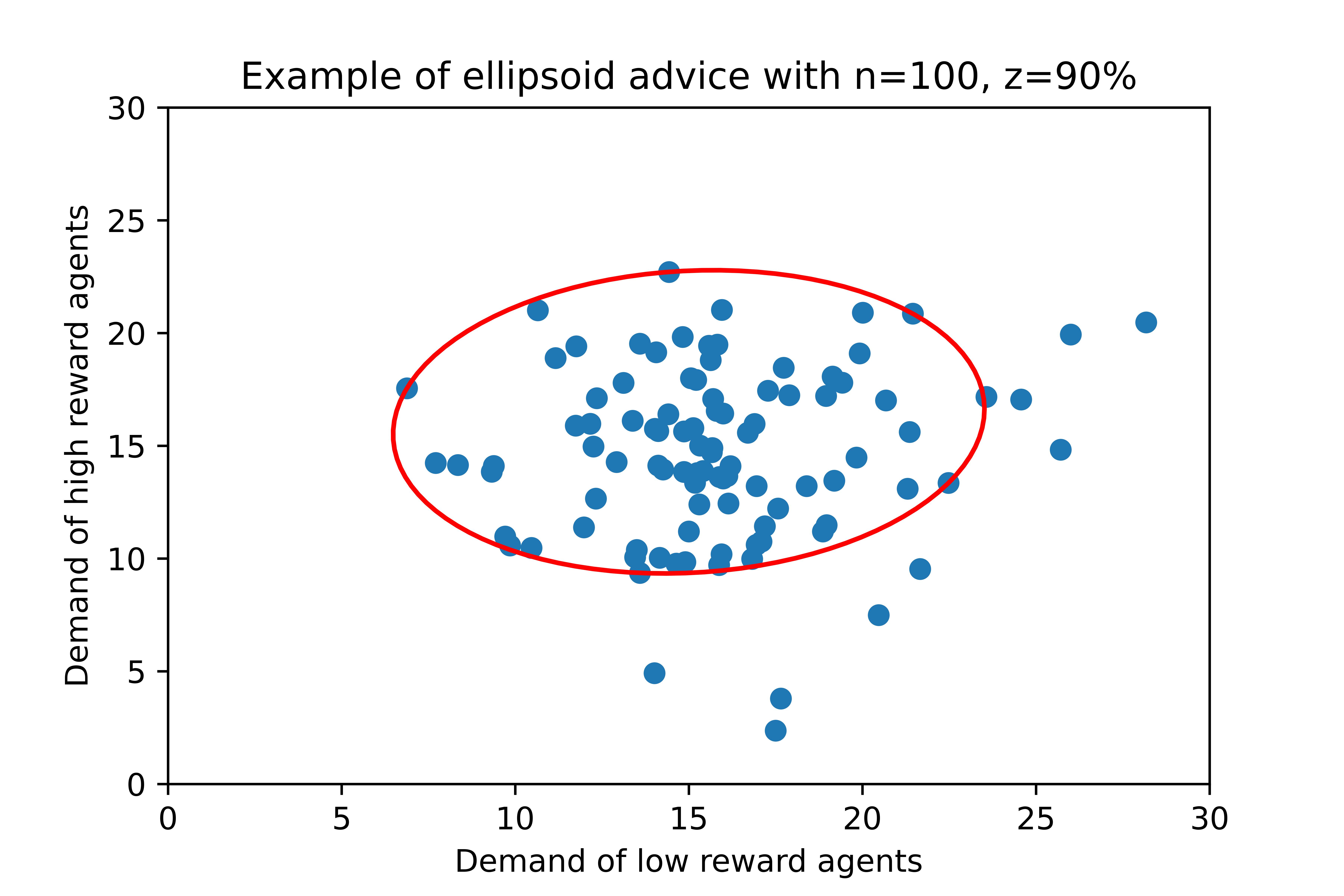}

    \caption{Constructing the box and ellipsoid advice with $n =100$ samples for the uniform and normal demand model in Equations \eqref{distri1} and \eqref{distri2} , respectively. Here,  $z= 90\%$ fraction of the samples fall into the constructed box. }
        \label{fig:advice3}
\end{figure}

\textbf{Performance evaluation.} 
To evaluate the performance of our algorithm and the benchmark algorithms that we will define shortly, 
we generate a test set which contains $100$ instances. Each test instance corresponds to a point $(x, y)$ drawn from the demand models described in Equations \eqref{distri1}  and  \eqref{distri2}. We assess the algorithm's performance using these test instances under two different scenarios: worst-case (ordered) arrival sequences and stochastic (uniform order) arrival sequences.

Let $\mathcal{T}$ denote the set of test instances. 
Then, the worst CP and Avg. CP of an  algorithm $\mathcal{A}$ are respectively defined as:
\begin{align} \label{eq:avg_worst_adv}\textsc{worst cp} = \frac{1}{K}\sum_{k=1}^{K}\min_{(x, y) \in \mathcal{T}} \textsc{CP}_{\mathcal{A}}(x, y; S_k) \quad \textsc{avg. cp} = \frac{1}{K}\sum_{k=1}^{K}\frac{1}{|\mathcal{T}|} \sum_{(x, y) \in \mathcal{T}} \textsc{CP}_{\mathcal{A}}(x, y; S_k)\,.\end{align}
Here, $\textsc{CP}_{\mathcal{A}}(x, y; S_k)$ is the compatible ratio of algorithm $\mathcal A$ under the sample set $S_k$, which is used to construct the ML advice when algorithm $\mathcal A$ is, for example,  Algorithm \ref{alg:trans}. 

For the stochastic arrival sequences, for each instance $(x, y) \in \mathcal{T}$, we generate $100$ random permutations. Similarly, we define: 
\[\textsc{worst cp} = \frac{1}{K}\sum_{k=1}^{K}\min_{(x, y) \in \mathcal{T}} \mathbb{E}[\textsc{CP}_{\mathcal{A}}(x, y; S_k)] \quad \textsc{avg. cp} = \frac{1}{K}\sum_{k=1}^{K}\frac{1}{|\mathcal{T}|} \sum_{(x, y) \in \mathcal{T}}\mathbb{E}[\textsc{CP}_{\mathcal{A}}(x, y; S_k)]\,,\]
where the expectation is taken with respect to the randomness in arrival permutations.

\textbf{Benchmarks.} To further assess the effectiveness of our algorithms, we propose several benchmarks:
\begin{figure} [H]
     \centering         
     \includegraphics[width=0.31\textwidth]{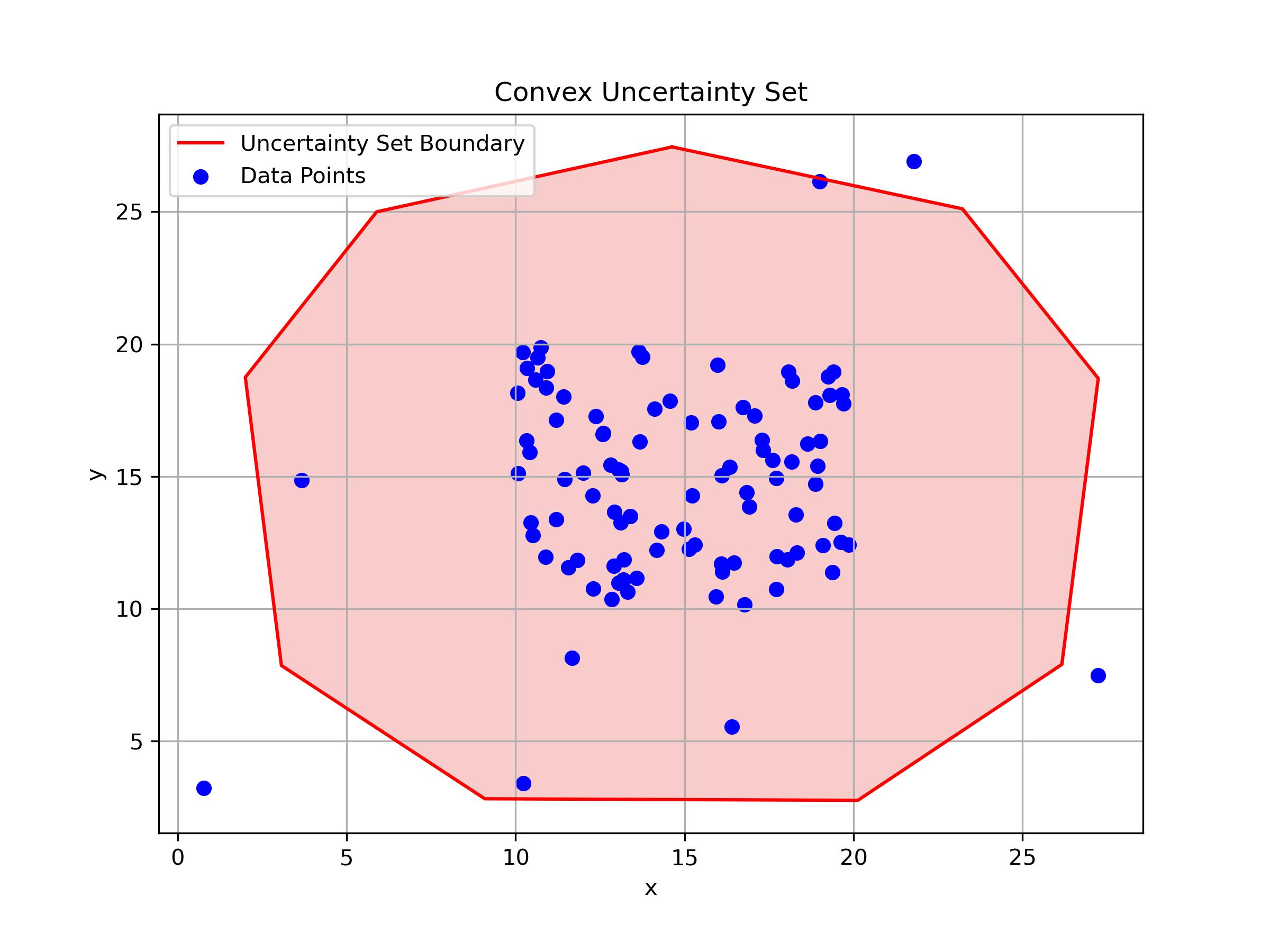}
    \includegraphics[width=0.31\textwidth]{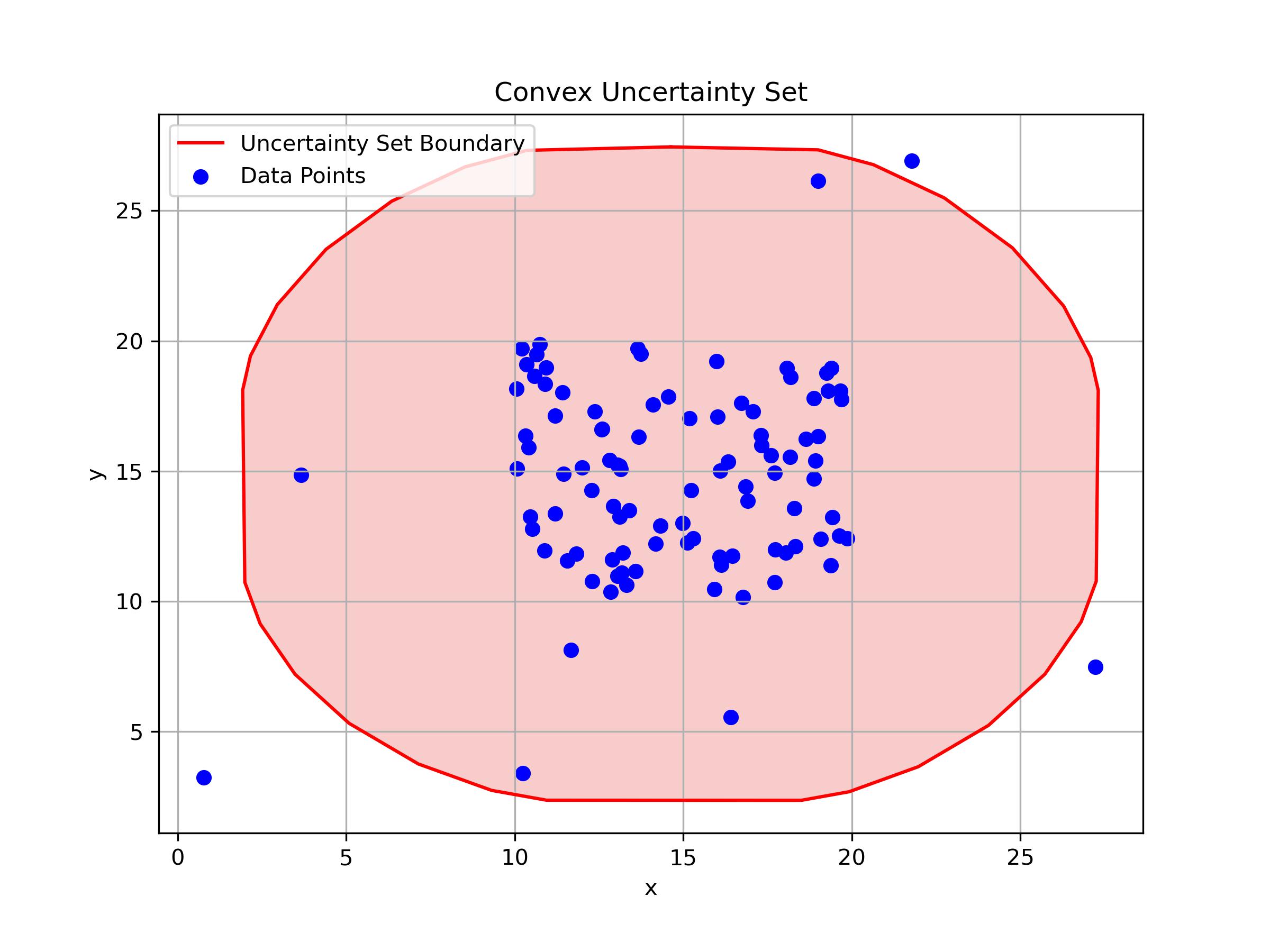}

    \includegraphics[width=0.31\textwidth]{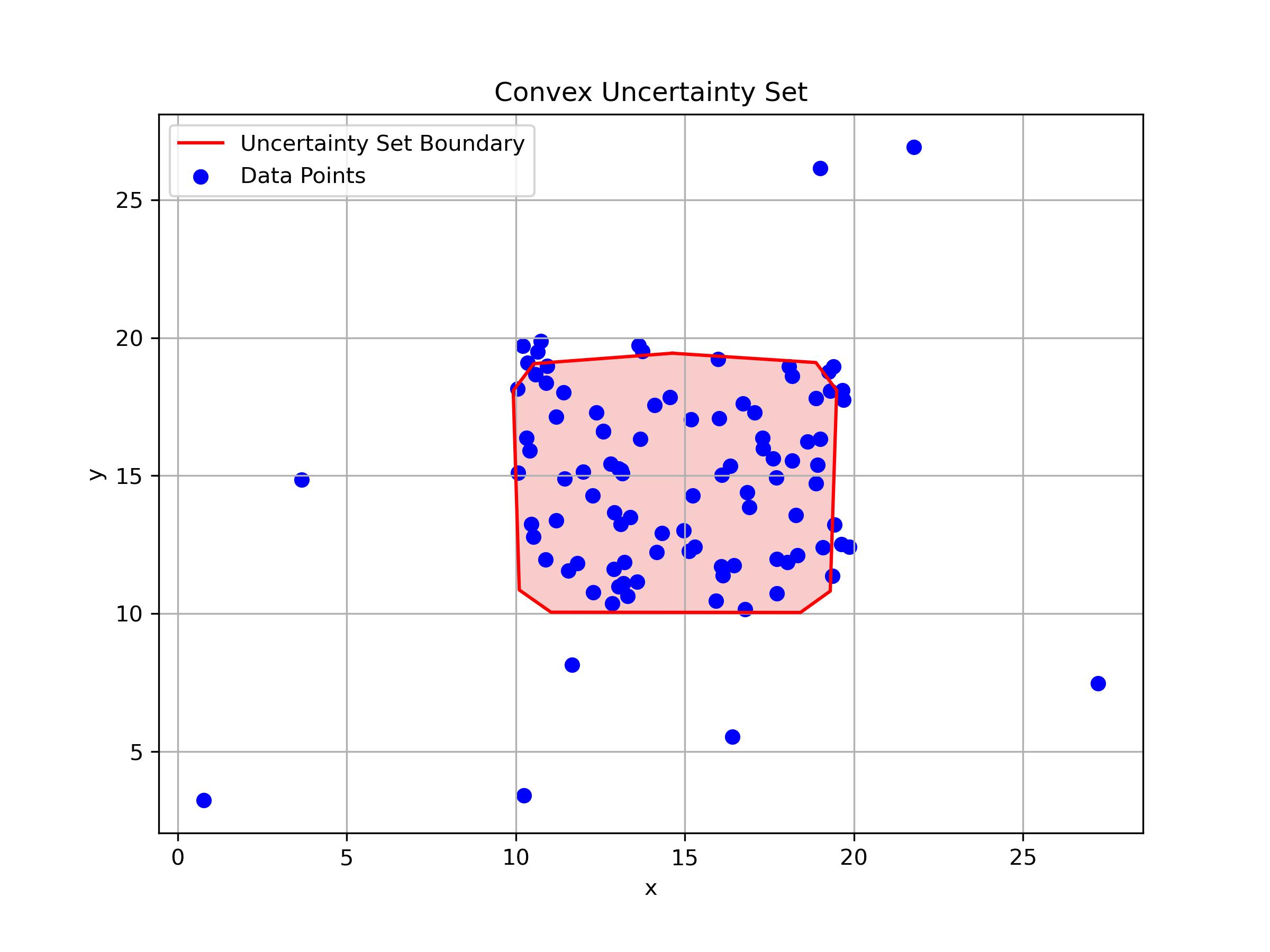}
    \includegraphics[width=0.31\textwidth]{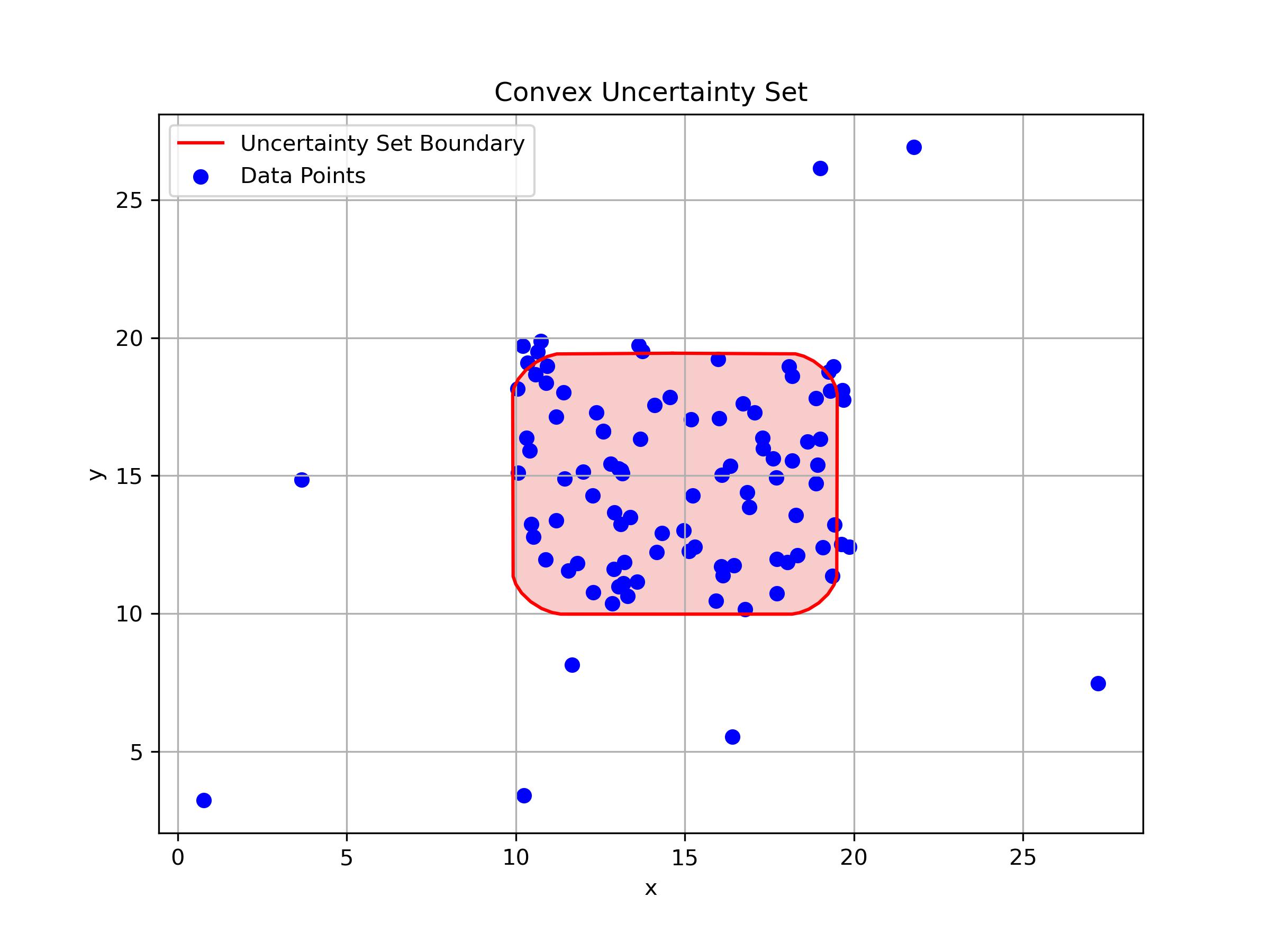}
    
    \caption{\color{black}Examples of convex sets constructed using the data-driven approach of \cite{bertsimas2018data} with 100 samples drawn from the uniform distribution specified in Equation~\eqref{distri1}. In the first row, we set \( \epsilon = 0.1 \) with \( \theta = 10 \) (left) and \( \theta = 30 \) (right). In the second row, \( \epsilon = 0.3 \) with \( \theta = 10 \) (left) and \( \theta = 30 \) (right).
}
        \label{fig:advicedb}
\end{figure}
\begin{enumerate}

\item \textit{Point estimate benchmark.} 
Under this benchmark, we consider a sample set $S_k$ with a size of $n\in \{10, 100\}$. To estimate the central location, we use the point estimate $(\widehat{x}_k, \widehat{y}_k) = \frac{1}{n}(\sum_{i\in S_k} x_i, \sum_{i\in S_k} y_i)$. We then use this point estimate to construct a convex set $\region_k$, which serves as an input for Algorithm \ref{alg:trans}. 
\footnote{\color{black} Our algorithm, when given point estimate advice \((x,y)\), mirrors the LP algorithm from \cite{balseiro2022single} at a high level. It assigns a constant protection level to high-reward agents until encountering \(x\) low-reward agents, similar to the \(Q_1\) protection level in \cite{balseiro2022single}. Both our method and the LP algorithm in \cite{balseiro2022single} incorporate adjustments when the decision-maker observes more than \(x\) low-reward agents. In \cite{balseiro2022single}, the LP algorithm may reduce \(Q_1\) to preserve capacity for additional high-reward agents. The setting of \(Q_1\) is influenced by the worst instances and the values of consistency and competitiveness parameters. In contrast, our algorithm explicitly derives a closed-form solution for the continuous  decreasing protection level function  by dynamically balancing the worst instances with the consistency and competitiveness parameters.}  

\item \textit{The BQ benchmark \citep{ball2009toward}.} This benchmark algorithm is   the PLA with a fixed PL function as proposed by \cite{ball2009toward}. In this benchmark that we refer to as the \emph{BQ}, the PL is set to $\frac{1-r_{\ell}/r_h}{2-r_{\ell}/r_h}m =8$.

\item \textit{PLA with an ML-augmented  fixed PL function \citep{perakis2010robust}.} Under this benchmark, we again have  a PLA with a fixed PL function $p(x) = p$. This benchmark that we refer to as \emph{PR} is introduced by \cite{perakis2010robust} and is specifically designed for box ML advice, assuming that the ML advice is completely accurate.
To determine the value of $p$ in this benchmark, \cite{perakis2010robust} solve a mixed integer programming (MIP) problem to obtain the optimal consistent ratio. However, their algorithm does not account for inaccurate ML advice or accommodate ellipsoid advice. Thus, we will only consider this benchmark for the uniform demand model with box ML advice. 
\end{enumerate}

\subsection{Results}

{\color{black}
Table~\ref{tab:box} reports the Avg.\ CP and Worst CP, as defined in Equation~\eqref{eq:avg_worst_adv} under adversarial arrival sequences, using polyhedron advice (constructed following \cite{bertsimas2018data}) and box ML advice derived from the uniform demand model in Equation~\eqref{distri1}.\footnote{Results under stochastic arrival sequences are shown in Table~\ref{tab:boxS} in Appendix~\ref{append:simulations}. As expected, performance improves under stochastic arrivals, but the trends remain consistent with those under adversarial arrivals.} We evaluate Algorithm~\ref{alg:trans} across various settings: $z \in \{80\%, 90\%\}$ for box advice, and $C \in \{\cstar, 0.9\cstar, 0.8\cstar\}$ for all settings. For comparison, we include benchmarks such as point ML advice, BQ, and PR (also evaluated for $z \in \{80\%, 90\%\}$). 

Regarding the choice of the convex set, \emph{both box advice and polyhedron advice perform best}. For box advice, ignoring outliers, the demand distribution---while unknown---is known to be uniform, allowing for precise calibration of the ML advice shape. However, when such information is not available, polyhedron advice (with $\epsilon = 0.3$) offers a key advantage as it does not require prior knowledge of the distribution and can adapt to any observed pattern while still achieving comparable performance to box advice. For instance, with $n = 10$ and $C = \cstar$, the average competitive ratio for polyhedron advice (with $\epsilon = 0.3$ and $\theta = 10$) is around $0.903$, close to the box advice (with $z = 90\%$) performance of approximately $0.902$.

For polyhedron advice, the number of vertices ($\theta$) has minimal impact on performance---the average and worst-case competitive ratios are nearly identical for $\theta = 10$ and $\theta = 30$. In contrast, the fraction of points included in the advice, influenced by $\epsilon$, significantly affects the outcome. Our results suggest that $\epsilon = 0.3$ is suitable for this setting, but in general, $\epsilon$ should be tuned based on the instance.

{\color{black}A third comparison point is the NN--based box advice. Its performance is consistently
better than point ML advice but slightly weaker than both the model-based box advice and
the polyhedron advice across all settings. This is expected: as
Table~\ref{tab:new_results_neu} shows, NN performance improves substantially with more
training data, whereas the model-based box and polyhedron constructions are already highly
effective even with small sample sizes. Moreover, the NN method of \cite{pearce2018high}
necessarily outputs a \emph{box}-shaped prediction interval, which limits its flexibility in
capturing the joint geometry of the demand vector. In contrast, the model-based and polyhedron
advice used in our main results are constructed directly from the empirical distribution
and can adapt to the underlying shape of the data. Finally, NN-based boxes do not
allow explicit control over the coverage level \(z\) due to its black box nature; the resulting interval may contain
substantially more or fewer than \(z\) percent of the samples, particularly when \(n\) is
small, leading to higher variance in competitive performance.} %{\color{red} please check this out.}

Turning to the effect of the number of samples ($n$), we find that for the geometric advice sets the impact is relatively limited. Increasing $n$ reduces the variance of the advice, but the PLA remains effective even with small $n$. For instance, using polyhedron advice with $\epsilon = 0.3$, $\theta = 10$, and $C = \cstar$, increasing $n$ from $10$ to $100$ improves the average CP only from $0.903$ to $0.910$ and the worst-case CP from $0.633$ to $0.674$. A similar pattern holds for box advice: with $z = 90\%$ and $C = \cstar$, increasing $n$ from $10$ to $100$ increases the average CP from $0.902$ to $0.915$ and the worst-case CP from $0.631$ to $0.676$. We discuss the impact of the number of samples ($n$) in more detail in Section~\ref{sec:number of sample}.

Performance under {point ML advice} and the {BQ benchmark} is lower than that under polyhedron and box advice. Point ML advice dominates the BQ benchmark in terms of average performance but is dominated by the BQ benchmark in terms of worst-case performance. The {PR benchmark} achieves higher average and worst-case performance than point ML advice but remains slightly weaker than polyhedron and box advice in worst-case performance.  
}

\begin{table}[htb]
\caption{Results under ML advice (using the demand model in Equation \eqref{distri1}) and adversarial order. The standard error of all reported values is less than $0.003$. In each column, the top two values of average CP are highlighted in black, and the top two values of worst-case CP are highlighted in red. Additionally, the highest average and worst-case CP across all settings for each value of $n$ are shown in bold.}
\centering
\footnotesize
\begin{tabular}{ll cccc cccc}
\toprule
 & & \multicolumn{3}{c}{\( n = 10 \)} & \multicolumn{3}{c}{\( n = 100 \)} \\
\cmidrule(lr){3-5} \cmidrule(lr){6-8}
Algorithm & Metric & $\cstar$ & $0.9 \cdot \cstar$ & $0.8 \cdot \cstar$ & $\cstar$ & $0.9 \cdot \cstar$ & $0.8 \cdot \cstar$ \\
\midrule
\multirow{2}{*}{Alg. \ref{alg:trans}, Polyhedron ($\epsilon = 0.1, \theta = 10$)} 
    & Avg. CP & 0.791 & 0.780 & 0.775 & 0.794 & 0.782 & 0.777 \\
    & Worst CP & 0.606 & 0.615 & 0.629 & 0.618 & 0.631 & 0.640 \\
\midrule
\multirow{2}{*}{Alg. \ref{alg:trans}, Polyhedron ($\epsilon = 0.1, \theta = 30$)} 
    & Avg. CP & 0.792 & 0.778 & 0.774 & 0.793 & 0.784 & 0.779 \\
    & Worst CP & 0.606 & 0.614 & 0.627 & 0.617 & 0.633 & 0.642 \\
\midrule
\multirow{2}{*}{Alg. \ref{alg:trans}, Polyhedron ($\epsilon = 0.3, \theta = 10$)} 
    & Avg. CP & \cellcolor{bestavg}{\textbf{0.903}} & \cellcolor{bestavg}{0.882} & 0.822 & 0.910 & 0.891 & 0.833 \\
    & Worst CP & \cellcolor{bestworst}{0.633} & 0.683 & 0.643 & \cellcolor{bestworst}{0.674} & \cellcolor{bestworst}{0.712} & 0.659 \\
\midrule
\multirow{2}{*}{Alg. \ref{alg:trans}, Polyhedron ($\epsilon = 0.3, \theta = 30$)} 
    & Avg. CP & \cellcolor{bestavg}{0.902} & \cellcolor{bestavg}{0.882} & \cellcolor{bestavg}{0.823} & 0.907 & 0.892 & 0.833 \\
    & Worst CP & 0.631 & 0.682 & 0.642 & \cellcolor{bestworst}{0.674} & \cellcolor{bestworst}{0.712} & 0.657 \\
\midrule
\multirow{2}{*}{Alg. \ref{alg:trans}, Box Advice ($z = 90\%$)} 
    & Avg. CP & \cellcolor{bestavg}{0.902} & {0.880} & \cellcolor{bestavg}{0.824} & \cellcolor{bestavg}{0.915} & \cellcolor{bestavg}{0.895} & \cellcolor{bestavg}{0.837} \\
    & Worst CP & 0.631 & \cellcolor{bestworst}{\textbf{0.686}} & \cellcolor{bestworst}{0.644} & \cellcolor{bestworst}{0.676} & \cellcolor{bestworst}{\textbf{0.715}} & \cellcolor{bestworst}{0.665} \\
\midrule
\multirow{2}{*}{Alg. \ref{alg:trans}, Box Advice ($z = 80\%$)} 
    & Avg. CP & 0.899 & 0.877 & \cellcolor{bestavg}{0.823} & \cellcolor{bestavg}{\textbf {0.917}} & \cellcolor{bestavg}{0.903} & \cellcolor{bestavg}{0.849} \\
    & Worst CP & \cellcolor{bestworst}{0.632} & \cellcolor{bestworst}{0.685} & \cellcolor{bestworst}{0.649} & 0.649 & 0.709 & \cellcolor{bestworst}{0.679} \\
    \midrule
\multirow{2}{*}{Alg. \ref{alg:trans}, Box Advice (NN)} 
    & Avg. CP & 0.875 & 0.841 & 0.798 & 0.884 & 0.866 & 0.820 \\
    & Worst CP & 0.617 & 0.672 & 0.638 & 0.645 & 0.694 & 0.655 \\
\midrule
\multirow{2}{*}{Point ML Advice} 
    & Avg. CP & 0.851 & 0.783 & 0.725 & 0.909 & 0.825 & 0.747 \\
    & Worst CP & 0.454 & 0.520 & 0.604 & 0.471 & 0.526 & 0.611 \\
\midrule
\multirow{2}{*}{BQ Benchmark} 
    & Avg. CP & 0.762 & - & - & 0.762 & - & - \\
    & Worst CP & 0.600 & - & - & 0.600 & - & - \\
\midrule
\multirow{2}{*}{PR Benchmark ($z = 90\%$)} 
    & Avg. CP & 0.900 & - & - & 0.909 & - & - \\
    & Worst CP & 0.626 & - & - & 0.644 & - & - \\
\midrule
\multirow{2}{*}{PR Benchmark ($z = 80\%$)} 
    & Avg. CP & 0.896 & - & - & 0.910 & - & - \\
    & Worst CP & 0.627 & - & - & 0.631 & - & - \\
\bottomrule
\end{tabular}
\label{tab:box}
\end{table}

\medskip

\begin{table}[htb]
\caption{Results under ML advice (demand model in Equation \eqref{distri2}) and adversarial order. The standard error of all the numbers is less than $0.003$. In each column, the top two values of average   CP are highlighted in black, and the top two values of worst-case CP are highlighted in red. Additionally, the highest average and worst-case CP across all settings for each value of $n$ are shown in bold.}
\centering
\footnotesize
\begin{tabular}{ll cccc cccc}
\hline
 & & \multicolumn{3}{c}{\( n = 10 \)} & \multicolumn{3}{c}{\( n = 100 \)} \\
\cline{3-5} \cline{6-8}
Algorithm & Metric & $\cstar$ & $0.9 \cdot \cstar$ & $0.8 \cdot \cstar$ & $\cstar$ & $0.9 \cdot \cstar$ & $0.8 \cdot \cstar$ \\
\hline
\multirow{2}{*}{Alg. \ref{alg:trans} (Polyhedron, $\epsilon = 0.1, \theta = 10$)} 
    & Avg. CP & 0.829 & 0.815 & 0.805 & 0.836 & 0.819 & 0.811 \\
    & Worst CP & 0.612 & 0.629 &  \cellcolor{bestworst}{0.615} & 0.620 & 0.642 & 0.623 \\
\hline
\multirow{2}{*}{Alg. \ref{alg:trans} (Polyhedron, $\epsilon = 0.1, \theta = 30$)} 
    & Avg. CP & 0.827 & 0.815 & 0.806 & 0.833 & 0.818 & 0.809 \\
    & Worst CP & 0.614 & 0.630 & 0.612 & 0.620 & 0.640 & 0.621 \\
\hline
\multirow{2}{*}{Alg. \ref{alg:trans} (Polyhedron, $\epsilon = 0.3, \theta = 10$)} 
    & Avg. CP & 0.913 & 0.890 & \cellcolor{bestavg}{0.819} & 0.921 & 0.905 & 0.829 \\
    & Worst CP & \cellcolor{bestworst}{0.638} & 0.650 & 0.604 & 0.649 &  \cellcolor{bestworst}{0.682} & 0.620 \\
\hline
\multirow{2}{*}{Alg. \ref{alg:trans} (Polyhedron, $\epsilon = 0.3, \theta = 30$)} 
    & Avg. CP & 0.911 & \cellcolor{bestavg}{0.891} & 0.818 & 0.922 & \cellcolor{bestavg}{0.906} & 0.828 \\
    & Worst CP & 0.636 & 0.647 & 0.606 & 0.650 & 0.681 & 0.621 \\
\hline
\multirow{2}{*}{Alg. \ref{alg:trans} (Ellipsoid Advice, $z = 90\%$)} 
    & Avg. CP & \cellcolor{bestavg}{\textbf{0.921}} & \cellcolor{bestavg}{0.894} & \cellcolor{bestavg}{0.830} & \cellcolor{bestavg}{\textbf{0.937}} & \cellcolor{bestavg}{0.918} & \cellcolor{bestavg}{0.853} \\
    & Worst CP & \cellcolor{bestworst}{0.646} &  \cellcolor{bestworst}{\textbf{0.661}} &  \cellcolor{bestworst}{0.613} &  \cellcolor{bestworst}{0.666} &  \cellcolor{bestworst}{\textbf{0.709}} &  \cellcolor{bestworst}{0.655} \\
\hline
\multirow{2}{*}{Alg. \ref{alg:trans} (Ellipsoid Advice, $z = 80\%$)} 
    & Avg. CP & \cellcolor{bestavg}{0.914} & 0.877 & 0.810 & \cellcolor{bestavg}{0.926} & 0.891 & \cellcolor{bestavg}{0.835} \\
    & Worst CP & 0.633 &  \cellcolor{bestworst}{0.653} & 0.605 &  \cellcolor{bestworst}{0.652} & 0.681 &  \cellcolor{bestworst}{0.649} \\
        \midrule
\multirow{2}{*}{Alg. \ref{alg:trans}, Box Advice (NN)} 
    & Avg. CP & 0.872 & 0.838 & 0.789 & 0.878 & 0.843 & 0.797 \\
    & Worst CP & 0.608 &  0.633 & 0.597 & 0.631 & 0.649 & 0.638 \\
\hline
\multirow{2}{*}{Point ML Advice} 
    & Avg. CP & 0.906 & 0.822 & 0.733 & 0.914 & 0.826 & 0.740 \\
    & Worst CP & 0.539 & 0.566 & 0.599 & 0.549 & 0.582 & 0.604 \\
\hline
\multirow{2}{*}{BQ Benchmark} 
    & Avg. CP & 0.725 & - & - & 0.725 & - & - \\
    & Worst CP & 0.600 & - & - & 0.600 & - & - \\
\hline
\end{tabular}
\label{tab:ell}
\end{table}

%{\color{black}
%Table~\ref{tab:ell} presents results using ellipsoid advice constructed from the demand model in Equation~\eqref{distri2}, under adversarial arrival sequences. (For results under stochastic arrivals, see Table~\ref{tab:ellS} in Appendix~\ref{append:simulations}.) The PR benchmark is not included here, as it is specifically designed for box advice.

%As in Table~\ref{tab:box}, we find that {Algorithm~\ref{alg:trans}}, when used with either ellipsoid or polyhedron advice, achieves the best average and worst-case  performance across all settings. Ellipsoid advice performs particularly well, achieving an average CP of 0.937 and worst-case CP of 0.709 when \( n = 100 \) and \( C = \cstar \). {Polyhedron advice} remains competitive, especially with \( \epsilon = 0.3 \) and \( \theta = 30 \), where it achieves an average CP of 0.922 and worst-case CP of 0.650 for \( n = 100 \) and and \( C = \cstar \).
%}

As in Table~\ref{tab:box}, we find that Algorithm~\ref{alg:trans}, when used with either ellipsoid or polyhedron advice, achieves the best average and worst-case performance across all settings. Ellipsoid advice performs particularly well, achieving an average CP of $0.937$ and worst-case CP of $0.709$ when $n = 100$ and $C = \cstar$. Polyhedron advice remains competitive, especially with $\epsilon = 0.3$ and $\theta = 30$, where it achieves an average CP of $0.922$ and worst-case CP of $0.650$ for $n = 100$ and $C = \cstar$. {\color{black}The NN--based box advice again performs better than point ML advice but remains below both ellipsoid and polyhedron advice.}

\begin{table}[htb] \footnotesize
\caption{Results for varying sample sizes $n$ with corresponding Avg. CP and Worst CP by Algorithm \ref{alg:trans} with input $C=\cstar$ (Polyhedron, $\epsilon=0.3$, $\theta=10$) under adversarial order for both demand models in Equations \eqref{distri1} and \eqref{distri2}. The standard error of all the numbers is less than $0.003$. }
\resizebox{\textwidth}{!}{%
\begin{tabular}{c c||c|c|c|c|c|c|c|c|c|c|}
    \cline{3-12}
    & & \multicolumn{10}{c|}{\# of samples $n$} \\
    \cline{3-12}
    & & 10 & 20 & 30 & 40 & 50 & 60 & 70 & 80 & 90 & 100 \\
    \hline
    \multicolumn{1}{|c|}{\multirow{2}{*}{Demand Model \eqref{distri1}}} 
    & Avg. CP   & 0.903 & 0.906 & 0.908 & 0.908 & 0.908 & 0.909 & 0.909 & 0.909 & 0.909 & 0.910 \\
    \multicolumn{1}{|c|}{} 
    & Worst CP & 0.633 & 0.641 & 0.648 & 0.654 & 0.659 & 0.663 & 0.666 & 0.669 & 0.672 & 0.674 \\
    \hline
    \multicolumn{1}{|c|}{\multirow{2}{*}{Demand Model \eqref{distri2}}} 
    & Avg. CP   & 0.913 & 0.915 & 0.916 & 0.917 & 0.918 & 0.919 & 0.919 & 0.920 & 0.920 & 0.921 \\
    \multicolumn{1}{|c|}{} 
    & Worst CP & 0.638 & 0.640 & 0.642 & 0.644 & 0.645 & 0.646 & 0.647 & 0.648 & 0.648 & 0.649 \\
    \hline
\end{tabular}
}
\label{tab:new_results}
\end{table}
\subsection{Impact of the Number of Samples} \label{sec:number of sample}

In this subsection, we examine how the number of training samples \(n\) affects performance under different ML-advice constructions.

\paragraph{Polyhedron advice.}
Table~\ref{tab:new_results} shows the effect of varying \(n\) when the ML advice is a polyhedron (\(\epsilon = 0.3, \theta = 10\)). The \emph{average CP remains  stable} as \(n\) increases, while the \emph{worst-case CP improves modestly}. This behavior stems from Algorithm~\ref{alg:trans}: its protection-level selection balances the upper and lower bounds of the ML advice, so variance in one bound is partially offset by variance in the other. Consequently, polyhedron advice delivers robust performance even with limited data, reducing the need for large training sets.

{\color{black}\paragraph{NN-based advice.}
We also evaluate the NN-based advice, which produces a box-shaped prediction interval. Because the NN model contains many parameters, small training sets (\(n \le 100\)) lead to underfitting and highly unstable prediction intervals. To capture this effect, we vary \(n\) over a wider range, up to \(10{,}000\). As shown in Table~\ref{tab:new_results_neu}, both average and worst-case CP improve steadily as \(n\) increases, but at a \emph{much slower rate} than with polyhedron-based advice. This is because classical convex-set (polyhedron-based) constructions enjoy statistical convergence on the order of \(O(1/\sqrt{n})\), whereas no comparable guarantee holds for NNs. This gap is reflected empirically:  Figure~\ref{fig:adviceneural}, presented in Appendix \ref{appendix:NN},  shows that for small \(n\), the NN-generated prediction intervals vary substantially, whereas for \(n \ge 2000\), the intervals stabilize and converge to a consistent range.
Despite this slower convergence, Algorithm~\ref{alg:trans} still attains competitive performance even when the NN advice is trained with as few as \(n = 10\) samples.\footnote{We expect NN-based advice to become more useful when the relationship between high-dimensional features and the target variable (i.e., demand in our setting) is complex or highly nonlinear, whereas in lower-dimensional settings classical polyhedron-based advice remains preferable and effective even with relatively few samples.
}

}

\begin{table}[htb]
\centering
\caption{Results for varying sample sizes $n$ with corresponding Avg.\ CP and Worst CP for Algorithm~\ref{alg:trans} with input $C=\cstar$ (Neural Network Box Advice \cite{pearce2018high}) under adversarial arrival sequences, for the demand models in Equations~\eqref{distri1} and~\eqref{distri2}.}
\label{tab:new_results_neu}

{\footnotesize
\begin{tabular}{c c||c|c|c|c|c|c|c|}
    \cline{3-9}
    & & \multicolumn{7}{c|}{Number of samples $n$} \\
    \cline{3-9}
    & & 10 & 100 & 1000 & 2000 & 3000 & 5000 & 10000 \\
    \hline

    \multicolumn{1}{|c|}{\multirow{2}{*}{Demand Model \eqref{distri1}}} 
    & Avg.\ CP   & 0.875  & 0.884  & 0.893 & 0.906 & 0.907 & 0.909 & 0.910 \\
    \multicolumn{1}{|c|}{} 
    & Worst CP & 0.617 & 0.645 & 0.658 & 0.667 & 0.673 & 0.681 & 0.683 \\
    \hline

    \multicolumn{1}{|c|}{\multirow{2}{*}{Demand Model \eqref{distri2}}} 
    & Avg.\ CP   & 0.872 & 0.878 & 0.881 & 0.892 & 0.898 & 0.902 & 0.904 \\
    \multicolumn{1}{|c|}{} 
    & Worst CP & 0.608 & 0.631 & 0.640 & 0.644 & 0.648 & 0.654 & 0.656 \\
    \hline
\end{tabular}
}
\end{table}

{\color{black}

{\color{black}\subsection{Guidelines for Practice}\label{subsec:guidelines}

Based on our numerical results, we offer the following practical recommendations:

\begin{itemize}[leftmargin=*]
   \item \textbf{Uncertainty Set Construction:}  
    When the underlying distribution resembles known parametric families, use a box (for uniform-like data) or an ellipsoid (for Gaussian-like data) covering \( z \in \{80\%, 90\%\} \) of the training samples.  
    If the distribution is unknown or irregular, polyhedron-based advice using the data-driven method of \cite{bertsimas2018data} is a strong default choice; tune \( \epsilon \) (and, to a lesser extent, \( \theta \)) to optimize performance.  
  {\color{black} For users who prefer a fully nonparametric approach, the NN method of \cite{pearce2018high} can also generate prediction intervals. However, because it is a black-box model, the resulting interval does not guarantee exact coverage \(z\) (the proportion of training samples that should lie inside the constructed interval), and its stability is more sensitive to sample size.}

    \item \textbf{Choosing \( C \):} 
    To \emph{maximize average performance}, use \( C = \cstar \). For better \emph{worst-case performance}, a slightly smaller value like \( C = 0.9 \cdot \cstar \) offers a good balance.

   \item \textbf{Sample Size \(n\):}  
    Increasing the number of samples reduces the variance of the ML advice, but the gains diminish quickly. Even small sample sizes (e.g., \( n = 10 \)) are sufficient for strong performance with box, ellipsoid, and polyhedron advice, making these methods practical in data-scarce settings.  
   {\color{black} In contrast, NN-based advice requires substantially larger sample sizes (typically \( n \ge 1000 \)) to produce stable and reliable prediction intervals.}
\end{itemize}
}

{\color{black}
\section{Extension to Multiple Types} \label{sec:multitype}

So far, we have shown that when the number of types is two, an optimal-Pareto algorithm is an adaptive PLA. In this section, we extend adaptive PLAs to settings with more than two types using a mixed integer programming (MIP) formulation when the ML advice is a polyhedron. %Our MIP shows how to design a Pareto-optimal PLA algorithm that maximizes the robust ratio, subject to the consistent ratio being greater than or equal to $C$.

Suppose there are \( n \) types of requests with per-unit rewards \( r_1 > r_2 > \ldots > r_n \). The adaptive PLA in this case can be characterized by \( n-1 \) functions, \( p_2(\cdot), \ldots, p_n(\cdot) \), where each function \( p_i \) maps the vector of requests of types \( i, i+1, \ldots, n \), denoted by \( \mathbf{x}_i = (x_{i}, x_{i+1}, \ldots, x_{n}) \), to the protection level for higher-reward types \( i-1, i-2, \ldots, 1 \). Here, \( x_{i} \) is the number of requests of type \( i \) seen so far.   

Consider an instance \( I = \{r_t\}_{t=1}^T \) where \( r_t = r_j \) for some \( j \in [n] \) at time \( t \in [T] \). Let \( a_\tau \in [0, s_\tau] \) represent the fraction of requests accepted by the PLA at time \( \tau \). Then, the fraction \( a_t \) of \( r_t \) accepted by a nested PLA with PL functions \( p_i(\cdot) \) for \( i \in [n] \), is given by:
\begin{align} \label{eq:allocation_nestedPLA}
a_t = \max\Big\{ z \in [0, s_t] \, \big | \, z + \sum_{\tau=1}^{t-1} a_{\tau} \mathbf{1}\{r_{\tau} \leq r_i\} \leq m - p_i(\mathbf{x}_{i,t}),\ \forall i \leq j \Big\}.
\end{align}
Here, \( \mathbf{x}_{i,t} = [x_{i, t}, x_{i+1, t}, \ldots, x_{n, t}] \), where \( x_{i, t} \) is the total number of requests of type \( i \) in the first \( t \) rounds (including round \( t \) itself). Furthermore, by convention, \( p_1(\mathbf{x}_1) = 0 \) for any \( \mathbf{x}_1 = \mathbf{x} \).  

We consider a nested policy,  where the protection levels for higher-priority (higher-reward) types are greater than or equal to the protection levels for lower-priority types. Formally, a policy is nested if:
$
p_n (x_n)\geq p_{n-1}(\mathbf x_{n-1})\geq \ldots \geq p_2(\mathbf x_2)\,.
$
Here, \( p_n (x_n) \) represents the number of resources protected for type \( 1, 2,\ldots, n-1 \) when the number of type \( n \) requests is \( x_n \). Similarly, \( p_{n-1}(\mathbf x_{n-1}) \) represents the number of resources protected for type \( 1, 2,\ldots, n-2 \) when the number of type \( n \) and \( n-1 \) requests is \( \mathbf x_{n-1} =(x_{n-1}, x_{n}) \), and so on. To ensure the validity of the policy, we require that:
\begin{align}\label{eq:new_valid}
p_i(\mathbf{x}'_i) - p_i(\mathbf{x}_i) \leq x_i' - x_i + p_{i+1}(\mathbf{x}'_{i+1}) - p_{i+1}(\mathbf{x}_{i+1}),
\end{align}
where \( \mathbf{x}_i = (x_i, x_{i+1}, \ldots, x_n) \) and \( \mathbf{x}'_i = (x_i', x_{i+1}', \ldots, x_n') \), with \( x_i' \geq x_i \). This validity constraint ensures that the protection level function decreases appropriately. For example, if \( \mathbf{x}_{i+1} = \mathbf x_{i+1}' \) and \( x'_{i} > x_{i} \), we must have: $
\frac{p_i(\mathbf x_i')-p_i(\mathbf x_i)}{x_i'-x_i}  \geq -1.
$

Under any adaptive nested PLA, the compatible ratio is minimized when the arrival sequence is ordered — that is, when requests with lower rewards arrive before requests with higher rewards. Let \( \CP(p_2(\mathbf{x}_2), \ldots, p_n(x_n); \mathbf{x}) \) denote the compatible ratio of a nested PLA with PL functions \( p_2(\mathbf{x}_2), \ldots, p_n(x_n) \) under this ordered sequence, where \( x_n \) requests of type \( n \) arrive first, followed by \( x_{n-1} \) requests of type \( n-1 \), and so on. 

Consider a convex polytope as ML advice with a set of vertices \( \mathcal{V} \). Define \( \mathcal{V}' \) as the set of vertices associated with the polytope \( [0, m]^n \). As argued in Section \ref{sec:mlconsistent}, the consistent ratio is minimized at the vertices in \( \mathcal{V} \), while the robust ratio is minimized at the vertices in \( \mathcal{V}' \). Therefore, to maximize the robust ratio while ensuring that the consistent ratio is greater than or equal to \( C \), we can set up the following optimization problem:
\begin{align} \notag
     \max_{R\in[0,1], (p_i(\cdot)_{i\in {2, \ldots, n}})} & \ R   \\  \label{eq:cpc1n}
s.t.   \ & \ \CP(p_2(\mathbf x_2),\ldots,p_n(x_n); \mathbf{x}) \geq C, \qquad \ \mathbf{x} \in \mathcal{V},  \\  \label{eq:cpc2n}
\ & \ \CP(p_2(\mathbf x_2),\ldots,p_n(x_n); \mathbf{x}) \geq R, \qquad \ \mathbf{x} \in \mathcal{V}',  \\  \label{eq:cpc3n}
\ & \ p_2(\mathbf x_2),\ldots,p_n(x_n) \text{ are valid per Eq. \eqref{eq:new_valid}} \qquad \,\qquad \mathbf{x} \in \mathcal{V}'\cup \mathcal{V}  \\
\ &p_n (x_n)\ge p_{n-1}(\mathbf x_{n-1})\ge \ldots \ge p_2(\mathbf x_2) \qquad \mathbf{x} \in \mathcal{V}'\cup \mathcal{V}
\\ \tag{\text C-Pareto-n-Types} \label{prob:rbmax_n}
\end{align} %{\color{red}@Jerry: Here, when we solve the optimization problem, we only obtain the PL functions at the vertices, not for every possible value of 
%$x$. We should clarify how we handle extrapolation and define the PL functions for points that are not vertices. Otherwise, we are not truly returning the full PL function but rather providing its values at specific points.}
In this optimization problem is not convex due to the structure of the compatible ratio \( \CP \). However, we can convert it into a mixed integer linear program (MILP) using the following result:

\begin{proposition} \label{prop:variable}
    Problem \eqref{prob:rbmax_n} can be converted to an MILP with \( 3n|\mathcal V \cup \mathcal V'| + 1 \) continuous variables and \( 2n|\mathcal V \cup \mathcal V'| \) binary integer variables. 
\end{proposition}

The proof is provided in Appendix \ref{append:ntype}. After solving the MILP problem \eqref{prob:rbmax_n}, we obtain the optimal values of $p_2(\mathbf{x}_2), \ldots, p_n(\mathbf{x}_n)$ for $\mathbf{x} \in \mathcal{V}' \cup \mathcal{V}$. To the  PL functions $p_i(\mathbf{x}_i)$ to the entire space $\mathbb{N}^{n-i+1}$ for each $i \in {2,3,\ldots,n}$, we apply the interpolation heuristic in Algorithm \ref{alg:envelope} (see Appendix \ref{sec:heuristic}), where $\mathbb{N}$ denotes the set of natural numbers. At a high level, the heuristic ensures that each protection level function remains valid while maintaining the nested property.

}

{\color{black}

\subsection{Numerical Studies for the Multiple-Type Case } Here, we  present numerical experiments with three types of requests. Let \( x_1 \), \( x_2 \), and \( x_3 \) denote the number of low-reward, medium-reward, and high-reward arrivals, respectively. We assume the following demand model, consistent with Equation~\eqref{distri1}:
\begin{align} \label{distri3D}
(x_1,x_2,x_3) \sim \left\{ 
\begin{array}{ll}
         \text{Uniform}(10,20) & \text{with probability } 0.9, \\
         \text{Uniform}(0,30) & \text{with probability } 0.1. \\
\end{array} 
\right.
\end{align}
We set \( m = 20 \), with per-unit rewards \( r_1 = 1 \), \( r_2 = 1/2 \), and \( r_3 = 1/3 \), and focus on worst-case arrival orders under this demand model.

We first consider the \emph{box ML advice} approach. Following the methodology from the previous section, we construct the ML advice as the smallest-volume box containing at least \( z = 90\% \) or \( z = 80\% \) of the observed data. Examples of the resulting box ML advice are shown in Figure~\ref{fig:advice3D} in Appendix~\ref{multi-type-ML-region}.

We also implement the data-driven polyhedron construction proposed by \cite{bertsimas2018data}. In three dimensions, the construction depends on one statistical parameter and two geometric parameters. As in the two-dimensional case, $\epsilon$ controls the Kolmogorov--Smirnov (KS) divergence. However, while a single angular parameter suffices to discretize directions in 2D, in 3D two angular parameters $(\theta, \phi)$ are required to index directions on the unit sphere. The values of \(\theta\) and \(\phi\) together determine the number of vertices used in the polyhedral approximation. More generally, in an \(n\)-dimensional setting, one statistical parameter \(\epsilon\) and \(n-1\) angular parameters are needed. Examples of the resulting polyhedron ML advice are shown in Figure~\ref{fig:advice3Dpoly} in Appendix~\ref{multi-type-ML-region}.

%Again, we consider the input consistent ratio $C\in \{\cstar, 0.9\cdot \cstar, 0.8\cdot \cstar\}$, where $\cstar$ can be solved from the following optimization problem:

%\begin{align*} \notag
 %    \max_{C\in[0,1], (p_i(\cdot)_{i\in {2, \ldots, n}})} & \ C   \\  
%s.t.   \ & \ \CP(p_2(\mathbf x_2),\ldots,p_n(x_n); \mathbf{x}) \geq C, \qquad \ \mathbf{x} \in \mathcal{V},  \\  
%\ & \ p_2(\mathbf x_2),\ldots,p_n(x_n) \text{ are valid} \qquad \,  \label{prob:cmax_n}
%\end{align*}
\begin{table}[htb]
\caption{Results under box ML advice (using the demand model with three types) and adversarial order. In each column, the top two values of average CP are highlighted in black, and the top two values of worst-case CP are highlighted in red. Additionally, the highest average and worst-case CP across all settings for each value of $n$ are shown in bold.}
\centering
\footnotesize
\begin{tabular}{ll cccc cccc}
\toprule
 & & \multicolumn{3}{c}{\( n = 10 \)} & \multicolumn{3}{c}{\( n = 25 \)} \\
\cmidrule(lr){3-5} \cmidrule(lr){6-8}
Algorithm & Metric & $\cstar$ & $0.9 \cdot \cstar$ & $0.8 \cdot \cstar$ & $\cstar$ & $0.9 \cdot \cstar$ & $0.8 \cdot \cstar$ \\
\midrule
\multirow{2}{*}{Our Alg.  (Box Advice, $z = 90\%$)} 
    & Avg. CP & \cellcolor{bestavg}{0.816} & {0.793} & 0.767 & {0.829} & \cellcolor{bestavg}{0.811} & \cellcolor{bestavg}{0.787} \\
    & Worst CP & 0.332 & 0.396 & 0.458 & 0.343 & \cellcolor{bestworst}{0.400} & \cellcolor{bestworst}{0.451} \\
\midrule
\multirow{2}{*}{Our Alg.  (Box Advice, $z = 80\%$)} 
    & Avg. CP & 0.809 & \cellcolor{bestavg}{0.797} & \cellcolor{bestavg}{0.776} & 0.830 & 0.805 & \cellcolor{bestavg}{0.798} \\
    & Worst CP & 0.317 & 0.395 & {0.444} & 0.316 & 0.393 & 0.424 \\
\midrule
\multirow{2}{*}{Our Alg. (Polyhedron, $\epsilon = 0.9$, $\theta=\phi=6$)} 
    & Avg. CP & \cellcolor{bestavg}{\textbf{0.879}} & \cellcolor{bestavg}{0.806} & \cellcolor{bestavg}{0.785} & \cellcolor{bestavg}{\textbf{0.862}} & \cellcolor{bestavg}{0.826} & 0.768 \\
    & Worst CP & 0.341 & \cellcolor{bestworst}{0.420} & \cellcolor{bestworst}{0.491} & 0.350 & 0.369 & \cellcolor{bestworst}{0.464} \\
\midrule
\multirow{2}{*}{Our Alg.  (Polyhedron, $\epsilon = 0.1$, $\theta=\phi=6$)} 
    & Avg. CP & 0.730 & 0.703 & 0.645 & 0.758 & 0.711 & 0.700 \\
    & Worst CP & \cellcolor{bestworst}{0.446} & \cellcolor{bestworst}{0.465} & \cellcolor{bestworst}{0.483} & \cellcolor{bestworst}{0.471} & \cellcolor{bestworst}{0.480} & 0.445 \\
\midrule
\multirow{2}{*}{Point ML Advice} 
    & Avg. CP & 0.821 & 0.765 & 0.724 & \cellcolor{bestavg}{0.835} & 0.785 & 0.732 \\
    & Worst CP & 0.250 & 0.361 & 0.388 & 0.250 & 0.385 & 0.403 \\
\midrule
\multirow{2}{*}{BQ Benchmark} 
    & Avg. CP & 0.644 & - & - & 0.644 & - & - \\
    & Worst CP & \cellcolor{bestworst}{\textbf{0.545}} & - & - & \cellcolor{bestworst}{\textbf{0.545}} & - & - \\
\bottomrule
\end{tabular}
\label{tab:box3D}
\end{table}

%\subsection{Results}

Table~\ref{tab:box3D} reports the Avg. CP and Worst CP, as defined in Equation~\eqref{eq:avg_worst_adv} under adversarial arrival sequences, using polyhedron advice (constructed following \cite{bertsimas2018data}) and box ML advice derived from the uniform demand model in Equation~\eqref{distri3D}. %We evaluate Algorithm~\ref{alg:trans} across various settings: $z \in \{80\%, 90\%\}$ for box advice, and $C \in \{\cstar, 0.9\cstar, 0.8\cstar\}$ for all settings.
For comparison, we include the BQ nested protection level and point ML advice as the benchmark.

From the results in Table~\ref{tab:box3D}, we observe that when considering the average CP, the best performance is achieved using either the box advice or the polyhedron advice with \(\epsilon = 0.9\), outperforming the point estimate advice.  
However, when focusing on the worst-case CP, it is preferable to use the polyhedron advice with \(\epsilon = 0.1\), which corresponds to a much larger polyhedron, or to simply adopt the BQ method, which disregards the sample data entirely. These findings highlight a trade-off between average and worst-case CP that can be managed by tuning the size of the ML advice.

}
\section{Concluding Remarks and Future Directions } \label{sec:extension}

In this work, we have proposed a novel online resource allocation model that addresses the challenge of integrating machine learned predictions into resource allocation decisions in a robust and efficient manner. Efficient online resource allocation is essential for entities such as hospitals, governments, and various industries that often face a trade-off between meeting low-reward and high-reward demands without precise knowledge of future demand. However, factors such as environmental fluctuations, data biases, and insufficient data points can impede consistent accuracy, making it challenging to allocate resources efficiently.

The proposed model is based on the concept of convex uncertainty sets, which use historical data to construct sets of plausible demand scenarios, allowing for flexibility and robustness in decision-making. We examine the benefits of utilizing ML advice in online resource allocation problems by proposing $C$-Pareto PLAs that balance the robust and the consistent ratios.  Compared to traditional fixed protection level algorithms, we find that adaptive PLAs often manage to obtain  high  consistent and robust ratios, highlighting the significance and advantages of adjusting the protection level.

%This work highlights the substantial advantages of employing uncertainty set ML advice, as opposed to point estimate advice, in sequential decision-making under uncertainty.
{\color{black}This work highlights the substantial advantages of employing uncertainty set ML advice, as opposed to point estimate advice, in sequential decision-making under uncertainty. In our numerical studies, we used only a few samples to construct the uncertainty sets and demonstrated that even with limited data, our approach achieves strong performance. When the number of samples is large, a natural extension would be to consider advice that takes the form of empirical distributions rather than uncertainty sets. If the underlying distribution remains stable over time, this approach could be highly effective, as is typical in stochastic settings. However, in practical environments where distributional shifts or outliers are common, relying on ML-based uncertainty regions—as we propose—provides a much more robust foundation for decision-making. This is because, while the underlying distribution may change, the ML uncertainty region often remains stable. As our results show, the performance of our approach is not highly sensitive to small changes in the ML region.}  %{\color{red}Please check this out.}}

\bibliographystyle{plainnat}
\bibliography{reference}

@article{ball2009toward,
  title={Toward robust revenue management: Competitive analysis of online booking},
  author={Ball, Michael O and Queyranne, Maurice},
  journal={Operations Research},
  volume={57},
  number={4},
  pages={950--963},
  year={2009},
  publisher={INFORMS}
}

@article{jasin2013analysis,
  title={Analysis of deterministic LP-based booking limit and bid price controls for revenue management},
  author={Jasin, Stefanus and Kumar, Sunil},
  journal={Operations Research},
  volume={61},
  number={6},
  pages={1312--1320},
  year={2013},
  publisher={INFORMS}
}

@article{chen2024improved,
  title={An improved analysis of LP-based control for revenue management},
  author={Chen, Guanting and Li, Xiaocheng and Ye, Yinyu},
  journal={Operations Research},
  volume={72},
  number={3},
  pages={1124--1138},
  year={2024},
  publisher={INFORMS}
}

@inproceedings{pearce2018high,
  title={High-quality prediction intervals for deep learning: A distribution-free, ensembled approach},
  author={Pearce, Tim and Brintrup, Alexandra and Zaki, Mohamed and Neely, Andy},
  booktitle={International conference on machine learning},
  pages={4075--4084},
  year={2018},
  organization={PMLR}
}

@article{wang2022nonasymptotic,
  title={A nonasymptotic analysis for re-solving heuristic in online matching},
  author={Wang, Hao and Yan, Zhenzhen and Bei, Xiaohui},
  journal={Production and Operations Management},
  volume={31},
  number={8},
  pages={3096--3124},
  year={2022},
  publisher={SAGE Publications Sage CA: Los Angeles, CA}
}

@article{bumpensanti2020re,
  title={A re-solving heuristic with uniformly bounded loss for network revenue management},
  author={Bumpensanti, Pornpawee and Wang, He},
  journal={Management Science},
  volume={66},
  number={7},
  pages={2993--3009},
  year={2020},
  publisher={INFORMS}
}

@article{jin2022online,
  title={Online Bipartite Matching with Advice: Tight Robustness-Consistency Tradeoffs for the Two-Stage Model},
  author={Jin, Billy and Ma, Will},
  journal={arXiv preprint arXiv:2206.11397},
  year={2022}
}

@article{ma2021policies,
  title={On policies for single-leg revenue management with limited demand information},
  author={Ma, Will and Simchi-Levi, David and Teo, Chung-Piaw},
  journal={Operations Research},
  volume={69},
  number={1},
  pages={207--226},
  year={2021},
  publisher={INFORMS}
}

@article{golrezaei2021upfront,
  title={Upfront Commitment in Online Resource Allocation with Patient Customers},
  author={Golrezaei, Negin and Yao, Evan},
  journal={arXiv preprint arXiv:2108.03517},
  year={2021}
}

@article{jasin2015performance,
  title={Performance of an lp-based control for revenue management with unknown demand parameters},
  author={Jasin, Stefanus},
  journal={Operations Research},
  volume={63},
  number={4},
  pages={909--915},
  year={2015},
  publisher={INFORMS}
}

@article{mahdian2012online,
  title={Online optimization with uncertain information},
  author={Mahdian, Mohammad and Nazerzadeh, Hamid and Saberi, Amin},
  journal={ACM Transactions on Algorithms (TALG)},
  volume={8},
  number={1},
  pages={1--29},
  year={2012},
  publisher={ACM New York, NY, USA}
}

@inproceedings{esfandiari2015online,
  title={Online allocation with traffic spikes: Mixing adversarial and stochastic models},
  author={Esfandiari, Hossein and Korula, Nitish and Mirrokni, Vahab},
  booktitle={Proceedings of the Sixteenth ACM Conference on Economics and Computation},
  pages={169--186},
  year={2015}
}

@article{hwang2021online,
  title={Online resource allocation under partially predictable demand},
  author={Hwang, Dawsen and Jaillet, Patrick and Manshadi, Vahideh},
  journal={Operations Research},
  volume={69},
  number={3},
  pages={895--915},
  year={2021},
  publisher={INFORMS}
}

@article{antoniadis2020secretary,
  title={Secretary and online matching problems with machine learned advice},
  author={Antoniadis, Antonios and Gouleakis, Themis and Kleer, Pieter and Kolev, Pavel},
  journal={arXiv preprint arXiv:2006.01026},
  year={2020}
}

@inproceedings{lattanzi2020online,
  title={Online scheduling via learned weights},
  author={Lattanzi, Silvio and Lavastida, Thomas and Moseley, Benjamin and Vassilvitskii, Sergei},
  booktitle={Proceedings of the Fourteenth Annual ACM-SIAM Symposium on Discrete Algorithms},
  pages={1859--1877},
  year={2020},
  organization={SIAM}
}

@inproceedings{lykouris2018competitive,
  title={Competitive caching with machine learned advice},
  author={Lykouris, Thodoris and Vassilvtiskii, Sergei},
  booktitle={International Conference on Machine Learning},
  pages={3296--3305},
  year={2018},
  organization={PMLR}
}

@inproceedings{rohatgi2020near,
  title={Near-optimal bounds for online caching with machine learned advice},
  author={Rohatgi, Dhruv},
  booktitle={Proceedings of the Fourteenth Annual ACM-SIAM Symposium on Discrete Algorithms},
  pages={1834--1845},
  year={2020},
  organization={SIAM}
}

@inproceedings{devanur2009adwords,
  title={The adwords problem: online keyword matching with budgeted bidders under random permutations},
  author={Devanur, Nikhil R and Hayes, Thomas P},
  booktitle={Proceedings of the 10th ACM conference on Electronic commerce},
  pages={71--78},
  year={2009}
}

@inproceedings{feldman2010online,
  title={Online stochastic packing applied to display ad allocation},
  author={Feldman, Jon and Henzinger, Monika and Korula, Nitish and Mirrokni, Vahab S and Stein, Cliff},
  booktitle={European Symposium on Algorithms},
  pages={182--194},
  year={2010},
  organization={Springer}
}

@article{golrezaei2014real,
  title={Real-time optimization of personalized assortments},
  author={Golrezaei, Negin and Nazerzadeh, Hamid and Rusmevichientong, Paat},
  journal={Management Science},
  volume={60},
  number={6},
  pages={1532--1551},
  year={2014},
  publisher={INFORMS}
}

@article{mehta2007adwords,
  title={Adwords and generalized online matching},
  author={Mehta, Aranyak and Saberi, Amin and Vazirani, Umesh and Vazirani, Vijay},
  journal={Journal of the ACM (JACM)},
  volume={54},
  number={5},
  pages={22--es},
  year={2007},
  publisher={ACM New York, NY, USA}
}

@article{agrawal2014dynamic,
  title={A dynamic near-optimal algorithm for online linear programming},
  author={Agrawal, Shipra and Wang, Zizhuo and Ye, Yinyu},
  journal={Operations Research},
  volume={62},
  number={4},
  pages={876--890},
  year={2014},
  publisher={INFORMS}
}

@inproceedings{buchbinder2007online,
  title={Online primal-dual algorithms for maximizing ad-auctions revenue},
  author={Buchbinder, Niv and Jain, Kamal and Naor, Joseph Seffi},
  booktitle={European Symposium on Algorithms},
  pages={253--264},
  year={2007},
  organization={Springer}
}

@article{balseiro2022single,
  title={Single-Leg Revenue Management with Advice},
  author={Balseiro, Santiago and Kroer, Christian and Kumar, Rachitesh},
  journal={arXiv preprint arXiv:2202.10939},
  year={2022}
}

@article{littlewood2005special,
  title={Special issue papers: Forecasting and control of passenger bookings},
  author={Littlewood, Ken},
  journal={Journal of Revenue and Pricing Management},
  volume={4},
  number={2},
  pages={111--123},
  year={2005},
  publisher={Springer}
}

@article{brumelle1993airline,
  title={Airline seat allocation with multiple nested fare classes},
  author={Brumelle, Shelby L and McGill, Jeffrey I},
  journal={Operations research},
  volume={41},
  number={1},
  pages={127--137},
  year={1993},
  publisher={INFORMS}
}

@article{golrezaei2022online,
  title={Online Resource Allocation with Samples},
  author={Golrezaei, Negin and Jaillet, Patrick and Zhou, Zijie},
  journal={Available at SSRN},
  year={2022}
}

@article{bertsimas2018data,
  title={Data-driven robust optimization},
  author={Bertsimas, Dimitris and Gupta, Vishal and Kallus, Nathan},
  journal={Mathematical Programming},
  volume={167},
  number={2},
  pages={235--292},
  year={2018},
  publisher={Springer}
}

@article{bertsimas2009constructing,
  title={Constructing uncertainty sets for robust linear optimization},
  author={Bertsimas, Dimitris and Brown, David B},
  journal={Operations research},
  volume={57},
  number={6},
  pages={1483--1495},
  year={2009},
  publisher={INFORMS}
}

@article{cheramin2021data,
  title={Data-Driven Robust Optimization Using Scenario-Induced Uncertainty Sets},
  author={Cheramin, Meysam and Chen, Richard Li-Yang and Cheng, Jianqiang and Pinar, Ali},
  journal={arXiv preprint arXiv:2107.04977},
  year={2021}
}

@article{bertsimas2016reformulation,
  title={Reformulation versus cutting-planes for robust optimization},
  author={Bertsimas, Dimitris and Dunning, Iain and Lubin, Miles},
  journal={Computational Management Science},
  volume={13},
  number={2},
  pages={195--217},
  year={2016},
  publisher={Springer}
}

@article{jalilvand2016robust,
  title={Robust optimization under correlated polyhedral uncertainty set},
  author={Jalilvand-Nejad, Amir and Shafaei, Rasoul and Shahriari, Hamid},
  journal={Computers \& Industrial Engineering},
  volume={92},
  pages={82--94},
  year={2016},
  publisher={Elsevier}
}

@article{gartner1997smallest,
  title={Smallest enclosing ellipses: fast and exact},
  author={G{\"a}rtner, Bernd and Sch{\"o}nherr, Sven},
  year={1997}
}

@article{perakis2010robust,
  title={Robust controls for network revenue management},
  author={Perakis, Georgia and Roels, Guillaume},
  journal={Manufacturing \& Service Operations Management},
  volume={12},
  number={1},
  pages={56--76},
  year={2010},
  publisher={INFORMS}
}
\newpage
\begin{APPENDICES}

{\color{black}\section{More on Validity Conditions of Protection Level Algorithms}\label{sec:valid}

In this section, we justify the necessity of the validity conditions stated in Definition~\ref{def:1} for PLAs.  %{\color{red} this section is updated. please check it out.}

\textbf{Non-Increasing PL Function:}   We  argue why the protection level function \(p(x)\) must be non-increasing.  At a high level, the goal of a PLA is to reserve or “protect” a certain amount of resources for high-reward requests. However, these protection levels must be \emph{credible}—that is, any promise to reserve resources must be implementable and feasible under the given constraints. 
Suppose we are at a point where \(x\) low-reward requests have been processed. If the PLA has allocated the maximum allowable resources to low-reward requests up to this point, then the total amount allocated to low-reward requests is \(m - p(x)\), where \(m\) is the total available resource capacity. The remaining \(p(x)\) units are thus being reserved for high-reward requests.

Now, suppose that the PL function \(p(x)\) increases at \(x\), i.e., \(p'(x) > 0\). This means that as we process more low-reward requests, we plan to reserve even more resources for high-reward requests than before. But this leads to a contradiction: we have already allocated \(m - p(x)\) units to low-reward requests, so only \(p(x)\) units remain. If we now try to reserve more than \(p(x)\), we exceed the total available resource budget \(m\), making the updated protection level infeasible.

This inconsistency illustrates why the PL function must be non-increasing. By ensuring that \(p(x)\) does not grow with \(x\), we guarantee that the resources promised to high-reward requests are always consistent with past allocation decisions, thereby preserving feasibility and preventing over-commitment.

\textbf{Derivative Condition:}  
We also require that the derivative of the protection level function satisfy \(p'(x) \ge -1\). This condition ensures that protection levels do not decrease too rapidly.

To illustrate why we impose this condition, consider the same setting as before. Let \(m\) denote the total available resources, and suppose we have processed \(\bar{s}\) units of low-reward requests so far, with \(\bar{a}\) denoting the total amount of resources allocated to them. Now, a new low-reward request of size \(\epsilon > 0\), where \(\epsilon\) is very small, arrives. Assume that we have already allocated the maximum amount allowable for low-reward requests up to this point, i.e., \(\bar{a} = m - p(\bar{s} + 1)\), meaning only \(p(\bar{s} + 1)\) resources remain protected for high-reward requests.

According to the PLA rule, the amount we can allocate to the incoming \(\epsilon\)-sized request is:
\[
\min\{m - p(\bar{s} + 1 + \epsilon) - \bar{a}, \epsilon\}.
\]
Substituting \(\bar{a} = m - p(\bar{s} + 1)\), we get:
\[
\min\left\{\epsilon,\ -p(\bar{s} + 1 + \epsilon) + p(\bar{s} + 1)\right\}.
\]
If \(p'(\bar{s} + 1) < -1\), then for small \(\epsilon\), the difference \(-p(\bar{s} + 1 + \epsilon) + p(\bar{s} + 1)\) exceeds \(\epsilon\), and the PLA allows allocating the full request. In other words, the protection level drops so sharply that it remains ineffective. One can simply recover the same allocation outcome by setting \(p'(\bar{s} + 1) = -1\). Therefore, any steeper decline in the protection level function is unnecessary. Without loss of generality, we can thus assume that \(p'(x) \ge -1\).
}

\section{Proof of Lemma \ref{lem:order1}} \label{sec:prooforder}

Let $(x,y)$ be fixed, and let $I$ be the ordered sequence of arrivals such that $x$ low-reward requests arrive first, followed by $y$ high-reward requests. For any adaptive protection level algorithm $\mathcal A$, let $\tilde x(\mathcal A)$ be the total number of low-reward requests that are accepted under the ordered sequence $I$. Then, the algorithm accepts $\min\{y,m-\tilde x(\mathcal A)\}$ high-reward requests. 
Observe that algorithm $\mathcal A$ rejects the $x$-th low-reward request if and only if 
\begin{itemize}
\item condition (1): the number of low-reward requests accepted so far is not less than $m-p(x)$, where $p(\cdot)$ is the protection level function, or %and \pj{is it an ``and'' or a ``or'' here?}
\item condition (2): all the resources have been used. 
\end{itemize}

\noindent We split the proof into two cases.

\textbf{Case 1: $y<m-\tilde x(\mathcal A)$}. In this case, we claim that under any other ordering of the arrivals, the number of high-reward requests that are accepted cannot be smaller than $y$. We also claim that under any other ordering of the arrivals, the number of low-reward requests that are accepted cannot be smaller than $\tilde x(\mathcal A)$. Showing these claims, complete the proof of this case.

We begin with the first part. First note that $\tilde x(\mathcal A)$ is an upper bound on the total number of low-reward requests that algorithm $\mathcal A$ accepts under any ordering of the arrivals. This is because condition (2) never fails as $y\le m-\tilde x(\mathcal A)$, and hence with even accepting $y$ high-reward requests, there are resources left for $\tilde x(\mathcal A)$ low-reward requests,   and condition (1) is independent of the order of high-reward requests. Moreover, since $y \leq m-\tilde x(\mathcal A)$, any arriving high-reward request is accepted. Thus, for any ordering of the arrivals, algorithm $\mathcal A$ can accept  $y$ high-reward requests.

We now show the second part. Here, we want to show that under any other ordering of the arrivals, the number of low-reward requests that are accepted is greater than or equal to $\tilde x(\mathcal A)$. Contrary to our claim, suppose that the total number of low-reward requests that algorithm $\mathcal A$ accepts under any unordered sequence is strictly less than $\tilde x(\mathcal A)$. This means that at some point, condition (2) is not satisfied.  Therefore, there exists a time $t$ such that $x(t)+y(t)=m$, where $x(t)$ and $y(t)$ denote the number of low- and high-reward requests accepted up to time $t$, respectively. However, we know that $x(t)<\tilde x(\mathcal A)$, $y(t) \leq y$, and $\tilde x(\mathcal A)+y \leq m$. Therefore, $x(t)+y(t)=m$ cannot hold, which contradicts the assumption that algorithm $\mathcal A$ has accepted fewer than $\tilde x(\mathcal A)$ low-reward requests. This completes the proof of the first case.

\textbf{Case 2: $y \geq m-\tilde x(\mathcal A)$}. In this case, we claim that under any other ordering of the requests, the number of high-reward requests accepted cannot be smaller than $m-\tilde x(\mathcal A)$. Additionally, we claim that if the number of low-reward requests accepted under any other ordering is less  than $\tilde x(\mathcal A)$, then the total reward generated by the algorithm is larger than the case where the number of low-reward requests accepted is $\tilde x(\mathcal A)$. 

We begin with the first part. We will show that  under any other ordering of the requests, the number of high-reward requests accepted cannot be smaller than $m-\tilde x(\mathcal A)$. As we know, high-reward requests are rejected only when all resources are used up. Therefore, the later a high-reward request arrives, the less chance we accept the request. Compare any unordered arrival sequence with the ordered arrival sequence, each high-reward request arrives earlier, which implies that the number of high-reward requests accepted cannot be smaller than $m-\tilde x(\mathcal A)$.

Next, we show the second part, which is if the number of low-reward requests accepted under any other ordering is less  than $\tilde x(\mathcal A)$, then the total reward generated by the algorithm is larger than the case where the number of low-reward requests accepted is $\tilde x(\mathcal A)$. If the total number of low-reward requests the algorithm $\mathcal A$ accepts is strictly less than $\tilde x(\mathcal A)$, then at some point, condition (2) is not satisfied. This means that for some time $t$, $x(t)+y(t)=m$, where $x(t)$ and $y(t)$ denote the number of low-reward and high-reward requests accepted before time $t$, respectively. Since $x(t) < \tilde x(\mathcal A)$, we have $y(t) > m-\tilde x(\mathcal A)$. Therefore, the reward generated by the algorithm is
\[
y(t)r_h+x(t)r_{\ell}=y(t)r_h+(m-y(t))r_{\ell} > (m-\tilde x(\mathcal A))r_h+\tilde x(\mathcal A)r_{\ell},
\]
which implies that the algorithm generates a larger reward.

Therefore, we conclude that the adversary should choose the first instance where all $x$ low reward requests arrive first and follow with all $y$ high reward requests.

\section{Proof of Statements in Section \ref{sec:maxrobust}} \label{sec:appendixC}

Throughout the proofs, we make use of some preliminary lemmas that are presented in Section \ref{sec:proofpre}.  

\subsection{Proof of Lemma \ref{lem:ymonotone}}
We first show that for any $p\le \min\{m,y_1\}$, we have $\CP_u(p;(x,y_1)) \geq \CP_u(p;(x,y_2))$, where we recall that $y_1\le y_2$. 
Observe that when $p\le  y_1$ and $x+y_1< m$, by Lemma \ref{lem:underbelow}, we have $\CP_u(p;(x,y_1))=1$, and hence $1=\CP_u(p;(x,y_1)) \geq \CP_u(p;(x,y_2))$ trivially holds. Now suppose that $x+y_1\ge m$. 
By definition, we have 
\begin{align*}
\CP_u(p;(x,y_1))&=\frac{\max\{p, (m-x)^{+} \} r_h+\min\{x,m-p\}r_{\ell}}{\min\{y_1,m\}r_h+\min\{x, (m-y_1)^{+}\}r_{\ell}}\\
&\geq \frac{\max\{p, (m-x)^{+} \} r_h+\min\{x,m-p\}r_{\ell}}{\min\{y_2,m\}r_h+\min\{x, (m-y_2)^{+}\}r_{\ell}} = \CP_u(p;(x,y_2))\,,
\end{align*}
where the inequality holds because $ y_2 \ge y_1$ and $y\mapsto yr_h+(m-y) r_{\ell}$ is increasing in $y$ as $r_h> r_{\ell}$.

Second, we show that any  protection level $p$ with   $p \geq \min\{m,y_2\}$, we have 
    $
    \CP_o(p;(x,y_2)) \geq \CP_o(p;(x,y_1))
    $. To show this, first consider the case where $x+y_1< m$ and $x+y_2< m$. Then,  by the definition of $\CP_o$ in Equation \eqref{eq:CP_over}, we have 
\begin{align} \label{eq:first_case}
    \CP_o(p;(x,y_2)) &= \frac{\min\{y_2,m\}r_h+\min\{x,m-p\}r_{\ell}}{\min\{y_2,m\}r_h+\min\{x, (m-y_2)^{+}\}r_{\ell}}=
    \frac{y_2r_h+\min\{x,m-p\}r_{\ell}}{y_2r_h+ x r_{\ell} }\\
   & \geq \frac{y_1r_h+\min\{x,m-p\}r_{\ell}}{y_1 r_h+xr_{\ell}} =\CP_o(p;(x,y_1))\,,
\end{align}
where the inequality holds because $y\mapsto \frac{y r_h+\min\{x,m-p\}r_{\ell}}{y r_h+x r_{\ell}}$ is increasing in $y$.

Now consider the case where  $x+y_2\ge m$. Next, we show that by fixing an $x$, for any $y$ such that $x+y>m$, $\CP_o(p;(x,y))$ is increasing in $y$. Observe that for any $y$ with $x+y \ge m$ and $p\ge y$, we have 
\[ \CP_o(p;(x,y)) = \frac{\min\{y,m\}r_h+\min\{x,m-p\}r_{\ell}}{\min\{y,m\}r_h+ \min\{x, (m-y)^{+}\}r_{\ell} },\]
If $y \geq m$, the statement is trivial. Otherwise,
\begin{align*}
    \frac{\partial \CP_o(p; (x,y))}{\partial y}&=\frac{r_h(yr_h+(m-y)r_{\ell})-(r_h-r_{\ell})(yr_h+\min\{m-p,x\}r_{\ell})}{(yr_h+(m-y)r_{\ell})^2} \\& \geq \frac{r_h(yr_h+(m-y)r_{\ell})-(r_h-r_{\ell})(yr_h+(m-p)r_{\ell})}{(yr_h+(m-y)r_{\ell})^2} \\& \geq \frac{(r_h-r_{\ell})(yr_h+(m-y)r_{\ell})-(r_h-r_{\ell})(yr_h+(m-p)r_{\ell})}{(yr_h+(m-y)r_{\ell})^2}\\
    &= \frac{(r_h-r_{\ell})(yr_h+(p-y)r_{\ell})}{(yr_h+(m-y)r_{\ell})^2}
    \geq 0,
\end{align*}
where the last inequality is because $p\ge y$. The chain of inequalities shows that $\CP_o(p; (x,y))$ is increasing in $y$ when $x+y\ge  m$ and $p\ge y$. This implies that we have  $
    \CP_o(p;(x,y_2)) \geq \CP_o(p;(x,y_1))
    $ when $x+y_i\ge m$, $i\in \{1, 2\}$, as desired.

    For the case where $x+y_2\ge m$ and $x+y_1< m$,  we have 
\begin{align*}
    \CP_o(p; (x,y_2)) \geq \CP_o(p;(x,m-x)) \geq \CP_o(p; (x,y_1))\,,
\end{align*}
where the first inequality holds because  $\frac{\partial\CP_o(p; (x,y))}{\partial y}\ge 0$ when $x+y\ge m$, and the second inequality holds because of
Equation \eqref{eq:first_case}.

Finally, we show that 
any $x\in[\underline x, \bar x]$ and $p\ge 0$,  we have
    $\min_{y\in [\underline h(x), \bar h(x)]} \{\CP(p; (x, y))\} = \min\left\{ \CP(p; (x, \underline h(x))),  \CP(p; (x, \bar h(x)))\right\}$. Suppose that $p\le  \underline h(x)$. Then, $\CP(p; (x, y)) = \CP_u(p; (x, y))$ for any $y\in [\underline h(x), \bar h(x)]$, and hence by the first result of this lemma, we have \[\min_{y\in [\underline h(x), \bar h(x)]} \{\CP(p; (x, y))\} = \CP_u(p; (x, \bar h(x)))\,,\] 
    as desired. Now, suppose that  $p\ge  \bar h(x)$. Then, $\CP(p; (x, y)) = \CP_o(p; (x, y))$ for any $y\in [\underline h(x), \bar h(x)]$, and hence by the second result of this lemma, we have \[\min_{y\in [\underline h(x), \bar h(x)]} \{\CP(p; (x, y))\} = \CP_o(p; (x, \underline h(x)))\,,\] as desired. Now, suppose the final case where $p \in (\underline h(x), \bar h(x))$. Then,
    \begin{align*}
        \min_{y\in [\underline h(x), \bar h(x)]} \{\CP(p; (x, y))\} &= \min\left\{ \min_{y\in[\underline h(x), p]} \{\CP_o(p; (x, y))\}, \min_{y\in[ p, \bar h(x)]} \{\CP_u(p; (x, y))\} \right\} \\
        &= \min\left\{ \CP_o(p; (x,\underline h(x))),  \CP_u(p; (x, \bar h(x))) \right\}\,,
    \end{align*}
    where the last inequality follows from the first and second results of this lemma.
    
\Halmos

\subsection{ Properties of functions $\u(\cdot; C)$ and $\ll(\cdot; C)$: Lemma \ref{lem:property_u_l} and its Proof} 
\begin{lemma}[Properties of functions $\u(\cdot; C)$ and $\ll(\cdot; C)$]\label{lem:property_u_l}
The functions $\u(\cdot; C)$ and $\ll(\cdot; C)$, which are respectively  defined in Equations \eqref{eq:u} and \eqref{eq:l}, have the following properties. 
\begin{enumerate}
    \item For any $x\in (\underline x_u, \bar x_u)$ and $C \in [0,1]$, let $\underline{\mathcal{H}}(x)=\min\{\underline h(x),m\}$.  When $\underline{\mathcal{H}}'(x)$ exists, we have 
    \begin{align} \label{eq:convexcase1p}  \frac{\partial \u(x;C)}{\partial x} = \left\{ \begin{array}{ll}
         ((1-C)\frac{r_h}{r_{\ell}}+C)\underline{\mathcal{H}}'(x) &\quad  \mbox{if $x+\underline{\mathcal{H}}(x) \geq m$};\\
        (1-C)\frac{r_h}{r_{\ell}}\underline{\mathcal{H}}'(x)-C &\quad  \mbox{if $x+\underline{\mathcal{H}}(x) < m$}.\end{array} \right. \end{align} 
        
        \item  For any $x\in (x_H, \bar x_l)$ and $C \in [0,1]$, let $\overline{\mathcal{H}}(x)=\min\{\bar h(x),m\}$. When $\overline{\mathcal{H}}'(x)$ exists, we have 
        \begin{equation}\label{eq:convexcase2p}
    \frac{\partial \ll(x; C)}{\partial x}=C \overline{\mathcal{H}}'(x).
    \end{equation}
    \item  For any $C \in [0,1]$,  $\u(x;C)$ is non-increasing  in $x\in [0, \bar x]$ and is convex for $x \in [\underline x_u, \bar x]$. 
    \item   For any $C \in [0,1]$, $\ll(x;C)$ is non-increasing in $x\in [0, \bar x]$ and is concave for $x \in [\underline x, \bar x_{\ell}]$.
    \item For any $C\le \cstar$ and any $x\in [0, \bar x]$ , we have
    $\ll(x; C) \le \u(x; C).$
    \item For any $x \in [0,\bar{x}]$, $\ll(x;C)$ is continuously increasing in $C$ and $\u(x;C)$ is continuously decreasing in $C$.
\end{enumerate}
\end{lemma}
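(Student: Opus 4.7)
The plan is to establish the six parts in dependency order: first derive closed-form derivatives of $\u$ and $\ll$ by implicit differentiation (parts 1 and 2); then read convexity/concavity and monotonicity off those derivative formulas (parts 3 and 4); next obtain the monotonicity in $C$ from a second implicit differentiation with respect to $C$ (part 6); and finally invoke the transformation lemma together with $\cstar$'s definition to obtain the inequality $\ll \le \u$ (part 5).

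For parts 1 and 2, I would fix $C$ and implicitly differentiate the defining identities $\CP_o(\u(x;C);(x,\underline h(x)))=C$ and $\CP_u(\ll(x;C);(x,\bar h(x)))=C$ with respect to $x$. The key bookkeeping is to split into subcases according to whether $x+\underline h(x)\ge m$ (respectively $x+\bar h(x)\ge m$), since this determines which branch of $\CP_o$ (respectively $\CP_u$) and of the clairvoyant denominator is active. In each subcase the compatible ratio collapses to an affine function of $p$ in the numerator over a quantity independent of $p$, so I can solve explicitly for $\u(x;C)$ or $\ll(x;C)$ as an affine function of $\underline h(x)$ or $\bar h(x)$ and $x$, and differentiate to recover exactly Equations \eqref{eq:convexcase1p} and \eqref{eq:convexcase2p}. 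Where $\underline h$ (respectively $\bar h$) fails to be differentiable one replaces $\underline h'$ by its one-sided versions; convexity of $\underline h$ guarantees these exist everywhere.

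For parts 3 and 4, I would leverage the explicit derivative formulas. Within either subcase the derivative of $\u$ is a strictly positive constant times $\underline h'(x)$ (possibly minus a constant, in the second subcase); because $\underline h$ is convex, $\underline h'$ is non-decreasing, hence $\u'$ is non-decreasing on each piece and $\u$ is convex there. At the transition between subcases, which occurs precisely on the set $\mathcal R_0$ where $x+\underline h(x)=m$, one checks continuity of $\u$ and that the jump in $\u'$ across the transition equals $C(\underline h'(x)+1)$, which has the correct sign for the pasted function to remain convex. The concavity of $\ll$ follows by the same argument using concavity of $\bar h$. Non-increasingness on $[0,\bar x]$ then follows by combining the derivative formulas with the defining interval $[\underline x_u,\bar x_u]$ (respectively $[x_H,\bar x_l]$): these thresholds are exactly the places where the corresponding derivative is still non-positive, and the boundary extensions ($\u\equiv m$ to the left of $\underline x_u$, $\u$ constant to the right of $\bar x_u$, and analogously for $\ll$) preserve monotonicity by construction.

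For part 6 I would again implicitly differentiate, this time with respect to $C$. The strict monotonicity of $\CP_o$ and $\CP_u$ in $p$ on the relevant ranges supplies the hypothesis of the implicit function theorem, which immediately yields continuity in $C$; inspecting the sign of the $C$-partial then forces $\u$ downward and $\ll$ upward as $C$ increases. For part 5 I would apply the transformation Lemma \ref{lem:feasibleregionrobust_2} to the optimization \eqref{prob:cpmax1}: for any $C\le\cstar$ that problem is feasible, so there exists a valid PL function $p(\cdot)$ with $\wll(x;C)\le p(x)\le \u(x;C)$ on $[\underline x,\bar x]$, and since $\ll(x;C)\le\wll(x;C)$ always holds by construction of $\wll$ in \eqref{eq:ll}, the chain $\ll(x;C)\le\wll(x;C)\le p(x)\le\u(x;C)$ gives the desired inequality. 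The main obstacle I anticipate is the convexity bookkeeping at the $\mathcal R_0$ transitions: one must verify not only that the two branches of $\u$ (respectively $\ll$) agree in value at such a boundary, but also that the one-sided derivatives match the sign required for global convexity (concavity), which is where the specific form $C(\underline h'(x)+1)$ of the derivative jump plays the decisive role.
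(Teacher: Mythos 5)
Your treatment of Parts 1--4 and 6 matches the paper's: in each of the two regimes ($x+\underline{\mathcal H}(x)\gtrless m$) the defining equation $\CP_o(\u(x;C);(x,\underline h(x)))=C$ is solved explicitly for $\u$ as an affine function of $\underline{\mathcal H}(x)$ and $x$ (and similarly for $\ll$), the derivative formulas are read off, and convexity/concavity follow from monotonicity of the subderivative of the convex $\underline h$ (concave $\bar h$); the paper also obtains Part 6 by inspecting these closed forms rather than via the implicit function theorem, but that is immaterial. Your explicit check of the derivative jump $C(\underline{\mathcal H}'(x)+1)$ across $\mathcal R_0$ is a point the paper glosses over, and your sign analysis (the jump is nonnegative in the direction of increasing $x$ precisely because $x\mapsto x+\underline{\mathcal H}(x)$ must be locally monotone in the appropriate sense at a crossing) is correct and welcome. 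Where you genuinely diverge is Part 5: the paper constructs, for each $x$, a balancing protection level $\pbal(x)$ with $\CP_o(\pbal(x);(x,\underline h(x)))=\CP_u(\pbal(x);(x,\bar h(x)))$ via the intermediate value theorem, shows this common value is at least $\cstar$ (Lemma \ref{lem:inner}), and sandwiches $\ll(x;C)\le \pbal(x)\le \u(x;C)$ using monotonicity of the compatible ratio in $p$ (Lemma \ref{lem:pmonotone}); this is pointwise and self-contained. Your route instead infers feasibility of Problem \eqref{prob:cpmax1} at level $C$ from $C\le\cstar$ and extracts the inequality from the sandwich $\ll\le\wll\le p\le\u$. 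This works, but be aware of two dependencies you must make explicit: (i) feasibility at every $C<\cstar$ (and at $C=\cstar$ itself) requires the monotonicity and continuity in $C$ from Part 6, so your ordering of the parts is essential, and the supremum's attainment at $\cstar$ needs a word; and (ii) you must use the definition of $\cstar$ as the value of Problem \eqref{prob:cpmax1} (equivalently, the supremum of consistent ratios over PLAs via Lemma \ref{lem:feasibleregionrobust_2}), not the characterization as the optimum over all online algorithms in Theorem \ref{thm:cmax}, since the proof of that theorem itself relies on the present lemma and you would otherwise create a circularity. The paper's balancing construction avoids both issues, which is presumably why it is used.
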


\begin{proof} {Proof of Lemma \ref{lem:property_u_l}}
Here, we will show the following six properties. 

\subsubsection{Property 1}

We first show Equation \eqref{eq:convexcase1p}. We split the analysis into two cases: \textit{Case 1: $x+\underline{\mathcal{H}}(x) \geq m$} and  \textit{Case 2: $x+\underline{\mathcal{H}}(x) < m$}.

\textbf{Case 1: ($x+\underline{\mathcal{H}}(x)\ge m$). } 
By Lemma \ref{lem:Hh}, we have for any $x\in (\underline x_u, \bar x_u)$, $\underline{\mathcal{H}}(x)=\underline h(x)$, which implies that $\underline{h}(x) \leq m$. Then, we take an arbitrary point $(x_1,\underline{h}(x_1))$ with $x_1 \in (\underline{x_u},\bar x_u)$. By definition of $\u(\cdot;C)$, we should have  $\CP_o(\u(x_1;C);(x_1,\underline{h}(x_1)))=C$, and by Lemma \ref{lem:cpexists}, we have such $\u(x_1;C)$ always exists and $\u(x_1;C)\geq \underline{h}(x_1)$. By Equation \eqref{eq:CP_over},
\begin{align} \label{eq:cpo=c}
\CP_o(\u(x_1;C);(x_1,\underline{h}(x_1))) = \frac{\underline{h}(x_1)r_h+\min\{x_1,m-\u(x_1;C)\}r_{\ell}}{\underline{h}(x_1)r_h+\min\{x_1, m-\underline{h}(x_1)\}r_{\ell}} = C.
\end{align}
As in this case, if $x_1+\underline{h}(x_1) \geq m$, we have $\min\{x_1, m-\underline{h}(x_1)\}=m-\underline{h}(x_1)$. We then argue that  $\min\{x_1,m-\u(x_1;C)\}=m-\u(x_1;C)$. 
Suppose that contrary to our claim, $\min\{x_1,m-\u(x_1;C)\}=x_1$. We then have 
\[
\CP_o(\u(x_1;C);(x_1,\underline{h}(x_1))) = \frac{\underline{h}(x_1)r_h+x_1r_{\ell}}{\underline{h}(x_1)r_h+( m-\underline{h}(x_1))r_{\ell}} = \CP_o(m;(x_1,\underline{h}(x_1))),
\]
which implies that $x_1 \leq \underline x_u$. However, we define $x_1 \in (\underline x_u, \bar x_u)$, therefore, this is a contradiction, and we can only have $\min\{x_1,m-\u(x_1;C)\}=m-\u(x_1;C)$. Then, by Equation \eqref{eq:cpo=c}, we have
\[
\frac{\underline{h}(x_1)r_h+(m-\u(x_1;C))r_{\ell}}{\underline{h}(x_1)r_h+(m-\underline{h}(x_1))r_{\ell}} = C,
\]
which is equivalent to 
\[
\u(x_1;C)=((1-C)\frac{r_h}{r_{\ell}}+C)\underline{h}(x_1)+m(1-C).
\]
This implies that $\frac{\partial u(x; C)}{\partial x}=((1-C)\frac{r_h}{r_{\ell}}+C)\underline{h}'(x)$ when $x+\underline h(x)\ge m$. As we have $\underline{\mathcal{H}}(x)=\underline{h}(x)$ for $x \in [\underline{x_u},\bar x_u]$, this implies the desired result.

\textbf{Case 2 ($x+\underline{\mathcal{H}}(x)< m$). } In this case, as $x+\underline{\mathcal{H}}(x)< m$, we have $\underline{\mathcal{H}}(x) < m$, which implies that $\underline{\mathcal{H}}(x) = \underline h(x)$. Let us take a point $(x_1,\underline{h}(x_1))$ with $x_1 \in (\underline{x_u},\bar x_u)$. By definition of $\u(\cdot;C)$, we have $\CP_o(\u(x_1;C);(x_1,\underline{h}(x_1)))=C$ and by Lemma \ref{lem:cpexists}, we have such $\u(x_1;C)$ always exists and $\u(x_1;C) \geq \underline{h}(x_1)$. By Equation \eqref{eq:CP_over},
\[
\CP_o(\u(x_1;C);(x_1,\underline{h}(x_1))) = \frac{\underline{h}(x_1)r_h+\min\{x_1,m-\u(x_1;C)\}r_{\ell}}{\underline{h}(x_1)r_h+\min\{x_1, m-\underline{h}(x_1)\}r_{\ell}} = C.
\]
As in this case $x_1+\underline{h}(x_1) < m$, we have $\min\{x_1, m-\underline{h}(x_1)\}=x_1$. Here, we argue that $\min\{x_1,m-\u(x_1;C)\}=m-\u(x_1;C)$. 
Contrary to our claim, suppose that 
 $\min\{x_1,m-\u(x_1;C)\}=x_1$. We then have
\[
\CP_o(\u(x_1;C);(x_1,\underline{h}(x_1))) = \frac{\underline{h}(x_1)r_h+x_1r_{\ell}}{\underline{h}(x_1)r_h+x_1r_{\ell}} = 1 > C,
\]
which is a contradiction. Therefore, we can only have $\min\{x_1,m-\u(x_1;C)\}=m-\u(x_1;C)$. Then, we have
\[
\frac{\underline{h}(x_1)r_h+(m-\u(x_1;C))r_{\ell}}{\underline{h}(x_1)r_h+x_1r_{\ell}} = C,
\]
which is equivalent as 
\[
\u(x_1;C)=(1-C)\frac{r_h}{r_{\ell}}\underline{h}(x_1)-Cx_1+m.
\]
This implies that $\frac{\partial u(x; C)}{\partial x}=(1-C)\frac{r_h}{r_{\ell}}\underline{h}'(x)-C$
when $x+\underline h(x)< m$. As we have $\underline{\mathcal{H}}(x)=\underline{h}(x)$ for $x \in [\underline{x_u},\bar x_u]$, this implies the desired result.

\subsubsection{Property 2}
 We take a point $(x_1,\overline{\mathcal{H}}(x_1))$ with $x_1 \in (x_H,\bar{x_{\ell}})$. By definition of $\ll(\cdot;C)$, we have $\CP_u(\ll(x_1;C);(x_1,\bar{h}(x_1)))=C$. As $\overline{\mathcal{H}}(x_1)=\min\{m,\overline{h}(x_1)\}$, by Lemma \ref{lem:mout}, we have $\CP_u(\ll(x_1;C);(x_1,\overline{\mathcal{H}}(x_1)))=\CP_u(\ll(x_1;C);(x_1,\bar{h}(x_1)))=C$. 
By Lemma \ref{lem:cpexists}, such $\ll(x_1;C)$ always exists and $\ll(x_1;C) \leq \overline{\mathcal{H}}(x_1)$. By Equation \eqref{eq:CP_under},
\[
\CP_u(\ll(x_1;C);(x_1,\overline{\mathcal{H}}(x_1))) =  
 \frac{\max\{\ll(x_1;C), \min\{\overline{\mathcal{H}}(x_1),m-x_1\} \} r_h+\min\{x_1,m-\ll(x_1;C)\}r_{\ell}}{\overline{\mathcal{H}}(x_1)r_h+\min\{x_1, m-\overline{\mathcal{H}}(x_1)\}r_{\ell}} = C. 
\]

If $\min\{x_1, m-\overline{\mathcal{H}}(x_1)\}=x_1$, by Lemma \ref{lem:underbelow} and the fact that $\ll(x_1;C) \leq \overline{\mathcal{H}}(x_1)$, $\CP_u(\ll(x_1;C);(x_1,\overline{\mathcal{H}}(x_1))) = 1 \neq C $, which cannot happen, and hence $\min\{x_1, m-\overline{\mathcal{H}}(x_1)\}=m-\overline{\mathcal{H}}(x_1)$. If $\min\{x_1,m-\ll(x_1;C)\}=x_1$, then 
\[
\CP_u(\ll(x_1;C);(x_1,\overline{\mathcal{H}}(x_1))) =  
 \frac{(m-x_1) r_h+x_1r_{\ell}}{\overline{\mathcal{H}}(x_1)r_h+( m-\overline{\mathcal{H}}(x_1))r_{\ell}} = \CP_u(0;(x_1,\overline{\mathcal{H}}(x_1))), 
\]
which implies that $x_1 \geq \bar x_{\ell}$. However, we define $x_1 \in (x_H, \bar x_{\ell})$, therefore, this is a contradiction, and we can only have $\min\{x_1, m-\overline{\mathcal{H}}(x_1)\}=m-\overline{\mathcal{H}}(x_1)$ and $\min\{x_1,m-\ll(x_1;C)\}=m-\ll(x_1;C)$. Then, we have 
\[
\frac{\ll(x_1;C) r_h+(m-\ll(x_1;C))r_{\ell}}{\overline{\mathcal{H}}(x_1)r_h+(m-\overline{\mathcal{H}}(x_1))r_{\ell}} = C, 
\]
which is equivalent to 
\[
\ll(x_1;C)=C\overline{\mathcal{H}}(x_1)-\frac{(1-C)mr_{\ell}}{r_h-r_{\ell}}\,,
\]
and verifies Equation \eqref{eq:convexcase2p}.

\subsubsection{Property 3}
We first show that $\u(x;C)$ is non-increasing for $x \in [\underline x, \bar x]$. First, by the definition of $\underline x_u$, we have $\u(x;C)=m$ for $x \in [\underline x, \underline x_u]$, which is non-increasing. 

For $x\in (\underline x_u, \bar x_u)$, first recall that  \begin{align*} \bar x_u = \left\{ \begin{array}{ll}
         x_L & \quad \mbox{if $x_L+y_L \geq m$};\\
        \sup\{x \in [x_L,\bar{x}]: (1-C)\frac{r_h}{r_{\ell}}\underline{\mathcal{H}}'(x^{-})-C < 0 \} & \quad  \mbox{Otherwise}\,,\end{array} \right. \end{align*} 

\textbf{Case 1 -- $\mathbf {x_L+y_L \geq m}$.} If $x_L+y_L \geq m$, we have $\bar x_u = x_L$. Since point $L =(x_L, y_L)$ is the lowest point, $\underline{h}(x)$ decreases for $x<x_L$, increases for $x>x_L$, which implies that $\underline{h}'(x) \leq 0$ for $x<x_L$ and $\underline{h}'(x) \geq 0$ for $x>x_L$.  Now, recall Property 1 that we just showed: 
 \begin{align*} \frac{\partial \u(x;C)}{\partial x} = \left\{ \begin{array}{ll}
         ((1-C)\frac{r_h}{r_{\ell}}+C)\underline{\mathcal{H}}'(x) &\quad  \mbox{if $x+\underline{\mathcal{H}}(x) \geq m$};\\
        (1-C)\frac{r_h}{r_{\ell}}\underline{\mathcal{H}}'(x)-C &\quad  \mbox{if $x+\underline{\mathcal{H}}(x) < m$}.\end{array} \right. \end{align*} 
        where by Lemma \ref{lem:Hh}, we have $\underline{\mathcal{H}}(x)=\underline h(x)$ for $x \in (\underline x_u, \bar x_u)$.
        This property and the fact that $\underline{h}'(x) \leq 0$ for $x<x_L$ and $\underline{h}'(x) \geq 0$ for $x>x_L$ imply that  
$\u(x;C)$ decreases for $x<x_L$ and increases for $x>x_L$. As we force $\u(x;C)=\u(x_L;C)$ for $x > \bar x_u = x_L$, we have $\u(x;C)$ is always non-increasing.

{\textbf Case 2 -- $\mathbf {x_L+y_L < m}$.} If $x_L+y_L < m$, we have $\xmin=\sup\{x \in [x_L,\bar x]: (1-C)\frac{r_h}{r_{\ell}}\underline{h}'(x)-C < 0 \}$. For $x\in (\underline x_u, \bar x_u)$, by Lemma \ref{lem:Hh}, we have $\underline{\mathcal{H}}(x)=\underline h(x)$. As $\underline h(x)$ is convex, we have its subderivative is increasing. Then,  for $x<\xmin$, we have $(1-C)\frac{r_h}{r_{\ell}}\underline{\mathcal{H}}'(x)-C < 0$. Therefore, Property (1) that we just showed (i.e., Equation \eqref{eq:convexcase1p}) implies that $\u(x;C)$ decreases for $x \in [\underline x_u, \xmin]$. As we force $\u(x;C)$ to be a constant for $x \in [\xmin,\bar x]$, we have $\u(x;C)$ is non-increasing for $x \in [\underline x,\bar x]$.

Finally, we show that $\u(x;C)$ is convex for $x \in [\underline x_u, \bar x]$ by proving that its subderivative is increasing. As $\underline{h}(x)$ is convex, we have the subderivative of $\underline{h}(x)$ is increasing for $x \in [\underline x_u, \xmin]$. By Lemma \ref{lem:Hh}, we have $\underline{\mathcal{H}}(x)=\underline h(x)$ for $x\in [\underline x_u, \bar x_u]$. Then, we have the subderivative of $\underline{\mathcal{H}}(x)$ is increasing for $x \in [\underline x_u, \xmin]$. Therefore, Equation \eqref{eq:convexcase1p} implies that $\u(x;C)$ has increasing subderivative, and $\u(x;C)$ is convex for $x \in [\underline x_u, \xmin]$. As $\u(x;C)$ is non-increasing and convex  for $x \in [\underline x_u, \xmin]$ and constant for $x \in [\xmin, \bar x]$, we have $\u(x;C)$ is convex for $x \in [\underline x_u, \bar x]$.

\subsubsection{Property 4}  Because $H$ is the highest point and $\region$ is convex, $\bar{h}(x)$ increases for $x<x_H$ and  decreases for $x>x_H$, which implies that $\bar{h}'(x) \geq 0$ a.e. for $x<x_H$ and $\bar{h}'(x) \leq 0$ a.e. for $x>x_H$. (Recall that $\bar h(\cdot)$ is concave and point 
$H=(x_{H}, y_{H})\in \widebar{\region} $, which lies on the upper envelope $\bar h(\cdot)$,  is the point in set $\widebar{\region}$ that has the highest low-reward demand, where 
$
\widebar{\region} = \{(x,y) \in \region : y = \sup_{(x',y') \in \region } \min\{y',m\}  \} 
$
is a subset of region $\region$ under which the high-reward demand (more precisely $\min\{y', m\}$ for  any point $(x', y')\in \region$) is maximized.)
Therefore, Property (2) that we just showed, (i.e., $\frac{\partial \ll(x; C)}{\partial x}=C \overline{\mathcal{H}}'(x)$) implies that $\ll(x;C)$ increases for $x<x_H$. As we force $\ll(x;C)=\ll(x_H;C)$ for $x \geq x_H$, we have $\ll(x;C)$ is always non-increasing.

Next, we show that $\ll(\cdot;C)$ is concave. As stated earlier, because $\region$ is a convex set, we have $\bar{h}(x)$ is concave.  Since both $\bar{h}(x)$ and $y=m$ are concave, we have $\overline{\mathcal{H}}(x) = \min\{\bar{h}(x),m\}$ is concave. Therefore, the subderivative of $\overline{\mathcal{H}}(x)$ is decreasing. By Equation \eqref{eq:convexcase2p}, we have the subderivative of $\ll(x;C)$ decreases for $x \in [\underline x_{\ell}, \bar x_{\ell}]$, which implies that $\ll(x;C)$ is concave for $x \in [\underline x_{\ell}, \bar x_{\ell}]$. As $\ll(x;C)$ is a constant for $x \in [\underline x, \underline x_{\ell}]$ and $\ll(x;C)$ is concave and  non-increasing for $x \in [\underline x_{\ell}, \bar x_{\ell}]$, we have $\ll(x;C)$ is concave for $x \in [\underline x, \bar x_{\ell}]$.

\subsubsection{Property 5} Here, we want to show that 
for any $C\le \cstar$ and any $x\in [\underline x, \bar x]$, we have
    $\ll(x; C) \le \u(x; C).$ For $x \in [\underline{x},\bar{x}]$, let $\pbal(x)$ be a function that 
\[
\CP_o(\pbal(x);(x,\underline{h}(x)))=\CP_u(\pbal(x);(x,\bar{h}(x))).
\]
Lemma \ref{lem:inner}  shows that such $\pbal(x)$ always exists, and $\CP_o(\pbal(x);(x,\underline{h}(x)))=\CP_u(\pbal(x);(x,\bar{h}(x))) \geq \cstar$. Then, by Lemma \ref{lem:pmonotone}, for any $C \leq \cstar$, we have 
$\CP_o(\ll(x;C);(x,\underline{h}(x)))=C$ implies that $\ll(x;C) \leq \pbal(x)$, and $\CP_u(\u(x;C);(x,\bar{h}(x)))=C \leq \cstar$ implies that $\u(x;C) \geq \pbal(x)$. Therefore, we have $\ll(x;C) \leq \pbal(x) \leq \u(x;C)$.

\subsubsection{Property 6} Here, we would like to show  for any $x \in [\underline{x},\bar{x}]$, $\ll(x;C)$ is continuously increasing in $C$ and $\u(x;C)$ is continuously decreasing in $C$. This is because we have showed 
\[
\ll(x_1;C)=C\overline{\mathcal{H}}(x_1)-\frac{(1-C)mr_{\ell}}{r_h-r_{\ell}}\,,
\]
and 
\[
\u(x_1;C)=(1-C)\frac{r_h}{r_{\ell}}\underline{h}(x_1)-Cx_1+m.
\]

From these two equations, we can simply find that for any $x \in [\underline{x},\bar{x}]$, $\ll(x;C)$ is continuously increasing in $C$ and $\u(x;C)$ is continuously decreasing in $C$.
\end{proof}

\Halmos

\subsection{Proof of Lemma \ref{lem:feasibleregionrobust}}

\textbf{First Direction.} We first show that if $\ll(x;C) \leq p(x) \leq \u(x;C)$, we have $\UCP(p(x);x) \geq C$ for any $x \in [\underline{x},\bar{x}]$, where we define 
\begin{equation} \label{eq:defucpou1}
\UCP(p(x);x) = \min\{\CP_o(p(x);(x,\underline{h}(x))), \CP_u(p(x);(x,\bar{h}(x))) \}.
\end{equation}
This gives us the desired result because by Lemma \ref{lem:ymonotone}, 
for any $x\in[\underline x, \bar x]$ and $p\ge 0$,  we have
    \[\min_{y\in [\underline h(x), \bar h(x)]} \{\CP(p; (x, y))\} = \min\left\{ \CP(p; (x, \underline h(x))),  \CP(p; (x, \bar h(x)))\right\}\,.\]

\textbf{Part 1: $\boldsymbol{\CP_o(p(x);(x,\underline{h}(x)))\ge C}$ if $\boldsymbol{p(x)\le \u(x;C)}$.} Here, we show that for any $x\in [\underline x, \bar x]$,   $\CP_o(p(x);(x,\underline{h}(x)))$ is greater than or equal to $C$ as long as $p(x)\le \u(x;C)$. Let us first focus on $x\in [\underline x, \underline x_u]$ and $x\in [\bar x_u, \bar x]$. 
By Lemma \ref{lem:leftthresholdun},  
for any $\underline x<x \leq \underline x_u$, we have $
    \CP_o(m; (x,\underline{h}({x}))) \geq C$. By Lemma \ref{lem:worstrightpart}, for any  $x\in [\bar x_u,\bar{x}]$, we have $
\CP_o(m;(x,\underline{h}(x))) \geq C.$ 
By Lemma \ref{lem:pmonotone},  for any $p(x) \leq m=\u(x;C)$ for $x \in [\underline x, \underline x_u]$ and $[\bar x_u,\bar{x}]$, we have
\[
\CP_o(p(x);(x,\underline{h}(x))) \geq \CP_o(m;(x,\underline{h}(x))) \geq C.
\]

Next, we consider  $x \in [\underline x_u,\bar x_u]$.  By Lemma \ref{lem:cpexists}, we have
\[
\CP_o(\u(x;C);(x,\underline{h}(x)))=C,
\]
and by Lemma \ref{lem:pmonotone}, we have for any $p(x) \leq \u(x;C)$,
\[
\CP_o(p(x);(x,\underline{h}(x))) \geq \CP_o(\u(x;C);(x,\underline{h}(x)))=C\,, 
\]
which is the desired result. 

\textbf{Part 2: $\boldsymbol{\CP_u(p(x);(x,\bar{h}(x)))\ge C}$ if $\boldsymbol{ p(x)\ge \ll(x;C)}$.} Here, we show that for any $x\in [\underline x, \bar x]$,   $\CP_u(p(x);(x,\bar{h}(x)))$ is greater than or equal to $C$ as long as $p(x)\ge \ll(x;C)$. Let us first consider any $x\in [\underline x, x_H]$ and $x\in [\bar x_l, \bar x]$. 
By the definition of $x_H$ and Lemma \ref{lem:leftthresholdun}, for $x \in [\underline{x},x_H]$ and $[\bar x_{\ell},\bar{x}]$, we have $
\CP_u(0;(x,\bar{h}(x))) \geq C.
$
By Lemma \ref{lem:pmonotone}, for any $p(x) \geq 0=\ll(x;C)$ for $x \in [\underline{x},x_H]$ and $[\bar x_{\ell},\bar{x}]$, we then have 
\[
\CP_u(p(x);(x,\bar{h}(x))) \geq \CP_u(0;(x,\bar{h}(x))) \geq C\,,
\]
which is the desired result. 
Now, let us consider any $x \in [x_H,\bar x_{\ell}]$, By Lemma \ref{lem:cpexists},  we have
$
\CP_u(\ll(x;C);(x,\bar{h}(x)))=C,$ 
and by Lemma \ref{lem:pmonotone}, we have for any $p(x) \geq \ll(x;C)$,
\[
\CP_u(p(x);(x,\bar{h}(x))) \geq \CP_u(\ll(x;C);(x,\bar{h}(x)))=C\,,
\]
which is the desired result. 

\textbf{Second Direction.} So far we have established that if $p(x)\in [\ll(x;C), \u(x;C)]$, we have $\UCP(p(x);x)\ge C$ for any $(x, y)\in \region$. 
Next, we show that if $p(x)> \u(x;C)$ or $p(x)<\ll(x;C)$, we have $\UCP(p(x);x)<C$. 

\textbf{Part 1: $\boldsymbol{\CP_o(p(x); (x, \underline h(x)))< C}$ if $\boldsymbol{p(x)> u(x; C)}$.}
First, as $\u(x;C)=m$ for $x \in [\underline{x},\underline x_u]$ and $[\bar x_u,\bar{x}]$, and $p(x) \leq m$, we cannot have $p(x)>\u(x;C)$. Thus, we need to only consider  $x \in [\underline x_u,\bar x_u]$. For any $x \in [\underline x_u,\bar x_u]$,  by definition, $\u(x;C)$ is the largest PL value such that
\[
\CP_o(\u(x;C);(x,\underline{h}(x)))=C,
\]
and by Lemma \ref{lem:pmonotone}, if $p(x)>\u(x;C)$, we have
\[
\CP_o(p(x);(x,\underline{h}(x))) < \CP_o(\u(x;C);(x,\underline{h}(x)))=C\,, 
\]
which is the desired result. 

\textbf{Part 2: $\boldsymbol{\CP_u(p(x);(x,\bar{h}(x)))< C}$ if $\boldsymbol{ p(x)< \ll(x;C)}$.} As $\ll(x;C)=0$ for $x \in [\underline{x},x_H]$ and $[\bar x_{\ell},\bar{x}]$, and $p(x) \geq 0$, we cannot have $p(x)<\ll(x;C)$. Thus, we consider  $x \in [x_H,\bar x_{\ell}]$. For any  $x \in [x_H,\bar x_{\ell}]$, by definition, $\ll(x;C)$ is the smallest PL value such that
\[
\CP_u(\ll(x;C);(x,\bar{h}(x)))=C,
\]
and by Lemma \ref{lem:pmonotone}, if $p(x)<\ll(x;C)$, we have
\[
\CP_u(p(x);(x,\bar{h}(x))) < \CP_u(\ll(x;C);(x,\bar{h}(x)))=C\,,
\]
which is the desired result.

\subsection{Proof of Lemma \ref{lem:feasibleregionrobust_2}}
To show the result, we show the optimization problem in Equation \eqref{prob:rbmaxtrans2} is equivalent to that in Equation \eqref{prob:rbmaxtrans1}. Since the only difference between these two problems is their first set of constraints, we only need to show that the feasible regions of these two problems are identical. 
To do so, we show that any feasible solution to  Problem \eqref{prob:rbmaxtrans2} is a feasible solution to Problem \eqref{prob:rbmaxtrans1} and vice versa.

Considering a feasible solution to Problem \eqref{prob:rbmaxtrans2} with $p(x)\in [\wll(x;C), \u(x;C)]$ for any $x\in [\underline x, \bar x]$. 
By Equation \eqref{eq:ll}, we know that for any $C \in [0,1]$ and $x \in [\underline{x},\bar{x}]$, $\wll(x;C) \geq \ll(x;C)$. Therefore, $\wll(x;C) \leq p(x) \leq \u (x;C)$ implies that $\ll(x;C) \leq p(x) \leq \u (x;C)$, as desired. Recall that $\ll(x;C) \leq p(x) \leq \u (x;C)$ is the first constraint in Problem \eqref{prob:rbmaxtrans1}, and hence the above argument shows that any feasible solution to Problem \eqref{prob:rbmaxtrans2} is a feasible solution to  Problem \eqref{prob:rbmaxtrans1}.

Next, we show the opposite direction. Contrary to our claim, suppose that there exists a feasible solution $p(x)$ to  Problem \eqref{prob:rbmaxtrans1}  with $\ll(x_1;C) \leq p(x_1) < \wll(x_1;C)$ for some $x_1\in [\underline x, \bar x]$. (This shows that there exists a feasible solution to Problem \eqref{prob:rbmaxtrans1}, which is not a feasible solution to Problem \eqref{prob:rbmaxtrans2}.) By Equation \eqref{eq:ll}, we must have $x_1 > \nex $, where $\nex$ is defined in Equation \eqref{eq:ll}. As $\ll(\nex;C)=\wll(\nex;C)$, we have $p(\nex) \geq \ll(\nex;C)=\wll(\nex;C)$. Then, we have
\[
\frac{p(x_1)-p(\nex)}{x_1-\nex}< \frac{\wll(x_1;C)-\wll(\nex;C)}{x_1-\nex} = -1,
\]
where the equation holds because by definition of $\wll(x;C)$, the slope of $\wll(x;C)$ w.r.t. $x$ is $-1$ for any $x\in [\nex, x_1]$. 
That $\frac{p(x_1)-p(\nex)}{x_1-\nex}< -1$ implies that $p'(x)<-1$ on a positive measure set, and  hence, $p(x)$ is not a valid PL function, which is a contradiction.

\subsection{Proof of Lemma \ref{lem:trapezoid}} 
\textbf{First Direction.} We first show the `if' statement. That is, if $\underline{g}(x;R) \leq p(x) \leq \bar{g}(x;R)$, we have $\UCP(p(x);x) \geq R$ for any $x \in [0,\max\{m,\bar x\}]$, where with a slight abuse of notation, we define
\begin{equation} \label{eq:defucpou}
\UCP(p(x);x) = \min\{\CP_o(p(x);(x,0)), \CP_u(p(x);(x,m)) \}.
\end{equation}
Notice that by Lemma \ref{lem:mout}, we have $ \CP_u(p(x);(x,m))= \CP_u(p(x);(x,y))$ for any $y \geq m$. Then, by Lemma \ref{lem:ymonotone}, it suffices to show $\UCP(p(x);x) \ge R$ when $\underline{g}(x;R) \leq p(x) \leq \bar{g}(x;R)$. 

\textbf{Part 1: $\boldsymbol{\CP_o(p(x);(x,0))\ge R}$ if $\boldsymbol{p(x)\le \bar g(x; R)}$.} First observe that, by Definition of $\CP_o$ in Equation \eqref{eq:CP_over}, if we set $p(x) = \bar g(x;R)$, we have 
\begin{align*}
\CP_o(\bar{g}(x;R);(x,0)) &= \frac{0 \cdot r_h+\min\{x,m-\bar{g}(x;R)\}r_{\ell}}{0 \cdot r_h+\min\{x, m-0\}r_{\ell}} = \frac{\min\{x,m-\bar{g}(x;R)\}r_{\ell}}{\min\{x, m\}r_{\ell}}.
\end{align*}
If $x \leq m$, we have $\bar{g}(x;R) = -Rx+m$, and we can obtain
\[
\frac{\min\{x,m-\bar{g}(x;R)\}r_{\ell}}{\min\{x, m\}r_{\ell}}=\frac{\min\{x,m-(-Rx+m)\}r_{\ell}}{xr_{\ell}} =\frac{Rx r_{\ell}}{x r_{\ell}}=R.
\]
Otherwise, if $x>m$, we have $\bar{g}(x;R) =\bar{g}(m;R) = -Rm+m$, and we can obtain
\[
\frac{\min\{x,m-\bar{g}(x;R)\}r_{\ell}}{mr_{\ell}}=\frac{\min\{x,m-(-Rm+m)\}r_{\ell}}{mr_{\ell}}=\frac{Rm r_{\ell}}{m r_{\ell}}=R.
\]

Then, by Lemma \ref{lem:pmonotone}, we have for any $p(x) \leq \bar{g}(x;R)$, we have 
\[
\CP_o(p(x);(x,0)) \geq \CP_o(\bar{g}(x;R);(x,0)) = R\,,
\]
which is the desired result.

\textbf{Part 2: $\boldsymbol{\CP_u(p(x);(x,m))\ge R}$ if $\boldsymbol{p(x)\ge \underline g(x; R)}$.} 
By Definition of $\CP_u$ in Equation \eqref{eq:CP_under}, we have
\begin{align*}
    \CP_u(p(x);(x,m))&=\frac{\max\{p(x), \min\{m,m-x\} \} r_h+\min\{x,m-p(x)\}r_{\ell}}{mr_h+\min\{x, m-m\}r_{\ell}} \\&= \frac{\max\{p(x), m-x \} r_h+\min\{x,m-p(x)\}r_{\ell}}{mr_h}\,.
\end{align*}
We would like to show that if ${p(x)\ge \underline g(x; R)}$, we have ${\CP_u(p(x);(x,m))\ge R}$, where  
$\underline{g}(x;R) = \frac{m(R-r_{\ell}/r_h)}{1-r_{\ell}/r_h}$ for $x \in [0,\max\{m,\bar{x}\}]$. If $p(x) \geq m-x$,  we have
\[
\CP_u(p(x);(x,m)) = \frac{p(x) r_h+(m-p(x))r_{\ell}}{mr_h} \geq \frac{\frac{m(R-r_{\ell}/r_h)}{1-r_{\ell}/r_h} r_h+(m-\frac{m(R-r_{\ell}/r_h)}{1-r_{\ell}/r_h})r_{\ell}}{mr_h} = R\,,
\]
where the inequality holds because $p(x)\ge \underline{g}(x;R) = \frac{m(R-r_{\ell}/r_h)}{1-r_{\ell}/r_h}$. Otherwise, if $p(x) < m-x$, as $p(x) \geq \underline{g}(x;R)$, we have $\underline{g}(x;R)<m-x$. Then,
\[
\CP_u(p(x);(x,m)) = \frac{(m-x) r_h+xr_{\ell}}{mr_h} \geq \frac{p(x) r_h+(m-p(x))r_{\ell}}{mr_h} \geq R,
\]
where the first inequality is because $\frac{(m-x) r_h+xr_{\ell}}{mr_h}$ is decreasing in $x$ and $x<m-p(x)$. The last inequality, which is the desired result, is because \[p(x) r_h+(m-p(x))r_{\ell}=p(x)(r_h-r_{\ell})+mr_{\ell} \geq \underline{g}(x;R)(r_h-r_{\ell})+mr_{\ell}\,,\] and by some calculations, we have $\frac{\underline{g}(x;R)(r_h-r_{\ell})+mr_{\ell}}{mr_h}=R$.

\textbf{Second Direction.} 
So far, we have established that if $p(x)\in [\underline g(x;R), \bar g (x;R)]$, we have $\UCP(p(x);x)\ge R$ for any $(x, y)\in \region$. 
Next, we show that if $p(x)> \bar g(x;R)$ or $p(x)<\underline g(x;C)$, we have $\UCP(p(x);x)<R$.

If $p(x)>\bar{g}(x;R)$ for some $x \in [0,\max\{m,\bar x\}]$, by Equation \eqref{eq:defgR}, we have if $x \in [0,m]$, $\bar{g}(x;R) = -Rx+m \ge -x+m$ and hence   $p(x)+x > m$. If $x>m$, we have $\bar{g}(x;R)=\bar{g}(m;R)$.  Then, we have
\begin{align*}
\CP_o(p(x);(x,0)) &= \frac{0 \cdot r_h+\min\{x,m-p(x)\}r_{\ell}}{0 \cdot r_h+\min\{x, m-0\}r_{\ell}} =\frac{\min\{x,m-p(x)\}r_{\ell}}{\min\{x, m\}r_{\ell}}.
\end{align*}
If $x \leq m$, we have
\[
\frac{\min\{x,m-p(x)\}r_{\ell}}{\min\{x, m\}r_{\ell}} = \frac{\min\{x,m-p(x)\}r_{\ell}}{xr_{\ell}}< \frac{\left(m-\bar{g}(x;R)\right)r_{\ell}}{xr_{\ell}} =R,
\]
where the inequality is because $p(x)>\bar{g}(x;R)$. Otherwise, if $x > m$, we have
\[
\frac{\min\{x,m-p(x)\}r_{\ell}}{\min\{x, m\}r_{\ell}} = \frac{(m-p(x))r_{\ell}}{mr_{\ell}} = \frac{m-p(x)}{m} < \frac{m-\bar{g}(m;R)}{m}=R.
\]

If $p(x)<\underline{g}(x;R)$ for some $x \in [0,\max\{m,\bar x\}]$, as a valid PL function $p(x)$ is non-increasing, we have $p(\max\{m,\bar x\})<\underline{g}(x;R)$. Therefore,
\begin{align*}
\CP_u(p(\max\{m,\bar x\});(\max\{m,\bar x\},m)) &= \frac{p(\max\{m,\bar x\}) r_h+(m-p(\max\{m,\bar x\}))r_{\ell}}{mr_h} \\
&< \frac{\frac{m(R-r_{\ell}/r_h)}{1-r_{\ell}/r_h} r_h+(m-\frac{m(R-r_{\ell}/r_h)}{1-r_{\ell}/r_h})r_{\ell}}{mr_h} = R.
\end{align*}

\subsection{Proof of Theorem \ref{thm:optimalextension}}
Algorithm \ref{alg:right} presents an optimal solution to Problem \eqref{prob:right}. That is, at the optimal solution to Problem \eqref{prob:right}, denoted by $\pright(\cdot)$, we set $\pright(x)$ based on Equations \eqref{eq:barx_opt} and \eqref{eq:opt_x_beyond}. Furthermore, the optimal objective value of  Problem \eqref{prob:right}, $\RR$, is given in Equation \eqref{eq:RR}.

We split the proof into three cases, in each case, we first figure out the robust ratio under the PL function $\pright(\cdot)$ and check the feasibility and optimality of $\pright(\cdot)$: 
\begin{itemize}
    \item \textit{Case 1: $[\wll(\bar{x};C),\u(\bar{x};C)] \cap [\underline{g}(\bar{x}),\bar{g}(\bar{x})] \neq \emptyset$. } In this case, if $\wll(\bar{x};C)<\underline{g}(\bar{x})$, \[\pright(\bar{x})=\arg\min_{p \in [\wll(\bar{x};C),\u(\bar{x};C)]} \vert p - \underline{g}(\bar{x}) \vert = \underline{g}(\bar{x}).\]
    If $\wll(\bar{x};C) \geq \underline{g}(\bar{x})$, we have $\pright(\bar{x}) = \wll(\bar{x};C)$. If $\bar{x} \geq m$, then as the right problem is only defined on $\bar{x}$, by definition, we have $\underline{g}(\bar{x})=\bar{g}(\bar{x})=\frac{1-r_{\ell}/r_h}{2-r_{\ell}/r_h}m$. Then, $\pright(\bar x)=\frac{1-r_{\ell}/r_h}{2-r_{\ell}/r_h}m$, which is feasible. By Lemma \ref{lem:mout}, we have
    \begin{align*}
    \RR&=\min\left\{\CP_o(\pright(\bar{x});(\bar{x},0)),\CP_u(\pright(\bar{x});(\bar{x},m))\right\} \\&= \min\left\{\CP_o(\frac{1-r_{\ell}/r_h}{2-r_{\ell}/r_h}m;(m,0)),\CP_u(\frac{1-r_{\ell}/r_h}{2-r_{\ell}/r_h}m;(m,m))\right\}=\frac{1}{2-r_{\ell}/r_h},
    \end{align*}
    which matches the upper bound of the robust ratio in the absence of ML advice. Therefore, $\pright(\bar x)$ is optimal in this case.

    Otherwise, if $\bar{x} <m$, by Equation \eqref{eq:rightext}, $\pright(x)=\max\{-x+\bar{x}+\pright(\bar{x};C), \underline g(x)\} $ for $x \in [\bar{x},m]$. By definition, we have $\underline{g}(x) \leq \pright(x)$. For the part where  $\pright(x)=\underline{g}(x)$, we have $\pright(x) \leq \bar{g}(x)$ because by Equation \eqref{eq:defgR}, $\underline{g}(x) \leq \bar{g}(x)$ for any $x \in [0,m]$. Then, we check that $-x+\bar{x}+\pright(\bar{x};C) \leq \bar{g}(x)$ for $x \in [\bar{x},m]$. Notice that $-x+\bar{x}+\pright(\bar{x};C)$ is a line with slope $-1$ and by Equation \eqref{eq:defgR}, $\bar{g}(x)$ is a line with slope $-R \geq -1$. Moreover, $-\bar{x}+\bar{x}+\pright(\bar{x};C)=\pright(\bar{x};C)=\max\{\underline{g}(\bar{x}),\wll(\bar{x};C)\} \leq \bar{g}(x)$ since $[\wll(\bar{x};C),\u(\bar{x};C)] \cap [\underline{g}(\bar{x}),\bar{g}(\bar{x})] \neq \emptyset$. We have $\pright(x) \leq \bar{g}(x)$. By taking $R=\rho$ in Problem \eqref{prob:trans}, we can find $\pright(\cdot)$ is a feasible solution, and therefore, it achieves a robust ratio of at least $\rho$.
    By \cite{ball2009toward}, we know $\rho$ is the upper bound among all algorithms, and therefore, $\pright(\cdot)$ is optimal. In addition, notice that $\pright(m)=\underline{g}(m)$ and we can check $\CP_u(\pright(\bar{x});(m,m))=\rho$, and hence $\RR=\min\left\{\CP_o(\pright(\bar{x});(\bar{x},0)),\CP_u(\pright(\bar{x});(m,m))\right\}$.

    \item \textit{Case 2: $\u(\bar{x};C) < \underline{g}(\bar{x})$. } In this case, we first show $\pright(\cdot)$ achieves a robust ratio of $\RR = \CP_u(\pright(\bar x); (\max\{m,\bar x\},m))$ and $\CP_o(\pright(\bar x);(\bar{x},0)) \geq \CP_u(\pright(\bar x); (\max\{m,\bar x\},m))$. Then, we show $\pright(\cdot)$ is feasible, and finally, we show it is optimal among all PL functions. 

    In this case, $\pright(x)=\arg\min_{p \in [\wll(\bar{x};C),\u(\bar{x};C)]} \vert p - \underline{g}(\bar{x}) \vert = \u(\bar{x};C)$, and by definition, $\pright(x)=\pright(\bar{x})$ for $x \in [\bar{x},\max\{m,\bar x\}]$.
    By Lemmas \ref{lem:ymonotone} and \ref{lem:mout}, we know the worst case is achieved on $(x,0)$ or $(x,m)$ for some $x \in [\bar{x},\max\{m,\bar x\}]$; that is, \[\RR=\inf_{x \in [\bar{x},\max\{m,\bar x\}]}\min\{\CP_u(\pright(x);(x,m)),\CP_o(\pright(x);(x,0)) \}.\]   As we have $\pright(x) \leq m$, by Lemma \ref{lem:worstunderrob}, we have
    \[
    \CP_u(\pright(x);(x,m)) \geq \CP_u(\pright(\max\{m,\bar x\});(\max\{m,\bar x\},m)).
    \]
    As $\pright(\max\{m,\bar x\})=\pright(\bar{x})<\underline{g}(\bar{x})= \underline g(\max\{m,\bar x\}) = m(1-r_{\ell}/r_h)/(2-r_{\ell}/r_h)$, where the first inequality is because $\pright(\bar{x}) \leq \u(\bar{x};C) < \underline{g}(\bar{x})$,  by Lemma \ref{lem:pmonotone}, we have 
    \[
    \CP_u(\pright(\max\{m,\bar x\});(\max\{m,\bar x\},m)) < \CP_u(\underline{g}(\max\{m,\bar x\});(\max\{m,\bar x\},m)) = \rho.
    \]

As $\pright(\bar{x})<\underline{g}(\bar{x})$, we also have $\pright(x)<\bar{g}(\bar{x})$ for any $x \in [\bar{x},m]$. This is because by definition,  for any $x \in [\bar{x},m]$, $\underline g(x)\le \bar g(x)$. Then, by Lemma \ref{lem:pmonotone}, for $x \in [\bar{x},m]$, we have
 \[
\CP_o(\pright(x);(x,0)) \geq \CP_o(\bar{g}(x);(x,0)) = \rho > \CP_u(\pright(m);(m,m)),
\]
where the equality is because  $\bar{g}(x)=-\rho x+m$ and one can easily check $\CP_o(-\rho x +m;(x,0)) = \rho $ for any $x \in [0,m]$.

Therefore, the robust ratio of $\pright(x)$ for $x \in [\bar{x},m]$ is $\RR=\CP_u(\pright(m);(m,m))$, and by Lemma \ref{lem:trapezoid}, we have $\underline{g}(x;\RR) \leq \pright(x) \leq \bar{g}(x;\RR)$. Also, $\pright(\cdot)$ is a constant function and is valid. We obtain $\pright(\cdot)$ is feasible.  

Finally, we show that $\pright(\cdot)$ is optimal among all PL algorithms. We prove by contradiction. Suppose that a valid $p_1(x)$ can achieve a robust ratio greater than $\RR$. Then, we have
\[
\inf_{x \in [\bar{x},m]} \min\{\CP_o(p_1(x); (x, 0)), \CP_u(p_1(x); (x, m))\} > \CP_u(\pright(m);(m,m)), 
\]
which implies that $\CP_u(p_1(m);(m,m)) > \CP_u(\pright(m);(m,m))$. By Lemma \ref{lem:pmonotone}, we have $p_1(m) > \pright(m)$. As $p_1(\cdot)$ is valid, it is non-increasing, we have $p_1(\bar{x}) \geq p_1(m) > \pright(m) = \u(\bar{x};C)$, which is a contradiction because $p_1(\bar{x})>\u(\bar{x};C)$ means it is infeasible.

\item \textit{Case 3: $\wll(\bar{x};C) > \bar{g}(\bar{x})$. } In this case, $\pright(\bar{x})=\argmin_{p \in [\wll(\bar{x};C),\u(\bar{x};C)]} \vert p - \underline{g}(\bar{x}) \vert = \wll(\bar{x};C)$, and we have $\pright(x)=\bar{g}(x; \RR)$ for $x \in [\bar{x},\max\{m,\bar x\}]$ where $\RR=\CP_o(\wll(\bar{x};C);(\bar{x},0))$. Observe that $\pright(\cdot)$ is continuous at $\bar x$ because $\pright(\bar x)=\bar{g}(\bar x; \RR)=\bar{g}(\bar x;\CP_o(\wll(\bar{x};C);(\bar{x},0)))$, and if $\bar x \leq m$, we have $\bar{g}(\bar x;\CP_o(\wll(\bar{x};C);(\bar{x},0)))=-\frac{m-\wll(\bar x;C)}{\bar x}\bar x+m=\wll(\bar x;C)$. Similarly, if $\bar x >m$, we have $\bar{g}(\bar x;\CP_o(\wll(\bar{x};C);(\bar{x},0)))=-\frac{m-\wll(\bar x;C)}{m}m+m=\wll(\bar x;C)$.  
By Lemmas \ref{lem:ymonotone} and \ref{lem:mout}, for $x \in [\bar{x},\max\{m,\bar x\}]$, the worst over- and under-protected points are $(x,0)$ and $(x,m)$, respectively; that is, $\RR=\inf_{x \in [\bar{x},\max\{m,\bar x\}]}\min\{\CP_o(\pright(x);(x,0)),\CP_u(\pright(x);(x,m)) \}$.  
So, we first claim that $\pright(\cdot)$ achieves a robust ratio of $\RR=\CP_o(\wll(\bar{x};C);(\bar{x},0))$ by showing that for any $x \in [\bar{x},\max\{m,\bar x\}]$, $\CP_o(\pright(x);(x,0)) \geq \RR$ and $\CP_u(\pright(x);(x,m)) \geq \RR$. Then, we show its feasibility and optimality. 

By Equation \eqref{eq:CP_over}, we have for any $x \in [\bar{x},\max\{m,\bar x\}]$,
\begin{align*}
\CP_o(\pright(x);(x,0)) &= \frac{0 \cdot r_h+\min\{x,m-\pright(x)\}r_{\ell}}{0 \cdot r_h+\min\{x, m-0\}r_{\ell}} = \frac{\min\{x,m-\pright(x)\}r_{\ell}}{\min\{x, m\}r_{\ell}}. 
\end{align*}
If $x \leq m$, we have 
\begin{align*}
\frac{\min\{x,m-\pright(x)\}r_{\ell}}{\min\{x, m\}r_{\ell}} &= \frac{(m-\pright(x))r_{\ell}}{xr_{\ell}} \\&= \frac{(m-\bar{g}(x; \RR))r_{\ell}}{xr_{\ell}} \\&= \frac{m-\bar{g}(x; \RR)}{x} = \RR,
\end{align*}
where the second equality is because $\pright(x)=\bar{g}(x; \RR)=-\RR x+m$ and $m-(-\RR x+m)=\RR x \leq x$. If $x>m$, we have
\begin{align*}
\frac{\min\{x,m-\pright(x)\}r_{\ell}}{\min\{x, m\}r_{\ell}} &= \frac{(m-\pright(x))r_{\ell}}{mr_{\ell}} \\&= \frac{(m-\bar{g}(x; \RR))r_{\ell}}{mr_{\ell}} \\&= \frac{m-\bar{g}(m; \RR)}{m} = \RR.
\end{align*}

For any $x \in [\bar{x},\max\{m,\bar x\}]$, by Lemma \ref{lem:worstunderrob}, we have
\[
\CP_u(\pright(x);(x,m)) \geq \CP_u(\pright(\max\{m,\bar x\});(\max\{m,\bar x\},m)).
\]
As we have $\CP_u(\underline{g}(\max\{m,\bar x\});(\max\{m,\bar x\},m))=\rho$, and $\pright(\max\{m,\bar x\})=\bar{g}(\max\{m,\bar x\}) \geq \underline{g}(\max\{m,\bar x\})$, by Lemma \ref{lem:pmonotone}, we have 
\[
\CP_u(\pright(\max\{m,\bar x\});(\max\{m,\bar x\},m)) \geq \rho \geq \RR.
\]

Therefore, the robust ratio of $\pright(x)$ is $\RR$. As $\pright(x)=\bar{g}(x;\RR)$, we have $\underline{g}(x;\RR) \leq \pright(x) \leq \bar{g}(x;\RR)$. Also, since $\pright(\bar x)=\tilde{\bar x;C}$, we have $\tilde{\bar x;C} \leq \pright(\bar x) \leq \u(\bar x;C)$. In addition, $\pright(x)$ has slope $-\RR \leq -1$, which means it is valid. We have $\pright(\cdot)$ is a feasible solution.

Finally, we show that $\pright(\cdot)$ is optimal among all PL algorithms. We prove by contradiction. Suppose that a valid $p_1(x)$ can achieve a robust ratio greater than $\RR$. Then, we have
\[
\inf_{x \in [\bar{x},\max\{m,\bar x\}]} \min\{\CP_o(p_1(x); (x, 0)), \CP_u(p_1(x); (x, m))\} > \CP_o(\pright(\bar{x});(\bar{x},0)), 
\]
which implies that $\CP_o(p_1(\bar{x});(\bar{x},0))>\CP_o(\pright(\bar{x});(\bar{x},0))$. By Lemma \ref{lem:pmonotone}, we have $p_1(\bar{x})<\pright(\bar{x})$. However, as we have $\pright(\bar{x})=\wll(\bar{x};C)$, we obtain $p_1(\bar{x})<\wll(\bar{x};C)$, which means $p_1(\cdot)$ is infeasible and forms a contradiction.

\end{itemize}

\Halmos

\subsection{Proof of Theorem \ref{thm:optimalrobCconsistent}}
We first show that $(\pleft(\cdot), \RL)$ is a feasible solution to Problem \eqref{prob:left}, where  $\pleft(x) =\max\{\wll(x;C),\pright(\bar{x})\}\,, x\in [0, \bar x]$ and  $\RL = \min \{\CP_u(\pleft(\bar{x});(\bar{x},m))),\inf_{x \in [0,\bar{x}]}\CP_o(\pleft(x);(x,0)) \}$. 
Under the PL $\pleft(\cdot)$, we first note that by Lemma \ref{lem:ymonotone}, we only need to consider the points $(x, 0)$ and $(x, m)$ for any $x\in [0, \bar x]$. That is, 
\[\RL = \min\left\{\inf_{x \in [0,\bar{x}]}\CP_o(\pleft(x);(x,0)), \inf_{x \in [0,\bar{x}]}\CP_u(\pleft(x);(x,m))\right\}\,,\]
By Lemma \ref{lem:worstunderrob}, we then have 
\[
\CP_u(\pleft(x);(x,m)) \geq \CP_u(\pleft(\bar{x});(\bar{x},m)) \quad \Rightarrow  \quad \inf_{x \in [0,\bar{x}]}\CP_u(\pleft(x);(x,m)) = \CP_u(\pleft(\bar x);(\bar x,m)).
\]
This implies that $\RL = \min \{\CP_u(\pleft(\bar x);(\bar x,m)),\inf_{x \in [0,\bar{x}]}\CP_o(\pleft(x);(x,0)) \}$, as desired.  

$\RL$ is feasible because the range of compatible ratio is $[0,1]$. To show that $\pleft(x)$ is feasible, first, as we have shown it achieves a robust ratio of $\RL$, by Lemma \ref{lem:trapezoid}, we have $\underline{g}(x;\RL) \leq \pleft(x) \leq \bar{g}(x;\RL)$. Second, we show that $\pleft(\cdot)$ is a valid PL function and $\pleft(x)\in [\wll(x;C), \u(x;C)]$. 
Observe that $\wll(x;C)$ is a valid PL function by definition, and hence  we have $\pleft(x)$ is also valid. 
Third, we show that $\wll{x;C} \leq \pleft(x) \leq u(x;C)$ for $x \in [0,\bar x]$. As $\pleft(x) =\max\{\wll(x;C),\pright(\bar{x})\}$, we have $\wll(x;C) \leq \pleft(x)$ for any $x \in [0,\bar{x}]$. In addition, we have $\pleft(\bar{x}) =\pright(\bar x)\leq \u(\bar{x};C)$, and by Lemma \ref{lem:property_u_l}, we have $\u({x};C)$ is a non-increasing function in $x$. This implies that $\u(x;C) \geq \pleft(\bar{x})$ for any $x \in [0,\bar{x}]$. Also, as $C$ is assumed to be less than $\cstar$, by Lemma  \ref{lem:feasibleregionrobust_2}, we have $\wll(x;C) \leq \u(x;C)$ for $x \in [0,\bar{x}]$. Therefore, we have $\pleft(x)=\max\{\pleft(\bar{x}),\wll(x;C) \} \leq \u(x;C)$, which is the desired result.% $\pleft(x)\in [\ll(x;C), \u(x;C)]$. 

Second, we show that $\pleft(\cdot)$ is optimal. 
To do so, we  argue that  (i) by Lemma \ref{lem:trapezoid}, we have 
\[\underline g(x; \RL)\le \pleft(x)\le \bar g(x; \RL)\,,\]
and (ii) there does not exist any other valid PL function that achieves a higher robust ratio than $\RL$ while satisfying the consistency lower and upper bounds.

Recall that $\RL = \min \{\CP_u(\pleft(\bar x);(\bar x,m)),\inf_{x \in [0,\bar{x}]}\CP_o(\pleft(x);(x,0)) \}$. As Problem \eqref{prob:left} restricts the value of $p(\bar{x})$, we have any PL algorithm has the same worst under-protected ratio, i.e. $\CP_u(\pleft(\bar x);(\bar x,m))$. Then, we show that any PL algorithm cannot get a larger worst over-protected ratio. For over-protected case, we let
\[
\widetilde \RL=\inf_{x \in [0,\bar{x}]}\CP_o(\pleft(x);(x,0)),
\]
and let the infimum is achieved on $x_1$, i.e. $\CP_o(\pleft(x_1);(x_1,0))=\widetilde  \RL$. If there exists a PL algorithm $p(x)$ such that $\inf_{x \in [0,\bar{x}]}\CP_o(p(x);(x,0))>\RL$, then $\CP_o(p(x_1);(x_1,0))>\CP_o(\pleft(x_1);(x_1,0))$. By Lemma \ref{lem:pmonotone}, we have $p(x_1)<\pleft(x_1)$. This implies that either $p(x_1)<\pright(\bar x)$ or $p(x_1)<\tilde{\ell}(x;C)$. However, if $p(x_1)<\pright(\bar x)$, as $p(\cdot)$ is non-increasing, we have $p(\bar x)<\pright(\bar x)$, which implies $p(\cdot)$ is infeasible. If $p(x_1) < \tilde{\ell}(x;C)$, this immediately contradicts to $\tilde{\ell}(x;C) \leq p(x) \leq u(x;C)$. Therefore, such $p(x)$ does not exist.

\Halmos

\section{Proof of Theorem \ref{thm:optimal_trans}} \label{sec:proofmain1}
The proof is naturally divided into three parts. 

\subsection{Result 1: $\Rstar=\min\{\RR , \RL\}$}
First, we show that $\Rstar=\min\{\RR , \RL\}$, where Theorems \ref{thm:optimalextension} and  \ref{thm:optimalrobCconsistent} show that  \[\RL = \min \{\CP_u(\pleft(\bar x);(\bar x,m)),\inf_{x \in [0,\bar{x}]}\CP_o(\pleft(x);(x,0)) \}\,.\]
and 
\[
    \RR=\min\left\{\CP_o(\pright(\bar{x});(\bar{x},0)),\CP_u(\pright(\bar{x});(\max\{m,\bar x\},m))\right\}\,.
\]
Let us denote $\widetilde \RL = \inf_{x \in [0,\bar{x}]}\CP_o(\pleft(x);(x,0)) $, and note that  by Lemma \ref{lem:worstunderrob}, we have $\CP_u(\pright(\bar x);(\bar x,m)) \geq \CP_u(\pright(\bar x);(\max\{m,\bar x\},m))$, where by construction, we have $\CP_u(\pright(\bar x);(\bar x,m))= \CP_u(\pleft(\bar x);(\bar x,m))$. Therefore, we have
\begin{align*}
&\min\{\RR , \RL\} \\&= \min\Big\{\CP_o(\pright(\bar{x});(\bar{x},0)),\CP_u(\pright(\bar{x});(\max\{m,\bar x\},m)), \CP_u(\pleft(\bar x);(\bar x,m)),\inf_{x \in [0,\bar{x}]}\CP_o(\pleft(x);(x,0)) \Big\} \\&= \min\Big\{\CP_o(\pright(\bar{x});(\bar{x},0)),\CP_u(\pright(\bar{x});(\max\{m,\bar x\},m)), \CP_u(\pright(\bar x);(\bar x,m)),\inf_{x \in [0,\bar{x}]}\CP_o(\pleft(x);(x,0)) \Big\} \\&= \min\Big\{\CP_o(\pright(\bar{x});(\bar{x},0)),\CP_u(\pright(\bar{x});(\max\{m,\bar x\},m)), \inf_{x \in [0,\bar{x}]}\CP_o(\pleft(x);(x,0)) \Big\} = \min\{\RR, \widetilde \RL\},
\end{align*}
where the second equality is because $\pright(\bar{x})=\pleft(\bar{x})$, and the third equality is because $\CP_u(\pright(\bar x);(\bar x,m)) \geq \CP_u(\pright(\bar x);(\max\{m,\bar x\},m))$.

Therefore, showing $\Rstar=\min\{\RR,\RL\}$ is equivalent to show that $\Rstar = \min\{\RR ,\widetilde \RL\}$.
As Theorems \ref{thm:optimalextension} and \ref{thm:optimalrobCconsistent} show, if we set $p(\bar{x})$ optimally, our $\pc(\cdot)$, presented in Algorithm \ref{alg:trans}, achieves an optimal robust ratio. Therefore, it suffices to show that for any valid $\widehat p(x)$ such that $\widehat p(\bar{x}) \neq \pright(\bar{x})$, we have $\comp(p(\cdot)) \leq \comp(\pc(\cdot))$. Here, with a slight abuse of notation, $\comp(p(\cdot))$ is the robust ratio of a PLA with PL of $p(\cdot)$. 
We split the analysis into two cases based on   the value of $\widetilde \RL$ and $\RR$, where in case 1, we have $\RR \leq \widetilde \RL$, and in case 2, we have $\RR > \widetilde \RL$.

\begin{itemize}
    \item $\RR \leq \widetilde \RL$. In this case, $\comp(\pc(\cdot)) = \min\{\widetilde \RL, \RR\} =\RR$. By Theorem \ref{thm:optimalextension}, we know that no PL can achieve a robust ratio greater than $\RR$ for $x \in [\bar{x},\max\{m,\bar x\}]$. Therefore, we have $\comp(\widehat p(\cdot)) \leq \comp(\pc(\cdot)) = \RR$, which is the desired result.

    \item $\widetilde \RL < \RR$. As is shown in Theorem \ref{thm:optimalrobCconsistent}, by fixing $\pc(\bar{x})=\pright(\bar{x})$, no PL can achieve a robust ratio greater than $\widetilde \RL$. Then, in this part, we show that if a valid and feasible PL function $\hat{p}(x)$ for $x \in [0,\bar{x}]$ does not have restriction on $\bar{x}$, it can still not achieve a robust ratio greater than $\widetilde \RL$. To show this, we define $\widehat{x} \in [0,\bar{x}]$ as such that $\wll(\widehat{x};C)=\pright(\bar{x})$. We start with showing that such $\widehat{x}$ always exists, and then we show that $\pleft(\cdot)$ achieves $\widetilde \RL$ in $[0,\widehat{x}]$ and no PL function $\hat{p}(x)$ can outperform $\widetilde \RL$ in $[0,\widehat{x}]$. 

    To show the existence, we use a contradiction argument. Contrary to our claim, suppose that there does not exist any $\widehat{x}\in [0, \bar x]$ such that $\wll(\widehat{x};C)=\pright(\bar{x})$.  
Observe that $\pright(\bar{x})$ is a constant and is greater than $\wll(\bar{x};C)$ by feasibility of $\pright(\cdot)$. Further, note that  by Lemma \ref{lem:propertywll}, $\wll(x;C)$ is non-increasing in $x$. Then, when $\widehat x$ does not exist, we must have $\pright(\bar{x}) > \wll(x;C)$ for any $x \in [0,\bar{x}]$. By Theorem \ref{thm:optimalrobCconsistent}, in this case, $\pleft(x)=\pright(\bar{x})$ is a constant function.  (Recall that $\pleft(x)=\max\{\wll(x;C),\pright(\bar{x})\}$ for any $x\in [0, \bar x]$.) By Lemma \ref{lem:worstkeep}, we have
    \[
    \widetilde \RL = \inf_{x \in [0,\bar{x}]}\CP_o(\pright(\bar{x});(x,0)) = \CP_o(\pright(\bar{x});(\bar{x},0)).
    \]
    However, as $\CP_o(\pright(\bar{x});(\bar{x},0)) \geq \min\{\CP_o(\pright(\bar{x});(\bar{x},0)),\CP_u(\pright(\bar{x});(\max\{m,\bar x\},m))\}$, we have $\widetilde \RL \geq \RR$, which is a contradiction. Therefore, such $\widehat{x}$ exists.
    
    Then, $\pleft(x)=\pright(\bar{x})$ for $x \in [\widehat{x},\bar{x}]$ and $\pleft(x)=\wll(x;C)$ for $x \in [0,\widehat{x}]$. By definition of $\widetilde \RL$, we have
    \[
    \widetilde \RL = \min\{\inf_{x \in [0,\widehat{x}]}\CP_o(\wll(x;C);(x,0)),\inf_{x \in [\widehat{x},\bar{x}]}\CP_o(\pright(\bar{x});(x,0))  \}.
    \]
    By Lemma \ref{lem:worstkeep}, we have
    \[
    \inf_{x \in [\widehat{x},\bar{x}]}\CP_o(\pright(\bar{x});(x,0)) = \CP_o(\pright(\bar{x});(\bar{x},0)) \geq \RR.
    \]
    Given that $\widetilde \RL < \RR$, we have
    \[
    \widetilde \RL = \inf_{x \in [0,\widehat{x}]}\CP_o(\wll(x;C);(x,0)).
    \]

    Finally, we show that no valid and feasible PL $\hat{p}(x)$ can outperform $\widetilde \RL$ in $[0,\widehat{x}]$. Suppose that $x_1=\argmin_{x \in [0,\widehat{x}]}\CP_o(\wll(x;C);(x,0))$, which means that $\CP_o(\wll(x_1;C);(x_1,0))=\widetilde \RL$. If $\hat{p}(x)$ outperforms $\widetilde \RL$, we have
    \[
    \inf_{x \in [0,\widehat{x}]}\CP_o(\hat{p}(x);(x,0)) > \widetilde \RL,
    \]
    which implies that $\CP_o(\hat{p}(x_1);(x_1,0))>\widetilde \RL$. By Lemma \ref{lem:pmonotone}, we have $\hat{p}(x_1)<\wll(x_1;C)$, which shows that $\hat{p}$ is not a feasible PL function and forms a contradiction.

\end{itemize}

\Halmos  
\subsection{Result 2: $\pc$ is an Optimal Solution}
Second, it is trivial that Algorithm \ref{alg:trans} presents an optimal solution to Problem \eqref{prob:trans}. The reason is by Theorem \ref{thm:optimalextension}, we have $\pright(x)$ achieves $\RR$ for Problem \eqref{prob:right} and by Theorem \ref{thm:optimalrobCconsistent}, we have $\pleft(x)$ achieves $\RL$ for Problem \eqref{prob:left}. By Equation \eqref{eq:opt_pc}, we know that $\comp(\pc(\cdot))=\min\{\RR,\RL\}$, and we just showed that $\min\{\RR,\RL\}=\Rstar$. The remaining thing is to show that $\pc(\cdot)$ is feasible and valid. As we have shown in Theorems \ref{thm:optimalextension} and \ref{thm:optimalrobCconsistent} that $\pright(\cdot)$   and $\pleft(\cdot)$ are both feasible and valid for any $x\ge \bar x$ and $x\in[0, \bar x]$, respectively.  Further, $\pright(\bar{x})=\pleft(\bar{x})$, which means that $\pc(\cdot)$ is continuous. Therefore, $\pc(\cdot)$ is feasible. 

\subsection{Result 3: No Algorithm Can Outperform $\pc$} \label{subsec:noalgp}

Here, we show that a PLA with  the PL function of $\pc(\cdot)$ is an optimal solution to Problem \eqref{prob:original}. That is, among any online algorithms $\Pi$, the aforementioned algorithm maximizes the robust ratio while ensuring its consistent ratio is at least $C$.  Recall that we just showed  
\begin{equation} \label{eq:conditionmin}
\Rstar=\min\{\RR,\RL\} = \min\{\CP_u(\pc(\max\{m,\bar x\});(\max\{m,\bar x\},m)),\CP_o(\pc(\bar{x}),(\bar{x},0)), \inf_{x \in [0,\bar{x}]}\CP_o(\pc(x);(x,0)) \}.
\end{equation}
Then, we split the proof into three parts, where in each parts, we discuss each term in Equation \eqref{eq:conditionmin} is the minimum value.

\textit{Part 1: $\Rstar= \CP_u(\pc(\max\{m,\bar x\});(\max\{m,\bar x\},m))$.} Here, we show that no deterministic or randomized algorithm can achieve a robust ratio more than $\CP_u(\pc(\max\{m,\bar x\});(\max\{m,\bar x\},m))$. 
We define two (ordered) input sequences: In the first input  sequence, $I_1$, 
 $\xmin \le \bar x$ low-reward requests arrive first, followed with $\underline{h}(\xmin)$ high-reward requests. In the second input  sequence, $I_2$, 
 $\max\{m,\bar x\}$ low-reward requests arrive first,  followed  $m$ high-reward requests. Before receiving $\xmin$ low reward requests, any deterministic or randomized algorithm cannot differentiate the two input sequences and has to decide to accept how many low-reward requests in expectation. If there exists a deterministic or randomized algorithm $\mathcal{A}$, which can achieve a consistent ratio of at least $C$, and a robust ratio higher than $\CP_u(\pc(\max\{m,\bar x\});(\max\{m,\bar x\},m))$, it should satisfy
 \[
 \frac{\mathbf{E}[\text{Rew}(\mathcal{A},I_1)]}{\textsc{opt}(I_1)} \geq C,
 \]
 Let the expected total amount of high-reward, low-reward requests being accepted by $\mathcal{A}$ be $h(\mathcal{A},I_1)$, $\ell(\mathcal{A},I_1)$, respectively. Then, 
  \[
  \frac{\mathbf{E}[\text{Rew}(\mathcal{A},I_1)]}{\textsc{opt}(I_1)} = \frac{h(\mathcal{A},I_1)r_h+\ell(\mathcal{A},I_1)r_{\ell}}{\underline{h}(\xmin)r_h+\min\{\xmin,m-\underline{h}(\xmin)\}r_{\ell}} \geq C.
  \]
  By the definition of $\u(\bar x_u;C)$, we have $\u(\bar x_u;C) = \sup\{p: \CP_o(p;(\bar x_u,\underline{h}(\xmin) ))=C \}$, and by Equation \eqref{eq:CP_over}, $\frac{\min\{\underline{h}(\xmin),m\}r_h+\min\{\xmin,m-\u(\bar x_u;C)\}r_{\ell}}{\min\{\underline{h}(\xmin),m\}r_h+\min\{\xmin,(m-\underline{h}(\xmin))^{+}\}r_{\ell}} = C$, and by Lemma \ref{lem:specialpoints}, we have $\frac{\min\{\underline{h}(\xmin),m\}r_h+(m-\u(\bar x_u;C))r_{\ell}}{\min\{\underline{h}(\xmin),m\}r_h+\min\{\xmin,(m-\underline{h}(\xmin))^{+}\}r_{\ell}} = C$. Therefore, we have
  \[
  \frac{h(\mathcal{A},I_1)r_h+\ell(\mathcal{A},I_1)r_{\ell}}{\min\{\underline{h}(\xmin),m\}r_h+\min\{\xmin,(m-\underline{h}(\xmin))^{+}\}r_{\ell}} \geq \frac{\min\{\underline{h}(\xmin),m\}r_h+(m-\u(\bar x_u;C))r_{\ell}}{\min\{\underline{h}(\xmin),m\}r_h+\min\{\xmin,(m-\underline{h}(\xmin))^{+}\}r_{\ell}},
  \]
  as $h(\mathcal{A},I_1) \leq \min\{\underline{h}(\xmin),m\}$, we have $\ell(\mathcal{A},I_1) \geq m-\u(\xmin;C)$, which implies that we should accept at least $m-\u(\xmin;C)$ low-reward requests in expectation.

 However, to achieve a robust ratio higher than $\CP_u(\pc(\max\{m,\bar x\});(\max\{m,\bar x\},m))$ for sequence $I_2$, 
 we have 
 \[
 \frac{\mathbf{E}[\text{Rew}(\mathcal{A},I_2)]}{\textsc{opt}(I_2)} > \CP_u(\pc(\max\{m,\bar x\});(\max\{m,\bar x\},m)).
 \]
 By Equation \eqref{eq:CP_under}, we have 
 \begin{align*}
 \frac{\mathbf{E}[\text{Rew}(\mathcal{A},I_2)]}{\textsc{opt}(I_2)}&=\frac{h(\mathcal{A},I_2)r_h+\ell(\mathcal{A},I_2)r_{\ell}}{mr_h}>\CP_u(\pc(\max\{m,\bar x\});(\max\{m,\bar x\},m))\\&=\frac{\pc(\max\{m,\bar x\})r_h+(m-\pc(\max\{m,\bar x\}))r_{\ell}}{mr_h},
 \end{align*}
 as $h(\mathcal{A},I_2) \leq (m-\ell(\mathcal{A}),I_2)$, we have $\ell(\mathcal{A},I_2)<m-\pc(\max\{m,\bar x\})$, which implies that we should accept less than $m-\pc(\max\{m,\bar x\})$ low-reward requests in expectation. However, when $\Rstar=\CP_u(\pc(\max\{m,\bar x\});(\max\{m,\bar x\},m))$, Part 1 of the proof of Theorem \ref{thm:optimalextension} shows that $\pc(\max\{m,\bar x\})=\pright(\max\{m,\bar x\})=\u(\bar{x};C)$. By Equation \eqref{eq:u}, we have $\u(\bar{x};C)=\u(\xmin;C)$. Therefore, under sequence $I_2$, we have $\ell(\mathcal{A},I_2)<m-\u(\xmin;C)$, which is a contradiction because once we observed $\xmin$ low-reward requests, we have already accept at least $m-\u(\xmin;C)$ of them in expectation.

\Halmos
\endproof

\textit{Part 2: $\Rstar = \CP_o(\pc(\bar{x}),(\bar{x},0))$.} To show the result, we split the analysis into two sub parts based on the value of  $\pc(\bar{x})$ and $\wll(\nex;C)$. First, we show that  when $\ll(\nex;C)>\pc(\bar{x})$, no deterministic or randomized algorithm can achieve a robust ratio more than $\CP_o(\pc(\bar{x}),(\bar{x},0))$. Here, we recall that $\nex=\sup\{x \in [x_H,\bar{x}]: \frac{\partial \ll(x^{-};C)}{\partial x} \le -1 \}$. Then, we show the same result when $\ll(\nex;C)\le \pc(\bar{x})$.

For the case where $\ll(\nex;C)>\pc(\bar{x})$,   we 
 again define two (ordered) input sequences: In the first input  sequence, $I_1$, 
 $\nex$ low-reward requests arrive first, followed with $\bar{h}(\nex)$ high-reward requests. In the second input  sequence, $I_2$,  $\bar{x}$ low-reward requests arrive first,  followed  by $0$ high-reward requests. Before receiving $\nex$ low reward requests, any deterministic or randomized algorithm cannot differentiate the two input sequences and has to decide to accept how many low-reward requests in expectation. If there exists a deterministic or randomized algorithm $\mathcal{A}$, which can achieve a consistent ratio of at least $C$, and a robust ratio more than $\CP_o(\pc(\bar{x}),(\bar{x},0))$, it should satisfy
 \[
 \frac{\mathbf{E}[\text{Rew}(\mathcal{A},I_1)]}{\textsc{opt}(I_1)} \geq C,
 \]
Let the expected total amount of high-reward, low-reward requests being accepted by $\mathcal{A}$ be $h(\mathcal{A},I_1)$, $\ell(\mathcal{A},I_1)$, respectively. Then, 
  \[
  \frac{\mathbf{E}[\text{Rew}(\mathcal{A},I_1)]}{\textsc{opt}(I_1)} = \frac{h(\mathcal{A},I_1)r_h+\ell(\mathcal{A},I_1)r_{\ell}}{\bar{h}(\nex)r_h+\min\{\nex,m-\bar{h}(\nex)\}r_{\ell}} \geq C.
  \]
  By Equation \eqref{eq:l}, we have $\CP_u(\ll(\nex;C);(\nex,\bar{h}(\nex))=C$. By Equation \eqref{eq:CP_under}, we have
  \[
  \CP_u(\ll(\nex;C);(\nex,\bar{h}(\nex)) = \frac{\max\{\ll(\nex;C), \min\{\bar{h}(\nex),m-\nex\} \} r_h+\min\{\nex,m-\ll(\nex;C)\}r_{\ell}}{\bar{h}(\nex)r_h+\min\{\xmin,m-\bar{h}(\nex)\}r_{\ell}}.
  \]
  By Lemma \ref{lem:specialpoints}, we have $\max\{\ll(\nex;C), \min\{\bar{h}(\nex),m-\nex\} \} = \ll(\nex;C)$ and $\min\{\nex,m-\ll(\nex;C)\}=m-\ll(\nex;C)$. Therefore, we have
  \[
  \CP_u(\ll(\nex;C);(\nex,\bar{h}(\nex)) = \frac{\ll(\nex;C) r_h+(m-\ll(\nex;C))r_{\ell}}{\bar{h}(\nex)r_h+\min\{\xmin,m-\bar{h}(\nex)\}r_{\ell}}=C,
  \]
  which implies that
  \[
  \frac{h(\mathcal{A},I_1)r_h+\ell(\mathcal{A},I_1)r_{\ell}}{\bar{h}(\nex)r_h+\min\{\xmin,m-\bar{h}(\nex)\}r_{\ell}} \geq \frac{\ll(\nex;C) r_h+(m-\ll(\nex;C))r_{\ell}}{\bar{h}(\nex)r_h+\min\{\xmin,m-\bar{h}(\nex)\}r_{\ell}}.
  \]
  As $h(\mathcal{A},I_1) \leq m-\ell(\mathcal{A},I_1)$, we have $\ell(\mathcal{A},I_1) \leq m-\ll(\nex;C)$, which implies that it should accept no more than $m-\ll(\nex;C)$ low-reward requests in expectation. 
  However, to achieve a robust ratio more than $\CP_o(\pc(\bar{x}),(\bar{x},0))$ for sequence $I_2$, 
 we have 
 \[
 \frac{\mathbf{E}[\text{Rew}(\mathcal{A},I_2)]}{\textsc{opt}(I_2)} > \CP_o(\pc(\bar{x}),(\bar{x},0)).
 \]
 Let the expected total amount of low-reward requests being accepted under $I_2$ by $\mathcal{A}$ be $\ell(\mathcal{A},I_2)$, respectively. Then, 
  \[
  \frac{\mathbf{E}[\text{Rew}(\mathcal{A},I_1)]}{\textsc{opt}(I_1)} = \frac{\ell(\mathcal{A})}{\bar{x}} > \CP_o(\pc(\bar{x}),(\bar{x},0)).
  \]
  By Equation \eqref{eq:CP_over}, we have 
  \[
  \CP_o(\pc(\bar{x}),(\bar{x},0)) = \frac{\min\{\bar{x},m-\pc(\bar{x}) \}}{\bar{x}}.
  \]
  If $\bar{x} \leq m-\pc(\bar{x})$, we have $\CP_o(\pc(\bar{x}),(\bar{x},0)) =1 $, which means no algorithm can achieve a higher robust ratio. Otherwise, we have
  \[
  \CP_o(\pc(\bar{x}),(\bar{x},0)) = \frac{m-\pc(\bar{x})}{\bar{x}},
  \]
  and we have
  \[
   \frac{\mathbf{E}[\text{Rew}(\mathcal{A},I_1)]}{\textsc{opt}(I_1)} = \frac{\ell(\mathcal{A},I_2)}{\bar{x}} > \frac{m-\pc(\bar{x})}{\bar{x}},
  \]
  which implies that $\ell(\mathcal{A})>m-\pc(\bar{x})$, and
  we should accept more than $m-\pc(\bar{x})$ low-reward requests in expectation. However, when $\Rstar=\CP_o(\pc(\bar{x}),(\bar{x},0))$, by the proof of  Theorem \ref{thm:optimalextension}, we have in this case $\pc(\bar{x})=\pright(\bar{x})=\wll(\bar{x};C)$.  
  However, given that $\ll(\nex;C)>\pc(\bar{x})=\wll(\bar{x};C)$, we have $\nex<\bar{x}$. Otherwise, $\ll(\nex;C)=\ll(\bar{x};C)=\wll(\bar{x};C)$. Between $\nex$ and $\bar{x}$, there are at most $\bar{x}-\nex$ low-reward requests arriving, and any algorithm can accept at most $\bar{x}-\nex$ low-reward requests. But we accept no more than $m-\ll(\nex;C)$ low-reward requests in expectation under $I_1$ and more than $m-\pc(\bar{x})$ low-reward requests in expectation under $I_2$, and since $\wll(x;C)$ is a line with slope $-1$ for $x \geq \nex$, we have  $m-\pc(\bar{x})-(m-\ll(\nex;C))=\wll(\nex;C)-\wll(\bar{x};C) =\bar{x}-\nex$, which is a contradiction.

 Now, let us consider the case where $\ll(\nex;C)\le \pc(\bar{x})$, where we recall that here by assumption $\Rstar = \CP_o(\pc(\bar{x}),(\bar{x},0))$. 
By the proof of Theorem \ref{thm:optimalextension}, when $\RR=\CP_o(\pc(\bar{x}),(\bar{x},0))$, we have $\pc(\bar{x})=\pright(\bar{x})=\wll(\bar{x};C)$. In addition, we have $\ll(\nex;C)=\wll(\nex;C)$, and $\ll(\nex;C)\le \pc(\bar{x})$ is equivalent to $\wll(\nex;C)\le \wll(\bar{x};C)$. Due to $\wll(\cdot;C)$ is non-increasing, we have $\nex=\bar{x}$ in this case.

To show the result, we again  define two (ordered) input sequences: In the first input  sequence, $I_1$, 
 $\bar{x}$ low-reward requests arrive first, followed by $\bar{h}(\bar{x})$ high-reward requests. In the second input  sequence, $I_2$, 
 $\bar{x}$ low-reward requests arrive first,  followed  by  $0$ high-reward requests. If there exists a deterministic or randomized algorithm $\mathcal{A}$, which can achieve a consistent ratio of at least $C$, and a robust ratio more than $\CP_o(\pc(\bar{x}),(\bar{x},0))$, it should satisfy
 \[
 \frac{\mathbf{E}[\text{Rew}(\mathcal{A},I_1)]}{\textsc{opt}(I_1)} \geq C.
 \]
 Let the expected total amount of high-reward, low-reward requests being accepted by $\mathcal{A}$ be $h(\mathcal{A},I_1)$, $\ell(\mathcal{A},I_1)$, respectively. Then, 
  \[
  \frac{\mathbf{E}[\text{Rew}(\mathcal{A},I_1)]}{\textsc{opt}(I_1)} = \frac{h(\mathcal{A},I_1)r_h+\ell(\mathcal{A},I_1)r_{\ell}}{\bar{h}(\bar{x})r_h+\min\{\bar{x},m-\bar{h}(\bar{x})\}r_{\ell}} \geq C.
  \]
  By Equation \eqref{eq:ll}, we have $\CP_u(\ll(\bar{x};C);(\bar{x},\bar{h}(\bar{x})))=C$. By Equation \eqref{eq:CP_under},
  \[
  \CP_u(\ll(\bar{x};C);(\bar{x},\bar{h}(\bar{x})))=\frac{\max\{\ll(\bar{x};C),\min\{\bar{h}(\bar{x}),m-\bar{x} \}r_h+\min\{\bar{x},m- \ll(\bar{x};C)\}r_{\ell}}{\bar{h}(\bar{x})r_h+\min\{\bar{x},m-\bar{h}(\bar{x})\}r_{\ell}}.
  \]
  As we have shown $\bar{x}=\nex$ at the beginning of this case, by Lemma \ref{lem:specialpoints}, we have 
  
  $\max\{\ll(\bar{x};C),\min\{\bar{h}(\bar{x}),m-\bar{x} \}=\ll(\bar{x};C)$ and $\min\{\bar{x},m- \ll(\bar{x};C)\}=m- \ll(\bar{x};C)$. Therefore, we have
  \[
  \frac{\ll(\bar{x};C)r_h+(m- \ll(\bar{x};C))r_{\ell}}{\bar{h}(\bar{x})r_h+\min\{\bar{x},m-\bar{h}(\bar{x})\}r_{\ell}}=C,
  \]
  and this implies
  \[
  \frac{h(\mathcal{A},I_1)r_h+\ell(\mathcal{A},I_1)r_{\ell}}{\bar{h}(\bar{x})r_h+\min\{\bar{x},m-\bar{h}(\bar{x})\}r_{\ell}} \geq \frac{\ll(\bar{x};C)r_h+(m- \ll(\bar{x};C))r_{\ell}}{\bar{h}(\bar{x})r_h+\min\{\bar{x},m-\bar{h}(\bar{x})\}r_{\ell}}.
  \]
  As $h(\mathcal{A},I_1)=m-\ell(\mathcal{A},I_1)$, we have $\ell(\mathcal{A},I_1) \leq m-\ll(\bar{x};C)$, which implies that we should accept no more than $m-\ll(\bar{x};C)$ low-reward requests in expectation. However, to achieve a robust ratio more than the $\CP_o(\pc(\bar{x}),(\bar{x},0))$ for sequence $I_2$, 
 we have 
 \[
 \frac{\mathbf{E}[\text{Rew}(\mathcal{A},I_2)]}{\textsc{opt}(I_2)} > \CP_o(\pc(\bar{x}),(\bar{x},0)).
 \]
This implies that we should accept more than $m-\pc(\bar{x})$ low-reward requests in expectation, which is shown in the previous case. However, as is shown at the beginning of this case, here, $\bar{x}=\nex$, and $\pc(\bar{x})=\ll(\bar{x};C)$, which shows that upon receiving $\bar{x}$ low-reward requests, we should accept no more than $m-\pc(\bar{x})$ low-reward requests and more than $m-\pc(\bar{x})$ low-reward requests, which is obviously a contradiction.

\textit{Part 3: $\Rstar= \inf_{x \in [0,\bar{x}]}\CP_o(\pc(x);(x,0))$.} Define $\widehat{x} = \text{argmin}_{x \in [0,\bar{x}]}\CP_o(\pc(x);(x,0))$. If $\widehat{x}$ is not unique, we randomly  the one with smallest $x$ value. 
Let us first consider the case where $\ll(\nex;C)>\pc(\bar{x})$ and $\widehat{x} \in [\nex, \nex+\ll(\nex;C)-\pc(\bar{x})]$.
  In this case, we replace $I_2$ in the proof of Part 2 under the case where $\ll(\nex;C)>\pc(\bar{x})$ by: we define $I_2$ as 
 $\widehat{x}$ low-reward requests arrive first,  followed  $0$ high-reward requests, and we can have $\pc(x)$ is optimal among all algorithms in this case. 

For any other cases, we replace $I_1$, $I_2$ in the proof of Part 2 under the case where $\ll(\nex;C) \leq \pc(\bar{x})$ by: we define $I_1$ as 
 $\widehat{x}$ low-reward requests arrive first,  followed  $\underline{h}(\widehat{x})$ high-reward requests; we define $I_2$ as 
 $\widehat{x}$ low-reward requests arrive first,  followed  $0$ high-reward requests, and we can have $\pc(x)$ is optimal among all algorithms in this case.

Therefore, we have $\pc(x)$ for $x \in [0,m]$ is optimal among all deterministic and randomized algorithms.

\Halmos

\subsection{Lemma \ref{lem:worstunderrob} and its Proof}
\begin{lemma}\label{lem:worstunderrob}
    Given an arbitrary valid PL function $p(\cdot)$, we have for any $x \in [0,\max\{m,\bar{x}\}]$, $\CP_u(p(x);(x,m))$ is a non-incresing function in $x$. That is, for any $x \in [0,\max\{m,\bar{x}\}]$, with $p(x) \leq m$, 
    \[
    \CP_u(p(m);(\max\{m,\bar{x}\},m)) \leq \CP_u(p(x);(x,m)).
    \]
\end{lemma}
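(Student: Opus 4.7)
\proof{Proof Sketch.}
My plan is to substitute $y = m$ into the definition of $\CP_u$ from Equation \eqref{eq:CP_under} and analyze the resulting expression in cases, using the validity of $p(\cdot)$ (non-increasing with $p'(x) \ge -1$) to control the case transition. Specifically, for $y = m$, the denominator simplifies to $m r_h$, so it suffices to show that the numerator
\[
N(x) := \max\{p(x),\, \min\{m, (m-x)^+\}\} r_h + \min\{x,\, m - p(x)\} r_\ell
\]
is non-increasing in $x$ on $[0, \max\{m,\bar{x}\}]$.

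I would first handle $x > m$, where $(m-x)^+ = 0$ and $\min\{x, m-p(x)\} = m - p(x)$, so $N(x) = p(x)(r_h - r_\ell) + m r_\ell$; since $r_h > r_\ell$ and $p(\cdot)$ is non-increasing, $N$ is non-increasing here. For $x \in [0,m]$, I would split into two cases based on whether $p(x) + x \ge m$ (over-provisioning, Case A) or $p(x) + x < m$ (under-provisioning, Case B). In Case A, $\max\{p(x), m-x\} = p(x)$ and $\min\{x, m-p(x)\} = m - p(x)$, giving $N(x) = p(x)(r_h - r_\ell) + m r_\ell$, which is non-increasing in $x$ since $p$ is. In Case B, $\max\{p(x), m-x\} = m-x$ and $\min\{x, m-p(x)\} = x$, giving $N(x) = m r_h - x(r_h - r_\ell)$, which is strictly decreasing in $x$.

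The crux is handling the transition between these two cases, and this is where the validity condition $p'(x) \ge -1$ plays its decisive role. Because $(p(x)+x)' = p'(x) + 1 \ge 0$ a.e., the quantity $p(x) + x$ is non-decreasing in $x$, which means as $x$ grows we can only transition from Case B into Case A (and not back). Let $x^\star$ be the (possibly unique) transition point, where $p(x^\star) + x^\star = m$. Substituting $p(x^\star) = m - x^\star$ into the Case A expression yields $(m - x^\star)(r_h - r_\ell) + m r_\ell = m r_h - x^\star(r_h - r_\ell)$, which matches the Case B expression at $x^\star$; hence $N$ is continuous at the transition. Combining this continuity with the within-case monotonicity and the $x > m$ analysis (and noting continuity at $x = m$, since plugging $x = m$ into the Case A formula and into the $x > m$ formula both give $p(m)(r_h - r_\ell) + m r_\ell$) yields that $N(x)$, and therefore $\CP_u(p(x);(x,m))$, is non-increasing throughout $[0, \max\{m,\bar{x}\}]$, which gives the claimed inequality $\CP_u(p(\max\{m,\bar{x}\});(\max\{m,\bar{x}\},m)) \le \CP_u(p(x);(x,m))$.

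The main obstacle I anticipate is simply bookkeeping at the boundary cases ($x = m$ versus $x > m$, and the Case A/B transition), so care is needed to verify the piecewise formulas agree at the seams; the substantive idea is just the one-line observation that $p'(x) \ge -1$ forces $p(x) + x$ to be monotone non-decreasing, which prevents oscillation between the two regimes.
\endproof
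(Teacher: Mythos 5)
Your proposal is correct and follows essentially the same route as the paper's proof: both reduce to the numerator over the fixed denominator $mr_h$, split at the single transition point where $p(x)+x=m$ (whose uniqueness follows from $p'(x)\ge -1$ making $p(x)+x$ non-decreasing), and verify that each piece — $mr_h - x(r_h-r_\ell)$ before the transition and $p(x)(r_h-r_\ell)+mr_\ell$ after — is non-increasing. Your explicit check of continuity at the seams and your direct treatment of $x>m$ (where the paper instead invokes Lemma \ref{lem:mout}) are minor presentational differences only.
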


\proof{Proof of Lemma \ref{lem:worstunderrob}}
Take any valid PL function $p(x)$ for $x \in [0,m]$, by Equation \eqref{eq:CP_under},
\begin{align*}
\CP_u(p(x);(x,m)) &= \frac{\max\{p(x), \min\{m,m-x\} \} r_h+\min\{x,m-p(x)\}r_{\ell}}{mr_h+\min\{x, m-m\}r_{\ell}} \\&= \frac{\max\{p(x), m-x \} r_h+\min\{x,m-p(x)\}r_{\ell}}{mr_h}.
\end{align*}

Let $x_1 = \sup\{x: p(x)+x \leq m \}$, first, as a valid PL function satisfies that $p'(x) \geq -1$ a.e., and the line $x+y=m$ has slope $-1$, once $x+p(x) \geq m$, we have for all $x'>x$, we have $x'+p(x') \geq m$. Therefore, for $x \in [0,x_1]$, we have $p(x)+x \leq m$, and 
\[
\CP_u(p(x);(x,m)) = \frac{(m-x) r_h+xr_{\ell}}{mr_h},
\]
which is a monotone decreasing function with $x$. 

For $x \in [x_1,m]$, 
\[
\CP_u(p(x);(x,m)) = \frac{p(x)r_h+(m-p(x))r_{\ell}}{mr_h},
\]
which is also a non-increasing function with $x$ due to $p(x)$ is non-increasing. Therefore, it is monotone decreasing, and we have for any valid PL function $p(x)$, 
\[
\CP_u(p(m);(m,m)) \leq \CP_u(p(x);(x,m)),
\]
for $x \in [0,m]$. If $\bar{x}>m$, by Lemma \ref{lem:mout}, we have $\CP_u(p(\bar x);(\bar x,m)) = \CP_u(p(x);(x,m))$. Therefore, we have 
\[
    \CP_u(p(m);(\max\{m,\bar{x}\},m)) \leq \CP_u(p(x);(x,m)),
\]
for any $x \in [0,\max\{m,\bar{x}\}]$.
\Halmos
\endproof

\section{Proof of Statements in Section \ref{sec:mlconsistent}} \label{appendix:mlcon}

In this section, we provide the proof of statements in Section \ref{sec:mlconsistent}.In Section \ref{subsec:propbisection}, we prove Proposition \ref{prop:bisection} and shows the computational complexity and accuracy of the bisection algorithm. In Section \ref{subsec:propertycstar}, we prove Theorem \ref{thm:property_cstar}, which provides several properties of an optimal $\cstar$. In Section \ref{subsec:mlconsis}, we prove Theorem \ref{thm:MLconsis}, which states that Algorithm \ref{alg:mlconsis} returns $\cstar$. Finally, in Section \ref{subsec:cmax}, we prove Theorem \ref{thm:cmax}, which shows that the optimal PL function is optimal among all deterministic and randomized algorithms.

\subsection{Proof of Proposition \ref{prop:bisection}} \label{subsec:propbisection}

We first show that the feasibility (i.e., determining if for any  $x \in [\underline{x},\bar{x}]$, and a given $C$, we have $\u(x;C) \geq \wll(x;C)$) check can be performed by a polynomial time algorithm. For any $C\in [0,1]$, checking whether for any $x \in [\underline{x},\bar{x}]$, $\u(x;C) \geq \wll(x;C)$ is equivalent to checking  the following condition 
\begin{equation} \label{eq:optpoly}
\min_{x \in [\underline x,\bar x]} \u(x;C)-\wll(x;C) \ge 0.
\end{equation}
Notice that by definition,  for $x \in [\underline{x},\underline x_u]$, $\u(x;C)=m$ and $\u(x;C)-\wll(x;C)$ is always non-negative and for $x \in [\tilde x_{\ell}, \bar x]$, $\wll(x;C)=0$ and $\u(x;C)-\wll(x;C)$ is always non-negative, where $\tilde x_{\ell}=\inf\{\nex<x<\bar{x}: \wll(x;C)=0 \}$. Therefore, in Equation \eqref{eq:optpoly}, we can ignore these two intervals. For any $x\in [\underline x_u, \tilde x_{\ell}]$, by Lemmas \ref{lem:property_u_l} and \ref{lem:propertywll}, $\u(\cdot;C)$ is convex and $\wll(\cdot;C)$ is concave, which implies that $\u(x;C)-\wll(x;C)$ is convex. Therefore, Problem \eqref{eq:optpoly} is a convex optimization problem on a compact set, which can be solved by polynomial-time algorithms.

With this result, Algorithm \ref{alg:bisection} has exactly the same structure with the classical bisection method  which is  a root-finding method that applies to any continuous function. Here, we make an analogy to the classical bisection method for finding the root (a single zero point). Compare to the classical bisection method for finding a single zero point, we can treat $\cstar$ as the zero point, and treat the feasibility check as whether the value is positive or negative. It is well known that the classical bisection method can return a $x \in [x_0-\epsilon, x_0+\epsilon]$ in $O(\log(1/\epsilon))$ time, where $x_0$ is the zero point. Therefore, we have Algorithm \ref{alg:bisection} can return a $C_0 \in [\cstar-\epsilon,\cstar+\epsilon]$ in $O(\log(1/\epsilon))$ time.

\subsection{Proof of Theorem \ref{thm:property_cstar}} \label{subsec:propertycstar}

We split the proof into three parts. In part 1, we show that if for any $x \in \V$, we have $\wll(x;C) \leq \u(x;C)$, then such a $C$ is feasible. That is, $\cstar \le C$.  In part 2, we show that a feasible $C =\cstar$ is optimal if and only if there exists $\widehat{x}\in [\underline{x},\bar{x}]$, such that $\wll(\widehat{x};\cstar) = \u(\widehat{x};\cstar)$. In part 3, we show that such $\widehat{x}$ must belong to $\V$.

\textbf{Part 1: Feasibility of $C$.}  By Lemma \ref{lem:feasibleregionrobust_2}, $C$ is feasible if and only if $\u(x;C) \geq \wll(x;C)$ for any $x \in [\underline{x},\bar{x}]$. Here, we show that by only checking $\u(x;C)$ and $\wll(x;C)$ on $x \in \V$ is enough to check the feasibility of $C$ given that $\region$ is a polyhedron. As $\region$ is a polyhedron, we have both $\bar{h}(\cdot)$ and $\underline{h}(\cdot)$ are piece-wise linear functions. By the first two properties of Lemma \ref{lem:property_u_l} and Equation \eqref{eq:ll}, we have both $\wll(\cdot;C)$ and $\u(\cdot;C)$ are also piece-wise linear functions. We want to show that  $\wll(x;C) \leq \u(x;C)$  for any $x \in \V$, implies that  $\wll(x;C) \leq \u(x;C)$ for  any $x \in [\underline{x},\bar{x}]$

Suppose that for all $x \in \V$, $\wll(x;C) \leq \u(x;C)$. We take any $x_1 \in [\underline{x},\bar{x}]-\V$. Let $[\underline{x}_1,\bar{x}_1]$ be the smallest interval contains $x_1$ such that $\underline{x}_1,\bar{x}_1 \in \V$. As it is the smallest interval, it does not contain any other $x$-vertices, which implies that for any $x \in [\underline{x}_1,\bar{x}_1]$, both $\wll(x;C)$ and $\u(x;C)$ are linear. As $\wll(\underline{x}_1;C) \leq \u(\underline{x}_1;C)$ and $\wll(\bar{x}_1;C) \leq \u(\bar{x}_1;C)$, we have $\wll(x_1;C) \leq \u(x_1;C)$ due to linearity and continuity of $\wll(x;C)$ and $\u(x;C)$.

\textbf{Part 2: Optimality of $C$.} We first prove the `if' statement. That is, if there exists $\widehat{x}\in [\underline{x},\bar{x}]$ and $C\in [\rho, 1]$, such that $\wll(\widehat{x};C) = \u(\widehat{x};C)$, then we have $C =\cstar$ is optimal. We prove by contradiction. Contrary to our claim, suppose that  there exists a feasible $\widehat C>C$ under which  for any $x \in [\underline{x},\bar{x}]$, we have $\wll(x;\widehat C) \leq \u(x;\widehat C)$. 

By the sixth property of Lemma \ref{lem:property_u_l} and Lemma \ref{lem:propertywll}, we have $\wll(\widehat{x};\widehat C) \geq \wll(\widehat{x};C) = \u(\widehat{x};C) \geq \u(\widehat{x};\widehat C)$.

First, consider the case where $\widehat{x} \in [\underline x, \underline x_u]$, where we recall that  $\underline x_u=\sup\{\underline{x}<x<\bar x_u: \CP_o(m;(x,\underline{h}(x))) \geq C \}$; that is,  we have $\u(\widehat{x};C)=m$ if and only if $x \in [\underline x, \underline x_u]$.
We then have \[\wll(\widehat{x};\widehat C) \geq \wll(\widehat{x};C) = \u(\widehat{x};C) = m \geq \u(\widehat{x};\widehat C)\,,\] which implies that $\wll(\widehat{x};\widehat C) = \wll(\widehat{x};C)=m$. However,  by definition, we have $\CP_u(\wll(\widehat{x};\widehat C);(\widehat{x},\bar{h}(\widehat{x}))) \geq \widehat C$ and $\CP_u(\wll(\widehat{x};C);(\widehat{x},\bar{h}(\widehat{x}))) \geq C$, and to be in the  under-protecting case, given that  
$\wll(\widehat{x};\widehat C) = \wll(\widehat{x};C)=m$, we must  have $\bar{h}(\widehat{x})\geq m$. 
By putting $p=m$ and $y=m$ into Equation \eqref{eq:CP_under}, we have $\CP_u(\wll(\widehat{x};\widehat C);(\widehat{x},\bar{h}(\widehat{x})))= \CP_u(\wll(\widehat{x};C);(\widehat{x},\bar{h}(\widehat{x})))=1$, which implies that $C=\widehat C=1$, which contradicts to $C>\\widehat C$.

Now consider the case where $\widehat{x} \in [\underline x_u, \bar{x}]$. We have $\u(\widehat{x};C)<m$ and $\wll(\widehat{x};\widehat C) \geq \wll(\widehat{x};C) = \u(\widehat{x};C) > \u(\widehat{x};\widehat C)$, where the last strict inequality is by Lemma \ref{lem:munique} which states that $\u(\widehat{x};\widehat C) = \u(\widehat{x};C)$ only happens when both of them equals $m$. This forms a contradiction
because the chain of inequalities implies  $\wll(\widehat{x};\widehat C)> \u(\widehat{x};\widehat C)$.

Next, we prove the `only if' statement. That is, if $\wll(x;C) < \u(x;C)$ for all $x \in [\underline{x},\bar{x}]$, then $C$ is not optimal. That is, there exists $\widehat C>C$ such that $\wll(x;C) \leq \u(x;C)$ for all $x \in [\underline{x},\bar{x}]$

In this case, by the sixth property of Lemma \ref{lem:property_u_l} and Lemma \ref{lem:propertywll}, for any $x \in [\underline{x},\bar{x}]$, $\wll(x;C)$ is continuously increasing in $C$ and $\u(x;C)$ is continuously decreasing in $C$. Therefore, by continuity of 
$C$, there exists $\delta>0$ such that with $\widehat C=C+\delta$,  we have $\wll(x;\widehat C) < \u(x;\widehat C)$ for all $x \in [\underline{x},\bar{x}]$. By Lemma \ref{lem:feasibleregionrobust_2}, we have $C$ is feasible and $C$ is not optimal.

\textbf{Part 3: $\widehat{x} \in \V$.}  .
As we defined,
$\V$ is the set containing the $x$ value of all vertices of $\region$  (i.e., $\underline h(\cdot)$ and $\bar h(\cdot)$) and the set $\mathcal R_{0}$, where $\mathcal R_{0} = \{(x, \underline h(x): x\in [\underline x, \bar x]\} \cap \{(x,y): x+y =m\}$. In Lemma \ref{lem:piecewise}, we show that all of $x$-vertices of $\u(\cdot;C)$ and $\wll(\cdot;C)$ are a subset of $\V$ and   the elements of $\mathcal{R}_0$ might be $x$-vertices of $\u(\cdot;C)$. 
By Lemma \ref{lem:worstvertexpl}, we have there exists $\widehat{x} \in \V$ such that $\u(\widehat{x};\cstar) \geq \wll(\widehat{x};\cstar)$, which completes the proof. 

\Halmos

\subsection{Proof of Theorem \ref{thm:MLconsis}}\label{subsec:mlconsis}
Given a polyhedron $\region$, let $\cstar$ be the optimal consistent ratio among all PLAs. First, it is easy to see that the computational complexity is $O(|\V|^3)$ because we enumerate at most $|\V|(|\V|-1)/2$ pairs of $x$-vertices, and recall that 
\[\cstar = \max\{C\in \mathcal S: \wll(x; C)\le \u(x;C) \quad \text{for any $x\in \V$} \}\,,\]
for each pair of vertices, if $\wll(x; C)\le \u(x;C)$, for any $x\in \V$, we will add $C$ into the set $\mathcal S$. By doing this, for each pair of vertices, we compare the value of $\wll(x; C)$ and $\u(x;C)$ for $x \in \V$ with at most $|\V|$ complexity. To summarize, 
the total complexity is bounded by $|\V|(|\V|-1)/2 \cdot |\V|$, which is $O(|\V|^3)$.

Second, Algorithm \ref{alg:mlconsis} cannot return an output $C>\cstar$. This is because it will return \[\cstar = \max\{C\in \mathcal S: \wll(x; C)\le \u(x;C) \quad \text{for any $x\in \V$} \}\,.\]  and by Theorem \ref{thm:property_cstar}, any $C$ such that $\wll(x; C)\le \u(x;C)$ for any $x\in \V$ implies $C \leq \cstar$.

Finally, we show that by enumerating vertices as in Algorithm \ref{alg:mlconsis}, we can find $\cstar$. To show this, we need to split the proof into three cases according to the location of $\widehat{x}$, where $\widehat{x}$ is the intersection point of $\u(\cdot;\cstar)$ and $\wll(\cdot;\cstar)$. By Theorem \ref{thm:property_cstar}, such $\widehat{x}$ always exists. If there are multiple intersection points, we take the one with the smallest $x$ value. Before we divide into three cases, we highlight that enumerating vertices is only for reducing computational complexity, and all the following statements not related to $\V$ are correct for any general convex set $\region$.

\begin{itemize}
    \item \textit{Case 1: $\widehat{x} \leq x_H$ } In this case, we first show that $\widehat{x} = \xmin$ and $\xmin \leq x_H$. Second, as $H$ is a vertex, we have $x_H \in \V$, and by Lemma \ref{lem:xminvertex}, we have $\xmin \in \V$. We show that by balancing $H$ and $(\xmin,\underline{h}(\xmin))$, we obtain a $C_1$ in Algorithm \ref{alg:mlconsis} according to Equation \eqref{eq:balance1}.

    For any $C_1 \leq \cstar$, by the definition of $\wll(\cdot;C_1)$, we have $\wll(x;C_1)=\wll(x_H;C_1)$ for $x \in [\underline{x},x_H]$. As $\u(\widehat{x};C_1)=\wll(\widehat{x};C_1)=\wll(x_H;C_1)$ and $\u(x;C_1) \geq \wll(x;C_1)$ for any $x \in [\underline{x},\bar{x}]$, we have  
    \[
    \widehat{x} \in \text{argmin}_{x \in [\underline{x},\bar{x}]} \u(x;C_1).
    \]
     By Lemma \ref{lem:barxusmall}, we have $\widehat{x} \in [\xmin,x_H]$. As we have defined that if there are multiple intersection points, we take $\widehat{x}$ as the one with the smallest $x$ value. That is,  $\widehat{x}=\xmin$. Moreover, we must have $\xmin \leq x_H$ in this case because otherwise, if $\xmin>x_H$, we have $\u(x;C_1)=m$ for any $x \in [\underline x, x_H]$, and as $\wll(x;C_1)<m$, we have there does not exist $\widehat x$ such that $\u(\widehat x;C_1)=\wll(\widehat x;C_1)$, which is a contradiction.

     Next, we show that by balancing $H$ and $(\xmin,\underline{h}(\xmin))$ according to Equation \eqref{eq:balance1}, we can get $C_1$. Recall that in Equation \eqref{eq:balance1}, given two points $x_1$ and $x_2$ (here $x_H$ and $\bar x_u$), we find $p$ (here $\wll(x_H; C_1)$) such 
      $\CP_u(p; (x_1,\bar{h}(x_1)))=\CP_o(p; (x_2,\underline{h}(x_2))$. 
     This is because, by definition, we have $\CP_u(\wll(x_H;C_1);H)=C_1$, and $\CP_o(\wll(\xmin;C_1);(\xmin,\underline{h}(\xmin)))=C_1$, which implies that 
     \[\CP_u(\wll(x_H;C_1);H)=\CP_o(\wll(\xmin;C_1);(\xmin,\underline{h}(\xmin)))=C_1.\]

     Finally, as $\wll(\widehat x; C_1) = \wll(x_H; C_1)$, we have $\wll(\widehat x; C_1)=\u(\widehat x; C_1)$, and by Theorem \ref{thm:property_cstar}, we know $C_1$ is optimal.

    \item \textit{Case 2: $x_H \leq \widehat{x} \leq \nex$ }  In this case, by Theorem \ref{thm:property_cstar}, we have $\widehat{x} \in \V$. We first show that by balancing $(\widehat{x},\bar{h}(\widehat{x}))$ and $(\widehat{x},\underline{h}(\widehat{x}))$, we obtain a $C_1$ in Algorithm \ref{alg:mlconsis} according to Equation \eqref{eq:balance1}. Moreover, by definition, we have $\u(\widehat{x};C_1)=\wll(\widehat{x};C_1)$, which shows the optimality of $C_1$ and the algorithm can return such a $C_1$. 
    That is, let 
    \[\widehat p = \{p:\CP_u(p;(\widehat{x},\bar{h}(\widehat{x})))=\CP_o(p; (\widehat{x},\underline{h}(\widehat{x})))\}\, \quad \text{and} \quad C_1 = \CP_u(\widehat p;(\widehat{x},\bar{h}(\widehat{x}))) .\]
    We will show that   $\widehat p = \wll(\widehat x; C_1)$.

    As $x_H \leq \widehat{x} \leq \nex$, by definition, we have $\CP_u(\wll(\widehat{x};C_1); (\widehat{x},\bar{h}(\widehat{x})))=C_1$. In addition, by Lemma \ref{lem:hatxinter}, we have $\widehat{x} \in [\underline x_u, \bar x_u]$, and by definition, $\CP_o(\u(\widehat{x};C_1);(\widehat{x},\underline{h}(\widehat{x})))=C_1$, which implies that
    \[
    \CP_u(\wll(\widehat{x};C_1); (\widehat{x},\bar{h}(\widehat{x})))=\CP_o(\u(\widehat{x};C_1);(\widehat{x},\underline{h}(\widehat{x})))=C_1.
    \]
   This shows that $\widehat p = 
 \wll(\widehat{x};C_1) = \u(\widehat{x};C_1)$, as desired.

    \item \textit{Case 3: $\widehat{x} > \nex$ } In this case, by Theorem \ref{thm:property_cstar}, we have $\widehat{x} \in \V$, and by Lemma \ref{lem:nexvertex}, we have $\nex \in \V$. First, we show that by balancing $(\widehat{x},\underline{h}(\widehat{x}))$ and $(\nex,\bar{h}(\nex))$, we obtain a $C_1$ in Algorithm \ref{alg:mlconsis} according to Equation \eqref{eq:balance2}. Finally, we show that under this $C_1$, $\u(\widehat{x};C_1)=\wll(\widehat{x};C_1)$, which shows the optimality of $C_1$ and the algorithm can return such a $C_1$. Let 
    \[\widehat p = \{p:\CP_u(p; (\nex,\bar{h}(\nex)))=\CP_o(p-(\widehat x-\nex); (\widehat x,\underline{h}(\widehat x)))\} \quad \text{and} \quad  \CP_u(\widehat p; (\nex,\bar{h}(\nex))) = C_1\,.\]
    We will show that $\widehat p = \u(\widehat x; C_1)+ (\widehat x- \nex)=\wll(\widehat x; C_1)+ (\widehat x- \nex) =\wll(\nex; C_1 )$. 

    By Lemma \ref{lem:hatxinter}, we have $\widehat{x} \in [\underline{x}_u,\bar x_u]$, and by definition, $\CP_o(\u(\widehat{x};C_1);(\widehat{x},\underline{h}(\widehat{x})))=C_1$. 
  We futher observe that, by Equation \eqref{eq:ll}, we have $\wll(\nex;C_1)=\ll(\nex;C_1)$, and we have
    \[
    \CP_u(\ll(\nex;C_1); (\nex,\bar{h}(\nex)))=\CP_u(\wll(\nex;C_1); (\nex,\bar{h}(\nex))) = C_1 ,
    \]
    which implies that 
    \[
    \CP_u(\wll(\nex;C_1); (\nex,\bar{h}(\nex)))=\CP_o(\u(\widehat{x};C_1);(\widehat{x},\underline{h}(\widehat{x})))=C_1.
    \]

    Finally, as $\wll(\widehat x ;C_1)) = \wll(\nex;C_1) - (\widehat x - \nex) = \u(\widehat x; C_1)$, we have $\widehat p = \u(\widehat x; C_1)+ (\widehat x- \nex)=\wll(\widehat x; C_1)+ (\widehat x- \nex) =\wll(\nex; C_1 )$.

\end{itemize}

\subsection{Proof of Theorem \ref{thm:cmax}} \label{subsec:cmax}
To show that the optimal consistent ratio among all PLAs is optimal among any deterministic and randomized algorithm, we still split the proof into three cases which are the same three cases as in the Proof of Theorem \ref{thm:MLconsis}. Before presenting  the proof, we highlight that although we used many properties from Theorem \ref{thm:property_cstar}, here  in the proof of Theorem \ref{thm:MLconsis}, we do {NOT} assume $\region$ is a polyhedron. 

\textit{Case 1: $\widehat{x} \leq x_H$ } In this case, in case 1 of the Proof of Theorem \ref{thm:MLconsis}, we have established that $\widehat{x}=\xmin$, and $\cstar=\CP_u(p; H)=\CP_o(p;(\widehat{x},\underline{h}(\widehat{x})))$, where $p=\ll(x_H;\cstar)=\u(\xmin;\cstar)$. Our goal here is to show that there does not exist any deterministic or randomized algorithm with a consistent ratio greater than $\cstar$. Let us define two (ordered) input sequences: In the first input  sequence, $I_1$, 
 $\xmin$ low-reward requests arrive first, followed with $\underline h(\xmin)$ high-reward requests. In the second input  sequence, $I_2$,
 $x_H$ low-reward requests arrive first,  followed  $y_H$ high-reward requests. We note that 
 $\widehat{x}=\bar x_u <x_H$ and  $\underline h(\widehat{x})<y_H$ because $H$ is the highest point of $\region$. 
 
 Observe that before receiving $\widehat{x}$ low-reward requests, any algorithm cannot differentiate the two input sequences. Hence, 
any algorithm should decide how many low-reward requests to accept in expectation among the first $\widehat{x}$ ones. 

 Next, we prove by contradiction. Suppose that there exists an algorithm $\mathcal{A}$, which can be either deterministic or randomized, and has a consistent ratio larger than $\cstar$. Then, we have
 \[
 \frac{\mathbf{E}[\text{Rew}(\mathcal{A},I_1)]}{\textsc{opt}(I_1)} > \cstar,
 \]
 where the expectation is taken on the randomization of the algorithm.  Let the expected total amount of high-reward, low-reward requests being accepted by $\mathcal{A}$ be $h(\mathcal{A},I_1)$, $\ell(\mathcal{A},I_1)$, respectively. Replace $C$ by $\cstar$ in Part 1 of Section \ref{subsec:noalgp} where we show that any algorithm facing this instance should accept at least $m-\u(\xmin;C)$ low-reward requests, we have $\ell(\mathcal{A},I_1) > m-\u(\xmin;\cstar)$.

 Next, if $\mathcal{A}$ achieves a consistent ratio greater than $\cstar$, it should also satisfy that
 \[
 \frac{\mathbf{E}[\text{Rew}(\mathcal{A},I_2)]}{\textsc{opt}(I_2)} > \cstar.
 \]
 Let the expected total amount of high-reward, low-reward requests being accepted by $\mathcal{A}$ be $h(\mathcal{A},I_2)$, $\ell(\mathcal{A},I_2)$, respectively. Then, 
  \[
  \frac{\mathbf{E}[\text{Rew}(\mathcal{A},I_2)]}{\textsc{opt}(I_2)} = \frac{h(\mathcal{A},I_2)r_h+\ell(\mathcal{A},I_2)r_{\ell}}{y_Hr_h+\min\{x_H,m-y_H\}r_{\ell}} >\cstar.
  \]
Recall that by the definition of $\ll(\cdot;\cstar)$, we have $\CP_u(\ll(x_H;\cstar);H)=\cstar$ because $x_H \in [x_H,\nex]$. Then, by Equation \eqref{eq:CP_under}, we have
\[
\CP_u(\ll(x_H;\cstar);H)=\frac{\max\{\ll(x_H;\cstar), \min\{y_H,m-x_H\} \} r_h+\min\{x_H,m-\ll(x_H;\cstar)\}r_{\ell}}{y_Hr_h+\min\{x_H,m-y_H\}r_{\ell}},
\]
and by Lemma \ref{lem:specialpoints}, we have $\min\{x_H,m-\ll(x_H;\cstar)\}=m-\ll(x_H;\cstar)$ and 
$\max\{\ll(x_H;\cstar), \min\{y_H,m-x_H\}=\ll(x_H;\cstar)$. Therefore, we have
\[
\frac{\ll(x_H;\cstar) r_h+(m-\ll(x_H;\cstar))r_{\ell}}{y_Hr_h+\min\{x_H,m-y_H\}r_{\ell}}=\cstar.
\]
This implies that
\[
\frac{h(\mathcal{A},I_2)r_h+\ell(\mathcal{A},I_2)r_{\ell}}{y_Hr_h+\min\{x_H,m-y_H\}r_{\ell}} > \frac{\ll(x_H;\cstar) r_h+(m-\ll(x_H;\cstar))r_{\ell}}{y_Hr_h+\min\{x_H,m-y_H\}r_{\ell}}.
\]
As $h(\mathcal{A},I_2) \leq m-\ell(\mathcal{A},I_2)$, we have $\ell(\mathcal{A},I_2)<m-\ll(x_H;\cstar)$. Recall that as mentioned at the beginning, we have $\ll(x_H;\cstar)=\u(\xmin;\cstar)$. Therefore, we have 
\[
\ell(\mathcal{A},I_2)<m-\ll(x_H;\cstar)=m-\u(\xmin;\cstar)<\ell(\mathcal{A},I_1),
\]
which is a contradiction, because before receiving $\widehat{x}$ low-reward requests, any algorithm cannot differentiate the two input sequences, and this implies that $\ell(\mathcal{A},I_2) \geq \ell(\mathcal{A},I_1)$.

\textit{Case 2: $x_H \leq \widehat{x} \leq \nex$.} We replace the instances in the proof of case 1 to get the proof for this case. We replace the first instance by: $\widehat{x}$ low-reward requests arrive first, followed with $\underline{h}(\widehat{x})$ high-reward requests. We replace the second instance by: $\widehat{x}$ low-reward requests arrive first, followed with $\bar{h}(\widehat{x})$ high-reward requests. Then, we use similar arguments in case 1 to show the result.

\textit{Case 3: $\widehat{x} > \nex$.} In this case, as we showed in case 3 of the Proof of Theorem \ref{thm:MLconsis}, we  have that \[\cstar=\CP_u(p; (\nex,\bar{h}(\nex)))=\CP_o(p-(\widehat{x}-\nex);(\widehat{x},\underline{h}(\widehat{x}))),\] where $p=\ll(\nex;\cstar)$, and $\u(\widehat{x};\cstar)=p-(\widehat{x}-\nex)$.
Let us define two (ordered) input sequences: In the first input  sequence, $I_1$,  
 $\nex$ low-reward requests arrive first, followed with $\bar h(\nex)$ high-reward requests. In the second input  sequence, $I_2$,
 $\widehat{x}$ low-reward requests arrive first,  followed  $\underline{h}(\widehat{x})$ high-reward requests. In this case, we have 
 $\nex<\widehat{x}$. Before receiving $\nex$ low-reward requests, any deterministic or randomized algorithm cannot differentiate the two input sequences. Hence, 
any algorithm should decide how many low-reward requests to accept among the first $\nex$ ones in expectation. 

 Next, we prove by contradiction. Suppose that there exists an algorithm $\mathcal{A}$ which has a consistent ratio larger than $\cstar$. As
 \[
 \frac{\mathbf{E}[\text{Rew}(\mathcal{A},I_1)]}{\textsc{opt}(I_1)} > \cstar,
 \]
 we have $\mathcal{A}$ should accept less than $m-\wll(\nex;\cstar)$ low-reward requests in expectation. Let the expected total amount of high-reward, low-reward requests being accepted by $\mathcal{A}$ be $h(\mathcal{A},I_1)$, $\ell(\mathcal{A},I_1)$, respectively. Replace $C$ by $\cstar$ in Part 2 of Section \ref{subsec:noalgp} which shows that any algorithm facing this instance should accept less than $m-\ll(\nex;C)$ low-reward requests, we have $\ell(\mathcal{A},I_1) < m-\ll(\nex;\cstar)$.
 
 Next, if $\mathcal{A}$ achieves a consistent ratio greater than $\cstar$, it should also satisfy that
 \[
 \frac{\mathbf{E}[\text{Rew}(\mathcal{A},I_2)]}{\textsc{opt}(I_2)} > \cstar.
 \]
 Let the expected total amount of high-reward, low-reward requests being accepted by $\mathcal{A}$ be $h(\mathcal{A},I_2)$, $\ell(\mathcal{A},I_2)$, respectively. Then, 
  \[
  \frac{\mathbf{E}[\text{Rew}(\mathcal{A},I_2)]}{\textsc{opt}(I_2)} = \frac{h(\mathcal{A},I_2)r_h+\ell(\mathcal{A},I_2)r_{\ell}}{\underline{h}(\widehat{x})r_h+\min\{\widehat{x},m-\underline{h}(\widehat{x})\}r_{\ell}} >\cstar.
  \]
As is mentioned at the beginning of this case, we have $\CP_o(\u(\widehat{x};\cstar);(\widehat{x},\underline{h}(\widehat{x})))=\cstar$. By Equation \eqref{eq:CP_over}, we have
\[
\CP_o(\u(\widehat{x};\cstar);(\widehat{x},\underline{h}(\widehat{x})))=\frac{\underline{h}(\widehat{x})r_h+\min\{\widehat{x},m-\u(\widehat{x};\cstar) \}r_{\ell}}{\underline{h}(\widehat{x})r_h+\min\{\widehat{x},m-\underline{h}(\widehat{x})\}r_{\ell}}.
\]
Here, we must have $\min\{\widehat{x},m-\u(\widehat{x};\cstar) \}=m-\u(\widehat{x};\cstar)$ because otherwise, if $\min\{\widehat{x},m-\u(\widehat{x};\cstar) \}=\widehat{x}$, then due to $\u(\widehat{x};\cstar) \geq \underline{h}(\widehat{x})$, we have $\min\{\widehat{x},m-\underline{h}(\widehat{x})\}=\widehat{x}$, and 

$\CP_o(\u(\widehat{x};\cstar);(\widehat{x},\underline{h}(\widehat{x})))=1 \neq \cstar$. Therefore, by taking $\min\{\widehat{x},m-\u(\widehat{x};\cstar) \}=m-\u(\widehat{x};\cstar)$, we obtain
\[
\frac{\underline{h}(\widehat{x})r_h+(m-\u(\widehat{x};\cstar) )r_{\ell}}{\underline{h}(\widehat{x})r_h+\min\{\widehat{x},m-\underline{h}(\widehat{x})\}r_{\ell}}=\cstar.
\]
This implies that 
\[
\frac{h(\mathcal{A},I_2)r_h+\ell(\mathcal{A},I_2)r_{\ell}}{\underline{h}(\widehat{x})r_h+\min\{\widehat{x},m-\underline{h}(\widehat{x})\}r_{\ell}} >\frac{\underline{h}(\widehat{x})r_h+(m-\u(\widehat{x};\cstar) )r_{\ell}}{\underline{h}(\widehat{x})r_h+\min\{\widehat{x},m-\underline{h}(\widehat{x})\}r_{\ell}}.
\]
As $h(\mathcal{A},I_2) \leq \underline{h}(\widehat{x})$, we have $\ell(\mathcal{A},I_2) > m-\u(\widehat{x};\cstar)$. However, in this case $\widehat x>\nex$, between $\nex$ and $\widehat{x}$, there are at most $\widehat{x}-\nex$ low-reward requests arriving, and this implies that $\ell(\mathcal{A},I_2)-\ell(\mathcal{A},I_1) \leq \widehat{x}-\nex$. 

However, as is mentioned at the beginning, we have $\u(\widehat{x};\cstar)=\ll(\nex;\cstar)-(\widehat{x}-\nex)$. This implies that 
\begin{align}
\ell(\mathcal{A},I_2)-\ell(\mathcal{A},I_1) &> m-\u(\widehat{x};\cstar) - (m-\ll(\nex;\cstar)) \\&= m-(\ll(\nex;\cstar)-(\widehat{x}-\nex)) - (m-\ll(\nex;\cstar))
 \\&=\widehat{x}-\nex,
\end{align}
which is a contradiction.

{\color{black}
\section{Appendix of Section \ref{sec:multitype}}
\subsection{Proof of Proposition \ref{prop:variable}}\label{append:ntype}
To  show Proposition \ref{prop:variable}, consider  constraints listed in \eqref{eq:cpc1n} and \eqref{eq:cpc2n}.  We can express the $\CP$ function in these constraints as follows:
\[
\CP(p_2(\mathbf x_2),\ldots,p_n(x_n); \mathbf{x}) = \frac{\sum_{i=1}^nA_i(\mathbf{x})r_i}{\textsc{OPT}(\mathbf{x})},
\]
where $\textsc{OPT}(\mathbf{x})$ is the offline optimal value, which is a fixed quantity based on $\mathbf{x}$ and does not depend on the protection level functions. Here, \( A_i(\mathbf{x}) \) represents the effective demand served to the $i$-th highest reward request based on the protection level functions and available capacity. The value of \( A_i(\mathbf{x}) \) is computed recursively:
\[
A_n(\mathbf{x}) = \min\{x_n, (m - p_{n-1}(\mathbf{x}))^+\},
\]
where the term \( (m - p_{n-1}(\mathbf{x}))^+ \) ensures that the demand served to the $n$-th highest reward request (i.e., lowest reward request) is based on the available capacity after considering the commitments for higher reward requests. For any \( i \in [n-1] \), we have:
\[
A_i(\mathbf{x}) = \min\{x_i, (m - p_{i-1}(\mathbf{x}) - \sum_{j=i+1}^{n}A_j(\mathbf{x}))^+\} =  \min\{x_i,\widetilde{A_i}(\mathbf{x})\}.
\]
Here, $\widetilde{A_i}(\mathbf{x}) =(m - p_{i-1}(\mathbf{x}) - \sum_{j=i+1}^{n}A_j(\mathbf{x}))^+$.
\begin{figure} [htb]
     \centering         
  \includegraphics[width=0.49\textwidth]{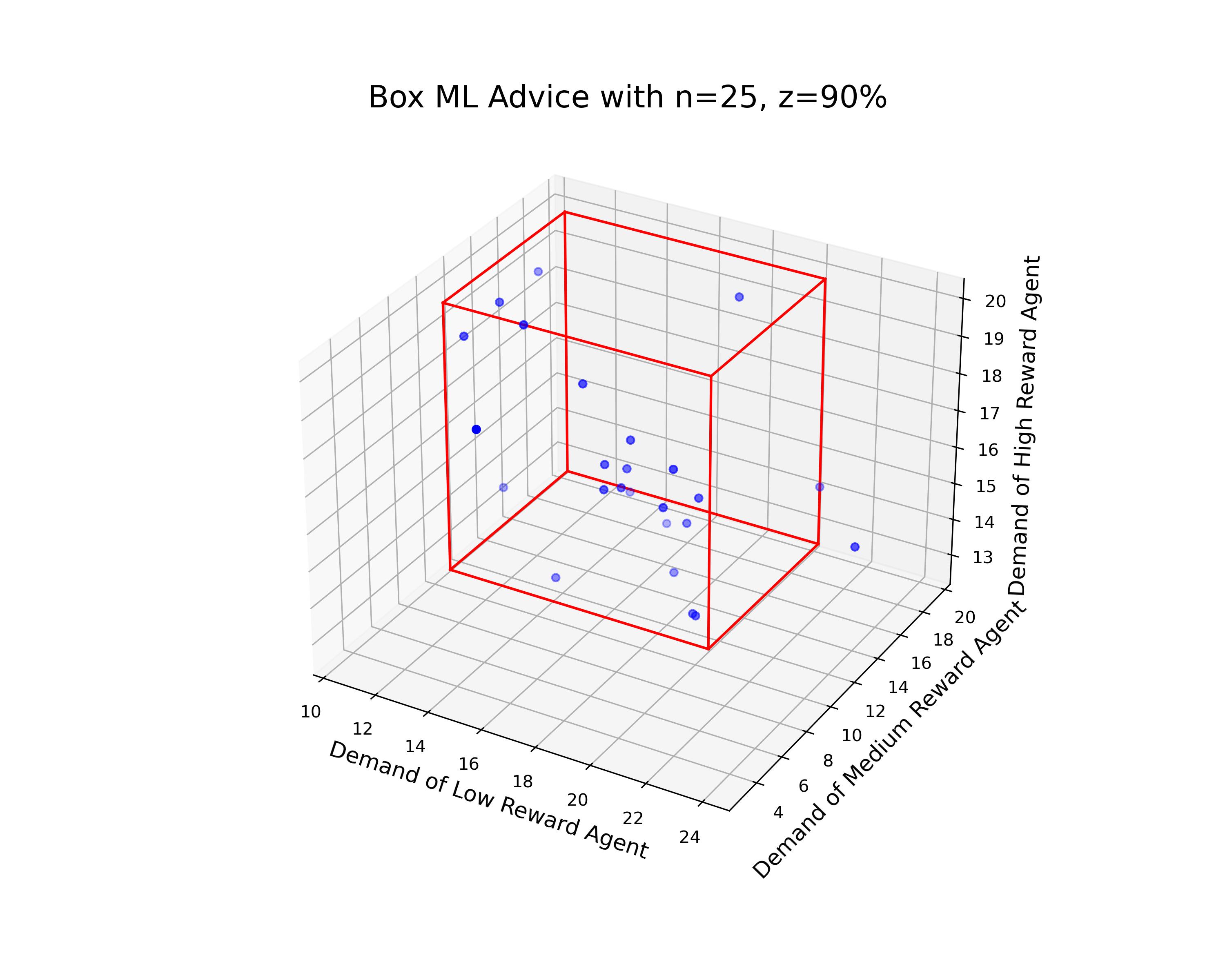}  \includegraphics[width=0.49\textwidth]{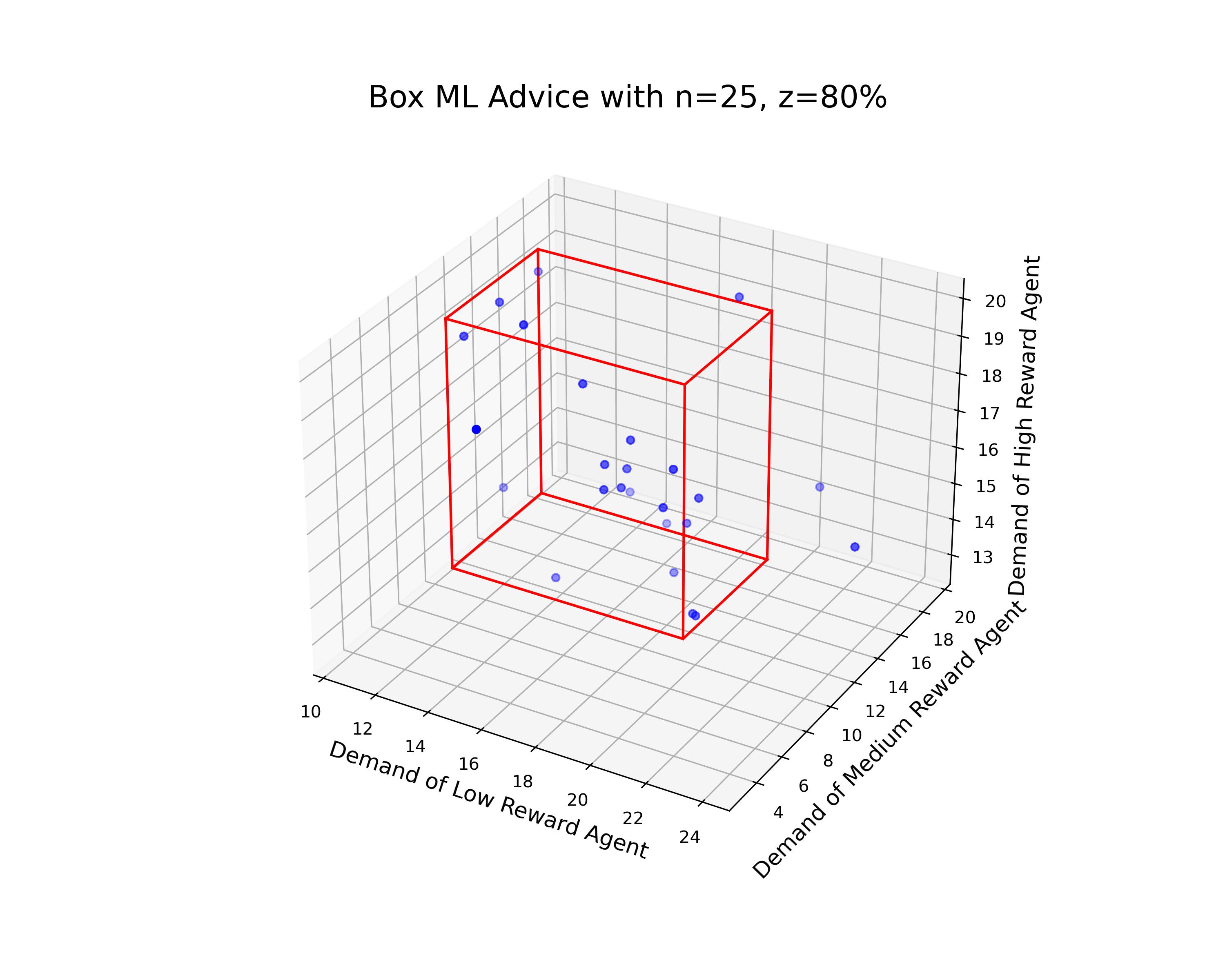}
    \caption{Constructing the box advice with $n =25$ samples with $z= 90\%$ and $z= 80\%$ for the uniform demand model in Equations \eqref{distri3D}, respectively.}
        \label{fig:advice3D}
\end{figure}
We convert Problem \eqref{prob:rbmax_n} into a mixed integer program (MIP) by writing the constraints in a unified form:
\begin{align*}
    & \frac{\sum_{i=1}^nA_i(\mathbf{x})r_i}{\textsc{OPT}(\mathbf{x})} \geq C, \qquad \mathbf{x} \in \mathcal{V} \cup \mathcal{V}', \\
    & A_i(\mathbf{x}) \leq x_i, \qquad A_i(\mathbf{x}) \leq \widetilde{A_i}(\mathbf{x}), \qquad i \in [n], \\
    & A_i(\mathbf{x}) \geq \widetilde{A_i}(\mathbf{x}) + (x_i - m) w_i(\mathbf{x}), \qquad i \in [n], \\
    & A_i(\mathbf{x}) \geq x_i w_i(\mathbf{x}), \qquad i \in [n], \\
    & \widetilde{A_i}(\mathbf{x}) \geq m - p_{i-1}(\mathbf{x}) - \sum_{j=i+1}^{n}A_j(\mathbf{x}), \qquad i \in [n], \\
    & \widetilde{A_i}(\mathbf{x}) \geq 0, \qquad i \in [n], \\
    & \widetilde{A_i}(\mathbf{x}) \leq m(1 - \widetilde{w_i}(\mathbf{x})), \qquad i \in [n], \\
    & \widetilde{A_i}(\mathbf{x}) \leq m - p_{i-1}(\mathbf{x}) - \sum_{j=i+1}^{n}A_j(\mathbf{x}) + n m \widetilde{w_i}(\mathbf{x}), \qquad i \in [n].
\end{align*}
Here,  $w_i(\mathbf{x}) \in \{0,1\}, $ and $\widetilde{w_i}(\mathbf{x})\in \{0, 1\}$. When $w_i(\mathbf{x}) = 1$, we have 
$
A_i(\mathbf{x}) =\min\{x_i,\widetilde{A_i}(\mathbf{x}) \} = x_i,
$
indicating that the full demand of type $i$-th request is served. When $w_i(\mathbf{x}) = 0$, we have
$
A_i(\mathbf{x}) =\widetilde{A_i}(\mathbf{x})
$.
Similarly, when $\widetilde{w_i}(\mathbf{x}) = 1$, we have 
$
\widetilde{A_i}(\mathbf{x}) = (m - p_{i-1}(\mathbf{x}) - \sum_{j=i+1}^{n}A_j(\mathbf{x}))^+ = 0,
$
and when $\widetilde{w_i}(\mathbf{x}) = 0$, 
$
\widetilde{A_i}(\mathbf{x}) = m - p_{i-1}(\mathbf{x}) - \sum_{j=i+1}^{n}A_j(\mathbf{x})$. A similar transformation applies to 
$
\frac{\sum_{i=1}^nA_i(\mathbf{x})r_i}{\textsc{OPT}(\mathbf{x})} \geq R, ~ \mathbf{x} \in \mathcal{V}'.
$

The formulation contains \( 2n|\mathcal{V} \cup \mathcal{V}'| \) continuous variables: \( A_i(\mathbf{x}), \widetilde{A_i}(\mathbf{x}) \). It also contains \( 2n|\mathcal{V} \cup \mathcal{V}'| \) integer variables: \( w_i(\mathbf{x}), \widetilde{w_i}(\mathbf{x}) \). In addition, there are \( n|\mathcal{V} \cup \mathcal{V}'| \) continuous variables for the protection levels \( p_i(\mathbf{x}) \). Finally, \( R \) is a single continuous variable. Therefore, the total number of continuous variables is \( 3n|\mathcal{V} \cup \mathcal{V}'| + 1 \), and the total number of binary integer variables is \( 2n|\mathcal{V} \cup \mathcal{V}'| \).

\subsection{Algorithm \ref{alg:envelope}: Nested Interpolation for Adaptive PLAs}\label{sec:heuristic}

As stated earlier, after solving the MILP problem \eqref{prob:rbmax_n}, we obtain the optimal values of $p_2(\mathbf{x}_2), \ldots, p_n(\mathbf{x}_n)$ only at the points $\mathbf{x} \in \mathcal{V}' \cup \mathcal{V}$. To extend the adaptive protection level functions $p_i(\mathbf{x}_i)$ across the full space $\mathbb{N}^{n-i+1}$ for each $i \in {2,3,\ldots,n}$, we use the interpolation heuristic outlined in Algorithm \ref{alg:envelope}, where $\mathbb{N}$ denotes the set of natural numbers.
\begingroup
 \floatname{algorithm}{Algorithm}
\begin{algorithm} \footnotesize
\caption{\textbf{Nested Interpolation for Adaptive PLAs}}
\label{alg:envelope}

\begin{itemize}[leftmargin=*]
\item[] \textbf{Input:} Sets of vertices $\mathcal{V}$ (ML polytope) and $\mathcal{V}'$ (vertices of $[0,m]^n$), and the optimal values $\{p_i(\mathbf{x}_i)\}_{i=2}^{n}$ available only for $\mathbf{x} \in \mathcal{V} \cup \mathcal{V}'$.

\item[] \textbf{Output:} Extended protection level functions $\tilde{p}_i: \mathbb{N}^{n-i+1} \to [0,m]$ for every $i \in \{2,\dots,n\}$.
\end{itemize}

\begin{itemize}[leftmargin=*]
\item[\textbf{1.}] \textbf{Initialize $\tilde{p}_{n}$.}
      \begin{itemize}[leftmargin=1.2em]
      \item For every integer $x_n \in \mathbb{N}$, define the interpolation:
            \[
              \tilde{p}_n(x_n) = \min_{v_n \in \mathcal{U}_n}
              \Bigl\{ p_n(v_n) + [v_n - x_n]_+ \Bigr\},
              \qquad 
              \mathcal{U}_n = \{ {v}_n \mid {\mathbf v}=(v_1, \ldots, v_n)  \in \mathcal{V} \cup {V}' \}.
            \]
            (Here, $[z]_+ := \max\{0,z\}$.)
      \end{itemize}

\item[\textbf{2.}] \textbf{Iterate downward for $i = n-1, \dots, 2$.}
      \begin{itemize}[leftmargin=1.2em]
      \item[\textbf{(a)}] \textit{Raw envelope.}\;
            For each integer vector $\mathbf{x}_i = (x_i, \dots, x_n) \in \mathbb{N}^{n-i+1}$, set
            \[
              p_i^{\mathrm{env}}(\mathbf{x}_i) = \min_{\mathbf{v}_i \in \mathcal{U}_i}
              \Bigl\{
                p_i(\mathbf{v}_i) + \sum_{j=i}^{n} [v_j - x_j]_+
              \Bigr\},
              \quad 
              \mathcal{U}_i = \{ \mathbf{v}_i = (v_i, \ldots, v_n) \mid  \mathbf{v} =(v_1, v_2, \ldots, v_n) \in  \mathcal{V} \cup \mathcal{V}' \}.
            \]
      \item[\textbf{(b)}] \textit{Enforce nesting with the next‐higher PLA.}\;
            \[
              \tilde{p}_i(\mathbf{x}_i) = 
              \min\Bigl\{
                    p_i^{\mathrm{env}}(\mathbf{x}_i),\;
                    \tilde{p}_{i+1}(x_{i+1}, \dots, x_n)
                  \Bigr\}
              \qquad \forall\, \mathbf{x}_i \in \mathbb{N}^{n-i+1}.
            \]
      \end{itemize}

\item[\textbf{3.}] \textbf{Return} $\{\tilde{p}_i\}_{i=2}^{n}$.
\end{itemize}
\end{algorithm}

\endgroup

\subsection{Multi-type ML Regions} \label{multi-type-ML-region}
In this section, we present examples of three-dimensional ML advice constructed from 25 random samples drawn from the distribution in Equation~\eqref{distri3D}. Figure~\ref{fig:advice3D} illustrates the box advice under different parameter settings: the left plot captures 90\% of the samples, while the right plot captures 80\% of the samples. In Figure~\ref{fig:advice3Dpoly}, we construct three-dimensional polyhedra based on the approach in~\cite{bertsimas2018data}, varying the parameters \(\epsilon\), \(\theta\), and \(\phi\). We observe that as \(\theta\) and \(\phi\) increase, the number of vertices increases. Additionally, as \(\epsilon\) decreases, the ML region expands, becoming more conservative and encompassing more outliers.

%. There are three parameters: $\epsilon$, which represents the size of the polyhedron; $\theta$, $\phi$, which represent the shape of the polyhedron. Specifically, $\theta$, $\phi$ are number of vertices in 3D polar coordinates. The upper two plots display the case where $\theta=\phi=6$. The upper left one has $\epsilon = 0.9$, while the upper right one has $\epsilon = 0.1$. The lower two plots display the case where $\theta=\phi=4$. Again, the lower left one has $\epsilon = 0.9$, while the lower right one has $\epsilon = 0.1$.
%}

\begin{figure} [htb]
     \centering           \includegraphics[width=0.4\textwidth]{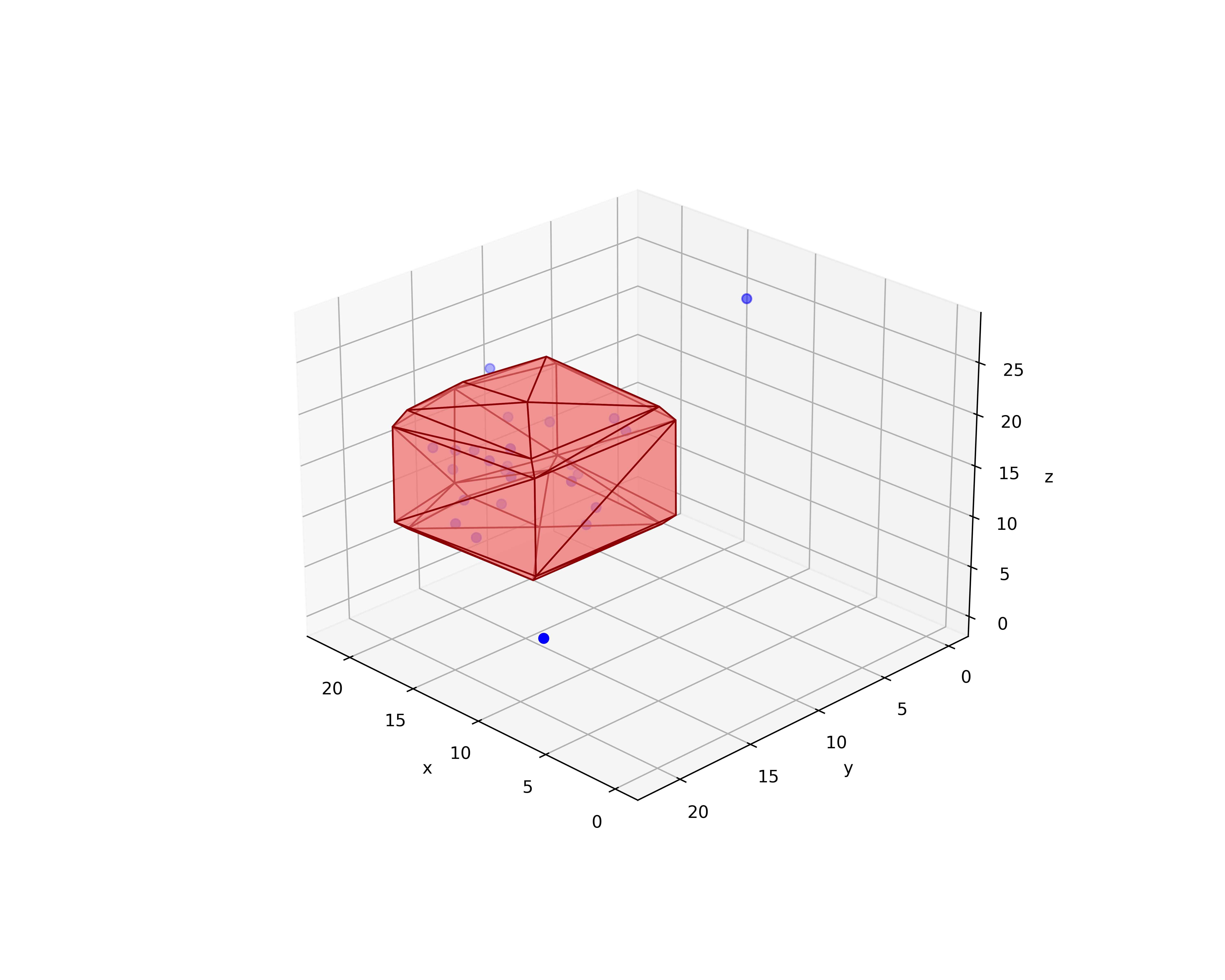}
     \includegraphics[width=0.4\textwidth]{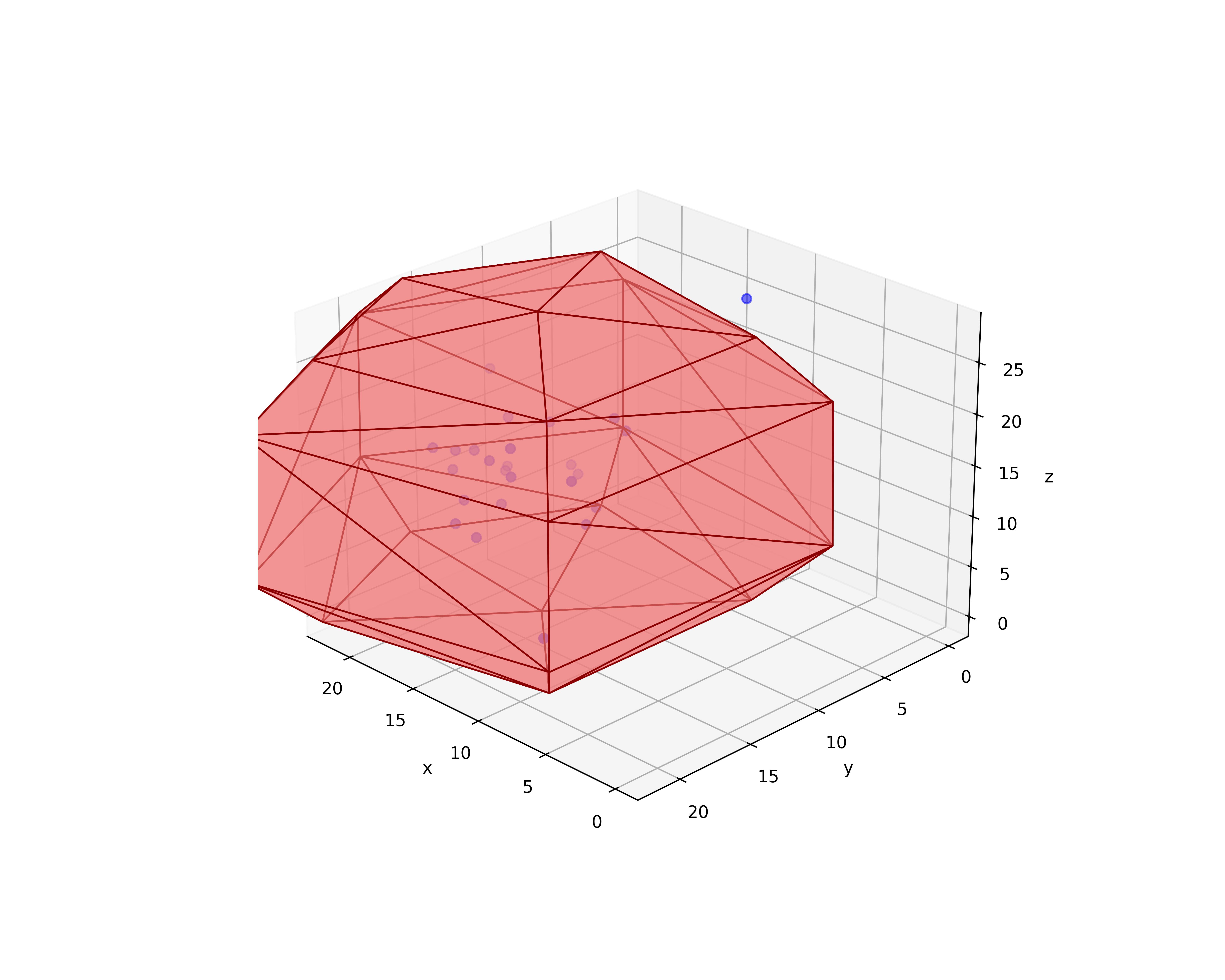} 
\includegraphics[width=0.4\textwidth]{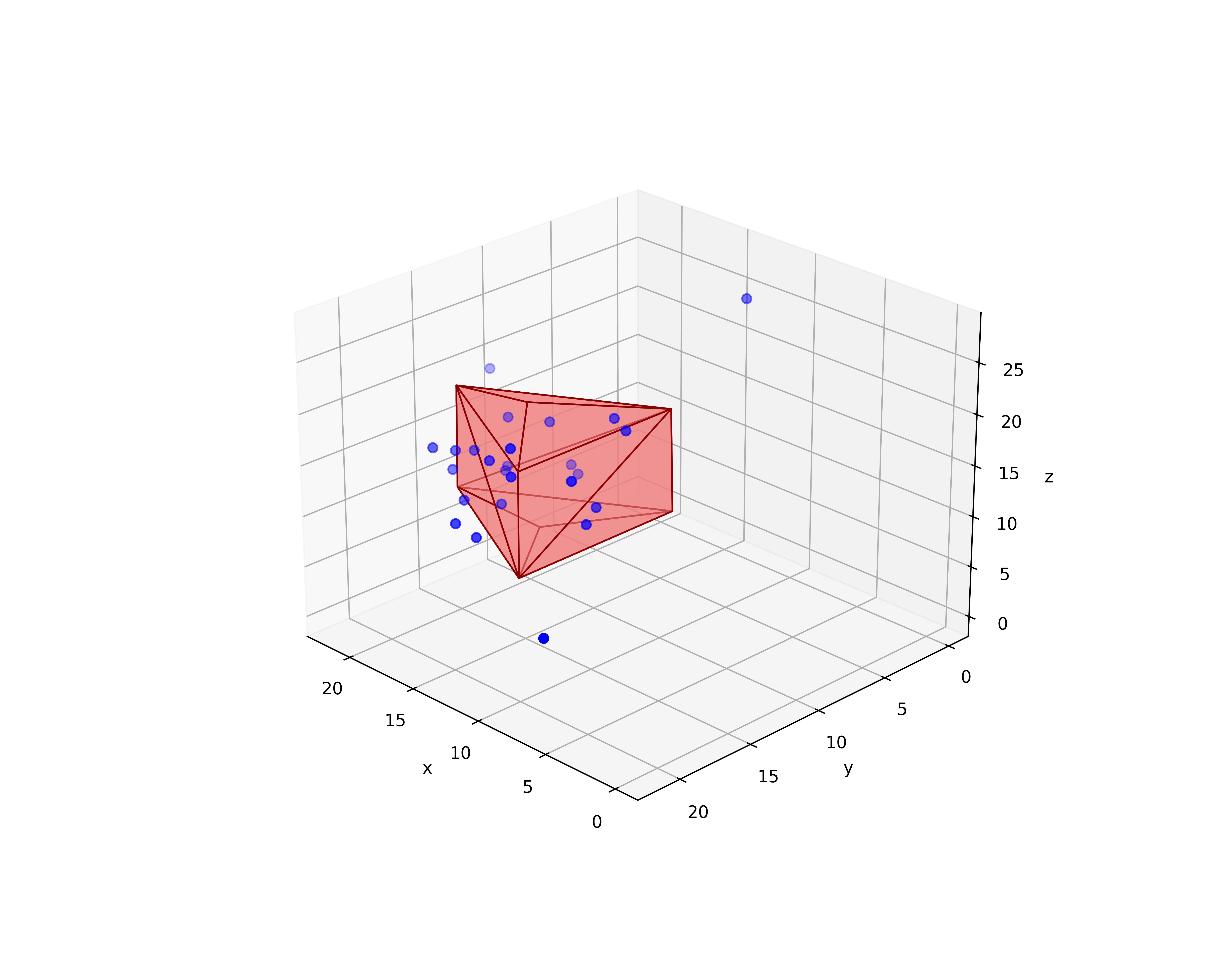}
      \includegraphics[width=0.4\textwidth]{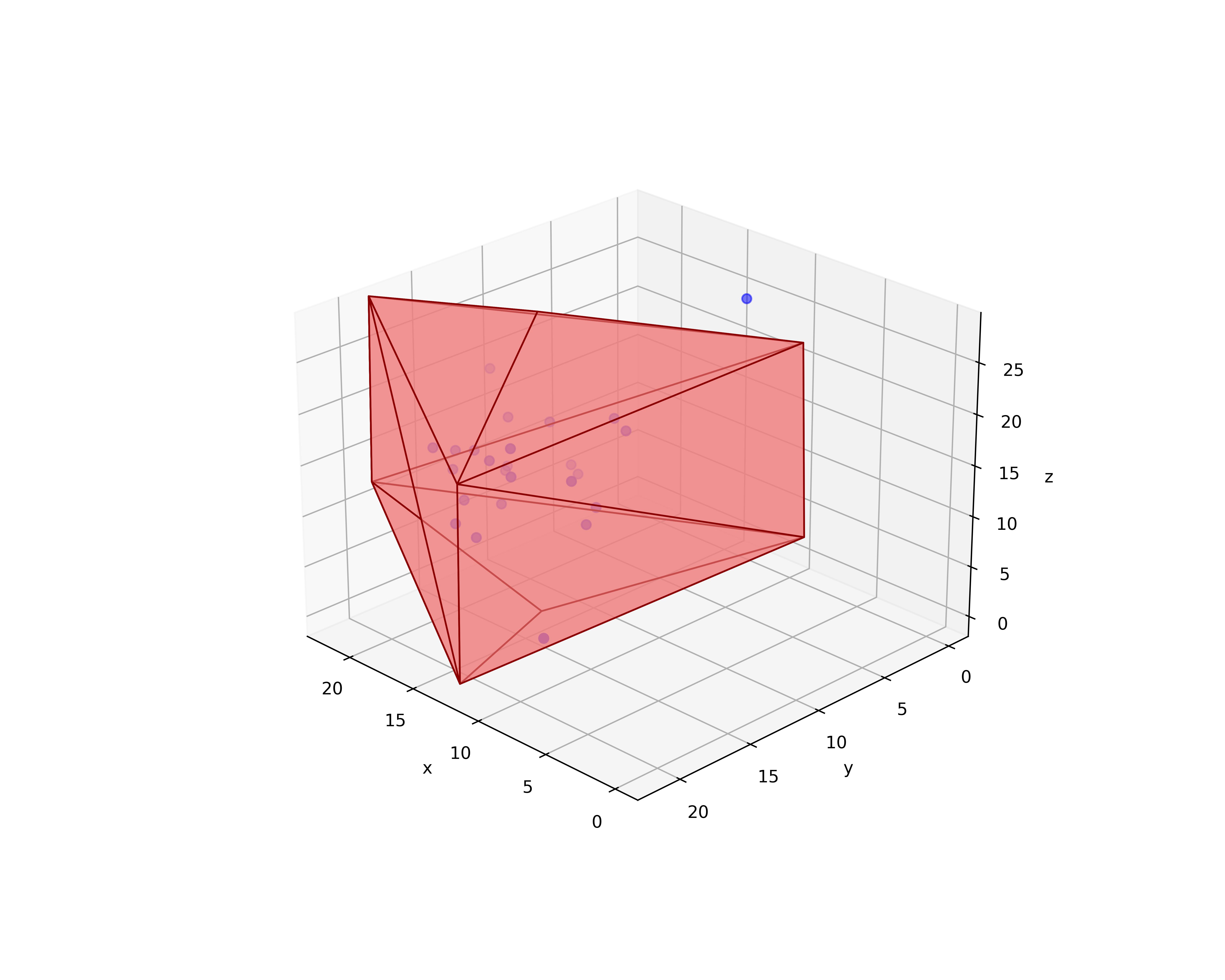}

    \caption{Constructing the box advice with $n =25$ samples for the uniform demand model in Equations \eqref{distri3D}, respectively. \textbf{Upper-left: } $\epsilon=0.9$, $\theta=\phi=6$; \textbf{Upper-right: } $\epsilon=0.1$, $\theta=\phi=6$; \textbf{Lower-left: } $\epsilon=0.9$, $\theta=\phi=4$; \textbf{Lower-right: } $\epsilon=0.1$, $\theta=\phi=4$. }
        \label{fig:advice3Dpoly}
\end{figure}
}

\section{Preliminary Lemmas} 
\label{sec:proofpre}
Here, we introduce several small lemmas which will be used in  several  proofs in this paper.

\subsection{Lemma \ref{lem:underbelow} and its Proof}
\begin{lemma}[Not Enough Demand] \label{lem:underbelow}
    For any $A=(x,y)$ with $x+y<m$ and  $p<y$, we have $\CP_u(p;A=(x,y))=1$.
\end{lemma}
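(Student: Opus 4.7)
The plan is to directly substitute into the formula for $\CP_u$ in Equation \eqref{eq:CP_under} and simplify all the nested $\min$ and $\max$ expressions using the two hypotheses $x+y<m$ and $p<y$, ultimately showing that the numerator equals the denominator.

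First I would simplify the denominator. Since $x+y<m$, we have $y<m$, so $\min\{y,m\}=y$, and also $x<m-y$, so $(m-y)^{+}=m-y>x$ and hence $\min\{x,(m-y)^{+}\}=x$. Thus the denominator reduces to $yr_h+xr_\ell$.

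Next I would simplify the numerator. From $x+y<m$ we get $x<m$ and $m-x>y$, so $(m-x)^{+}=m-x$ and $\min\{y,(m-x)^{+}\}=y$; combined with $p<y$ this gives $\max\{p,\min\{y,(m-x)^{+}\}\}=\max\{p,y\}=y$. For the low-reward term, $p<y$ and $x<m-y$ together yield $m-p>m-y>x$, so $\min\{x,m-p\}=x$. Therefore the numerator also equals $yr_h+xr_\ell$, and the ratio is exactly $1$. I should also note that the precondition $p<\min\{m,y\}=y$ for using the under-protection branch in Equation \eqref{eq:CP_under} is satisfied by hypothesis.

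The statement is essentially a routine case-check, so there is no real obstacle; the only thing to be careful about is keeping track of which branch of each $\min$/$\max$ is active and verifying that the under-protection formula applies in the first place.
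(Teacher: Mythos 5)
Your proposal is correct and follows essentially the same route as the paper's proof: both substitute directly into Equation \eqref{eq:CP_under} and use $x+y<m$ and $p<y$ to resolve each $\min$/$\max$ so that the numerator and denominator both reduce to $yr_h+xr_\ell$. Your additional check that $p<\min\{m,y\}$ (so the under-protection branch applies) is a nice touch the paper leaves implicit.
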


\proof{Proof of Lemma \ref{lem:underbelow}}
By Equation \eqref{eq:CP_under}, we have \begin{align*} 
\CP_u(p;A= (x,y)) &=  
\; \frac{\max\{p, \min\{y,m-x\} \} r_h+\min\{x,m-p\}r_{\ell}}{yr_h+\min\{x, m-y\}r_{\ell}}\\
&= \frac{\max\{p, y \} r_h+\min\{x,m-p\}r_{\ell}}{yr_h+xr_{\ell}} = \frac{y r_h+xr_{\ell}}{yr_h+xr_{\ell}} = 1\,,
\end{align*}
where in the first equation, we used the assumption that $x+y< m$, and in the second equation, we used the assumption that $x+y< m$ and $p< y$.  The last equation is the desired result. 
%If $x+y<m$, any under-protected PL policy leads to accept everyone. That is to say
%\[
%\CP_u(p;A=(x,y))=\frac{xr_{\ell}+yr_h}{xr_{\ell}+yr_h}=1.
%\]

\endproof

\subsection{Lemma \ref{lem:pmonotone} and its Proof}
\begin{lemma}[Monotonicity of the Compatible Ratio $\CP(p; (x, y))$ w.r.t. $p$] \label{lem:pmonotone} For any $0 \le y \le p_1 \le p_2$, and any $x\ge 0$, we have
    %For any $A=(x,y)$, given that $y \leq p_1 \leq p_2$, we have
    \[
    \CP_o(p_1;(x,y)) \geq \CP_o(p_2;(x,y)).
    \]
    Further, for any $p_1 \le p_2 \le y$, and any $x\ge 0$,  we have
   % $y \geq p_1 \geq p_2$, we have 
    \[
    \CP_u(p_2;(x,y)) \geq \CP_u(p_1;(x,y)).
    \]
    That is, the compatible ratio of a point $(x, y)$ increases when the gap between the protection level $p$ and $y$, i.e., $|p-y|$,  gets smaller. In addition, if $p>m-x$, we have the strong monotonicity for both of $\CP_o(p;(x,y))$ and $\CP_u(p;(x,y))$.
    
    %In summary, no matter we are over- or under-protecting, the closer the PL $p$ and $y$, the larger $\CP(p;(x,y))$.
\end{lemma}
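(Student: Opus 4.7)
{Proof Sketch of Lemma \ref{lem:pmonotone}.}
The key observation driving both statements is that the denominator of the compatible ratio (either $\CP_o$ or $\CP_u$) does \emph{not} depend on $p$, so it suffices to analyze the monotonicity of the numerator as a function of $p$.

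For the over-protection claim, fix $y \le p_1 \le p_2$. The common denominator is $\min\{y,m\}r_h + \min\{x,(m-y)^+\}r_\ell$, and the numerator equals $\min\{y,m\}r_h + \min\{x,(m-p)^+\}r_\ell$. Since $p \mapsto (m-p)^+$ is non-increasing, so is $p \mapsto \min\{x,(m-p)^+\}$, which immediately yields $\CP_o(p_1;(x,y)) \ge \CP_o(p_2;(x,y))$. The strong version when $p > m-x$ follows because then $\min\{x,(m-p)^+\} = (m-p)^+$, which is strictly decreasing in $p$ on $[m-x,m]$.

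For the under-protection claim, fix $p_1 \le p_2 \le y$. Again the denominator is $p$-independent, so I need to show that
\begin{equation*}
\varphi(p) \;:=\; \max\{p,\,\min\{y,(m-x)^+\}\}\,r_h \;+\; \min\{x,\,m-p\}\,r_\ell
\end{equation*}
is non-decreasing in $p$. Write $A := \min\{y,(m-x)^+\}$ and split into two regimes depending on whether $p \le m-x$ or $p > m-x$. On $\{p \le m-x\}$ the term $\min\{x,m-p\}$ equals $x$ and is flat, while $\max\{p,A\}\,r_h$ is non-decreasing (its left-derivative is either $0$ or $r_h$); hence $\varphi$ is non-decreasing there. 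On $\{p > m-x\}$ the term $\min\{x,m-p\}$ equals $m-p$ and contributes $-r_\ell$ per unit; the main subtlety is that we need the $\max\{p,A\}$ term to dominate. When $x \le m$, $A = \min\{y,m-x\} \le m-x < p$, so $\max\{p,A\} = p$ and $\varphi'(p) = r_h - r_\ell > 0$. When $x > m$, $A = 0 \le p$, so again $\max\{p,A\} = p$ and the same conclusion holds. Combining the two regimes yields weak monotonicity globally, and in the regime $p > m-x$ the computation gives the strict inequality $r_h - r_\ell > 0$, proving the strong monotonicity.

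The main obstacle is the case analysis in the under-protection part, in particular ruling out a spurious ``bad'' case where one might increase $p$ across the threshold $m-x$ and lose $r_\ell$ without yet gaining $r_h$. The key fact that prevents this is the inequality $A \le (m-x)^+$ built into the definition of $A$, which guarantees that as soon as $p$ crosses $m-x$ it has already crossed $A$.
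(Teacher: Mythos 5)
Your proof is correct and follows essentially the same route as the paper: a direct comparison of the numerators of $\CP_o$ and $\CP_u$ from their definitions, noting the denominators are independent of $p$. In fact your treatment of the under-protection case is more careful than the paper's own justification (which asserts $\min\{x,m-p_2\}\geq\min\{x,m-p_1\}$ for $p_2\geq p_1$ — backwards as written); your observation that $\min\{y,(m-x)^{+}\}\le (m-x)^{+}$ forces $\max\{p,\cdot\}=p$ precisely when the $r_{\ell}$ term starts decreasing, so that the net rate is $r_h-r_{\ell}>0$, is the right way to close that step.
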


\proof{Proof of Lemma \ref{lem:pmonotone}}
By the definition of $\CP_o$ in Equation \eqref{eq:CP_over} and the assumption that $y\le p_1\le p_2$, we have
   % Given that we are over-protecting on $A_1=(x,y_1)$ and $A_2=(x,y_2)$, with $p \geq y_1 \geq y_2$, if both of $A_1$ and $A_2$ are below the line $x+y=m$, then we have
\[
\CP_o(p_1;(x,y)) = \frac{\min\{y,m\}r_h+\min\{x,m-p_1\}r_{\ell}}{\min\{y,m\}r_h+\min\{x, (m-y)^{+}\}r_{\ell}} \geq \frac{\min\{y,m\}r_h+\min\{x,m-p_2\}r_{\ell}}{\min\{y,m\}r_h+\min\{x, (m-y)^{+}\}r_{\ell}}=\CP_o(p_2; (x,y))\,.
\]
The last inequality is the desired result. Moreover, if $p_2 >p_1>m-x$, we have the strong monotonicity.

%Given that we are under-protecting with $y \geq p_1 \geq p_2$, we have
By the definition of $\CP_u$ in Equation \eqref{eq:CP_under} and the assumption that $ p_1\le p_2 \le y$, we have
\begin{align*}
    \CP_u(p_2; (x,y)) &= \frac{\max\{p_2, \min\{y,(m-x)^{+}\} \} r_h+\min\{x,m-p_2\}r_{\ell}}{\min\{y,m\}r_h+\min\{x, (m-y)^{+}\}r_{\ell}} \\&\geq \frac{\max\{p_1, \min\{y,(m-x)^{+}\} \} r_h+\min\{x,m-p_1\}r_{\ell}}{\min\{y,m\}r_h+\min\{x, (m-y)^{+}\}r_{\ell}} = \CP_u(p_1; (x,y)),
\end{align*}
where the inequality holds because for $p_2 \geq p_1$, $\max\{p_2, \min\{y,m-x\} \} \geq \max\{p_1, \min\{y,m-x\} \}$ and $\min\{x,m-p_2\} \geq \min\{x,m-p_1\}$. Moreover, if $p_2 >p_1>m-x$, we have the strong monotonicity.

\endproof

\subsection{Lemma \ref{lem:mout} and its Proof}
\begin{lemma} \label{lem:mout}
    For any $y > m$ and $p \in [0,m]$, we have
    \[
    \CP_u(p;(x,y))=\CP_u(p;(x,m)).
    \]
    For any $x>m$ and $p \in [0,m]$, we have
    \[
    \CP_o(p;(x,y))=\CP_o(p;(m,y)),
    \]
    and
    \[
    \CP_u(p;(x,y))=\CP_u(p;(m,y)).
    \]
\end{lemma}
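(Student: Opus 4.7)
\textbf{Proof proposal for Lemma \ref{lem:mout}.} All three identities are purely algebraic consequences of the definitions of $\CP_o$ and $\CP_u$ in Equations \eqref{eq:CP_over}--\eqref{eq:CP_under}, combined with the simple fact that once a variable exceeds $m$, it only ever appears through a $\min\{\cdot,m\}$ clip or through $(m-\cdot)^{+}=0$. So the plan is to substitute the definitions and track which terms are insensitive to values above $m$, making no further appeal to any other machinery.

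For the first identity, suppose $y>m$ and $p\in[0,m]$. Since $p\le m=\min\{m,y\}$, both $\CP_u(p;(x,y))$ and $\CP_u(p;(x,m))$ are given by the under-protection formula \eqref{eq:CP_under}. The numerator involves $\max\{p,\min\{y,(m-x)^{+}\}\}$ and $\min\{x,m-p\}$; since $(m-x)^{+}\le m<y$, replacing $y$ by $m$ does not change $\min\{y,(m-x)^{+}\}=(m-x)^{+}$, and the term $\min\{x,m-p\}$ is $y$-free. The denominator is $\min\{y,m\}r_h+\min\{x,(m-y)^{+}\}r_{\ell}=mr_h+0$, which equals its value at $y=m$. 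Hence the two expressions agree term by term.

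For the second identity, fix $x>m$ and $p\in[0,m]$, so $p\ge\min\{m,y\}$ means we are in the over-protection regime (and if $p<\min\{m,y\}$, the argument for identity 3 applies; so for identity 2 we use \eqref{eq:CP_over}). Because $x>m\ge(m-p)^{+}$ and $x>m\ge(m-y)^{+}$, the clips $\min\{x,(m-p)^{+}\}=(m-p)^{+}$ and $\min\{x,(m-y)^{+}\}=(m-y)^{+}$ are unchanged if we replace $x$ by $m$. The factor $\min\{y,m\}$ is $x$-free, so the numerator and denominator of $\CP_o(p;(x,y))$ are identical to those of $\CP_o(p;(m,y))$.

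For the third identity, again fix $x>m$ and $p\in[0,m]$. The key simplification is that $(m-x)^{+}=0$, so $\min\{y,(m-x)^{+}\}=0$ and therefore $\max\{p,\min\{y,(m-x)^{+}\}\}=p$; the same holds with $x$ replaced by $m$ since $(m-m)^{+}=0$. Moreover $\min\{x,m-p\}=m-p=\min\{m,m-p\}$ because $x>m\ge m-p$, and $\min\{x,(m-y)^{+}\}=(m-y)^{+}=\min\{m,(m-y)^{+}\}$ for the same reason. Substituting into \eqref{eq:CP_under} yields the claimed equality. I do not anticipate any genuine obstacle; the only care needed is to keep track of which terms are clipped and to handle the harmless boundary case $p=\min\{m,y\}$ where the two piecewise formulas agree, so the identities hold on both branches.
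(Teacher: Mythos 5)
Your proof is correct and follows essentially the same route as the paper's: direct substitution into the definitions \eqref{eq:CP_over}--\eqref{eq:CP_under} and checking term by term that the $\min\{\cdot,m\}$ clips and $(m-\cdot)^{+}$ truncations render each expression insensitive to raising $x$ or $y$ above $m$. No gaps.
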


\proof{Proof of Lemma \ref{lem:mout}}
We first prove that $\CP_u(p;(x,y))=\CP_u(p;(x,m))$ for any $y>m$. By Equation \eqref{eq:CP_under}, we have
\begin{align*}
    \CP_u(p;(x,y)) &= \frac{\max\{p, \min\{y,(m-x)^{+}\} \} r_h+\min\{x,m-p\}r_{\ell}}{\min\{y,m\}r_h+\min\{x, (m-y)^{+}\}r_{\ell}} \\&= \frac{\max\{p, \min\{m,(m-x)^{+}\} \} r_h+\min\{x,m-p\}r_{\ell}}{mr_h} = \CP_u(p;(x,m)).
\end{align*}

Next, we show that $\CP_o(p;(x,y))=\CP_o(p;(m,y))$ for any $x>m$. By Equation \eqref{eq:CP_over}, we have
\begin{align*}
    \CP_o(p;(x,y))&= \frac{\min\{y,m\}r_h+\min\{x,m-p\}r_{\ell}}{\min\{y,m\}r_h+\min\{x, (m-y)^{+}\}r_{\ell}} \\&= \frac{\min\{y,m\}r_h+\min\{m,m-p\}r_{\ell}}{\min\{y,m\}r_h+\min\{m, (m-y)^{+}\}r_{\ell}} = \CP_o(p;(m,y)).
\end{align*}
By Equation \eqref{eq:CP_under}, we have
\begin{align*}
    \CP_u(p;(x,y)) &= \frac{\max\{p, \min\{y,(m-x)^{+}\} \} r_h+\min\{x,m-p\}r_{\ell}}{\min\{y,m\}r_h+\min\{x, (m-y)^{+}\}r_{\ell}} \\&= \frac{\max\{p, \min\{y,0\} \} r_h+\min\{m,m-p\}r_{\ell}}{\min\{y,m\}r_h+\min\{m, (m-y)^{+}\}r_{\ell}} = \CP_u(p;(m,y)).
\end{align*}

\Halmos
\endproof

\subsection{Lemma \ref{lem:Hlow} and its Proof}
\begin{lemma} \label{lem:Hlow}
    Recall that $\underline h(x)$, defined in Equation \eqref{eq:lower_upper}, is the lower envelop of $\region$. Let $\underline{\mathcal{H}}(x)=\min\{\underline h(x),m\}$. If there exists $x_0 \in [\underline x, \bar x]$ such that $\underline h(x_0)<m$, then either $\underline{\mathcal{H}}(x)=\underline h(x)$ for all $x \in [\underline x, \bar x]$ or there exists $\underline x \leq x_1 < x_2 \leq \bar x$ such that $\underline{\mathcal{H}}(x)=m$ for $x \in [\underline x, x_1] \cup [x_2, \bar x]$ and $\underline{\mathcal{H}}(x)=\underline h(x)$ for $x \in [x_1,x_2]$. In other words, $\underline{\mathcal{H}}(x)=\underline h(x)$ on a connected interval.  
\end{lemma}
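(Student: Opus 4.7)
The plan is to exploit the convexity of the lower envelope $\underline{h}(\cdot)$, which was already established earlier in the paper as a consequence of $\region$ being convex. Convexity of a real-valued function on an interval implies that every sublevel set is convex, hence an interval, and this is exactly what the lemma asks for.

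More concretely, I would proceed as follows. Define the sublevel set
\[
A = \{x \in [\underline{x}, \bar{x}] : \underline{h}(x) \leq m\}.
\]
The assumption that $\underline{h}(x_0) < m$ for some $x_0 \in [\underline{x},\bar{x}]$ gives $A \neq \emptyset$. Since $\underline{h}$ is convex on $[\underline{x},\bar{x}]$, the set $A$ is convex, so it is a sub-interval of $[\underline{x},\bar{x}]$. Because $\underline{h}$ is continuous on the interior of $[\underline{x},\bar{x}]$ (convex functions on an open interval are continuous) and the domain $[\underline{x}, \bar{x}]$ is closed and bounded, $A$ is a closed interval; write $A = [x_1, x_2]$ with $\underline{x} \leq x_1 \leq x_2 \leq \bar{x}$. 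On $A$ we have $\underline{h}(x) \leq m$, so $\underline{\mathcal{H}}(x) = \min\{\underline{h}(x), m\} = \underline{h}(x)$.

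I would then split into the two cases in the statement. If $A = [\underline{x}, \bar{x}]$, then $\underline{\mathcal{H}}(x) = \underline{h}(x)$ for all $x \in [\underline{x}, \bar{x}]$, which is the first case. Otherwise, at least one of $x_1 > \underline{x}$ or $x_2 < \bar{x}$ holds, and on $[\underline{x}, \bar{x}] \setminus [x_1, x_2]$ we have $\underline{h}(x) > m$, so $\underline{\mathcal{H}}(x) = m$. Continuity of $\underline{h}$ at $x_1$ and $x_2$ (for $x_1 \in (\underline{x}, \bar{x})$ or $x_2 \in (\underline{x},\bar{x})$, continuity is immediate; for the boundary points $\underline{x}, \bar{x}$, if a boundary endpoint belongs to the complement then the statement that $\underline{\mathcal{H}} = m$ at interior-adjacent boundary follows from $\underline{h}(x) > m$ on a one-sided neighborhood combined with lower semicontinuity of $\underline{h}$) ensures $\underline{h}(x_1) = \underline{h}(x_2) = m$, so the two formulas $\underline{\mathcal{H}}(x) = m$ and $\underline{\mathcal{H}}(x) = \underline{h}(x)$ agree at the endpoints $x_1, x_2$, giving a consistent piecewise description.

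There is essentially no obstacle here beyond a careful bookkeeping of endpoints; the main step is the one-line observation that sublevel sets of convex functions are convex. I would also briefly note that the analogous argument cannot be applied to $\bar{h}$, since $\bar{h}$ is concave rather than convex, which is precisely why this ``connectedness'' property is a special feature of the lower envelope and is worth isolating as a lemma.
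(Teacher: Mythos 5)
Your proof is correct and rests on the same single idea as the paper's: convexity of $\underline{h}$ forces the set where $\underline{h}\le m$ to be connected. The paper phrases this as a contradiction (two disjoint intervals where $\underline{\mathcal{H}}=\underline{h}$ would put a point with $\underline{h}\ge m$ between two points with $\underline{h}<m$, violating convexity), while you state it directly via convexity of sublevel sets; the endpoint bookkeeping you add is harmless and slightly more careful than the paper's own treatment.
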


\proof{Proof of Lemma \ref{lem:Hlow}}
We will prove this statement by contradiction. Let us suppose that $\underline{\mathcal{H}}(x)=\underline h(x)$ on $[x_3,x_4] \cup [x_5,x_6]$. Then, let us randomly pick $x_7 \in (x_4,x_5)$. We recall that $\underline{\mathcal{H}}(x)=\min\{\underline h(x),m\}$.

Now, we observe that $\underline{h}(x_4)<m$, $\underline{h}(x_5)<m$, and $\underline{h}(x_7) \geq m$. However, this contradicts the fact that $\underline{h}(\cdot)$ is a convex function. Therefore, our initial assumption that $\underline{\mathcal{H}}(x)=\underline h(x)$ on $[x_3,x_4] \cup [x_5,x_6]$ is false

\Halmos
\endproof

\subsection{Lemma \ref{lem:Hh} and its Proof}
\begin{lemma} \label{lem:Hh}
    Recall that $\bar x_u$ is defined in Equation \eqref{eq:xmin}, and $\underline x_u=\sup\{\underline{x}<x<\bar x_u: \CP_o(m;(x,\underline{h}(x))) \geq C \}$. Then, for $x \in (\underline x_u, \bar x_u)$, we have $\underline{\mathcal{H}}(x)=\underline h(x)$.
\end{lemma}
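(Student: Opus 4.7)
\textbf{Proof plan for Lemma \ref{lem:Hh}.} The statement asks us to show that on the open interval $(\underline x_u, \bar x_u)$, the truncation $\underline{\mathcal{H}}(x)=\min\{\underline h(x),m\}$ coincides with the lower envelope $\underline h(x)$. Equivalently, I need to prove that $\underline h(x)\le m$ throughout this interval. My plan is to read this off directly from the defining property of $\underline x_u$, namely
\[
\underline x_u=\sup\{\underline x<x<\bar x_u:\CP_o(m;(x,\underline h(x)))\ge C\},
\]
together with a short computation of $\CP_o(m;(x,\underline h(x)))$.

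The key calculation is to evaluate $\CP_o(m;(x,\underline h(x)))$ in the case $\underline h(x)\ge m$. Plugging $p=m$ into Equation \eqref{eq:CP_over}, the numerator becomes $\min\{\underline h(x),m\}r_h+\min\{x,0\}r_\ell=mr_h$ and the denominator becomes $\min\{\underline h(x),m\}r_h+\min\{x,(m-\underline h(x))^{+}\}r_\ell=mr_h$. Hence $\CP_o(m;(x,\underline h(x)))=1$ whenever $\underline h(x)\ge m$. Conversely, if $\underline h(x)<m$ one can check (by considering the two subcases $x+\underline h(x)\le m$ and $x+\underline h(x)>m$) that $\CP_o(m;(x,\underline h(x)))<1$ strictly, so the value $1$ is attained precisely on the set $\{x:\underline h(x)\ge m\}$.

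Now I combine this with the definition of $\underline x_u$. By the supremum characterization, any $x$ strictly greater than $\underline x_u$ (and still less than $\bar x_u$) fails the inequality, i.e.\ $\CP_o(m;(x,\underline h(x)))<C$. Since we are working with $C\le\cstar\le 1$, this forces $\CP_o(m;(x,\underline h(x)))<1$, so the computation above rules out $\underline h(x)\ge m$. Therefore $\underline h(x)<m$, and $\underline{\mathcal{H}}(x)=\underline h(x)$, which is exactly what we wanted.

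The proof is essentially a one-line calculation plus an application of the definition of $\underline x_u$; there is no real obstacle. The only point requiring mild care is the corner case $C=1$, where the strict inequality $\CP_o<C$ still yields $\CP_o<1$ and the conclusion goes through, and the degenerate case in which the defining set of $\underline x_u$ is empty (so $\underline x_u=\underline x$) or $\underline x_u=\bar x_u$ (so the interval is empty), both of which make the statement vacuous or immediate. I will mention these briefly for completeness and otherwise keep the argument compact.
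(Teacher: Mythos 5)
Your proof is correct, and it is in fact more direct than the one in the paper. The shared core is the computation that $p=m$ together with $\underline h(x)\ge m$ forces both the numerator and denominator of $\CP_o$ in Equation \eqref{eq:CP_over} to equal $mr_h$, hence $\CP_o(m;(x,\underline h(x)))=1\ge C$; the paper uses exactly this in its first case. Where you diverge is in the overall structure: the paper first invokes Lemma \ref{lem:Hlow} to show that $\{x:\underline{\mathcal H}(x)=\underline h(x)\}$ is a connected interval $[x_1,x_2]$, and then handles the two complementary pieces separately, with the right piece $[x_2,\bar x]$ requiring a rather involved argument through the definition of $\bar x_u$, the derivative $\underline{\mathcal H}'(\bar x^-)$, and the conclusion $\underline x_u=\bar x_u$. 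You bypass all of that by observing that \emph{any} $x\in(\underline x_u,\bar x_u)$ lies strictly above $\sup$ of the defining set of $\underline x_u$ while still satisfying $\underline x<x<\bar x_u$, so it must fail the membership inequality, i.e.\ $\CP_o(m;(x,\underline h(x)))<C\le 1$, which is incompatible with $\underline h(x)\ge m$. This single argument subsumes both of the paper's cases and renders the appeal to Lemma \ref{lem:Hlow} unnecessary for this statement. One cosmetic remark: your parenthetical converse claim that $\CP_o(m;(x,\underline h(x)))<1$ strictly whenever $\underline h(x)<m$ fails in the degenerate corner $x=0$ (where the ratio is $1$ regardless), but since your argument only uses the forward implication $\underline h(x)\ge m\Rightarrow\CP_o=1$, this does not affect the proof. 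Your handling of the vacuous cases ($S=\emptyset$, $\underline x_u=\bar x_u$) is also fine.
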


\proof{Proof of Lemma \ref{lem:Hh}}
By Lemma \ref{lem:Hlow}, we have either $\underline{\mathcal{H}}(x)=\underline h(x)$ for all $x \in [\underline x, \bar x]$ or there exists $\underline x \leq x_1 < x_2 \leq \bar x$ such that $\underline{\mathcal{H}}(x)=m$ for $x \in [\underline x, x_1] \cup [x_2, \bar x]$ and $\underline{\mathcal{H}}(x)=\underline h(x)$ for $x \in [x_1,x_2]$. In the first case, the statement is trivial. 

In the second case, we prove by contradiction. Consider any $x\in (\underline x_u, \bar x_u)$ and assume that  $x \in [\underline x, x_1]$. We then have $\underline{\mathcal{H}}(x)=m$, which implies that $\underline{h}(x) \geq m$. By Equation \eqref{eq:CP_over}, we have
\[
\CP_o(m;(x,\underline{h}(x))) =1 \geq C,
\]
which implies that $x < \underline x_u$, which contradicts the fact that $x \in (\underline x_u, \bar x_u)$. 

Now, consider any $x\in (\underline x_u, \bar x_u)$ and assume that   $x \in [x_2,\bar x]$. We then have $\underline{\mathcal{H}}(x)=m$, which implies that $\underline{h}(x) \geq m$. Recall that 
\begin{align} \bar x_u = \left\{ \begin{array}{ll}
         x_L & \quad \mbox{if $x_L+y_L \geq m$};\\
        \sup\{x \in [x_L,\bar{x}]: (1-C)\frac{r_h}{r_{\ell}}\underline{\mathcal{H}}'(x^{-})-C < 0 \} & \quad  \mbox{Otherwise}\,,\end{array} \right. \end{align}
In this case, as $\underline{h}(x)<m$ for $x \in [x_1,x_2]$, we have $x_L \in [x_1,x_2]$. Therefore, $\bar x_u \neq x_L$ because otherwise, we have $\bar x_u = x_L \leq x_2$, and there does not exist $x\in (\underline x_u, \bar x_u)$ and $x \in [x_2,\bar x]$. Then, we have $\bar x_u = \sup\{x \in [x_L,\bar{x}]: (1-C)\frac{r_h}{r_{\ell}}\underline{\mathcal{H}}'(x^{-})-C < 0 \}$. As we assume that $\underline{h}(x) \geq m$ for $x \in [x_2,\bar x]$, then we have $\underline{\mathcal{H}}'(\bar x^{-}) = 0$ because $\underline{\mathcal{H}}(x)=m$ for $x \in [x_2,\bar x]$. Therefore, $\bar x_u=\bar x$. In addition, as we have $\underline{h}(\bar x) \geq m$, by Equation \eqref{eq:CP_over}, we have
\[
\CP_o(m;(\bar x,\underline{h}(\bar x))) =1 \geq C,
\]
which implies that $\underline x_u = \bar x_u$. Therefore, there does not exist $x\in (\underline x_u, \bar x_u)$, which is a contradiction.
\iffalse
We claim that $\bar x_u$ is the lowest point of $\underline{h}(x)$. If $x_L+y_L \geq m$, then this is obvious. Otherwise, as {\color{red} do you mean $ \frac{\partial u(x; C)}{\partial x}$?}{\color{blue}Sure, this is a typo}$ \frac{\partial u(x; C)}{\partial x}=(1-C)\frac{r_h}{r_{\ell}}\underline{\mathcal{H}}'(x^{-})-C$ for $x \in [x_L,\bar{x}]$, we have $\bar x_u = \sup\{x \in [x_L,\bar{x}]: \underline{h}'(x)<0\}$, which implies that $\bar x_u$ is the lowest point of $\underline{h}(x)$. {\color{red} did not understand this part "As $\bar x_u$ is defined as the lowest point of $\underline{h}(x)$...". Here is the definition of $\bar x_u$:
\begin{align}  \bar x_u = \left\{ \begin{array}{ll}
         x_L & \quad \mbox{if $x_L+y_L \geq m$};\\
        \sup\{x \in [x_L,\bar{x}]: (1-C)\frac{r_h}{r_{\ell}}\underline{\mathcal{H}}'(x^{-})-C < 0 \} & \quad  \mbox{Otherwise}\,,\end{array} \right. \end{align}
        So, could you explain why this is contradiction using the definition above? }{\color{blue}added explanation}
        Therefore, we have $\underline h(\bar x_u) < m$, and we can immediately get $x>\bar x_u$ because $\underline h(x) \geq m$ and $\underline h(\cdot)$ is convex. {\color{red} NEW: I still don't get it. First, there is a typo here I think. Second, after fixing the typo, we have 
        $\bar x_u = \sup\{x \in [x_L,\bar{x}]: \frac{\partial u(x; C)}{\partial x}<0\}$. Why does this imply that $\bar x_u$ is the lowest point of $\underline h(x)$?}
\fi
\Halmos
\endproof

\subsection{Lemma \ref{lem:leftthresholdun} and its Proof}
\begin{lemma} \label{lem:leftthresholdun} 
    Fix any $C \in (0,1)$. Recall that $\underline x_u=\sup\{\underline{x}<x<\bar x_u: \CP_o(m;(x,\underline{h}(x))) \geq C \}$, and $\bar{x}_{l}=\inf\{x_H<x<\bar{x}: \CP_u(0;(x,\overline{h}(x))) \geq C \}$.  Then,  
    for $x \in [\bar x_{\ell},\bar x]$, we have 
    \[
    \CP_u(0; (x,\overline{h}({x}))) \geq C.
    \]
    Similarly, for any $\underline x<x \leq \underline x_u$, we have
    \[
    \CP_o(m; (x,\underline{h}({x}))) \geq C.
    \]
\end{lemma}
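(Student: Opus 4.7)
The plan is to prove both claims by combining the sup/inf definitions of $\underline{x}_u$ and $\bar{x}_\ell$ with continuity and monotonicity of the relevant compatible-ratio functions, restricted to the intervals of interest.

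For the second claim, let $f(x) = \CP_o(m;(x,\underline{h}(x)))$. Since $\underline{h}(\cdot)$ is continuous on $[\underline{x},\bar{x}]$ as the lower envelope of a convex set, and $\CP_o$ is continuous in its arguments, $f$ is continuous. By the definition of supremum, there are points arbitrarily close to $\underline{x}_u$ from below at which $f\ge C$, so continuity gives $f(\underline{x}_u)\ge C$. To propagate this bound to the entire interval $(\underline{x},\underline{x}_u]$, I would establish that $f$ is non-increasing on $[\underline{x},\underline{x}_u]$. Using the closed-form expressions arising from Property~1 of Lemma~\ref{lem:property_u_l}, split according to whether $x+\underline{\mathcal H}(x)\lessgtr m$, the sign of $f'$ can be determined. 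In the sub-regime $x+\underline{\mathcal H}(x)\ge m$, $f$ depends only on $\underline{\mathcal H}(x)$ and is non-increasing in $x$ on $[\underline{x},\underline{x}_u]$ because $\underline{\mathcal H}$ is non-decreasing there. In the sub-regime $x+\underline{\mathcal H}(x)<m$, the defining condition of $\bar{x}_u$ in Equation~\eqref{eq:xmin} forces $(1-C)\tfrac{r_h}{r_\ell}\underline{\mathcal H}'(x^-)-C<0$ throughout $[\underline{x},\underline{x}_u]\subseteq[\underline{x},\bar{x}_u]$, which one verifies directly implies $f$ is non-increasing on this sub-interval.

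For the first claim, a symmetric argument applies to $g(x) = \CP_u(0;(x,\bar{h}(x)))$, now relying on Property~2 of Lemma~\ref{lem:property_u_l} and concavity of $\bar{h}$. By the infimum definition and continuity of $g$ one obtains $g(\bar{x}_\ell)\ge C$. Then, showing $g$ is non-decreasing on $[\bar{x}_\ell,\bar{x}]$, via the analogous case analysis on $y=m-x$ and $y=m$ and using that $\bar{h}$ is non-increasing past $x_H$, gives $g(x)\ge g(\bar{x}_\ell)\ge C$ for all $x\in[\bar{x}_\ell,\bar{x}]$.

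The main obstacle will be the regime-by-regime case analysis required to establish monotonicity: $\CP_o(p;(x,y))$ and $\CP_u(p;(x,y))$ are piecewise-smooth, with functional form changing across the lines $y=m-x$, $x=m$, and $y=m$. Each sub-regime requires a separate derivative (or direct comparison) calculation, and one must verify that transitions between regimes lying inside $[\underline{x},\underline{x}_u]$ or $[\bar{x}_\ell,\bar{x}]$ are compatible with global monotonicity. The characterizations of $\bar{x}_u$ in Equation~\eqref{eq:xmin} and of $x_H$ as the argmax of $\bar{h}$ are essential here, since they rule out the "bad" regime transitions that would otherwise break monotonicity of $f$ or $g$ on the intervals of interest.
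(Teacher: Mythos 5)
Your route is genuinely different from the paper's, and it has a gap. The paper never argues that $x\mapsto \CP_o(m;(x,\underline h(x)))$ or $x\mapsto \CP_u(0;(x,\bar h(x)))$ is monotone along the envelope. Instead, it characterizes the level set $\{(x,y):\CP(p;(x,y))=C\}$ for the two fixed protection levels: for $p=0$ (under-protection) this is a line of slope $-1/C$, and for $p=m$ (over-protection) it is a piecewise-linear \emph{concave} curve with slope $\tfrac{Cr_\ell}{(1-C)r_h}$ below the line $x+y=m$ and horizontal above it. It then uses convexity of $\underline h$, concavity of $\bar h$, and the $y$-monotonicity of $\CP$ (Lemma \ref{lem:ymonotone}, plus the geometric Lemma \ref{lem:leftconcave}) to show the envelope stays on the correct side of the level curve over the whole interval. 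Your step ``$f(\underline x_u)\ge C$ by the sup definition and continuity'' is fine; the problem is the propagation step.

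Concretely, both monotonicity claims fail as justified. For $f(x)=\CP_o(m;(x,\underline h(x)))$: in the regime $x+\underline{\mathcal H}(x)\ge m$ (with $\underline h(x)<m$) one has $f(x)=\tfrac{\underline h(x)r_h}{\underline h(x)r_h+(m-\underline h(x))r_\ell}$, which is \emph{increasing} in $\underline h(x)$, so your stated reason (``$\underline{\mathcal H}$ is non-decreasing there'') would make $f$ non-decreasing, not non-increasing; and when $\underline x_u>x_L$ the interval $(\underline x,\underline x_u]$ genuinely contains points to the right of $x_L$ where $\underline h$ is increasing and $x+\underline h(x)\ge m$. For example, with $m=10$, $r_h=1$, $r_\ell=1/2$, $C=1/2$, and $\underline h$ piecewise linear on $[3,9]$ with slopes $-1$ and $+1/5$ and minimum at $(5,4)$, one gets $\bar x_u=\underline x_u=9$ while $f$ decreases on $[3,35/6]$ and then \emph{increases} on $[35/6,9]$; so $f(\underline x_u)\ge C$ cannot be propagated backward by monotonicity. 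For $g(x)=\CP_u(0;(x,\bar h(x)))$ in the regime $x+\bar h(x)>m$, $\bar h(x)<m$, a direct computation gives $g'(x)\ge 0$ if and only if $\bar h'(x)\le -1/g(x)\le -1$; the fact that ``$\bar h$ is non-increasing past $x_H$'' is far too weak, since a slowly decreasing $\bar h$ (slope in $(-1,0)$) makes $g$ strictly decreasing. The monotonicity of $g$ on $[\bar x_\ell,\bar x]$ can be rescued, but only by combining the crossing condition at $\bar x_\ell$ (which forces $\bar h'(\bar x_\ell)\le -1/C$) with concavity of $\bar h$ and a bootstrap keeping $g\ge C$ — at which point you have essentially reconstructed the paper's comparison of $\bar h$ against the slope-$(-1/C)$ level line. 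As written, the proposal does not establish either claim.
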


\proof{Proof of Lemma \ref{lem:leftthresholdun}}
\textit{Part 1: } We first define the region $\region_1$ such that for any $(x,y) \in \region_1$, $\CP_u(p(x)=0;(x,y)) = C$.  
Obviously, $\{x+y \leq m\} \not\subset \region_1$ because if $x+y \leq m$ and $p(x)=0$, by Lemma \ref{lem:underbelow}, $\CP_u(p(x)=0;(x,y))=1$. Next, we find $(x,y) \in \{x+y > m\}$, which belongs to $\region_1$. 

We solve the following equation: for $x+y>m$
\[
\CP_u(0; (x,y))=C.
\]
Notice that for $x>x_H$, by the definition of $H$, we have $y<m$, so we can obtain
\[
\frac{(m-x)r_h+xr_{\ell}}{yr_h+(m-y)r_{\ell}}=C,
\]
which is equivalent as 
\begin{equation} \label{eq:temp15}
(m-x)r_h+xr_{\ell}=C(yr_h+(m-y)r_{\ell}).
\end{equation}

We take derivative on both side of Equation \eqref{eq:temp15}, and we get
\[
y'(x)=-1/C.
\]

Let $\mathcal{L}(x)$ be the line with slope $-1/C$ and across $(\bar{x}_{\ell},\bar{h}(\bar{x}_{\ell}))$. We have $\CP_u(0;(x,\mathcal{L}(x))) = C$. Recall that $\bar{x}_{\ell}=\inf\{x_H<x<\bar{x}: \CP_u(0;(x,\overline{h}(x))) \geq C \}$. Then, for any $x<\bar{x}_{\ell}$, we have $\CP_u(0;(x,\bar{h}(x))) < C$. By Lemma \ref{lem:ymonotone}, we have $\bar{h}(x)>\mathcal{L}(x)$, and by Lemma \ref{lem:leftconcave}, we have for $x > \bar{x}_{\ell}$, $\bar{h}(x)<\mathcal{L}(x)$. By Lemma \ref{lem:ymonotone} again, we have
    $
    \CP_u(0; (x,\bar{h}(x)) \geq C.$

\textit{Part 2: }
We define the region $\region_2$ such that for any $(x,y) \in \region_2$, $\CP_o(p(x)=m;(x,y)) = C$. For $(x,y) \in \{x+y \geq m\}$, we solve 
\[
\CP_o(m;(x,y)) = C.
\]
Notice that we must have $y<m$ because otherwise, $\CP_o(m;(x,y)) =1 \neq C$. Then, we can obtain 
\[
\frac{yr_h}{yr_h+(m-y)r_{\ell}}=C,
\]
and we can get $y=\frac{C r_{\ell}}{r_h-C(r_h-r_{\ell})}m$. Therefore, in the area $\{x+y \geq m\}$, $\region_2$ is a line with slope $0$.

Next, we explore the part $(x,y) \in \{x+y<m\}$, this time
\[
\CP_o(m;(x,y)) = C,
\]
implies that
\[
\frac{yr_h}{yr_h+xr_{\ell}}=C,
\]
which is equivalent as
\begin{equation} \label{eq:temp16}
y(1-C)r_h=C r_{\ell} x.
\end{equation}
Then, we take derivative on both side of Equation \eqref{eq:temp16}, and we get
\[
y'(x)=\frac{C r_{\ell}}{(1-C) r_h}, 
\]
which means that, in the area $\{x+y<m\}$, $\region_2$ is a line with positive slope $\frac{C r_{\ell}}{(1-C) r_h}$, for $x < m-\frac{C r_{\ell}}{r_h-C(r_h-r_{\ell})}m$, and across $(m-\frac{C r_{\ell}}{r_h-C(r_h-r_{\ell})}m,\frac{C r_{\ell}}{r_h-C(r_h-r_{\ell})}m)$.

Combined with the results above, we start to prove this lemma. Let $\mathcal{L}(x)$ be a piecewise linear line segment such that $(x,\mathcal{L}(x)) \in \region_2$ for all $x \in [0,m]$. That is, for $x \in[0, m-\frac{C r_{\ell}}{r_h-C(r_h-r_{\ell})}m]$, $\mathcal{L}(x)$ has slope $\frac{C r_{\ell}}{(1-C) r_h}$ and across $(m-\frac{C r_{\ell}}{r_h-C(r_h-r_{\ell})}m,\frac{C r_{\ell}}{r_h-C(r_h-r_{\ell})}m)$. For $x \in (m-\frac{C r_{\ell}}{r_h-C(r_h-r_{\ell})}m,m]$, $\mathcal{L}(x)=\frac{C r_{\ell}}{r_h-C(r_h-r_{\ell})}m$. Moreover, for any $x \in [0,m]$, we have $\CP_o(m;(x,\mathcal{L}(x)))=C$.

Recall that $\underline x_u=\sup\{\underline{x}<x<\bar x_u: \CP_o(m;(x,\underline{h}(x))) \geq C \}$. We first show the case where $\underline x_u \leq x_L$. As $\mathcal{L}$ is an non-decreasing function, and for $x \leq x_L$, $\underline{h}(\cdot)$ is a decreasing function. Therefore, as we have $\underline{h}(\underline x_u)=\mathcal{L}(\underline x_u)$, we have $\underline{h}(x)>\mathcal{L}(x)$ for any $x \in [\underline x, \underline x_u]$. Therefore, by Lemma \ref{lem:ymonotone}, we have for any $x \in [\underline x, \underline x_u]$, 
\[
\CP_o(m; (x,\underline{h}({x}))) \geq C.
\]

Second, we show the case where $\underline x_u > x_L$. In this case, $\underline{h}(\cdot)$ is decreasing for $x<x_L$, and increasing for $x>x_L$. If there exists $x_1 \in [\underline x, \underline x_u]$ such that $\CP_o(m;(x_1,\underline{h}(x_1)))<C$, then, by Lemma \ref{lem:ymonotone}, this implies that $\underline{h}(x_1)<\mathcal{L}(x_1)$. As the lowest point $L$ is below the line $\mathcal{L}$ and $\underline{h}(x)$ is convex, we have there are at most two intersections of $\underline{h}(x)$ and $\mathcal{L}$. One is in the left of $L$ and the other is $(\underline x_u,\underline{h}(\underline x_u)$. As $\underline{h}(x)$ is convex increasing for $x \in [x_L,\bar x_u]$ and given that $\mathcal{L}(x)$ is concave increasing for $x \in [x_L,\bar x_u]$, we have 
\[
\underline{h}'(\underline x_u^{+}) \geq \underline{h}'(\underline x_u^{-}) \geq \mathcal{L}(\underline x_u^{+}),
\]
which implies that there exists $\epsilon>0$, such that $\underline{h}(\underline x_u+\epsilon) \geq \mathcal{L}(\underline x_u+\epsilon)$. By Lemma \ref{lem:ymonotone}, we have 
\[
\CP_o(m;(\underline x_u+\epsilon,\underline{h}(\underline x_u+\epsilon))) \geq \CP_o(m;(\underline x_u+\epsilon,\mathcal{h}(\underline x_u+\epsilon)))=C,
\]
which contradicts to the definition of $\underline x_u$. Therefore, for any $x \in [\underline x, \underline x_u]$, 
\[
\CP_o(m; (x,\underline{h}({x}))) \geq C.
\] 
\Halmos
\endproof

\subsection{Lemma \ref{lem:cpexists} and its Proof} 

\begin{lemma} \label{lem:cpexists}
    Fix any $C>0$. For any $x\in [\underline x_u, \bar x_u]$, $\CP_o(p;(x,\underline{h}(x)))=C$ has a solution $p \ge \min\{\underline h(x),m\}$. For any $x \in [x_H,\bar x_{\ell}]$, $\CP_u(p;(x,\bar{h}(x)))=C$ has a solution $p \le \min\{\bar h(x),m\}$.
\end{lemma}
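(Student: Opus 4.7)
The plan is to prove both statements by a direct application of the intermediate value theorem (IVT), exploiting the continuity in $p$ (visible from the piecewise formulas \eqref{eq:CP_over}--\eqref{eq:CP_under}) and the monotonicity in $p$ already established in Lemma \ref{lem:pmonotone}.

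For the first statement, I would fix $x \in [\underline x_u, \bar x_u]$ and view $f(p) := \CP_o(p; (x, \underline h(x)))$ as a function on the interval $[\underline{\mathcal{H}}(x), m]$, where $\underline{\mathcal{H}}(x) = \min\{\underline h(x), m\}$. By Lemma \ref{lem:pmonotone}, $f$ is non-increasing in $p$, and it is continuous in $p$. At the left endpoint $p = \underline{\mathcal{H}}(x)$, plugging into \eqref{eq:CP_over} shows $f(\underline{\mathcal{H}}(x)) = 1 \geq C$, since in both cases $\underline h(x) \leq m$ and $\underline h(x) > m$ the numerator and denominator coincide (the $\min\{x, (m-p)^+\}$ and $\min\{x, (m-\underline h(x))^+\}$ terms reduce to the same value). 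At the right endpoint $p = m$, the very definition $\underline x_u = \sup\{\underline x < x < \bar x_u: \CP_o(m;(x,\underline h(x))) \geq C\}$ implies that $f(m) = \CP_o(m;(x,\underline h(x))) \leq C$ for every $x \in [\underline x_u, \bar x_u]$ (with equality at the boundary point $\underline x_u$, by continuity in $x$). Thus IVT produces some $p \in [\underline{\mathcal{H}}(x), m]$ with $f(p) = C$, which is the desired solution.

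For the second statement, the argument is symmetric. Fix $x \in [x_H, \bar x_\ell]$ and define $g(p) := \CP_u(p; (x, \bar h(x)))$ on $[0, \overline{\mathcal{H}}(x)]$ with $\overline{\mathcal{H}}(x) = \min\{\bar h(x), m\}$. By Lemma \ref{lem:pmonotone}, $g$ is continuous and non-decreasing in $p$. At $p = \overline{\mathcal{H}}(x)$, a direct case-split on whether $\bar h(x) \leq m$ or $\bar h(x) > m$ using \eqref{eq:CP_under} shows that the numerator and denominator of $\CP_u$ collapse to the same value, so $g(\overline{\mathcal{H}}(x)) = 1 \geq C$. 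At $p = 0$, the definition $\bar x_\ell = \inf\{x_H < x < \bar x: \CP_u(0;(x,\bar h(x))) \geq C\}$ gives $g(0) \leq C$ for $x < \bar x_\ell$, and by continuity in $x$ also for $x = \bar x_\ell$ and (by a separate check at $x_H$) $x = x_H$. IVT then delivers a $p \in [0, \overline{\mathcal{H}}(x)]$ with $g(p) = C$.

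The main technical obstacle is making precise the boundary claims $f(m) \leq C$ on all of $[\underline x_u, \bar x_u]$ and $g(0) \leq C$ on all of $[x_H, \bar x_\ell]$, since these endpoints are defined via sup/inf conditions whose behavior at the boundary needs a continuity-in-$x$ argument; once these monotonicity-type facts are verified, the rest is a clean IVT invocation and a bit of algebra to verify the equalities $f(\underline{\mathcal{H}}(x)) = 1$ and $g(\overline{\mathcal{H}}(x)) = 1$.
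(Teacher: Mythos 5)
Your proposal is correct and follows essentially the same route as the paper's proof: evaluate the compatible ratio at the endpoints $p=\min\{\underline h(x),m\}$ (resp.\ $\min\{\bar h(x),m\}$), where it equals $1$, and at $p=m$ (resp.\ $p=0$), where the definitions of $\underline x_u$ and $\bar x_\ell$ force it below $C$, then invoke the intermediate value theorem. The only differences are cosmetic — you additionally cite the monotonicity from Lemma \ref{lem:pmonotone} (not needed for existence) and flag the boundary continuity-in-$x$ issue more explicitly than the paper does.
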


\proof{Proof of Lemma \ref{lem:cpexists}}
Fix any $C>0$. Take $p=\min\{\underline{h}(x),m\}$, by Equation \eqref{eq:CP_over}, we have $\CP_o(p;(x,\underline{h}(x)))=1$. If $\underline h(x) \geq m$, then we claim that $x \notin [\underline x_u, \bar x_u]$. This is because take $p=m$, by Lemma \ref{lem:mout}, we have $\CP_o(m;(x,\underline{h}(x)))=\CP_o(m;(x,m))=1$, which contradicts the definition of $ \underline x_u$. Otherwise, if $\underline{h}(x)<m$, we have $\CP_o(m;(x,\underline{h}(x)))<C$ and $\CP_o(\underline{h}(x);(x,\underline{h}(x)))=1$. Therefore, by mean value theorem, we have there exists $p_1 \in [\underline{h}(x),m]$ such that $\CP_o(p_1;(x,\underline{h}(x)))=C$. 

Similarly, take $p=\min\{\bar{h}(x),m\}$, by Equation \eqref{eq:CP_under}, we have $\CP_u(p;(x,\bar{h}(x)))=1$. Take $p=0$, by the definition of $\underline{x}_{\ell}=\sup\{\underline{x}<x<x_H:\CP_u(0;(x,\overline{h}(x)))=C\}, \quad
 \bar{x}_{l}=\inf\{x_H<x<\bar{x}: \CP_u(0;(x,\overline{h}(x))) \geq C \}$, we have $\CP_u(p;(x,\bar{h}(x)))<C$. From Equation \eqref{eq:CP_under}, we can simply check that $\CP_u(p;(x,\bar{h}(x)))$ is continuous in $p$ for any $x$. Therefore, by mean value theorem, we have there exists $p_1 \in [0, \min\{\bar{h}(x),m\}]$ such that $\CP_u(p_1;(x,\bar{h}(x)))=C$.

\subsection{Lemma \ref{lem:worstrightpart} and its Proof}
\begin{lemma} \label{lem:worstrightpart}
    Recall that \begin{align} \bar x_u = \left\{ \begin{array}{ll}
         x_L & \quad \mbox{if $x_L+y_L \geq m$};\\
        \sup\{x \in [x_L,\bar{x}]: (1-C)\frac{r_h}{r_{\ell}}\underline{\mathcal{H}}'(x^{-})-C < 0 \} & \quad  \mbox{Otherwise}.\end{array} \right. \end{align}
     
     Then, for any  $x\in [\bar x_u,\bar{x}]$, we have $
\CP_o(\u(x;C);(x,\underline{h}(x))) \geq C$.     
\end{lemma}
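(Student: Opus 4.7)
\textbf{Proof plan for Lemma \ref{lem:worstrightpart}.} Fix $x \in [\bar x_u, \bar x]$. By the definition of $\u(\cdot; C)$ in Equation \eqref{eq:u}, we have $\u(x; C) = \u(\bar x_u; C)$, so $\u(x; C)$ is constant on this interval, and at the left endpoint $\CP_o(\u(\bar x_u; C); (\bar x_u, \underline h(\bar x_u))) = C$ by Lemma \ref{lem:cpexists} (when the defining set of $\underline x_u$ is non-trivial; the other case is immediate as $\u \equiv m$ and one applies Lemma \ref{lem:leftthresholdun}). Thus it suffices to show that the map
$$F(x) := \CP_o(\u(\bar x_u; C); (x, \underline h(x)))$$
is non-decreasing on $[\bar x_u, \bar x]$, since then $F(x) \geq F(\bar x_u) = C$.

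The plan is to argue monotonicity of $F$ via an implicit-function/comparison argument. Introduce the ``extended'' function $p^{*}(x)$, defined on $[\bar x_u, \bar x]$, as the solution of $\CP_o(p^{*}(x); (x, \underline h(x))) = C$ (capping at $m$ if no interior solution exists). By construction $p^{*}(\bar x_u) = \u(\bar x_u; C)$. The key step is to show $p^{*}(\cdot)$ is non-decreasing on $[\bar x_u, \bar x]$; then by Lemma \ref{lem:pmonotone} (monotonicity of $\CP_o$ in $p$),
$$F(x) = \CP_o(\u(\bar x_u; C); (x, \underline h(x))) \geq \CP_o(p^{*}(x); (x, \underline h(x))) = C\,,$$
which is the desired conclusion.

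To show that $p^{*}$ is non-decreasing, I would use the same implicit differentiation that underlies Property~1 of Lemma \ref{lem:property_u_l}: wherever $\underline{\mathcal{H}}(x) = \min\{\underline h(x), m\}$ is differentiable,
$$\frac{d p^{*}}{d x}(x) = \begin{cases} \left((1-C)\tfrac{r_h}{r_\ell} + C\right)\underline{\mathcal{H}}'(x), & \text{if } x + \underline{\mathcal{H}}(x) \geq m, \\[2pt] (1-C)\tfrac{r_h}{r_\ell}\underline{\mathcal{H}}'(x) - C, & \text{if } x + \underline{\mathcal{H}}(x) < m. \end{cases}$$
Now split on the definition of $\bar x_u$. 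If $x_L + y_L \geq m$, then $\bar x_u = x_L$, and for every $x \geq x_L$ we have $\underline h(x) \geq y_L$ (since $L$ is the lowest point of the convex envelope $\underline h$), so $x + \underline h(x) \geq x_L + y_L \geq m$; only the first regime is active, and $\underline{\mathcal{H}}'(x) \geq 0$ for $x \geq x_L$, yielding $\tfrac{d p^{*}}{d x} \geq 0$. If $x_L + y_L < m$, then $\bar x_u = \sup\{x \in [x_L, \bar x] : (1-C)\tfrac{r_h}{r_\ell}\underline{\mathcal{H}}'(x^-) - C < 0\}$, so for all $x \geq \bar x_u$ we have $(1-C)\tfrac{r_h}{r_\ell}\underline{\mathcal{H}}'(x^-) - C \geq 0$, making the second-regime derivative non-negative; in the first regime (once the boundary $x + \underline{\mathcal{H}}(x) = m$ is crossed), the first-regime derivative is again non-negative as $\underline{\mathcal{H}}'(x) \geq 0$ for $x \geq x_L$.

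The main obstacle will be handling (i) the transition between the two regimes (where the explicit formula changes), (ii) points where $\underline h$ is not differentiable, and (iii) intervals where $p^{*}$ would exceed $m$. For (i), I note that $\CP_o(p; (x, y))$ is continuous across the curve $x + y = m$ (both branches of the formula agree there), so $p^{*}$ is continuous and it is enough to check the sign of the derivative in each regime separately. For (ii), I use convexity of $\underline h$ (hence of $\underline{\mathcal{H}}$ on the relevant range) to replace the derivative with the right subderivative, which is non-decreasing, and integrate to get global monotonicity of $p^{*}$. For (iii), whenever the implicit equation forces $p^{*}(x) > m$, we simply have $\CP_o(m; (x, \underline h(x))) \geq C$, and since $\u(\bar x_u; C) \leq m$, Lemma \ref{lem:pmonotone} again gives $F(x) \geq \CP_o(m; (x, \underline h(x))) \geq C$, completing the argument.
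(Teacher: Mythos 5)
Your proposal is correct and follows essentially the same route as the paper: the paper likewise introduces the unconstrained extension $\hat{\u}(x;C)$ (your $p^{*}(x)$) solving $\CP_o(p;(x,\underline h(x)))=C$ beyond $\bar x_u$, uses the derivative formula of Property~1 of Lemma \ref{lem:property_u_l} together with convexity of $\underline h$ to show it is minimized at $\bar x_u$ and non-decreasing thereafter, and then concludes via Lemma \ref{lem:pmonotone} since $\u(x;C)=\u(\bar x_u;C)\le \hat{\u}(x;C)$. Your additional care about regime transitions, non-differentiability, and capping at $m$ only makes explicit what the paper treats implicitly.
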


\proof{Proof of Lemma \ref{lem:worstrightpart}}
Define  $\hat{\u}(x;C)$ as the original $\u(x;C)$ without forcing to be constant after $\bar x_u$.  More precisely, 
for any $C \in [0,1]$,  we define 
 \begin{align}\hat \u(x; C)= \sup\big \{p\in [0,m]: \CP_o(p;(x,\underline{h}(x)))=C\big\}\qquad  x \in [\underline x_u,\bar x]\, \end{align} 
 while we set $\hat \u(x; C) =m$ for any $x \in [0,\underline{x}_{u}] $. Note while $\u(x;C)\neq \hat \u(x; C)$ for any $x \in (\bar x_u,\bar x] $, we have $\u(x;C) =\hat \u(x;C)$ for any $x\in [0, \bar x_u]$.  
Then, we can check that  for any $x\in [\underline x_u, \bar x]$,   Equation \eqref{eq:convexcase1p} is satisfied,  which means
for any $x\in [\underline x_u, \bar x)$ and $C \in [0,1]$, when $\underline{\mathcal{H}}'(x)$ exists, we have 
    \begin{align} \label{eq:convexcase1p}  \frac{\partial \hat{\u}(x;C)}{\partial x} = \left\{ \begin{array}{ll}
         ((1-C)\frac{r_h}{r_{\ell}}+C)\underline{\mathcal{H}}'(x) &\quad  \mbox{if $x+\underline{\mathcal{H}}(x) \geq m$};\\
        (1-C)\frac{r_h}{r_{\ell}}\underline{\mathcal{H}}'(x)-C &\quad  \mbox{if $x+\underline{\mathcal{H}}(x) < m$}\,,\end{array} \right. \end{align} 
 where  $\underline{\mathcal{H}}(x)=\min\{\underline h(x),m\}$.        
By the definition of $\bar x_u$, as $\frac{\partial \hat{\u}(x;C)}{\partial x}\vert_{x= \bar x_u^{-}}<0$ and $\frac{\partial \hat{\u}(x;C)}{\partial x}\vert_{\bar x_u^{+}} \geq 0$, we obtain
\[
\inf_{x \in [\underline x, \bar x]} \hat{\u}(x;C) = \hat{\u}(\bar x_u;C),
\]
which is because the derivative of $\hat{\u}(x;C)$ is linear, and $\frac{\partial \hat{\u}(x;C)}{\partial x}\vert_{x= \bar x_u^{-}}<0$ and $\frac{\partial \hat{\u}(x;C)}{\partial x}\vert_{\bar x_u^{+}} \geq 0$ implies that $\frac{\partial \hat{\u}(x;C)}{\partial x} \geq 0$ for all $x \geq \bar x_u$. 
Therefore, by any $x \geq \bar x_u$, $\hat{\u}(x;C) \geq \u(x;C)$ as we force $\u(x;C)$ to be a constant value of  $\u(\bar x_u;C)$. The definition of $\hat{\u}(\cdot;C)$ implies  that $\CP_o(\hat{\u}(x;C);(x,\underline{h}(x))) = C$. By Lemma \ref{lem:pmonotone}, for $x \geq \bar x_u$, as $\hat{\u}(x;C) \geq \u(x;C)$, we have
\[
\CP_o(\u(x;C);(x,\underline{h}(x))) \geq \CP_o(\hat{\u}(x;C);(x,\underline{h}(x))) \geq C.
\]

%{\color{red} I feel this proof can be simplified without defining $\hat \u$. You defined this new function and you claim some property for this function without showing them rigorously (see my previous comment for example). that makes me wonder if you can show things directly with $\u$ }

\endproof

\subsection{Lemma \ref{lem:propertywll} and its Proof}

\begin{lemma} \label{lem:propertywll}
    For any $C \in [0,1]$, $\wll(x;C)$ is decreasing and is concave for $x \in [\underline{x},\tilde x_{\ell}]$, where $\tilde x_{\ell}=\inf\{\nex<x<\bar{x}: \wll(x;C)=0 \}$. For any $x \in [\underline{x},\bar{x}]$, $\wll(x;C)$ is continuously increasing in $C$.
\end{lemma}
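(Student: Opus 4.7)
The plan is to prove each part using the piecewise structure of $\wll(\cdot;C)$ together with the properties of $\ll(\cdot;C)$ already established in Lemma \ref{lem:property_u_l}.

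For Part 1, I would split the interval $[\underline x, \tilde x_\ell]$ at the transition point $\nex$. On $[\underline x, \nex]$, $\wll$ coincides with $\ll$, which is non-increasing and concave by Lemma \ref{lem:property_u_l}. On $[\nex, \tilde x_\ell]$, the ``max with zero'' in the definition is inactive (since $\wll > 0$ on this range by definition of $\tilde x_\ell$), so $\wll(x;C) = \ll(\nex;C) - (x - \nex)$ is linear with slope $-1$ and hence trivially non-increasing and concave. The two pieces agree at $\nex$, giving continuity. For concavity to hold across the junction, the left derivative must be at least the right derivative. The right derivative is $-1$; the left derivative equals $\ll'(\nex^-)$, which is $\ge -1$ by the defining property of $\nex$ (the transition point at which $\ll$'s slope first becomes as steep as $-1$, combined with $\ll'$ being non-increasing by concavity of $\ll$). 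Hence concavity is preserved at the junction, and $\wll$ is non-increasing and concave on the whole interval $[\underline x, \tilde x_\ell]$.

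For Part 2, the plan is to establish the variational characterization
\[
\wll(x;C) \;=\; \max\Bigl(0,\; \sup_{y \in [0,x]} \bigl[\ll(y;C) - (x-y)\bigr]\Bigr),
\]
which exhibits $\wll(\cdot;C)$ as the pointwise smallest valid protection-level function (non-increasing, slope $\ge -1$, nonnegative) that upper bounds $\ll(\cdot;C)$. Both inclusions follow from routine manipulations: the RHS itself defines a valid upper bound of $\ll$, so $\text{RHS} \ge \wll$; conversely, any valid upper bound $p$ of $\ll$ satisfies $p(x) \ge p(y) - (x-y) \ge \ll(y) - (x-y)$ for all $y \le x$ and $p(x) \ge 0$, giving $\wll \ge \text{RHS}$. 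With this representation in hand, the conclusion follows from Property~6 of Lemma \ref{lem:property_u_l}: the map $(y, C) \mapsto \ll(y;C) - (x-y)$ is jointly continuous and strictly increasing in $C$ for each fixed $y$; taking the supremum over the compact set $y \in [0,x]$ preserves both continuity and strict monotonicity in $C$ (the latter because a strict increase at the maximizer transfers to the sup), and the outer $\max$ with $0$ preserves both properties as well.

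The main obstacle I anticipate is the concavity-at-junction argument in Part 1, which hinges on the precise interpretation of the sup defining $\nex$ and on the monotonicity of $\ll'$ inherited from concavity; once this is settled, the rest of Part~1 is a direct combination of known facts. The variational representation in Part 2 is the technical idea that sidesteps any explicit manipulation of the parameter-dependent transition point $\nex(C)$, reducing monotonicity and continuity in $C$ to the analogous properties of $\ll(\cdot;C)$ already provided by Lemma \ref{lem:property_u_l}.
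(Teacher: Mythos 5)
Your Part~1 is essentially the paper's own argument: both split the interval at $\nex$, use the concavity and monotonicity of $\ll(\cdot;C)$ from Lemma~\ref{lem:property_u_l} on $[\underline x,\nex]$, observe that the continuation on $[\nex,\tilde x_{\ell}]$ is a slope-$(-1)$ line (the positive part being inactive there), and check that the left derivative at the junction is at least $-1$ by the definition of $\nex$, so the one-sided derivative is non-increasing throughout. Part~2 is where you genuinely diverge. The paper fixes $x_1$, case-splits on $x_1\le\nex$ versus $x_1>\nex$, and in the latter case writes $\wll(x_1;C)$ as a max of two quantities each claimed to be continuous and increasing in $C$; since $\nex=\nex(C)$ itself moves with $C$, that case split is not uniform in $C$ and the dependence of $\ll(\nex(C);C)+\nex(C)$ on $C$ is left implicit. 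Your representation $\wll(x;C)=\max\bigl(0,\sup_{y\in[0,x]}[\ll(y;C)-(x-y)]\bigr)$ eliminates $\nex$ from the formula entirely, so monotonicity and continuity in $C$ reduce cleanly to Property~6 of Lemma~\ref{lem:property_u_l} (together with joint continuity of $\ll$, which is available from its explicit affine-in-$C$ form in the proof of that lemma). That is a real gain in robustness over the paper's route.

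One step of your verification needs repair. The inclusion ``the RHS is a valid upper bound of $\ll$, hence $\mathrm{RHS}\ge\wll$'' presupposes that $\wll$ is the pointwise \emph{smallest} valid upper bound, which is part of what you are trying to establish; as written, your two inclusions only identify the RHS with the smallest valid upper bound, not with the function $\wll$ defined in Equation~\eqref{eq:ll}. The fix is a direct evaluation: $y\mapsto\ll(y;C)+y$ is concave with one-sided derivative $\ll'(y^-)+1$, which is nonnegative for $y<\nex$ and nonpositive for $y>\nex$ by the definition of $\nex$, so the supremum is attained at $y=\min\{x,\nex\}$ and the formula collapses exactly to the piecewise definition of $\wll$. (Equivalently, one line showing that every valid upper bound $p$ satisfies $p(x)\ge\ll(x;C)=\wll(x;C)$ for $x\le\nex$ and $p(x)\ge\max\{0,\,p(\nex)-(x-\nex)\}\ge\wll(x;C)$ for $x>\nex$ closes the same loop.) With that addition the argument is complete.
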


\proof{Proof of Lemma \ref{lem:propertywll}}
By Equation \eqref{eq:ll}, as $\ll(x;C)$ is decreasing for any $C \in [0,1]$, we have $\wll(x;C)$ is also decreasing for any $C \in [0,1]$. Next, by Lemma \ref{lem:property_u_l}, we have $\ll(x;C)$ is concave for $x \in [\underline x, \bar x_{\ell}]$. Therefore, $\frac{\partial \ll(x^{+};C)}{\partial x}$ is non-increasing. As $\wll(x;C)=\ll(x;C)$ for $x \in [\underline x, \nex]$, we have $\frac{\partial \wll(x^{+};C)}{\partial x}$ is non-increasing for $x \in [\underline{x},\nex]$. By the definition of $\nex$, we have $\frac{\partial \wll(\nex^{-};C)}{\partial x} \geq \frac{\partial \wll(\nex^{+};C)}{\partial x} = -1$, and for $\nex<x\leq \tilde{x}$, we have $\wll(x;C)$ is a line with slope $-1$ and $\frac{\partial \wll(x^{+};C)}{\partial x}=-1$, which is non-increasing. Therefore, $\wll(x;C)$ is concave for $x \in [\underline{x},\tilde x_{\ell}]$.

Fix any $x_1 \in [\underline{x},\bar{x}]$, by Lemma \ref{lem:property_u_l}, we know that $\ll(x_1;C)$ is a continuous increasing function in $C$. If $x_1 \in [\underline{x},\nex]$, as $\wll(x_1;C)=\ll(x_1;C)$, we have $\wll(x_1;C)$ is also continuously increasing in $C$. Otherwise, define $\mathcal{L}(x;C)=(-x+\nex+\ll(\nex;C),0)^{+}$. Then, $\wll(x_1;C)=\max\{\mathcal{L}(x_1;C),\ll(x_1;C) \}$. As both $\mathcal{L}(x_1;C)$ and $\ll(x_1;C)$ are increasing continuous in $C$, we have $\wll(x_1;C)$ is increasing and continuous in $C$.

\endproof

\subsection{Lemma \ref{lem:inner} and its Proof }

\begin{lemma} \label{lem:inner}
    Define $\pbal(x)$ as a function which balances the compatible ratio of two points $(x, \underline h(x))$ and $(x, \bar h(x))$ for any $x\in [\underline x, \bar x]$; that is, 
\[\CP_o(\pbal(x);(x,\underline{h}(x)))=\CP_u(\pbal(x);(x,\bar{h}(x)))\,.\]
Then, $\pbal(x)$ exists for any $x\in [\underline x, \bar x]$, and 
\[\CP_o(\pbal(x);(x,\underline{h}(x)))=\CP_u(\pbal(x);(x,\bar{h}(x))) \geq \cstar\,,\]
where $\cstar$ is the maximum  consistent ratio among all PLAs given that the ML advice $\region$.

\end{lemma}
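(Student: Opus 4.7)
The plan has two components: first establish existence of $\pbal(x)$ via an intermediate value theorem argument, and then deduce the lower bound $\CP_o(\pbal(x);(x,\underline{h}(x))) \geq \cstar$ by a comparison against any PLA that attains $\cstar$.

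For existence, consider the continuous function $\Delta(p) := \CP_o(p;(x,\underline h(x))) - \CP_u(p;(x,\bar h(x)))$ on the interval $[\underline h(x), \min\{m,\bar h(x)\}]$. Direct substitution in \eqref{eq:CP_over} at $p = \underline h(x)$ gives $\CP_o(\underline h(x);(x,\underline h(x))) = 1$ and $\CP_u(\underline h(x);(x,\bar h(x))) \le 1$, hence $\Delta(\underline h(x)) \ge 0$. Symmetrically, at $p = \min\{m,\bar h(x)\}$ formula \eqref{eq:CP_under} yields $\CP_u(\min\{m,\bar h(x)\};(x,\bar h(x))) = 1$ while $\CP_o(\min\{m,\bar h(x)\};(x,\underline h(x))) \le 1$, so $\Delta(\min\{m,\bar h(x)\}) \le 0$. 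The IVT then produces the desired $\pbal(x)$.

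The key observation for the lower bound is that $\pbal(x)$ \emph{maximizes the minimum} of the two endpoint compatible ratios. Indeed, Lemma \ref{lem:pmonotone} shows that on $[\underline h(x),\bar h(x)]$ the map $p \mapsto \CP_o(p;(x,\underline h(x)))$ is non-increasing while $p \mapsto \CP_u(p;(x,\bar h(x)))$ is non-decreasing, so for every $p$ in this interval,
\[
\min\bigl\{\CP_o(p;(x,\underline h(x))),\, \CP_u(p;(x,\bar h(x)))\bigr\} \;\le\; \CP_o(\pbal(x);(x,\underline h(x))) = \CP_u(\pbal(x);(x,\bar h(x))).
\]

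To conclude, I would fix an optimal PLA $\mathcal{A}^\star$ with PL function $p^\star(\cdot)$ attaining consistent ratio $\cstar$ (existence of such a PLA follows from Theorem \ref{thm:cmax}). By Lemma \ref{lem:ymonotone}, for every $x \in [\underline x, \bar x]$ the worst point in the slice $\{x\} \times [\underline h(x), \bar h(x)]$ is one of the two extremes, so $\CP_o(p^\star(x);(x,\underline h(x))) \ge \cstar$ and $\CP_u(p^\star(x);(x,\bar h(x))) \ge \cstar$. Combined with the displayed monotonicity inequality above (applied at $p=p^\star(x)$), this immediately gives $\CP_o(\pbal(x);(x,\underline h(x))) \ge \cstar$. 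The main obstacle I anticipate is that $p^\star(x)$ may lie outside $[\underline h(x),\bar h(x)]$; in that case one of the two endpoint ratios is in the ``wrong'' regime (over- versus under-protection), and I will have to split into the three cases $p^\star(x) \le \underline h(x)$, $p^\star(x) \in [\underline h(x),\bar h(x)]$, and $p^\star(x) \ge \bar h(x)$, invoking the monotonicity of $\CP_u$ on $[0,\bar h(x)]$ or of $\CP_o$ on $[\underline h(x),m]$ respectively to transfer the bound from $p^\star(x)$ to $\pbal(x)$.
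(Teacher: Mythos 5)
Your proof is correct and follows essentially the same route as the paper's: an intermediate-value argument on the difference $\CP_o(p;(x,\underline h(x)))-\CP_u(p;(x,\bar h(x)))$ for existence, and the monotonicity in Lemma \ref{lem:pmonotone} to show that the balancing protection level maximizes the minimum of the two endpoint compatible ratios, which the paper phrases contrapositively as a contradiction. One small caution: justify the existence of a PLA attaining $\cstar$ directly from the lemma's own definition of $\cstar$ as the maximum consistent ratio over PLAs rather than by invoking Theorem \ref{thm:cmax}, whose proof chain depends on this lemma.
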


\proof{Proof of Lemma \ref{lem:inner}}
Fix any $x \in [\underline x,\bar x]$, define $f(p)=\CP_o(p;(x,\underline{h}(x)))-\CP_u(p;(x,\bar{h}(x)))$. As $\CP_o(p;(x,\underline{h}(x)))$ and $\CP_u(p;(x,\bar{h}(x)))$ are both continuous in $p$, we have $f(p)$ is continuous in $p$. Next, take $p=\min\{m,\underline{h}(x)\}$, then $\CP_o(p;(x,\underline{h}(x)))=1$ and $\CP_u(p;(x,\bar{h}(x))) \leq 1$, and hence we have $f(\min\{m,\underline{h}(x)\}) \geq 0$. Take $p=\min\{m,\bar{h}(x)\}$, then $\CP_u(p;(x,\bar{h}(x)))=1$ and $\CP_o(p;(x,\underline{h}(x))) \leq 1$, and hence we have $f(\min\{m,\bar{h}(x)\}) \leq 0$. Therefore, by mean value theorem, there must exist $p \in [\min\{m,\underline{h}(x)\},\min\{m,\bar{h}(x)\}]$ such that $f(p)=0$, i.e. $\CP_o(\pbal(x);(x,\underline{h}(x)))=\CP_u(\pbal(x);(x,\bar{h}(x)))$.

Next, we prove that $\CP_o(\pbal(x);(x,\underline{h}(x)))=\CP_u(\pbal(x);(x,\bar{h}(x))) \geq \cstar$ for any $ x \in [\underline x,\bar x]$. We prove by contradiction. Suppose that there exists $x_1 \in [\underline x,\bar x]$ such that $\CP_o(\pbal(x);(x,\underline{h}(x)))=\CP_u(\pbal(x);(x,\bar{h}(x))) < \cstar$, then, if we set any $p<\pbal(x)$, by Lemma \ref{lem:pmonotone}, we have
$$\CP_u(p;(x,\bar{h}(x))) <\CP_u(\pbal(x);(x,\bar{h}(x))) < \cstar\,.$$ 
Similarly, if we set any $p>\pbal(x)$, by Lemma \ref{lem:pmonotone}, we have $$\CP_o(p;(x,\underline{h}(x)))<\CP_o(\pbal(x);(x,\underline{h}(x)))<\cstar\,.$$ Therefore, there does not exist a PL function such that the consistent ratio is $\cstar$, which is a contradiction. This is because here $\cstar$ in  is defined as an upper bound on the consistent ratio of any PLA. 
\endproof

\subsection{Lemma \ref{lem:worstkeep} and its Proof}
\begin{lemma} \label{lem:worstkeep}
    For any fixed PL function $p$, we have $\CP_o(p;(x,0))$ is a decreasing function in $x$. That is, for any $x_1 \leq x_2$, we have
    \[
    \CP_o(p;(x_1,0)) \geq \CP_o(p;(x_2,0)).
    \]
\end{lemma}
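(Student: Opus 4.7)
The plan is to compute $\CP_o(p;(x,0))$ in closed form directly from the definition in Equation \eqref{eq:CP_over} and then read off monotonicity from the resulting piecewise formula. Setting $y=0$ gives $\min\{y,m\}=0$ and $(m-y)^+=m$, and since $p\in[0,m]$ we have $(m-p)^+=m-p$, so the $r_h$ terms vanish and the $r_\ell$ terms cancel, leaving
\[
\CP_o(p;(x,0)) \;=\; \frac{\min\{x,\,m-p\}}{\min\{x,\,m\}}.
\]
Note that $p\geq \min\{m,y\}=0$, so we are indeed in the over-protection regime where this formula applies; the degenerate case $x=0$ (where the optimal reward is zero) can be handled by continuity.

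From this expression, I would split $x\ge 0$ into three regimes. On $[0,m-p]$ both minima equal $x$ and the ratio is identically $1$. On $[m-p,m]$ the numerator saturates at $m-p$ while the denominator is still $x$, so the ratio equals $(m-p)/x$, which is strictly decreasing in $x$. On $[m,\infty)$ both minima are saturated and the ratio is the constant $(m-p)/m$. Continuity at the breakpoints $x=m-p$ and $x=m$ is immediate from the formulas. Putting the three pieces together shows that $x\mapsto \CP_o(p;(x,0))$ is non-increasing on $[0,\infty)$, and therefore $\CP_o(p;(x_1,0))\geq \CP_o(p;(x_2,0))$ whenever $x_1\le x_2$, which is the claim. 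Since the argument is purely algebraic case analysis on a ratio of minima, there is no real obstacle to overcome.
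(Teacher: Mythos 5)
Your proof is correct and follows essentially the same route as the paper's: both reduce $\CP_o(p;(x,0))$ to the closed form $\min\{x,m-p\}/\min\{x,m\}$ and verify monotonicity by elementary case analysis (the paper handles $x>m$ via Lemma \ref{lem:mout} where you absorb it into the denominator's $\min\{x,m\}$, but this is a cosmetic difference). Your three-regime decomposition is, if anything, a slightly cleaner packaging of the same argument.
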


\proof{Proof of Lemma \ref{lem:worstkeep}}
If $x_1 \leq x_2 \leq m$, by Equation \eqref{eq:CP_over}, we have
\[
\CP_o(p;A= (x,0))=\frac{0 \cdot r_h+\min\{x,m-p\}r_{\ell}}{0 \cdot r_h+\min\{x, m-0\}r_{\ell}} = \frac{\min\{x,m-p\}}{x}.
\]
Take any $x_1 \leq x_2$, if $\min\{x_2,m-p\}=x_2$, then $\min\{x_1,m-p\}=x_1$, and we have
\[
\frac{\min\{x_1,m-p\}}{x_1}=\frac{\min\{x_2,m-p\}}{x_2}=1.
\]
If $\min\{x_2,m-p\}=m-p$ and $\min\{x_1,m-p\}=m-p$, we have
\[
\frac{\min\{x_2,m-p\}}{x_2} = \frac{m-p}{x_2} \leq \frac{m-p}{x_1} = \frac{\min\{x_1,m-p\}}{x_1}. 
\]
If $\min\{x_2,m-p\}=m-p$ and $\min\{x_1,m-p\}=x_1$, we have
\[
\frac{\min\{x_2,m-p\}}{x_2} = \frac{m-p}{x_2} \leq 1= \frac{x_1}{x_1}=\frac{\min\{x_1,m-p\}}{x_1}.
\]
Therefore, we have for any $x_1 \leq x_2 \leq m$, we have
    \[
    \CP_o(p;(x_1,0)) \geq \CP_o(p;(x_2,0)).
    \]

Next, for $x_1 \leq m \leq x_2$, by Lemma \ref{lem:mout}, we have
\[
\CP_o(p;(x_1,0)) \geq \CP_o(p;(m,0)) = \CP_o(p;(x_2,0)).
\]
For $m \leq x_1 \leq x_2$, again, by Lemma \ref{lem:mout}, we have
\[
\CP_o(p;(m,0)) =\CP_o(p;(x_1,0)) = \CP_o(p;(x_2,0)).
\]
\endproof

\subsection{Lemma \ref{lem:specialpoints} and its Proof}
\begin{lemma} \label{lem:specialpoints}
    For any $C \in (0,1)$. Let $\xmin$ be defined in Equation \eqref{eq:xmin}, we have \[\min\{\xmin,m-\u(\xmin;C)\}=m-\u(\xmin;C).\] Recall that $H$ is the point in set $\widebar{\region}$ that has the highest low-reward demand, where $\widebar{\region} = \{(x,y) \in \region : y = \sup_{(x',y') \in \region } \min\{y',m\}  \} $ Further,  we have \[\min\{x_H,m-\ll(x_H;\cstar)\}=m-\ll(x_H;\cstar).\]    
    Let $\nex=\sup\{x \in [x_H,\bar{x}]: \frac{\partial \ll(x^{-};C)}{\partial x} \le -1 \}$. We have \[\nex \geq m-\ll(\nex;C)\]. 
\end{lemma}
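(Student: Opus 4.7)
The plan is to verify each of the three claims by leveraging the explicit formulas for $\u(\cdot; C)$ and $\ll(\cdot; C)$ that are established inside the proof of Lemma \ref{lem:property_u_l}, together with the domain definitions of $\bar{x}_u$, $x_H$, and $\nex$. In each case the key observation is that $\min\{x, m-p\} = m-p$ is equivalent to $x + p \ge m$, so the three assertions are really three instances of the single inequality $x + p \ge m$ at a carefully chosen pair $(x, p)$.

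For the first claim, I would split on the two cases in the definition of $\bar{x}_u$. When $x_L + y_L \ge m$, we have $\bar{x}_u = x_L$; since $\u(x_L; C) \ge \underline{h}(x_L) = y_L$ (the over-protection constraint from Lemma \ref{lem:cpexists} places $\u$ above the lower envelope), this already yields $\bar{x}_u + \u(\bar{x}_u; C) \ge x_L + y_L \ge m$. When $x_L + y_L < m$, I would use the closed form derived inside Property 1 of Lemma \ref{lem:property_u_l}: in the regime $x + \underline{\mathcal{H}}(x) < m$, the identity $\u(x; C) = (1-C)(r_h/r_\ell)\underline{h}(x) - Cx + m$ gives
\[
x + \u(x; C) \;=\; (1-C)\Big(x + (r_h/r_\ell)\,\underline{h}(x)\Big) + m \;\ge\; m
\]
for any $C \in (0,1)$; in the complementary regime $x + \underline{\mathcal{H}}(x) \ge m$, the bound $\u(x; C) \ge \underline{h}(x)$ again gives $x + \u(x; C) \ge m$. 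Evaluating at $x = \bar{x}_u$ closes this case.

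For the second claim, I would reuse the internal derivation of Property 2 of Lemma \ref{lem:property_u_l}, which shows by contradiction that for every $x_1 \in (x_H, \bar{x}_\ell)$, the identity $\CP_u(\ll(x_1; C); (x_1, \overline{\mathcal{H}}(x_1))) = C$ forces $\min\{x_1, m - \ll(x_1; C)\} = m - \ll(x_1; C)$, equivalently $x_1 + \ll(x_1; C) \ge m$. Continuity of $\ll(\cdot; \cstar)$ in $x$ then lets me take $x_1 \to x_H^+$, giving $x_H + \ll(x_H; \cstar) \ge m$. For the third claim, I would first argue that $\nex \in [x_H, \bar{x}_\ell]$: on $(\bar{x}_\ell, \bar{x}]$ the function $\ll(\cdot; C)$ is constantly zero, so its left-derivative vanishes and cannot satisfy the $\le -1$ condition defining $\nex$. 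The same inequality $x + \ll(x; C) \ge m$ from Property 2 therefore holds on a neighborhood of $\nex$ from the left, and continuity at $\nex$ finishes the argument.

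The main technical obstacle will be cleanly handling the boundary cases where the interior characterization of $\ll$ breaks down, namely when $x_H = \bar{x}_\ell$ (so the open interval $(x_H, \bar{x}_\ell)$ is empty and no limit argument is available) or when $\nex$ coincides with $\bar{x}_\ell$ with $\ll(\nex; C) = 0$. In both situations one must fall back on the defining condition $\CP_u(0; (x, \bar{h}(x))) \ge C$ of $\bar{x}_\ell$ and on the constant extension of $\ll$ beyond $\bar{x}_\ell$ to reach $x + \ll(x; C) \ge m$ directly. I expect this to collapse to a short case check using only the definitions, rather than requiring any new machinery.
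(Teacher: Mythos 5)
Your proposal is correct, and it rests on the same underlying mechanism as the paper's proof --- namely that each claim is the inequality $x+p\ge m$ at a specific pair, and that the degenerate alternative $\min\{x,m-p\}=x$ is incompatible with the defining equations of $\u$, $\ll$, and $\bar x_\ell$ --- but you execute it differently. The paper runs a single pointwise contradiction at each of $\xmin$, $x_H$, and $\nex$: if the $\min$ picked $x$, then the compatible-ratio formula would collapse to $\CP_o(\cdot)=1>C$ (first claim) or to $\CP_u(\ll;\cdot)=\CP_u(0;\cdot)$, forcing $x\ge\bar x_\ell$ and contradicting $x_H\le\nex<\bar x_\ell$ (second and third claims). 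You instead verify the first claim directly from the closed forms of $\u(\cdot;C)$ inside Lemma \ref{lem:property_u_l} (the computation $(1-C)(x+(r_h/r_\ell)\underline h(x))+m\ge m$ is correct, and the case split on $x+\underline{\mathcal H}(x)\gtrless m$ covers both regimes), and you obtain the second and third claims by extending the interior identity $x+\ll(x;C)\ge m$ from $(x_H,\bar x_\ell)$ to the boundary by continuity. What the paper's route buys is that no limit argument or continuity of $\ll$ is needed, and the empty-interval case $(x_H,\bar x_\ell)=\emptyset$ never arises as a separate branch; what your route buys is that the first claim becomes a one-line computation rather than a contradiction. Two caveats: in the $x_L+y_L\ge m$ branch you should bound $\u(x_L;C)\ge\min\{\underline h(x_L),m\}=\underline{\mathcal H}(x_L)$ rather than $\ge\underline h(x_L)$ (since $\u\le m$), though the conclusion survives because $\u=m$ in the problematic sub-case; and the boundary situation $x_H=\bar x_\ell$ that you flag is not actually resolved by the paper either --- its proof simply asserts $x_H\le\nex<\bar x_\ell$ --- so you are not introducing a new gap, but you would be more honest than the paper if you made the needed standing assumption explicit rather than promising a "short case check."
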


\proof{Proof of Lemma \ref{lem:specialpoints}}
We prove by contradiction. If $\min\{\xmin,m-\u(\xmin;C)\}=\xmin$, then by Equation \eqref{eq:CP_over}, we have 
\[
\CP_o(\u(\xmin;C);(\xmin,\underline{h}(\xmin))) = \frac{\min\{\underline{h}(\xmin),m\}r_h+\min\{m-\u(\xmin;C) ,\xmin,\}r_{\ell}}{\min\{\underline{h}(\xmin),m\}r_h+\min\{(m-\underline{h}(\xmin))^{+},\xmin\}r_{\ell}}.
\]
Given that $\min\{\xmin,m-\u(\xmin;C)\}=\xmin$, as $\u(\xmin;C) \geq \min\{\underline{h}(\xmin),m\}$, we have $\min\{(m-\underline{h}(\xmin))^{+},\xmin\}=\xmin$, and we have
\[
\CP_o(\u(\xmin;C);(\xmin,\underline{h}(\xmin))) = \frac{\min\{\underline{h}(\xmin),m\}r_h+\xmin r_{\ell}}{\min\{\underline{h}(\xmin),m\}r_h+\xmin r_{\ell}} =1>C,
\]
which contradicts to the definition of $\u(\cdot;C)$.

Next, to show the second statement, We still prove by contradiction. If $\min\{x_H,m-\ll(x_H;\cstar)\}=x_H$, then by Equation \eqref{eq:CP_under}, we have 
\[
\CP_u(\ll(x_H;\cstar);H)=\frac{\max\{\ll(x_H;\cstar), \min\{y_H,(m-x_H)^{+}\} \} r_h+\min\{x_H,m-\ll(x_H;\cstar)\}r_{\ell}}{y_Hr_h+\min\{x_H,m-y_H\}r_{\ell}}.
\]
Given that $\min\{x_H,m-\ll(x_H;\cstar)\}=x_H$,  we have
\[
\CP_u(\ll(x_H;\cstar);H) = \frac{\min\{y_H,(m-x_H)^{+}\}r_h+x_H r_{\ell}}{y_Hr_h+\min\{x_H,m-y_H\}r_{\ell}} = \CP_u(0;H),
\]
which implies that $x_H \geq \bar x_{\ell}$, and this contradicts to the definition of $\bar x_{\ell}$ since $x_H \leq \nex < \bar x_{\ell}$.

Finally, we show that $\nex \geq m-\ll(\nex;C)$. We still prove by contradiction. Suppose that $\nex<m-\ll(\nex;C)$. By Equation \eqref{eq:CP_under}, we have
\[
\CP_u(\ll(\nex;C);(\nex,\bar{h}(\nex))) =  \frac{\max\{\ll(\nex;C), \min\{\bar{h}(\nex),(m-\nex)^{+}\} \} r_h+\min\{\nex,m-\ll(\nex;C)\}r_{\ell}}{\min\{ \bar{h}(\nex),m\}r_h+\min\{\nex, (m-\bar{h}(\nex))^{+}\}r_{\ell}}.
\]
If we have  $\nex<m-\ll(\nex;C)$, then we have
\[
C=\CP_u(\ll(\nex;C);(\nex,\bar{h}(\nex))) = \frac{\min\{\bar{h}(\nex),(m-\nex)^{+}\}r_h+\nex r_{\ell}}{\min\{ \bar{h}(\nex),m\}r_h+\min\{\nex, (m-\bar{h}(\nex))^{+}\}r_{\ell}} = \CP_u(0;(\nex,\bar{h}(\nex))),
\]
which implies that $\nex \geq \bar x_{\ell}$, which is a contradiction.

\endproof

\subsection{Lemma \ref{lem:munique} and its Proof}

\begin{lemma} \label{lem:munique}
    Let $\cstar$ be the optimal consistent ratio of $\region$. Take any $C>\cstar$, then $\u(\hat{x};C) = \u(\hat{x};\cstar)$ only if both of them are equal to $m$.
\end{lemma}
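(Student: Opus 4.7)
\textbf{Proof plan for Lemma \ref{lem:munique}.} The strategy is to upgrade the continuous non-increasing dependence of $\u(\hat x;\cdot)$ on $C$ (Property 6 of Lemma \ref{lem:property_u_l}) to \emph{strict} monotonicity whenever $\u(\hat x;C)<m$. Once strict monotonicity is established off the cap, the assumed equality $\u(\hat x;C)=\u(\hat x;\cstar)$ with $C>\cstar$ forces both sides to equal $m$.

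The first step is to track how the regime boundaries $\underline x_u$ and $\bar x_u$ move with $C$. From the definition $\underline x_u=\sup\{x:\CP_o(m;(x,\underline h(x)))\ge C\}$, the defining set shrinks as $C$ grows, so $\underline x_u^{C}$ is non-increasing in $C$. For $\bar x_u$ in the non-degenerate case, the coefficient of $C$ in $(1-C)\tfrac{r_h}{r_\ell}\underline{\mathcal H}'(x^-)-C$ equals $-1-\tfrac{r_h}{r_\ell}\underline{\mathcal H}'(x^-)\le -1<0$ because $\underline{\mathcal H}'(x^-)\ge 0$ for $x\ge x_L$; hence the defining set grows, so $\bar x_u^{C}$ is non-decreasing in $C$. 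Consequently, if $\hat x\in(\underline x_u^{\cstar},\bar x_u^{\cstar}]$, then $\hat x$ also lies in $(\underline x_u^{C},\bar x_u^{C}]$ for every $C\ge\cstar$, so the explicit formulas of Lemma \ref{lem:property_u_l} apply at $\hat x$ for both $\cstar$ and $C$.

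Next, I would read off strict decrease from those explicit formulas. In the regime $\hat x+\underline h(\hat x)\ge m$, we have $\u(\hat x;C)=\bigl((1-C)\tfrac{r_h}{r_\ell}+C\bigr)\underline h(\hat x)+m(1-C)$, whose slope in $C$ is $\underline h(\hat x)(1-\tfrac{r_h}{r_\ell})-m$; since $r_h>r_\ell$, this slope is at most $-m<0$. In the regime $\hat x+\underline h(\hat x)<m$, $\u(\hat x;C)=(1-C)\tfrac{r_h}{r_\ell}\underline h(\hat x)-C\hat x+m$, whose slope is $-\tfrac{r_h}{r_\ell}\underline h(\hat x)-\hat x\le 0$, with equality only if $\hat x=\underline h(\hat x)=0$; but then the formula returns $m$, contradicting $\u(\hat x;\cstar)<m$. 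Either way the slope is strictly negative, giving $\u(\hat x;C)<\u(\hat x;\cstar)$.

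For the remaining case $\hat x\in[\bar x_u^{\cstar},\bar x]$, let $\hat \u(\cdot;C)$ denote the unconstrained extension of the upper-envelope formula (as in the proof of Lemma \ref{lem:worstrightpart}). By definition $\u(\hat x;C)=\hat\u(\bar x_u^{C};C)=\min_{x\in[x_L,\bar x]}\hat\u(x;C)$, and the explicit formula above shows that for each fixed $x$, $\hat\u(x;C)$ is strictly decreasing in $C$ whenever $\hat\u(x;C)<m$. Taking a pointwise infimum of a family of strictly decreasing functions preserves strict decrease (pass to sequences realizing the minima for $\cstar$ and $C$), so again $\u(\hat x;C)<\u(\hat x;\cstar)$. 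Combining the two cases: if $\u(\hat x;\cstar)<m$ then $\u(\hat x;C)<\u(\hat x;\cstar)$ strictly, contradicting the hypothesis; if $\u(\hat x;\cstar)=m$ but $\u(\hat x;C)<m$ the hypothesis also fails; therefore equality forces $\u(\hat x;C)=\u(\hat x;\cstar)=m$. The one delicate point is the last case, where $\bar x_u^{C}$ itself shifts with $C$; reframing $\u(\hat x;C)$ as a minimum of a family of strictly decreasing affine maps is what makes this argument go through cleanly.
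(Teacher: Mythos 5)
Your argument reaches the right conclusion, but by a genuinely different route from the paper's. The paper works directly from the defining equation of $\u$: if $p^{\star}:=\u(\hat{x};C)=\u(\hat{x};\cstar)<m$, then the single protection level $p^{\star}$ would have to satisfy $\CP_o(p^{\star};(\hat{x},\underline{h}(\hat{x})))=C$ and $\CP_o(p^{\star};(\hat{x},\underline{h}(\hat{x})))=\cstar$ simultaneously, which is impossible for $C\neq\cstar$; the region $\hat{x}\in[\bar x_u,\bar x]$ is handled by reducing to the point $\bar x_u$. You instead differentiate the closed-form expressions for $\u(\hat{x};\cdot)$ in $C$ and show the slope is strictly negative whenever the value is below the cap $m$. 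Your route is longer, but it buys something the paper's proof quietly skips: the reduction to $\bar x_u$ in the paper treats $\bar x_u$ as if it were the same for $C$ and $\cstar$, whereas you track how $\underline x_u$ and $\bar x_u$ move with $C$ and verify that the relevant regime is stable, which is the honest way to handle the constant tail of $\u(\cdot;C)$. Your slope computations are correct, and the observation that the degenerate zero-slope case forces $\u=m$ is exactly the right way to rule it out.

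One step needs tightening. In your final case you assert $\u(\hat{x};C)=\hat{\u}(\bar x_u^{C};C)$ for all $\hat{x}\in[\bar x_u^{\cstar},\bar x]$, but since $\bar x_u^{C}\ge\bar x_u^{\cstar}$ this identity fails in the sliver $\bar x_u^{\cstar}\le\hat{x}<\bar x_u^{C}$: there $\u(\hat{x};C)=\hat{\u}(\hat{x};C)\ge\min_{x}\hat{\u}(x;C)$, so an upper bound on the minimum does not upper-bound $\u(\hat{x};C)$, and the chain as written does not close. The fix is small. Either use the identity $\u(\hat{x};\cdot)=\min_{x\in[\underline{x},\hat{x}]}\hat{\u}(x;\cdot)$, which holds for every $\hat{x}$ because $\hat{\u}(\cdot;C)$ decreases up to $\bar x_u^{C}$ and increases afterwards, and then run your minimum-of-strictly-decreasing-functions argument over that fixed index set; or note directly that in the sliver $\hat{\u}(\hat{x};C)\le\hat{\u}(\bar x_u^{\cstar};C)<\hat{\u}(\bar x_u^{\cstar};\cstar)=\u(\hat{x};\cstar)$, using monotonicity in $x$ for the first inequality and your pointwise strict decrease in $C$ for the second. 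With either patch the proof is complete.
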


\proof{Proof of Lemma \ref{lem:munique}}
If $\u(\hat{x};\cstar)<m$, then by definition of $\underline x_u$, we have $\hat{x} \in [\underline x_u, \bar x]$. For $\hat x \in [\underline x_u, \bar x_u]$, by Equation \eqref{eq:u}, we have $\u(\hat x; \cstar)= \sup\big \{p\in [0,m]: \CP_o(p;(\hat x,\underline{h}(\hat x)))=\cstar\big\}$. For $C>\cstar$, 
if $\u(\hat{x};C) = \u(\hat{x};\cstar)$, then we have 
\begin{align*}
\sup\big \{p\in [0,m]: \CP_o(p;(\hat x,\underline{h}(\hat x)))=C\big\}&=\u(\hat x; C)=\u(\hat x; \cstar)\\&= \sup\big \{p\in [0,m]: \CP_o(p;(\hat x,\underline{h}(\hat x)))=\cstar\big\},
\end{align*}
which implies that $\CP_o(\u(\hat x; C);(x,\underline{h}(\hat x)))=C$ and $\CP_o(\u(\hat x; C);(x,\underline{h}(\hat x)))=\cstar$, which is a contradiction. For $\hat x \in [\bar x_u,\bar x]$, we have $\u(\hat x; \cstar)=\u(\bar x_u;\cstar)$ and  $\u(\hat x; C)=\u(\bar x_u;C)$. By the similar statement above, we have 
\begin{align*}
\sup\big \{p\in [0,m]: \CP_o(p;(\hat x,\underline{h}(\hat x)))&=C\big\}=\u(\bar x_u; C)=\u(\bar x_u; \cstar)\\&= \sup\big \{p\in [0,m]: \CP_o(p;(\hat x,\underline{h}(\hat x)))=\cstar\big\},
\end{align*}
which implies that $\CP_o(\u(\bar x_u; C);(\hat x,\underline{h}(\hat x)))=C$ and $\CP_o(\u(\bar x_u; C);(\hat x,\underline{h}(\hat x)))=\cstar$, which is a contradiction.
\endproof

\subsection{Lemma \ref{lem:barxusmall} and its Proof} 
\begin{lemma} \label{lem:barxusmall}
    Recall that $\u(\cdot;C)$ is defined in Equation \eqref{eq:u}, and $\bar x_u$ is defined in Equation \eqref{eq:xmin}. We have $\u(x;C)$ gets its minimum value at $x \in [\bar x_u, \bar x]$.
\end{lemma}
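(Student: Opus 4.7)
\proof{Proof Plan for Lemma \ref{lem:barxusmall}}
The plan is to split the interval $[0,\bar{x}]$ into three pieces and show on each piece that $\u(\cdot;C)$ is bounded below by its value at $\bar{x}_u$; since $\u(\cdot;C)$ is constant and equal to $\u(\bar{x}_u;C)$ on $[\bar{x}_u,\bar{x}]$ by definition, this will yield the claim that the minimum is attained on $[\bar{x}_u,\bar{x}]$.

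First, on $[0,\underline{x}_u]$ the definition of $\u(x;C)$ directly gives $\u(x;C)=m\ge \u(\bar{x}_u;C)$, which handles that piece trivially. Second, on $[\bar{x}_u,\bar{x}]$ the function is by construction the constant $\u(\bar{x}_u;C)$, so in particular its infimum there equals $\u(\bar{x}_u;C)$.

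The main step is the middle piece $[\underline{x}_u,\bar{x}_u]$. Here I would invoke Property 3 of Lemma \ref{lem:property_u_l}, which states that $\u(\cdot;C)$ is non-increasing on $[\underline{x}_u,\bar{x}]$ (and convex on that interval). Non-increasingness on $[\underline{x}_u,\bar{x}_u]$ immediately implies $\u(x;C)\ge \u(\bar{x}_u;C)$ for all $x\in[\underline{x}_u,\bar{x}_u]$, completing the comparison.

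Combining the three pieces, $\u(x;C)\ge \u(\bar{x}_u;C)$ for every $x\in[0,\bar{x}]$, and equality holds throughout $[\bar{x}_u,\bar{x}]$, so the minimum is attained on $[\bar{x}_u,\bar{x}]$. I do not expect any serious obstacle: the entire argument rests on invoking the monotonicity portion of Lemma \ref{lem:property_u_l} together with the piecewise definition of $\u(\cdot;C)$, and the only care needed is to verify that the monotonicity statement of Lemma \ref{lem:property_u_l} covers the interval $[\underline{x}_u,\bar{x}_u]$ (which it does, since that lemma gives monotonicity on all of $[\underline{x}_u,\bar{x}]$).
\Halmos
\endproof
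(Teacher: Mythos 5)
Your proposal is correct, and it rests on the same underlying fact as the paper's proof: the monotonicity of $\u(\cdot;C)$, which ultimately comes from the sign of $\partial \u(x;C)/\partial x$ given in Property 1 of Lemma \ref{lem:property_u_l}. The only difference is one of economy: the paper re-derives that sign analysis from scratch, splitting on the two cases in the definition of $\bar x_u$ in Equation \eqref{eq:xmin} to argue that $\u$ decreases up to $\bar x_u$ and is then held constant, whereas you simply cite the already-established non-increasingness from Property 3 of Lemma \ref{lem:property_u_l} on the middle piece $[\underline x_u,\bar x_u]$ and use the defining constancy of $\u$ on $[\bar x_u,\bar x]$. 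This is not circular, since Property 3 is proved from Property 1 and Lemma \ref{lem:Hh} without reference to Lemma \ref{lem:barxusmall}, so your shorter route is perfectly valid.
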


\proof{Proof of Lemma \ref{lem:barxusmall}}
By the first property of Lemma \ref{lem:property_u_l}, we have 
\begin{align}   \frac{\partial \u(x;C)}{\partial x} = \left\{ \begin{array}{ll}
         ((1-C)\frac{r_h}{r_{\ell}}+C)\underline{\mathcal{H}}'(x) &\quad  \mbox{if $x+\underline{\mathcal{H}}(x) \geq m$};\\
        (1-C)\frac{r_h}{r_{\ell}}\underline{\mathcal{H}}'(x)-C &\quad  \mbox{if $x+\underline{\mathcal{H}}(x) < m$}.\end{array} \right. \end{align} 
Recall that \begin{align}  \bar x_u = \left\{ \begin{array}{ll}
         x_L & \quad \mbox{if $x_L+y_L \geq m$};\\
        \sup\{x \in [x_L,\bar{x}]: (1-C)\frac{r_h}{r_{\ell}}\underline{\mathcal{H}}'(x^{-})-C < 0 \} & \quad  \mbox{Otherwise}.\end{array} \right. \end{align}

Therefore, if $x_L+y_L \geq m$, we have $\bar x_u=x_L$. As $L$ is the lowest point, which implies that $\underline{h}'(x_L^{-})<0$ and $\underline{h}'(x_L^{+})>0$. Recall that $\underline{\mathcal{H}}(x)=\min\{\underline{h},m\}$, we have $\underline{\mathcal{H}}'(x_L^{-}) \leq 0$ and $\underline{\mathcal{H}}'(x_L^{+}) \geq 0$. For $x+\underline{\mathcal{H}}(x) \geq m$, as we have $\frac{\partial \u(x;C)}{\partial x} = ((1-C)\frac{r_h}{r_{\ell}}+C)\underline{\mathcal{H}}'(x)$, we have $\frac{\partial \u(x;C)}{\partial x}<0$ for $x < x_L$ and $\frac{\partial \u(x;C)}{\partial x}>0$ for $x>x_L$, which implies that $\bar x_u = x_L$ is the point such that $\u(x;C)$ achieves its lowest value. As we force $\u(x;C)=\u(\bar x_u;C)$, we have $\u(x;C)$ gets its minimum value at $x \in [\bar x_u, \bar x]$. 

In other cases, by taking $\xmin = \sup\{x \in [x_L,\bar{x}]: (1-C)\frac{r_h}{r_{\ell}}\underline{\mathcal{H}}'(x^{-})-C < 0 \}$, we have $\frac{\partial \u(x;C)}{\partial x}<0$ for $x < \xmin$ and $\frac{\partial \u(x;C)}{\partial x}>0$ for $x>\xmin$, which implies that $\bar x_u$ is the point such that $\u(x;C)$ achieves its lowest value. As we force $\u(x;C)=\u(\bar x_u;C)$, we have $\u(x;C)$ gets its minimum value at $x \in [\bar x_u, \bar x]$.

\endproof

\subsection{Lemma \ref{lem:xminvertex} and its Proof}

\begin{lemma} \label{lem:xminvertex}
    Recall that $\xmin$ is defined in Equation \eqref{eq:xmin}, and $\V$ is the $x$-vertices set of a polyhedron $\region$ plus all elements of $\mathcal{R}_0$, where $\mathcal R_{0} = \{(x, \underline h(x): x\in [\underline x, \bar x]\} \cap \{(x,y): x+y =m\}$, then $\xmin \in \V$.
\end{lemma}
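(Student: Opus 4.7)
\textbf{Proof Proposal for Lemma \ref{lem:xminvertex}.} The plan is to analyze the two branches in the definition of $\bar x_u$ (Equation \eqref{eq:xmin}) separately and in each case exhibit $\bar x_u$ as either a vertex of the polyhedron $\region$ or an element of $\mathcal{R}_0$. Since $\region$ is a polyhedron, its lower envelope $\underline h(\cdot)$ is a piecewise linear convex function, whose break-points are exactly the $x$-coordinates of vertices of $\region$ lying on $\underline h(\cdot)$; in particular all these break-points belong to $\mathcal V$.

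In the first branch, $x_L+y_L\ge m$ forces $\bar x_u=x_L$. Since $L=(x_L,y_L)$ is the lowest point of $\underline h(\cdot)$ (it is the unique minimizer of $\underline h$ as $\underline h$ is piecewise linear convex), it must be a vertex of the polyhedron $\region$, so $x_L\in\mathcal V$ and therefore $\bar x_u\in\mathcal V$. (If $\underline h$ attains its minimum on a flat segment, one can take $L$ as either endpoint of that segment, both of which are polyhedron vertices.)

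In the second branch, $\bar x_u$ is the supremum of the set $\{x\in[x_L,\bar x]:(1-C)\tfrac{r_h}{r_\ell}\underline{\mathcal H}'(x^-)-C<0\}$. The key observation is that $\underline{\mathcal H}(x)=\min\{\underline h(x),m\}$ is itself piecewise linear and convex, with break-points that are either break-points of $\underline h(\cdot)$ (i.e.\ $x$-vertices of $\region$ on the lower envelope) or points where $\underline h(x)=m$ (i.e.\ elements of $\mathcal R_0$, which are by construction included in $\mathcal V$). Hence the left-derivative $\underline{\mathcal H}'(x^-)$ is piecewise constant and non-decreasing in $x$, jumping precisely at points of $\mathcal V$. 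Consequently the map $x\mapsto (1-C)\tfrac{r_h}{r_\ell}\underline{\mathcal H}'(x^-)-C$ is non-decreasing and piecewise constant with jumps in $\mathcal V$, so the supremum where it is still strictly negative is attained at a jump point; that is, $\bar x_u\in\mathcal V$.

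The only subtle step, which I expect to be the main obstacle, is arguing cleanly that the supremum in the definition of $\bar x_u$ coincides with a jump point of $\underline{\mathcal H}'(\cdot^-)$ rather than lying somewhere strictly inside a linear piece. On a piece where the left-derivative is some constant $c$, the expression $(1-C)\tfrac{r_h}{r_\ell}c-C$ is constant, so the set in question is either all of the piece or none of it; therefore the supremum over $[x_L,\bar x]$ is the right endpoint of the last piece on which the expression is still negative, which is exactly an element of $\mathcal V$. This establishes $\bar x_u\in\mathcal V$ and completes the proof. \Halmos
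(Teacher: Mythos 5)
Your proposal is correct and follows essentially the same route as the paper: split on the two branches of Equation \eqref{eq:xmin}, note that $x_L$ is a vertex in the first branch, and in the second branch use the piecewise linearity of $\underline{\mathcal H}$ (whose break-points are vertices of $\region$ or elements of $\mathcal R_0$, hence in $\V$) to conclude the supremum must sit at a break-point. The paper phrases the second branch as a contradiction (if $\xmin$ were not a vertex the left-derivative would be locally constant, contradicting the supremum), while you argue it directly via constancy of the expression on each linear piece; these are the same observation.
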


\proof{Proof of Lemma \ref{lem:xminvertex}}
Recall that 
\begin{align}  \bar x_u = \left\{ \begin{array}{ll}
         x_L & \quad \mbox{if $x_L+y_L \geq m$};\\
        \sup\{x \in [x_L,\bar{x}]: (1-C)\frac{r_h}{r_{\ell}}\underline{\mathcal{H}}'(x^{-})-C < 0 \} & \quad  \mbox{Otherwise}\,,\end{array} \right. \end{align} 
That is,  $\bar x_u$ equals to either $x_L$ or $\sup\{x \in [x_L,\bar{x}]: (1-C)\frac{r_h}{r_{\ell}}\underline{\mathcal{H}}'(x^{-})-C < 0 \}$. As $L$ is a vertex of $\region$, we have $x_L \in \V$.

Then, we claim that $\xmin=\sup\{x \in [x_L,\bar{x}]: (1-C)\frac{r_h}{r_{\ell}}\underline{\mathcal{H}}'(x^{-})-C < 0 \} \in \V$. As $\region$ is a polyhedron, we have $\underline{h}(\cdot)$ is a piecewise linear function. As $\underline{\mathcal{H}}(x)=\min\{\underline{h}(x),m\}$, we have $\underline{\mathcal{H}}(\cdot)$ is also a piecewise linear function. If $\xmin$ is not a x-vertex, then there exists $\epsilon>0$ such that $\underline{\mathcal{H}}'(x^{-})=\underline{\mathcal{H}}'((x+\epsilon)^{-})$, and we have 
\[
(1-C)\frac{r_h}{r_{\ell}}\underline{\mathcal{H}}'((x+\epsilon)^{-})-C < 0,
\]
which contradicts  the definition of $\xmin$.

\endproof

\subsection{Lemma \ref{lem:hatxinter} and its Proof}

\begin{lemma} \label{lem:hatxinter}
    If $\wll(\hat{x};\cstar)=\u(\hat{x};\cstar)$ for $x_H \leq \hat{x} \leq \nex$, we have $\hat{x} \in [\underline x_u, \bar x_u]$.
\end{lemma}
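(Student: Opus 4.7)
The goal is to establish the two-sided inclusion $\underline x_u \leq \hat x \leq \bar x_u$. I plan to argue each inequality by contradiction, leveraging the piecewise structure of $\u(\cdot;\cstar)$ (identically equal to $m$ on $[\underline x, \underline x_u]$ by definition, and constant on $[\bar x_u, \bar x]$ by Equation \eqref{eq:u}), the monotonicity of $\wll(\cdot;\cstar)$ from Lemma \ref{lem:propertywll}, and feasibility at $\cstar$ from Theorem \ref{thm:property_cstar}.

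For the lower bound $\hat x \geq \underline x_u$, assume toward contradiction that $\hat x < \underline x_u$. The definition of $\underline x_u$ forces $\u(\hat x;\cstar) = m$, so the assumed equality $\wll(\hat x;\cstar) = \u(\hat x;\cstar)$ gives $\wll(\hat x;\cstar) = m$. Because $\hat x \leq \nex$, Equation \eqref{eq:ll} yields $\ll(\hat x;\cstar) = m$. Using the defining equation $\CP_u(\ll(\hat x;\cstar);(\hat x,\bar h(\hat x))) = \cstar$ together with Lemma \ref{lem:cpexists} (which forces $\bar h(\hat x) \geq m$ in this boundary situation) and plugging $p=m$ into Equation \eqref{eq:CP_under}, the compatible ratio collapses to $1$. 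This forces $\cstar = 1$, which is the degenerate regime in which the statement is vacuous (every $x$ is an intersection point and in particular one lies in $[\underline x_u,\bar x_u]$).

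For the upper bound $\hat x \leq \bar x_u$, assume toward contradiction that $\hat x > \bar x_u$. By Equation \eqref{eq:u}, $\u(\cdot;\cstar)$ is constant on $[\bar x_u, \bar x]$, so $\u(\hat x;\cstar) = \u(\bar x_u;\cstar)$. By monotonicity of $\wll$ (Lemma \ref{lem:propertywll}), $\wll(\bar x_u;\cstar) \geq \wll(\hat x;\cstar) = \u(\hat x;\cstar) = \u(\bar x_u;\cstar)$, while feasibility at $\cstar$ (Theorem \ref{thm:property_cstar}, condition (1)) gives the reverse inequality $\wll(\bar x_u;\cstar) \leq \u(\bar x_u;\cstar)$. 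Therefore $\wll(\bar x_u;\cstar) = \u(\bar x_u;\cstar)$, so $\bar x_u$ is itself an intersection point of the two curves. Since $\hat x$ was chosen as the \emph{smallest} such intersection point (per the convention in the proof of Theorem \ref{thm:MLconsis}) and $\bar x_u < \hat x$, this is a contradiction.

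The main obstacle is the first step: one has to rule out the case $\wll(\hat x;\cstar) = m$ cleanly by invoking the boundary form of $\CP_u$ in Equation \eqref{eq:CP_under} together with the implicit upper bound $\ll(x;\cstar) \leq \min\{m, \bar h(x)\}$ from the definition of $\ll$, and then separately account for the degenerate case $\cstar = 1$. The upper-bound argument, by contrast, is a short monotonicity-plus-feasibility sandwich, and benefits directly from the constancy of $\u(\cdot;\cstar)$ past $\bar x_u$ established in Equation \eqref{eq:u}.
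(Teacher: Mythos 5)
Your proof is correct and follows essentially the same route as the paper's. The lower bound comes from the fact that $\u(\cdot;\cstar)\equiv m$ on $[\underline x,\underline x_u]$ while the common value $\wll(\hat x;\cstar)=\u(\hat x;\cstar)$ must lie strictly below $m$: the paper simply asserts this, whereas you justify it by pushing $\ll(\hat x;\cstar)=m$ through Lemma \ref{lem:cpexists} and Equation \eqref{eq:CP_under} to force $\cstar=1$, which is exactly the computation the paper performs in Part 2 of the proof of Theorem \ref{thm:property_cstar}; your dismissal of the $\cstar=1$ case as ``vacuous'' is a little loose, but the paper's proof makes the same implicit assumption that $\cstar<1$. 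For the upper bound the two arguments close the contradiction differently: the paper uses that $\u(\cdot;\cstar)$ is constant on $[\bar x_u,\hat x]$ while $\wll(\cdot;\cstar)$ is decreasing to produce a point $\hat x-\epsilon$ with $\wll(\hat x-\epsilon;\cstar)>\u(\hat x-\epsilon;\cstar)$, violating feasibility; you instead sandwich $\wll(\bar x_u;\cstar)$ between $\wll(\hat x;\cstar)=\u(\bar x_u;\cstar)$ (monotonicity) and $\u(\bar x_u;\cstar)$ (feasibility) to conclude that $\bar x_u$ is itself an intersection point, and then appeal to minimality of $\hat x$. The one thing to flag is that minimality of $\hat x$ is a convention adopted in the proof of Theorem \ref{thm:MLconsis}, not a hypothesis of the lemma as stated, so strictly speaking you prove the claim only for the smallest intersection point. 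That is all that is needed where the lemma is invoked, and the paper's own finish has a symmetric weakness (it needs $\wll(\cdot;\cstar)$ to be strictly decreasing on $[\bar x_u,\hat x]$, which can fail when $\bar h\ge m$ there), so this is a cosmetic rather than substantive gap.
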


\proof{Proof of Lemma \ref{lem:hatxinter}}
As $\u(\hat{x};\cstar)=\wll(\hat{x};\cstar)<m$, we have $\hat{x} > \underline{x_u}$. Then, we show that $\hat{x} \leq \bar x_u$ by contradiction. If $\hat{x}>\bar x_u$, as $u(\cdot;\cstar)$ is constant between $[\bar x_u,\hat{x}]$ and $\wll(\cdot;\cstar)$ is a decreasing function for $x \in [x_H,\hat{x}]$, we have there exists $\epsilon>0$ such that $\wll(\hat{x}-\epsilon;\cstar)>\u(\hat{x}-\epsilon;\cstar)$, which is a contradiction. Therefore, we have $\hat{x} \in [\underline{x_u}, \bar x_u]$,  

\endproof

\subsection{Lemma \ref{lem:nexvertex} and its Proof}
\begin{lemma} \label{lem:nexvertex}
    Recall that $\nex=\sup\{x \in [x_H,\bar{x}]: \frac{\partial \ll(x^{-};\cstar)}{\partial x} \le -1 \}$. Given a polyhedron $\region$, recall that $\V$ is the $x$-vertices set of $\region$ plus all elements in $\mathcal R_0$, then $\nex \in \V$.
\end{lemma}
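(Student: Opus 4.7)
The approach mirrors the proof of Lemma \ref{lem:xminvertex} and proceeds by contradiction after a short case analysis. Since $\region$ is a polyhedron, the upper envelope $\bar h(\cdot)$ is piecewise linear and concave, with breakpoints occurring only at $x$-values of vertices of $\region$, all of which lie in $\V$. By Property 2 of Lemma \ref{lem:property_u_l},
\[
\frac{\partial \ll(x^{-};\cstar)}{\partial x} \;=\; \cstar\,\overline{\mathcal H}'(x^{-}),
\qquad \overline{\mathcal H}(x)=\min\{\bar h(x),m\},
\]
on $[x_H,\bar x_\ell]$, and $\ll(\cdot;\cstar)\equiv 0$ on $[\bar x_\ell,\bar x]$.

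The key preliminary step is to rule out crossings of $\bar h$ with the line $y=m$ strictly inside $(x_H,\bar x]$. Since $\bar h$ is concave and $x_H$ lies in its arg-max (as $H$ is the rightmost point of $\overline{\region}$), $\bar h$ is non-increasing past $x_H$. If $\bar h(x_0)=m$ for some $x_0\in(x_H,\bar x]$, then $\underline h(x_0)\le\bar h(x_0)=m$ forces $(x_0,m)\in\region$, which contradicts the maximality of $x_H$. Consequently, on $(x_H,\bar x]$ one has either $\bar h<m$ throughout (and $\overline{\mathcal H}=\bar h$) or $\bar h>m$ strictly throughout (and $\overline{\mathcal H}\equiv m$).

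I would then finish as follows. In the case $\overline{\mathcal H}\equiv m$ on $(x_H,\bar x]$, one has $\frac{\partial \ll(x^{-};\cstar)}{\partial x}=0>-1$ throughout, so the defining set of $\nex$ is effectively empty on $(x_H,\bar x)$ and $\nex=\bar x$ by the stated convention, which is an $x$-vertex of $\region$, hence in $\V$. In the case $\overline{\mathcal H}=\bar h$ on $(x_H,\bar x]$, the derivative $\cstar\,\bar h'(x^{-})$ is piecewise constant with jumps only at $x$-vertices of $\region$. Suppose for contradiction $\nex\in(x_H,\bar x)$ and $\nex\notin\V$. Then there exists $\epsilon>0$ on which $\bar h'$, and hence $\frac{\partial \ll(x^{-};\cstar)}{\partial x}$, is constant on $(\nex-\epsilon,\nex+\epsilon)$, so the condition $\frac{\partial \ll(x^{-};\cstar)}{\partial x}\le -1$ has the same truth value at $\nex$ and at $\nex+\epsilon/2$, contradicting the characterization of $\nex$ as the supremum at which this condition transitions.

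The main obstacle, as in the analogous Lemma \ref{lem:xminvertex}, is ensuring that every breakpoint of $\overline{\mathcal H}$ in the range of interest is already in $\V$; this is precisely what the preliminary step accomplishes, by showing the only potentially troublesome breakpoints (crossings of $\bar h$ with $y=m$) cannot occur inside $(x_H,\bar x]$ without violating the definition of $x_H$.
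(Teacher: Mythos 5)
Your proof is correct and its core is the same contradiction argument the paper uses: for a polyhedron $\region$, $\frac{\partial \ll(x^{-};\cstar)}{\partial x}=\cstar\,\overline{\mathcal{H}}'(x^{-})$ is piecewise constant with breakpoints only in $\V$, so if $\nex\notin\V$ the left-derivative is locally constant around $\nex$ and the condition $\le -1$ persists to the right of $\nex$, contradicting the supremum. The one genuine addition in your write-up is the preliminary step ruling out a crossing of $\bar h$ with the line $y=m$ inside $(x_H,\bar x]$: this matters because $\overline{\mathcal{H}}=\min\{\bar h,m\}$ could in principle acquire a breakpoint at such a crossing that is not an $x$-vertex of $\region$, and unlike the lower-envelope crossings with $x+y=m$ (which are added to $\V$ via $\mathcal R_0$), these are not placed in $\V$ by the paper's definitions. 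The paper's own proof, together with Lemma \ref{lem:piecewise}, simply asserts that the breakpoints of $\ll(\cdot;\cstar)$ are among those of $\bar h(\cdot)$, so your argument via the maximality of $x_H$ patches a small gap rather than taking a different route; the case split on whether $\overline{\mathcal{H}}\equiv m$ (giving $\nex=\bar x\in\V$) or $\overline{\mathcal{H}}=\bar h$ on $(x_H,\bar x]$ is a clean way to organize it.
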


\proof{Proof of Lemma \ref{lem:nexvertex}}
As $\region$ is a polyhedron, we have $\bar{h}(\cdot)$ is a piecewise linear function. By the second property of Lemma \ref{lem:property_u_l}, we have $\ll(\cdot;C)$ is also a piecewise linear function. Then, we prove by contradiction. Suppose that $\nex \notin \V$. Then, there exists $\epsilon>0$ such that $\frac{\partial \ll(\nex^{-};\cstar)}{\partial x}=\frac{\partial \ll((\nex+\epsilon)^{-};\cstar)}{\partial x}$. Then, we have $\frac{\partial \ll((\nex+\epsilon)^{-};\cstar)}{\partial x} \le -1 $, which contradicts to the definition of $\nex$.

\endproof

\subsection{Lemma \ref{lem:piecewise} and its Proof}

\begin{lemma} \label{lem:piecewise}
    Suppose that $\region$ is a polyhedron. Recall that $\u(\cdot;C)$ and $\wll(\cdot;C)$ are defined in Equations \eqref{eq:u} and \eqref{eq:ll} respectively. Then, we have $\u(\cdot;C)$ and $\wll(\cdot;C)$ are piecewise linear functions and all of their $x$-vertices are a subset of $\V$. In addition,    the elements of $\mathcal{R}_0$ are  $x$-vertices of $\u(\cdot;C)$, where $\mathcal R_{0} = \{(x, \underline h(x): x\in [\underline x, \bar x]\} \cap \{(x,y): x+y =m\}$. 
\end{lemma}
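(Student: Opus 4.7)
The plan is to read off the piecewise linear structure of $\u(\cdot;C)$ and $\wll(\cdot;C)$ directly from the closed-form expressions supplied in the proof of Lemma \ref{lem:property_u_l}, combined with the fact that $\underline h(\cdot)$ and $\bar h(\cdot)$ are piecewise linear whenever $\region$ is a polyhedron, with their breakpoints lying at the $x$-coordinates of vertices of $\region$ and therefore in $\V$.

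For $\u(\cdot;C)$, I will decompose the interval $[\underline x,\bar x]$ into the three regions given by Property 3 of Lemma \ref{lem:property_u_l}. On $[\underline x,\underline x_u]$ and on $[\bar x_u,\bar x]$ the function $\u(\cdot;C)$ is constant, so it is trivially piecewise linear there, and the breakpoints $\underline x_u,\bar x_u$ lie in $\V$ (for $\bar x_u$ by Lemma \ref{lem:xminvertex}; an analogous argument handles $\underline x_u$ using the fact that $\CP_o(m;(x,\underline h(x)))=C$ at the defining threshold forces either $\underline x_u=\underline x$ or $\underline x_u$ to be a vertex of $\underline h$). On the middle interval $(\underline x_u,\bar x_u)$, Lemma \ref{lem:Hh} gives $\underline{\mathcal{H}}(x)=\underline h(x)$, and Property 1 of Lemma \ref{lem:property_u_l} writes $\u(x;C)$ as an affine function of $\underline h(x)$ with coefficients depending only on the sign of $x+\underline h(x)-m$. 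Since $\underline h$ is piecewise linear with breakpoints at vertices of $\region$, and the case split in Equation \eqref{eq:convexcase1p} switches precisely at points of $\mathcal R_0=\{(x,\underline h(x)):x+\underline h(x)=m\}$, every breakpoint of $\u(\cdot;C)$ on $(\underline x_u,\bar x_u)$ is either a vertex of $\underline h$ or an element of $\mathcal R_0$, hence belongs to $\V$. For the second assertion, at any $x_0\in\mathcal R_0\cap(\underline x_u,\bar x_u)$, subtracting the two expressions in Equation \eqref{eq:convexcase1p} gives a left-right slope gap of $C(1+\underline h'(x_0))$, which is nonzero unless $\underline h'(x_0)=-1$; this matches the "might be" wording of the lemma and produces the claimed breakpoint generically.

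For $\wll(\cdot;C)$, I will use its explicit description via $\ll(\cdot;C)$ and Equation \eqref{eq:ll}. Property 4 of Lemma \ref{lem:property_u_l} and the extension conventions force $\ll(x;C)=\ll(x_H;C)$ on $[\underline x,x_H]$, $\ll(x;C)=C\overline{\mathcal H}(x)-\tfrac{(1-C)mr_\ell}{r_h-r_\ell}$ on $[x_H,\bar x_\ell]$, and $\ll(x;C)=0$ on $[\bar x_\ell,\bar x]$. Because $\bar h$ and hence $\overline{\mathcal H}$ are piecewise linear with breakpoints at vertices of $\region$, so is $\ll(\cdot;C)$ on each piece; the point $x_H$ is a vertex of $\bar h$ by construction, so $x_H\in\V$. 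Equation \eqref{eq:ll} then modifies $\ll$ to $\wll$ by replacing it with the line of slope $-1$ through $(\nex,\ll(\nex;C))$ (capped at zero) for $x\ge\nex$, introducing at most two additional breakpoints: $\nex$, which lies in $\V$ by Lemma \ref{lem:nexvertex}, and the point $\nex+\ll(\nex;C)$ at which $\wll$ first hits zero. The latter point lies in the trivial region where $\wll\equiv 0$ contributes no binding feasibility information, and it can be absorbed into $\V$ without loss, so all relevant breakpoints of $\wll(\cdot;C)$ are in $\V$.

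The main obstacle I anticipate is bookkeeping the boundary transitions cleanly: making sure that $\underline x_u$ and the zero-hitting point of $\wll$ are handled, either by exhibiting them as vertices of the piecewise linear envelopes or by arguing that they do not affect the intersection analysis that the lemma is invoked for in Theorem \ref{thm:property_cstar}. Once the two functions are shown to be piecewise linear with breakpoints in $\V$ and Property 1 of Lemma \ref{lem:property_u_l} is used at $\mathcal R_0$-points to detect the slope jumps, both statements of the lemma follow.
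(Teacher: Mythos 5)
Your proposal follows essentially the same route as the paper's proof: both read off piecewise linearity of $\u(\cdot;C)$ and $\wll(\cdot;C)$ from the closed-form derivative expressions in Lemma \ref{lem:property_u_l}, locate the breakpoints at $x$-vertices of $\underline h$ and $\bar h$ together with the case-switch points in $\mathcal R_0$, and invoke Lemmas \ref{lem:xminvertex} and \ref{lem:nexvertex} for the threshold points. Your treatment is in fact somewhat more careful than the paper's (explicit slope-jump computation $C(1+\underline h'(x_0))$ at $\mathcal R_0$-points and the handling of $\underline x_u$ and the zero-hitting point of $\wll$), but the underlying argument is the same.
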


\proof{Proof of Lemma \ref{lem:piecewise}}
As $\region$ is a polyhedron, we have both $\bar{h}(\cdot)$ and $\underline{h}(\cdot)$ are piecewise linear. Since $\overline{\mathcal{H}}(x) = \min\{m,\bar h(x)\}$ and $\underline{\mathcal{H}}(x)=\min\{m,\underline h(x)\}$, we have both $\overline{\mathcal{H}}(\cdot)$ and $\underline{\mathcal{H}}(\cdot)$ are piecewise linear and both $\overline{\mathcal{H}}'(\cdot)$ and $\underline{\mathcal{H}}'(\cdot)$ are piecewise constant. 

Recall that by Lemma \ref{lem:property_u_l}, we have for any $x\in (x_H, \bar x_l)$ and $C \in [0,1]$, 
\[
\frac{\partial \ll(x; C)}{\partial x}=C \overline{\mathcal{H}}'(x).
\]
Therefore, $\frac{\partial \ll(x; C)}{\partial x}$ is piecewise constant and $\ll(x; C)$ is piecewise linear. In addition, we can find that the x-vertices of $\ll(\cdot;C)$ are also ones of $\bar{h}(\cdot)$. By Equation \eqref{eq:ll} and Lemma \ref{lem:nexvertex}, we have $\wll(\cdot;C)$ is also piecewise linear with $x$-vertices belong to ones of $\bar{h}(\cdot)$.

Recall that by Lemma \ref{lem:property_u_l}, we have for any $x\in (\underline x_u, \bar x_u)$ and $C \in [0,1]$,
\begin{align}  \frac{\partial \u(x;C)}{\partial x} = \left\{ \begin{array}{ll}
         ((1-C)\frac{r_h}{r_{\ell}}+C)\underline{\mathcal{H}}'(x) &\quad  \mbox{if $x+\underline{\mathcal{H}}(x) \geq m$};\\
        (1-C)\frac{r_h}{r_{\ell}}\underline{\mathcal{H}}'(x)-C &\quad  \mbox{if $x+\underline{\mathcal{H}}(x) < m$}.\end{array} \right. \end{align} 

Therefore, we have $\frac{\partial \u(x; C)}{\partial x}$ is piecewise constant and $\u(x; C)$ is piecewise linear. In addition, the x-vertices of $\u(x; C)$ for $x+\underline{\mathcal{H}}(x) \geq m$ is a subset to x-vertices of $\overline{\mathcal{H}}(x)$. The x-vertices of $\u(x; C)$ for $x+\underline{\mathcal{H}}(x) < m$ is also a subset to x-vertices of $\overline{\mathcal{H}}(x)$. Moreover, any point $x$ such that $x+\underline{\mathcal{H}}(x) = m$, which is we defined as an element of $\mathcal R_0$, is also a vertex, and by our definition, $\V$ contains all elements of $\mathcal R_0$.

\endproof

\subsection{Geometric Lemmas}

\begin{lemma} \label{lem:leftconcave}
Suppose that we have a line $\mathcal{L}(x)$ with any negative slope on an interval $\mathcal{I}$. $f(x)$ is a concave decreasing function which intersects $\mathcal{L}(x)$ at $(x_0,f(x_0))$. If there exists $x_1<x_0$ such that $f(x_1)>f(x_0)$, then for all $x>x_0$, we have $f(x)<\mathcal{L}(x)$.
\end{lemma}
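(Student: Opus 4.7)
\proof{Proof Plan for Lemma \ref{lem:leftconcave}.}
The plan is to argue directly from concavity by studying the auxiliary function $g(x) = f(x) - \mathcal{L}(x)$. Since $f$ is concave and $\mathcal{L}$ is affine, $g$ is concave on $\mathcal{I}$, and by construction $g(x_0) = f(x_0) - \mathcal{L}(x_0) = 0$. The goal is to show $g(x) < 0$ for every $x > x_0$ in $\mathcal{I}$.

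First, I would use the hypothesis $f(x_1) > f(x_0)$ together with concavity of $f$ to control the right-hand behavior of $f$ at $x_0$. Specifically, for a concave function one has the secant-slope inequality: for any $x_1 < x_0 < x$,
\begin{equation*}
\frac{f(x_0) - f(x_1)}{x_0 - x_1} \; \geq \; \frac{f(x) - f(x_0)}{x - x_0}.
\end{equation*}
Because $f(x_1) > f(x_0)$ and $x_1 < x_0$, the left-hand side is strictly negative, so $f(x) - f(x_0) < 0$ for every $x > x_0$, and in fact the secant slope from $x_0$ to $x$ is bounded above by a strictly negative constant. I would then combine this with the fact that $\mathcal{L}$ has strictly negative slope (so $\mathcal{L}(x) - \mathcal{L}(x_0) < 0$ for $x > x_0$) in order to compare the two decreases and deduce $f(x) - f(x_0) < \mathcal{L}(x) - \mathcal{L}(x_0)$, which is equivalent to $f(x) < \mathcal{L}(x)$ because $f(x_0) = \mathcal{L}(x_0)$.

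The cleanest way to finish would be a three-point concavity argument on $g$: if there were $x_2 > x_0$ with $g(x_2) \geq 0$, then taking the point $x_1 < x_0 < x_2$ and writing $x_0$ as a convex combination $x_0 = \lambda x_1 + (1-\lambda) x_2$ with $\lambda \in (0,1)$, concavity of $g$ would give $g(x_0) \geq \lambda g(x_1) + (1-\lambda) g(x_2)$. Since the hypothesis translates (via the slope comparison in the previous step) into strict positivity of $g$ somewhere to the left of $x_0$, and $g(x_2) \geq 0$, the right-hand side would be strictly positive, contradicting $g(x_0) = 0$. Hence $g(x) < 0$ for every $x > x_0$.

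The main obstacle is the slope comparison step: the hypothesis $f(x_1) > f(x_0)$ directly yields only a negative secant slope for $f$, and one has to extract from this, together with the negativity of $\mathcal{L}$'s slope and the geometry of the intersection at $x_0$, the strict inequality $g(x) < 0$ to the right. I expect this will require carefully exploiting that $x_0$ is an intersection point of $f$ and $\mathcal{L}$ (so the secant slope of $f$ across $x_0$ agrees with that of $\mathcal{L}$ at $x_0$), and then invoking concavity to propagate the strict decrease of $g$ past $x_0$. Once this step is in place, the remainder of the argument is a routine application of the three-point inequality for concave functions.
\Halmos
\endproof
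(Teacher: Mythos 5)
Your closing three-point concavity argument on $g(x)=f(x)-\mathcal{L}(x)$ is essentially the paper's own proof: the paper joins $(x_1,f(x_1))$ and $(x_2,f(x_2))$ by a chord $\mathcal{L}_1$ and derives the contradiction $f(x_0)\geq \mathcal{L}_1(x_0)>\mathcal{L}(x_0)=f(x_0)$, which is exactly your inequality $0=g(x_0)\geq \lambda g(x_1)+(1-\lambda)g(x_2)>0$. So the skeleton of your plan matches the intended argument.

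The gap is precisely the step you flag as the ``main obstacle,'' and it cannot be closed from the hypotheses as literally stated. The three-point argument needs $g(x_1)>0$, i.e.\ $f(x_1)>\mathcal{L}(x_1)$; but the stated hypothesis only gives $f(x_1)>f(x_0)=\mathcal{L}(x_0)$, and since $\mathcal{L}$ is decreasing we also have $\mathcal{L}(x_1)>\mathcal{L}(x_0)$, so no comparison between $f(x_1)$ and $\mathcal{L}(x_1)$ follows. Your intermediate secant-slope step does not rescue this: it shows the right secant slope of $f$ at $x_0$ is bounded above by some strictly negative constant, but nothing relates that constant to the (also strictly negative) slope of $\mathcal{L}$, so the claimed deduction $f(x)-f(x_0)<\mathcal{L}(x)-\mathcal{L}(x_0)$ is a non sequitur. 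Indeed the lemma as written is false: take $\mathcal{L}(x)=-x$, $f(x)=-x/2$, $x_0=0$, $x_1=-1$; then $f$ is concave and decreasing, $f(x_1)=1/2>0=f(x_0)$, yet $f(x)=-x/2>-x=\mathcal{L}(x)$ for all $x>0$. The hypothesis that is actually intended — and actually available at the point of application in the proof of Lemma \ref{lem:leftthresholdun}, where $\bar h(x)>\mathcal{L}(x)$ for $x<\bar{x}_{\ell}$ is established first via Lemma \ref{lem:ymonotone} — is $f(x_1)>\mathcal{L}(x_1)$; the paper's proof silently substitutes it by asserting that $(x_1,f(x_1))$ lies above the line. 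With that corrected hypothesis, your final paragraph alone is a complete proof and the slope-comparison detour should be deleted.
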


\proof{Proof of Lemma \ref{lem:leftconcave}}
Suppose that there exists $x_2>x_0$ such that $f(x_2) \geq \mathcal{L}(x_2)$. As $(x_1,f(x_1)$, $(x_2,f(x_2))$ are both above the line $\mathcal{L}$, if we connect $(x_1,f(x_1)$, $(x_2,f(x_2))$ by a line $\mathcal{L}_1(x)$, we have $\mathcal{L}_1(x)>\mathcal{L}(x)$ for $x \in [x_1,x_2]$. Therefore, $\mathcal{L}_1(x_0)>\mathcal{L}(x_0)=f(x_0)$ since $x_0 \in [x_1,x_2]$.

However, as $f(x)$ is a concave function, if we take $x_1<x_2$ and connect $(x_1,f(x_1)$, $(x_2,f(x_2))$ by a line $\mathcal{L}_1(x)$, we should always have $f(x) \geq \mathcal{L}_1(x)$, which is a contradiction to $\mathcal{L}_1(x_0)>\mathcal{L}(x_0)=f(x_0)$.

\Halmos
\endproof

\begin{lemma} \label{lem:worstvertexpl}
    Let $f(x)$ and $g(x)$ be two piecewise linear functions defined on an interval $I$, with $f(x) \geq g(x)$ for any $x \in I$. If $\{x:f(x)=g(x)\}$ is not empty, we have there exists $x_0 \in \V$, which is an $x$-vertex of either $f(x)$ or $g(x)$, such that $f(x_0)=g(x_0)$.
\end{lemma}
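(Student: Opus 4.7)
{Proof Plan for Lemma \ref{lem:worstvertexpl}.}

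The plan is to start from an arbitrary agreement point and ``walk'' along a common linear segment until we hit a vertex. Pick any $x^{\ast} \in I$ with $f(x^{\ast}) = g(x^{\ast})$, which exists by assumption. If $x^{\ast}$ is already an $x$-vertex of $f$ or $g$ (in particular if it is an endpoint of $I$), we are done and can take $x_0 = x^{\ast}$. Otherwise, $x^{\ast}$ lies in the interior of a linear piece of $f$, say $(a_f, b_f)$, and in the interior of a linear piece of $g$, say $(a_g, b_g)$; set $a = \max\{a_f, a_g\}$ and $b = \min\{b_f, b_g\}$, so that both $f$ and $g$ are affine on $[a,b]$ and $x^{\ast} \in (a,b)$.

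Next I would argue that $f$ and $g$ in fact coincide on all of $[a,b]$. Restricted to $[a,b]$, both functions are affine, so the difference $\Delta(x) := f(x) - g(x)$ is affine on $[a,b]$. By hypothesis $\Delta(x) \geq 0$ for all $x \in [a,b]$, and $\Delta(x^{\ast}) = 0$ with $x^{\ast}$ in the interior $(a,b)$. Any affine function that attains a nonnegative minimum value of $0$ at an interior point of its domain must be identically $0$; hence $f \equiv g$ on $[a,b]$. In particular, $f(b) = g(b)$.

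Finally, observe that $b = \min\{b_f, b_g\}$ is, by construction, either the right endpoint of a linear piece of $f$ or of a linear piece of $g$ (or, in the degenerate case where the linear piece reaches the boundary of $I$, an endpoint of $I$), so $b$ is an $x$-vertex of $f$ or $g$, i.e. $b \in \V$. Taking $x_0 = b$ gives the desired vertex with $f(x_0) = g(x_0)$. The main thing to be careful about is the boundary case where $x^{\ast}$ sits in a linear piece that already runs to the endpoint of $I$: there one simply takes $x_0$ to be that endpoint, which is itself conventionally an $x$-vertex of the piecewise linear function (and in the application to Theorem \ref{thm:property_cstar}, belongs to $\V$ because the endpoints $\underline{x}, \bar{x}$ of $[\underline{x},\bar{x}]$ correspond to vertices of the polyhedron $\region$). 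No real combinatorial difficulty arises; the content of the lemma is simply that an affine function can touch $0$ from above only at an endpoint of its domain.
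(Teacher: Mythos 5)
Your proof is correct and rests on the same core idea as the paper's: an agreement point that is not a vertex lies in the interior of linear pieces of both functions, forcing $f-g$ (affine, nonnegative, vanishing at an interior point) to be identically zero on the common segment, whose endpoint is then a vertex where $f=g$. The paper packages this as a contradiction argument with one-sided derivatives and walks to the left endpoint of the agreement interval, while you argue directly and walk to the right, but the substance is the same.
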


\proof{Proof of Lemma \ref{lem:worstvertexpl}}
We prove by contradiction. Suppose that any $x_1 \in \{x:f(x)=g(x)\}$ is not an $x$-vertex of either $f(x)$ or $g(x)$. Then, we have $f'(x_1^{+})=f'(x_1^{-})$ and $g'(x_1^{+})=g'(x_1^{-})$. As $f(x) \geq g(x)$ everywhere and $f(x_1)=g(x_1)$, we have $f'(x_1^{-}) \leq g'(x_1^{-})$. 

If $f'(x_1^{-}) < g'(x_1^{-})$, we have $f'(x_1^{+}) < g'(x_1^{+})$ and by $f(x_1)=g(x_1)$, we have there exists $\epsilon>0$ such that $f(x_1+\epsilon)<g(x_1+\epsilon)$, which is a contradiction to $f(x) \geq g(x)$ everywhere.

If $f'(x_1^{-}) = g'(x_1^{-})$, we define $x_2$ as $x_2=\inf\{x: f(x)=g(x) \text{ for any $x \in [x,x_1]$}\}$. Then, we have $f'(x_2^{+}) = g'(x_2^{+})$. By the definition of $x_2$, we know that for any $\epsilon_1>0$, we have $f(x_2-\epsilon_1)>g(x_2-\epsilon_1)$, which implies that $f(x_2^{-}) \neq g(x_2^{-})$. As $f'(x_2^{+}) = g'(x_2^{+})$, we have either $f(x_2^{-}) \neq f'(x_2^{+})$ or $g(x_2^{-}) \neq g'(x_2^{+})$, which implies that $x_2$ is an $x$-vertex for $f$ or $g$, which is a contradiction.

\Halmos

\endproof

\section{Supplementary Materials for Section \ref{sec:simulations}} \label{append:simulations}

\subsection{Stochastic Arrivals}
{\color{black}Tables~\ref{tab:boxS} and~\ref{tab:ellS} present the results under different forms of ML advice, using the demand models defined in Equations~\eqref{distri1} and~\eqref{distri2}, as well as under the stochastic order assumption. Consistent with our earlier findings, the average compatible ratio remains relatively stable as \( n \) increases, while the worst-case ratio shows more noticeable improvement. Furthermore, we observe that the best performance—both in terms of average and worst-case ratios—is achieved when using either the polyhedron advice with \( \epsilon = 0.3 \) or the box/ellipsoid  advice.
}

\begin{table}[htb]
\caption{Results under ML advice (using the demand model in Equation \eqref{distri1}) and stochastic order. The standard error of all reported values is less than $0.003$. In each column, the top two values of average CP are highlighted in black, and the top two values of worst-case CP are highlighted in red. Additionally, the highest average and worst-case CP across all settings for each value of $n$ are shown in bold.}
\centering
\footnotesize
\begin{tabular}{ll cccc cccc}
\toprule
 & & \multicolumn{3}{c}{\( n = 10 \)} & \multicolumn{3}{c}{\( n = 100 \)} \\
\cmidrule(lr){3-5} \cmidrule(lr){6-8}
Algorithm & Metric & $\cstar$ & $0.9 \cdot \cstar$ & $0.8 \cdot \cstar$ & $\cstar$ & $0.9 \cdot \cstar$ & $0.8 \cdot \cstar$ \\
\midrule
\multirow{2}{*}{Alg. \ref{alg:trans} (Polyhedron, $\epsilon = 0.1, \theta = 10$)} 
    & Avg. CP & 0.852 & 0.840 & 0.831 & 0.855 & 0.842 & 0.836 \\
    & Worst CP & 0.639 & 0.683 & 0.687 & 0.646 & 0.688 & 0.693 \\
\midrule
\multirow{2}{*}{Alg. \ref{alg:trans} (Polyhedron, $\epsilon = 0.1, \theta = 30$)} 
    & Avg. CP & 0.855 & 0.841 & 0.832 & 0.854 & 0.844 & 0.838 \\
    & Worst CP & 0.637 & 0.683 & 0.686 & 0.647 & 0.687 & 0.691 \\
\midrule
\multirow{2}{*}{Alg. \ref{alg:trans} (Polyhedron, $\epsilon = 0.3, \theta = 10$)} 
    & Avg. CP & \cellcolor{bestavg}{0.941} & {0.927} & \cellcolor{bestavg}{0.900} & {0.949} & {0.935} & {0.902} \\
    & Worst CP & \cellcolor{bestworst}{0.674} & \cellcolor{bestworst}{0.730} & \cellcolor{bestworst}{0.737} & {0.681} & \cellcolor{bestworst}{0.736} & {0.744} \\
\midrule
\multirow{2}{*}{Alg. \ref{alg:trans} (Polyhedron, $\epsilon = 0.3, \theta = 30$)} 
    & Avg. CP & {0.940} & \cellcolor{bestavg}{0.928} & \cellcolor{bestavg}{0.900} & {0.949} & {0.937} & \cellcolor{bestavg}{0.903} \\
    & Worst CP & {0.672} & {0.729} & \cellcolor{bestworst}{0.737} & \cellcolor{bestworst}{0.682} & {0.735} & \cellcolor{bestworst}{0.745} \\
\midrule
\multirow{2}{*}{Alg. \ref{alg:trans} (Box Advice, $z = 90\%$)} 
    & Avg. CP & \cellcolor{bestavg}{\textbf{0.943}} & \cellcolor{bestavg}{0.929} & \cellcolor{bestavg}{0.901} & \cellcolor{bestavg}{\textbf{0.953}} & \cellcolor{bestavg}{0.939} & \cellcolor{bestavg}{0.905} \\
    & Worst CP & \cellcolor{bestworst}{0.676} & \cellcolor{bestworst}{0.733} & \cellcolor{bestworst}{\textbf{0.741}} & \cellcolor{bestworst}{0.685} & \cellcolor{bestworst}{0.744} & \cellcolor{bestworst}{\textbf{0.746}} \\
\midrule
\multirow{2}{*}{Alg. \ref{alg:trans} (Box Advice, $z = 80\%$)} 
    & Avg. CP & {0.936} & 0.922 & 0.893 & \cellcolor{bestavg}{0.950} & \cellcolor{bestavg}{0.943} & 0.907 \\
    & Worst CP & 0.667 & 0.726 & 0.732 & 0.687 & 0.738 & 0.744 \\
\midrule
\multirow{2}{*}{Point ML Advice} 
    & Avg. CP & 0.910 & 0.853 & 0.814 & 0.918 & 0.860 & 0.819 \\
    & Worst CP & 0.555 & 0.584 & 0.673 & 0.559 & 0.590 & 0.676 \\
\midrule
\multirow{2}{*}{BQ Benchmark} 
    & Avg. CP & 0.823 & - & - & 0.823 & - & - \\
    & Worst CP & 0.674 & - & - & 0.671 & - & - \\
\midrule
\multirow{2}{*}{PR Benchmark ($z = 90\%$)} 
    & Avg. CP & {0.933} & - & - & {0.941} & - & - \\
    & Worst CP & 0.666 & - & - & 0.688 & - & - \\
\midrule
\multirow{2}{*}{PR Benchmark ($z = 80\%$)} 
    & Avg. CP & 0.926 & - & - & 0.935 & - & - \\
    & Worst CP & 0.658 & - & - & 0.668 & - & - \\
\bottomrule
\end{tabular}
\label{tab:boxS}
\end{table}

\begin{table}[htb]
\caption{Results under ML advice (using the demand model in Equation \eqref{distri2}) and stochastic order. The standard error of all reported values is less than $0.005$. In each column, the top two values of average CP are highlighted in black, and the top two values of worst-case CP are highlighted in red. Additionally, the highest average and worst-case CP across all settings for each value of $n$ are shown in bold.}
\centering
\footnotesize
\begin{tabular}{ll cccc cccc}
\toprule
 & & \multicolumn{3}{c}{\( n = 10 \)} & \multicolumn{3}{c}{\( n = 100 \)} \\
\cmidrule(lr){3-5} \cmidrule(lr){6-8}
Algorithm & Metric & $\cstar$ & $0.9 \cdot \cstar$ & $0.8 \cdot \cstar$ & $\cstar$ & $0.9 \cdot \cstar$ & $0.8 \cdot \cstar$ \\
\midrule
\multirow{2}{*}{Alg. \ref{alg:trans} (Polyhedron, $\epsilon = 0.1, \theta = 10$)} 
    & Avg. CP & 0.858 & 0.850 & 0.842 & 0.867 & 0.857 & 0.848 \\
    & Worst CP & 0.630 & 0.645 & 0.650 & 0.634 & 0.648 & 0.655 \\
\midrule
\multirow{2}{*}{Alg. \ref{alg:trans} (Polyhedron, $\epsilon = 0.1, \theta = 30$)} 
    & Avg. CP & 0.859 & 0.852 & 0.842 & 0.868 & 0.859 & 0.849 \\
    & Worst CP & 0.628 & 0.645 & 0.649 & 0.633 & 0.647 & 0.654 \\
\midrule
\multirow{2}{*}{Alg. \ref{alg:trans} (Polyhedron, $\epsilon = 0.3, \theta = 10$)} 
    & Avg. CP & \cellcolor{bestavg}{0.939} & {0.910} & \cellcolor{bestavg}{0.883} & {0.942} & {0.915} & 0.890 \\
    & Worst CP & 0.649 & 0.666 & 0.669 & 0.653 & 0.672 & 0.675 \\
\midrule
\multirow{2}{*}{Alg. \ref{alg:trans} (Polyhedron, $\epsilon = 0.3, \theta = 30$)} 
    & Avg. CP & {0.938} & 0.908 & 0.882 & {0.941} & {0.915} & 0.889 \\
    & Worst CP & 0.650 & 0.665 & 0.668 & 0.652 & 0.671 & 0.674 \\
\midrule
\multirow{2}{*}{Alg. \ref{alg:trans} (Ellipsoid Advice, $z = 90\%$)} 
    & Avg. CP & \cellcolor{bestavg}{\textbf{0.947}} & \cellcolor{bestavg}{0.922} & \cellcolor{bestavg}{0.891} & \cellcolor{bestavg}{\textbf{0.955}} & \cellcolor{bestavg}{0.939} & \cellcolor{bestavg}{0.912} \\
    & Worst CP & \cellcolor{bestworst}{0.679} & \cellcolor{bestworst}{0.698} & \cellcolor{bestworst}{\textbf{0.703}} & \cellcolor{bestworst}{0.692} & \cellcolor{bestworst}{0.716} & \cellcolor{bestworst}{\textbf{0.724}} \\
\midrule
\multirow{2}{*}{Alg. \ref{alg:trans} (Ellipsoid Advice, $z = 80\%$)} 
    & Avg. CP & 0.934 & \cellcolor{bestavg}{0.918} & 0.880 & \cellcolor{bestavg}{0.952} & \cellcolor{bestavg}{0.931} & \cellcolor{bestavg}{0.895} \\
    & Worst CP & \cellcolor{bestworst}{0.664} & \cellcolor{bestworst}{0.683} & \cellcolor{bestworst}{0.692} & \cellcolor{bestworst}{0.672} & \cellcolor{bestworst}{0.696} & \cellcolor{bestworst}{0.703} \\
\midrule
\multirow{2}{*}{Point ML Advice} 
    & Avg. CP & 0.909 & 0.852 & 0.819 & 0.914 & 0.862 & 0.823 \\
    & Worst CP & 0.551 & 0.583 & 0.639 & 0.559 & 0.599 & 0.655 \\
\midrule
\multirow{2}{*}{BQ Benchmark} 
    & Avg. CP & 0.834 & - & - & 0.834 & - & - \\
    & Worst CP & 0.670 & - & - & 0.670 & - & - \\
\bottomrule
\end{tabular}
\label{tab:ellS}
\end{table}
{\color{black}
\subsection{Additional Details on the Neural Network Method}\label{appendix:NN}
This appendix presents the figures associated with the NN-based ML advice.  
Following the approach of \cite{pearce2018high}, the NN is trained to produce a prediction interval for each customer type. Figure~\ref{fig:adviceneural} displays results for four training-sample sizes: \(n = 10, 100, 1000,\) and \(2000\). Each plot reports the training loss and shows a histogram of 500 test samples, together with dashed lines indicating the learned prediction interval (PI) for the high-reward type.

The figure highlights a clear trend: when the training set is very small (\(n < 1000\)), the prediction intervals fluctuate substantially, reflecting underfitting and sensitivity to sampling noise. As \(n\) increases, the intervals become more stable; for \(n \ge 2000\), we observe that the endpoints converge to a consistent range. These visual patterns align with the quantitative results in Table~\ref{tab:new_results_neu} and help explain why larger sample sizes are important for reliable NN-based ML advice.

}

\begin{figure}[ht]
    \centering
    
    \begin{subfigure}{0.48\textwidth}
        \centering
        \includegraphics[width=\linewidth]{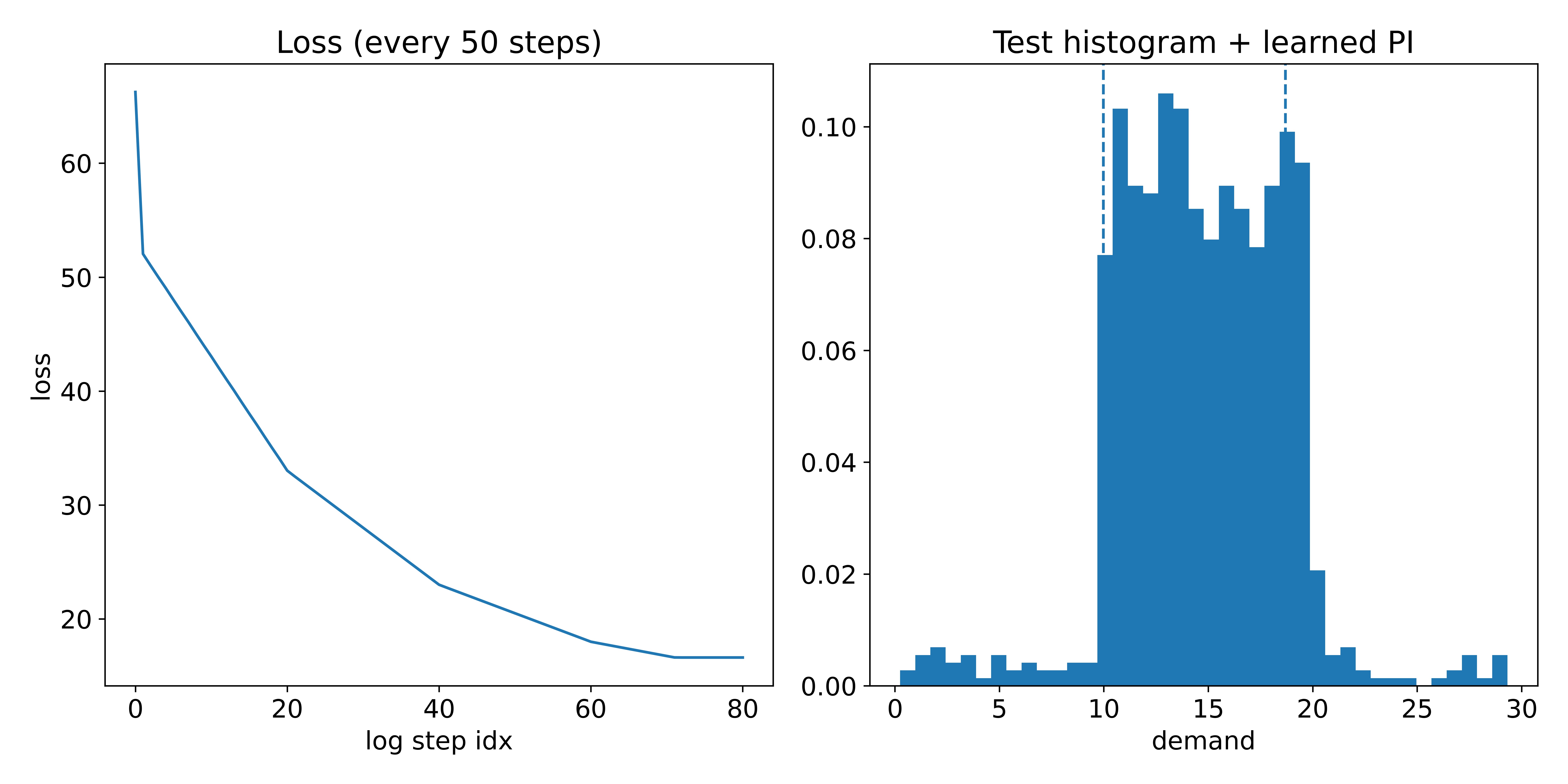}
        \caption{$n = 10$}
    \end{subfigure}
    \hfill
    \begin{subfigure}{0.48\textwidth}
        \centering
        \includegraphics[width=\linewidth]{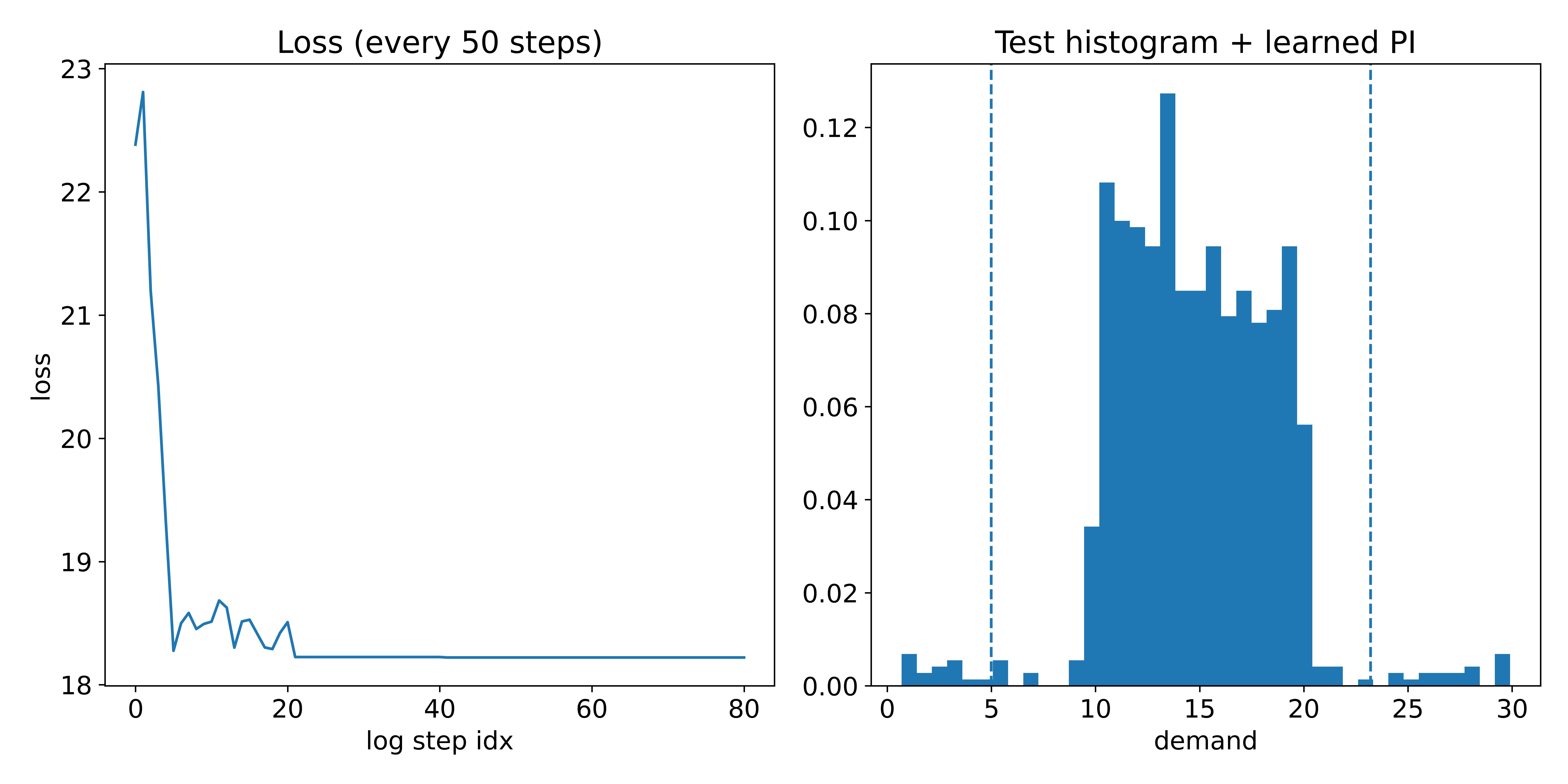}
        \caption{$n = 100$}
    \end{subfigure}

    \begin{subfigure}{0.48\textwidth}
        \centering
        \includegraphics[width=\linewidth]{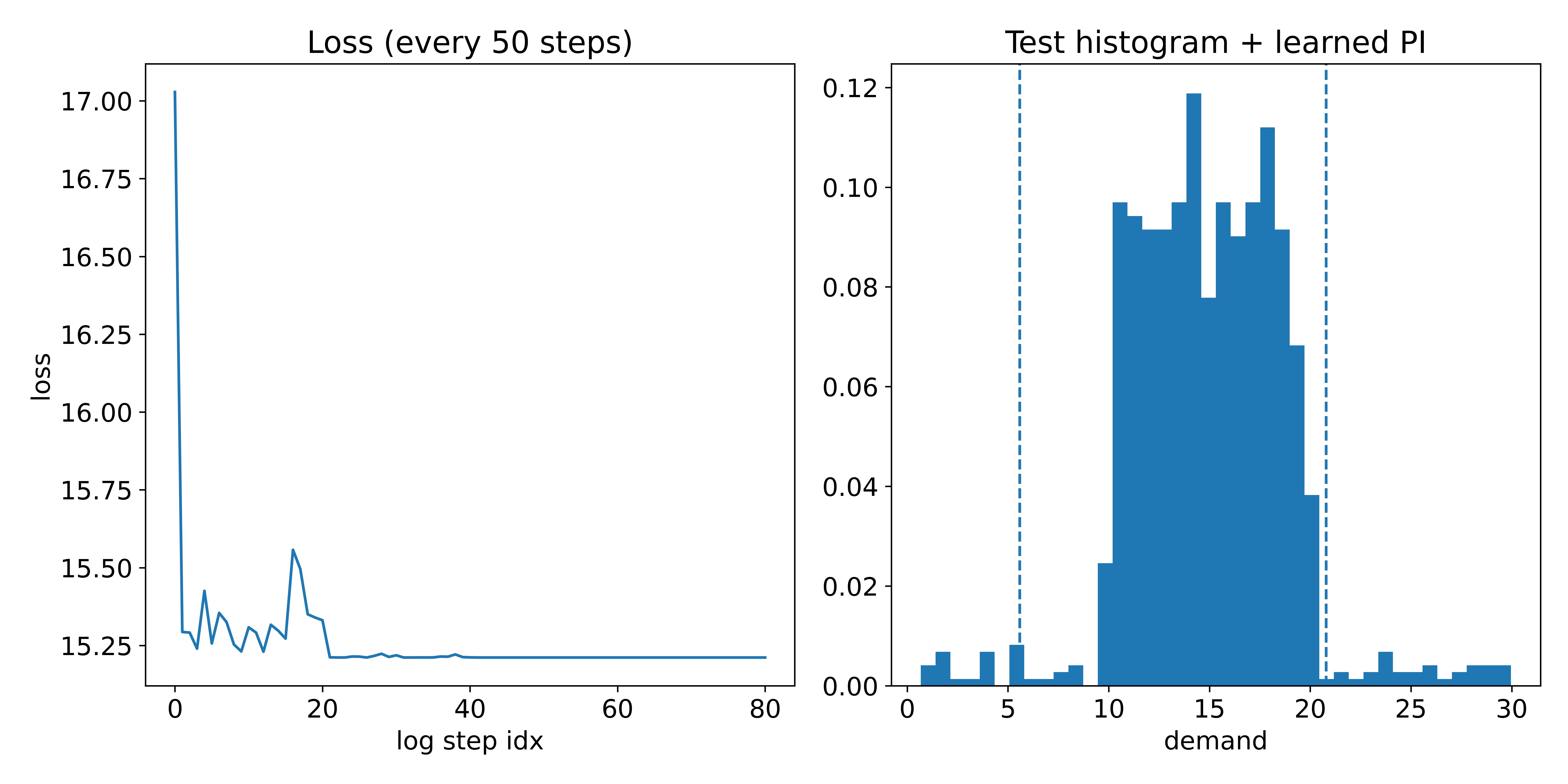}
        \caption{$n = 1000$}
    \end{subfigure}
    \hfill
    \begin{subfigure}{0.48\textwidth}
        \centering
        \includegraphics[width=\linewidth]{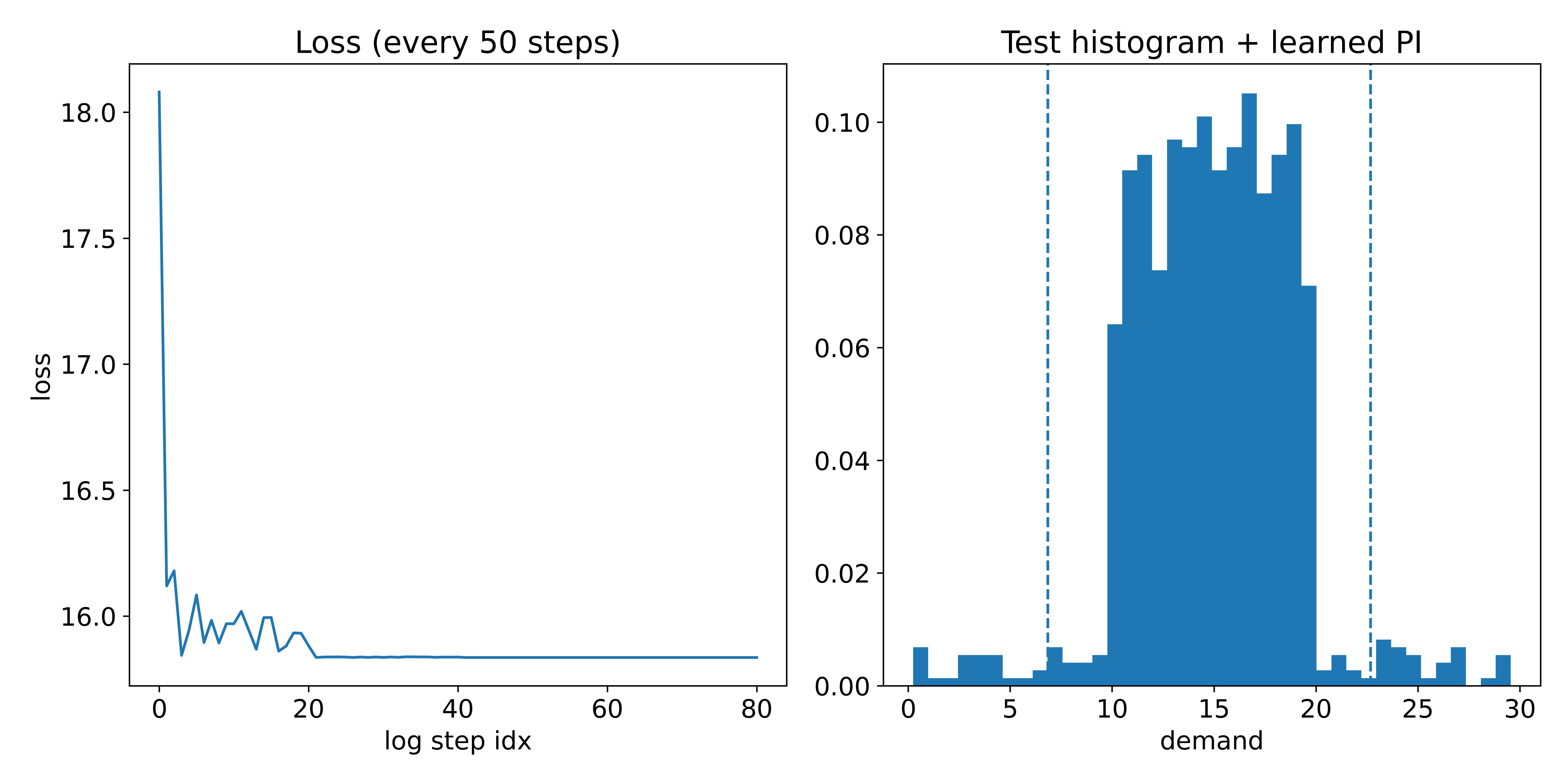}
        \caption{$n = 2000$}
    \end{subfigure}

    \caption{Training loss (left) and prediction intervals for the high-reward type (right), constructed using the neural network approach of \cite{pearce2018high} with training sample sizes $n \in \{10, 100, 1000, 2000\}$. The right-hand curve in each subfigure shows the learned interval (dashed lines) together with the histogram of test samples.}
    \label{fig:adviceneural}
\end{figure}

\end{APPENDICES}
%\newpage 
%\clearpage
%\input{response letter}

\end{document}